\author{Hao Feng, \emph{Student Member, IEEE} and Andreas F. Molisch, \emph{Fellow, IEEE}\\
    University of Southern California, Los Angeles, CA\\
    Email: \{ haofeng, molisch \} @usc.edu}
\newtheorem{thm}{Theorem}
\newtheorem{lemma}{Lemma}
\newtheorem{corollary}{Corollary}
\begin{document}
\definecolor{shadecolor}{rgb}{1.00,1.00,0.00}
\setstcolor{red}
\sethlcolor{yellow}

\title{Diversity Backpressure Scheduling and Routing with Mutual Information Accumulation in Wireless Ad-hoc Networks}
\maketitle
\pagestyle{plain}
\thispagestyle{plain}

\begin{abstract}
We suggest and analyze algorithms for routing in multi-hop wireless ad-hoc networks that exploit mutual information accumulation as the physical layer transmission scheme, and are capable of routing multiple packet streams (commodities) when only the \emph{average} channel state information is present, and that only locally. The proposed algorithms are modifications of the \emph{Diversity Backpressure} (DIVBAR) algorithm, under which the packet whose commodity has the largest "backpressure metric" is chosen to be transmitted and is forwarded through the link with the largest differential backlog (queue length). In contrast to traditional DIVBAR, each receiving node stores and accumulates the partially received packet in a separate "partial packet queue", thus increasing the probability of successful reception during a later possible retransmission. We present two variants of the algorithm: DIVBAR-RMIA, under which all the receiving nodes clear the received partial information of a packet once one or more receiving nodes firstly decode the packet; and DIVBAR-MIA, under which all the receiving nodes retain the partial information of a packet until the packet has reached its destination. We characterize the network capacity region with RMIA and prove that (under certain mild conditions) it is strictly larger than the network capacity region with the repetition (REP) transmission scheme that is used by the traditional DIVBAR. We also prove that DIVBAR-RMIA is throughput-optimum among the polices with RMIA, i.e., it achieves the network capacity region with RMIA, which in turn demonstrates that DIVBAR-RMIA outperforms traditional DIVBAR on the achievable throughput. Moreover, we prove that DIVBAR-MIA performs at least as well as DIVBAR-RMIA with respect to throughput. Simulations also confirm these results.
\end{abstract}
\begin{keywords}
Stochastic Network Optimization, Backpressure Algorithm, Mutual Information Accumulation (MIA), Renewal Mutual Information Accumulation (RMIA), Repetition Transmission Scheme (REP), $d$-timeslot Average Lyapunov drift
\end{keywords}

\section{Introduction}
\label{sec: introduction}

{\let\thefootnote\relax\footnote{Part of this work was presented at ICC 2012. The work was financially supported by NSF under Grant 0964479.}}

Wireless multi-hop ad-hoc networks have drawn significant attention in recent years, due to their flexibility and low cost, and their resulting importance in factory automation, sensor networks, security systems, and many other applications. A fundamental problem in such networks is the routing of data packets, i.e., which nodes should transmit which packets in which sequence. Optimum routing to minimize the delay of a single packet flowing through a deterministic multi-hop network has been solved by some classic algorithms, e.g., Dijkstra and Bellman-Ford (see Ref. \cite{Molisch_2011_book}, Chapter 22 and references therein). Based on these algorithms, a straightforward way of routing packets in multi-hop wireless network is to adopt some existing routing methods used in deterministic network with some ad-hoc modifications. The \emph{source routing} protocols, e.g., DSR \cite{Johnson_et_al_2011} and LQSR \cite{Draves_et_al_2004}, and \emph{distance vector routing} protocols, e.g., AODV \cite{Perkins_Royer_1999} and DSDV \cite{Perkins_Bhagwat_1994}, are existing approaches belonging to this category. These approaches predetermine the routing path for each source-destination pair before the actual transmissions, which is based on the assumption that the network is static during the whole delivery process of each packet from the source to destination, and the channel realization on each link is the same as probed beforehand. Unfortunately, however, these assumptions are not desirable and/or possible in many scenarios of multi-hop wireless ad-hoc networks.

On the other hand, the throughput performance becomes an issue when a single stream of packets flows through a network; this has been well-explored by several approaches, such as Ford-Fulkerson algorithm and Preflow-Push algorithm (see Ref. \cite{Kleinberg_Tardos_book_2005}, Chapter 7 and references therein), and Goldberg-Rao algorithm (see \cite{Goldberg_Rao_1998} and references therein), in the case of wired networks. Nevertheless, simultaneous routing of multiple packet streams intended for multiple destinations (i.e., multiple \emph{commodities}) is much more difficult, as different commodities are competing for the limited network resources.

To deal with these issues, several studies focus on the routing in the wireless network with unreliable channels and possible multiple commodities. The ExOR algorithm \cite{ExOR_Biswas_2005} takes advantage of the \emph{broadcast effect}, i.e., the packet being transmitted by a node can be overheard by multiple receiving nodes. After confirming the successful receivers among all the potential receiving nodes after each attempt of transmission, the transmitting node decides the best node among the successful receivers to forward the packet in the future according to the \emph{Expected Transmission Count Metric} (ETX) \cite{ETX_Couto_2003}, which indicates the proximity from each receiving node to the destination node in terms of \emph{forward delivery probability}. As a further improvement, the proactive SOAR algorithm \cite{SOAR_Rozner_2009} also uses ETX as the underling routing metric but leverages the path diversity by certain \emph{adaptive forwarding path selections}. Both ExOR and SOAR have shown better throughput performance than the traditional routing methods, but neither theoretically provides a \emph{throughput-optimum} routing approach for multi-hop, multi-commodity wireless ad-hoc networks.

Throughput maximization can be tackled by \emph{stochastic network optimization}, which involves routing, scheduling and resource allocation in networks without reliable or precisely predictable links but with certain stochastic features. Refs. \cite{Georgiades_et_al_2006}, \cite{Neely_book_2010} systematically analyze this kind of problems by using \emph{Lyapunov drift} analysis originating from control theory, which follows and generalizes \emph{Backpressure} algorithm proposed in \cite{Tassiulas_Emphremides_1992} \cite{Tassiulas_Emphremides_1993}. The backpressure algorithm establishes a \emph{Max-weight-matching} metric for each commodity on each available link that takes into account the local differential \emph{backlogs} (queue lengths or the number of packets of the particular commodity at a node) as well as the channel state of the corresponding link observed in time. The packet of the commodity with the largest metric will be transmitted from each node. Thus, the backpressure algorithm achieves routing without ever designing an explicit route and without requiring centralized information, and therefore, is considered as a very promising approach to stochastic network optimization problems with multiple commodities. The idea of Backpressure routing was later extended to many other communication applications, e.g., power and server allocation in satellite downlink \cite{Neely_satellite_2002}, routing and power allocation in time-varying wireless networks \cite{Neely_2005}, and throughput optimal routing in cooperative two hop parallel relay networks \cite{Yeh_Barry_2007}.

Based on the principle of Backpressure algorithm, \cite{Neely_Rahul_DIVBAR_2009} developed the \emph{Diversity Backpressure} (DIVBAR) algorithm for the routing in multi-hop, multi-commodity wireless ad-hoc networks. Similar to ExOR and SOAR, DIVBAR assumes a network with no reliable or precisely predictable channel state and exploits the broadcast nature of the wireless medium. In general, each node under DIVBAR locally uses the backpressure concept to route packets in the direction of maximum differential backlog. Specifically, each transmitting node under DIVBAR chooses the packet with the optimal commodity to transmit by computing the Max-weight-matching metric, whose factors include the observed differential backlogs and the link success probabilities resulting from the fading channels; after getting the feedbacks from all the potential receiving nodes indicating the successful receptions, the transmitting node let the successful recipient with the largest positive differential backlog get the forwarding responsibility. The superiority of DIVBAR over ExOR or SOAR is that DIVBAR has been theoretically shown to be throughput-optimum in wireless ad-hoc networks subject to the similar assumptions as those of ExOR and SOAR, e.g., unreliable links, no complete channel state information, broadcast effect, and most notably, the assumption that any packet not correctly received by any potential receiving node needs to be completely retransmitted in the future transmission attempts. Here we call the scheme of complete retransmission the \emph{Repetition Transmission Scheme} (REP).

The efficiency of REP can be greatly enhanced by mutual-information accumulation (MIA), where the receiving nodes store \emph{partial information} of the packets that cannot be decoded at the previous transmission attempts. MIA is implemented by using Fountain Codes (or rateless codes), which were introduced by Luby and coworkers in Ref. \cite{Luby_2002}\cite{Byers_et_al_2002}\cite{Sokrollahi_2004}. The transmitter with Fountain codes encodes and transmits the source information in infinitely long code streams, and the receiver can recover the original source information from the portions of the code streams received in an unordered manner, as long as the amount of total accumulated information exceeds the entropy of the source information. Moreover, Fountain codes can work at any SNR, and therefore, the same code design can be used for broadcasting from one transmitter to multiple receivers whose links to the transmitter have different channel gains. At the same time, Fountain codes can even accumulate the partial information from multiple transmitters. Fountain codes have been suggested for various applications, e.g., point-to-point communications with quasi-static and block fading channels \cite{Castura_et_al_point2point_2006}, cooperative communications in single relay networks \cite{Castura_Mao_2007}\cite{Liu_2006}, cooperative communications in two hop multi-relay networks \cite{Molisch_et_al_2007}, incremental redundancy Hybrid-ARQ protocols used for Gaussian collision channel in non-routing settings\cite{Caire_Tuninetti_2001}, and incremental redundancy Hybrid-ARQ protocols used in the downlink scheduling of the MU-MIMO system \cite{Hooman_Caire_et_al_2011}. In these applications, Fountain codes have been shown to enhance robustness, save energy, reduce transmission time and increase throughput. Refs. \cite{Draper_et_al_2011}, \cite{Rahul_Neely_2011} introduce MIA into the routing of multi-hop ad-hoc networks, and have shown that the delay performance can be enhanced with constraint power and bandwidth resources. However, none of above papers touches the throughput performance of ad-hoc networks with MIA.

For multi-hop, multi-commodity wireless ad-hoc networks, the throughput potential might also be increased when implementing MIA instead of REP. An intuitive approach of exploring this problem is to combine MIA with Lyapunov drift analysis, and design a "MIA version" of Backpressure or DIVBAR algorithm. Following this strategy and parallel to our work, Ref. \cite{Yang_et_al_2012} proposed a T-slot routing algorithm and a virtual queue routing algorithm for multi-hop, multi-commodity wireless ad-hoc network with broadcast effect. These two algorithms assume that each link in the network has fixed and reliable transmission rate, and each transmitting node making local decisions can predetermine the local transmitting and forwarding realizations based on the backlog and virtual queue observations.

In this paper, in contrast with Ref. \cite{Yang_et_al_2012}, we explore the multi-hop, multi-commodity routing in the case of unreliable and non-precisely predictable rates. We assume that the network has stationary channel fading, i.e., the distribution of the channel realization remains the same, however, the particular realization changes with time; although no precise channel state information at the transmitter (CSIT) is available, the distributions of the channel realization of each link can be obtained by the transmitter beforehand, i.e., each transmitting node has the average CSIT; the transmitting node can obtain the receiving (decoding) results of all the receiving nodes by some simple feedbacks sent by the receiving nodes through (certain reliable) control channels.

Our contributions of this paper are summarized as follows:
\begin{itemize}
\item We analyze the throughput potential of networks by characterizing the \emph{network capacity region} \cite{Georgiades_et_al_2006}\cite{Neely_book_2010} with \emph{Renewal Mutual Information Accumulation} (RMIA) transmission scheme. Here "Renewal" stands for a clearing operation; and RMIA is the transmission scheme, in which all the receiving nodes accumulate the partial information of a certain packet and try to decode the packet when receiving it, but clear the partial information of a packet every time the corresponding packet is firstly decoded by one or more receiving nodes in the network; note that the receiving node needs not to be the destination of the packet.
\item We prove that the network capacity region with RMIA is strictly larger than the network capacity region with REP (in Ref. \cite{Neely_Rahul_DIVBAR_2009}) under some mild assumptions, which indicates that MIA technique can increase the throughput potential of a network.
\item We propose and analyze two new routing algorithms that combine the concept of DIVBAR with MIA. The first version, DIVBAR-RMIA is implemented with the renewal operation, and is shown to be throughput-optimum among all possible routing algorithms with RMIA transmission scheme. Under the second version, DIVBAR-MIA, all received partial information of a packet remains stored at all the nodes in the network until that packet has reached its destination. We prove that DIVBAR-MIA's throughput performance is at least as good as DIVBAR-RMIA. In sum, both proposed algorithms can achieve larger throughput limits than the original DIVBAR algorithm with REP.
\end{itemize}

The remainder of the paper is organized as follows: Section \ref{sec: network_model} presents the network model and describes the implementation of routing with MIA technique, the timing diagram within one timeslot, and the queuing dynamics. Section \ref{sec: network_capacity_region_RMIA} characterizes the network capacity region with RMIA and compares it with the network capacity region with REP. Section \ref{sec: algorithm_discription} describes the two proposed algorithms: DIVBAR-RMIA and DIVBAR-MIA. Section \ref{sec: performance_anlaysis_main} proves the throughput optimality of DIVBAR-RMIA with RMIA assumption and proves the throughput performance guarantee of DIVBAR-MIA. Section \ref{sec: simulations} presents the simulation results. Section \ref{sec: conclusions} concludes the paper. Mathematical details of the proofs are relegated to Appendices.

\section{Network Model}
\label{sec: network_model}
Consider a stationary wireless ad-hoc network with $N$ nodes, denoted as set $\cal N$, where multiple packet streams indexed as $c=1,\cdots N$ are transmitted, possibly via multi-hop. Categorize all packets in the packet stream destined for a particular node $c$ as \emph{commodity $c$ packets} irrespective of their origin. Each link in the network is denoted by an ordered pair $\left(n,k\right)$, for $n,k \in \cal N$, where $n$ is the transmitting node and $k$ is the receiving node. Exogenous input data arrives randomly to the network in units of packets, all of which have the same fixed amount of information (entropy) denoted as $H_0$.  Packets arriving at each node are stored in a queue waiting to be forwarded, except at the destination, where they leave the network immediately upon arrival/decoding. The transmission power of each node is constant.

Time is slotted and normalized into integer units $\tau=0,1,2,3,\cdots$. The timeslot length is assumed to be equal to the coherence time of the channels, so that we can adopt the common block-fading model: within a timeslot duration, instantaneous channel gains are constant, while they are i.i.d. (independent and identically distributed) across timeslots, for each link. Average channel state information (CSI) of each link is known locally, i.e., at the node from which the link is emanating; however, instantaneous CSI (i.e., channel gains for a specific timeslot) are never known at any transmitting node. The exogenous packet arrival rate $a_n^{\left( c \right)}\left( \tau \right)$ is i.i.d. across timeslots and is upper bounded by a constant value ${A_{\max }}$. When a packet is transmitted by a node $n$ in each timeslot, it can be simultaneously overheard by multiple neighbor nodes ("multi-cast effect") represented by set ${\cal K}_n$. In this network model, a transmission of a packet over a link $\left(n,k\right)$ can be interpreted as a process, in which a new copy of the packet is being created in the the receiving node $k$ while the original copy of packet is retained in the transmitting node $n$, and correspondingly, multi-cast effect indicates that multiple copies of the same packet can be created at multiple receiving nodes simultaneously. However, in this case, at most one successful receiving node is finally allowed to get the responsibility of forwarding the packet in the future and keep the received copy of packet, i.e., if defining $b_{nk}^{\left(c\right)}\left(\tau\right)$ as the number of packets of commodity $c$ that flow from node $n$ to node $k \in {\cal K}_n$ in timeslot $\tau$, then $b_{nk}^{\left(c\right)}\left(\tau\right) \in \left\{0,1\right\}$ and $\sum_{c \in \cal N} {\sum_{k \in {{\cal K}_n}} {b_{nk}^{\left( c \right)}\left( \tau \right)} }  \le 1$, for $\forall n\in \cal N$. Here, the \emph{flow rate} $b_{nk}^{\left(c\right)}\left(\tau\right)$ depends on three factors: first, the decision of choosing commodity $c$ to transmit; second, the success of the reception over link $\left(n,k\right)$; third, the decision of assigning the forwarding responsibility. After each forwarding decision is made among the nodes having a complete copy of the packet (including the transmitting node and receiving nodes), only the node that gets the forwarding responsibility (possibly being retained by the transmitting node after making the forwarding decision) can keep the packet, while others discard their copies.

Based on this network model, our goal is to design a routing algorithm that can support an exogenous input rate as large as possible, while subject to a possible tradeoff with delay.

\subsection{Routing with Mutual Information Accumulation Technique}
\label{subsec: routing_with_MIA}

Ref. \cite{Neely_Rahul_DIVBAR_2009} analyzes the routing algorithms implemented based on REP, i.e., for each transmission, the packet either is successfully received at another node, or has to be completely re-transmitted in a later timeslot. As has been described in Section \ref{sec: introduction}, we suggest to avoid the inefficiencies of complete retransmission by enabling the Mutual Information Accumulation (MIA) technique into the transmission scheme with the help of using Fountain codes.

In our scenario, we assume that each link uses a capacity-achieving coding scheme, so that a packet is received correctly in timeslot $\tau$ if the amount of partial information of the packet received by the end of timeslot $\tau$ exceeds the entropy of the packet $H_0$, i.e., a successful transmission from node $n$ to node $k$ in timeslot $\tau$ occurs when $\log_2 \left(1+ \gamma_{nk}\left(\tau\right)\right)+I_k\left(\tau\right)\geq H_0$, where $\gamma\left(\tau\right)$ is the SNR in timeslot $\tau$, whose distribution depends on the average channel state of link $\left(n,k\right)$; $I_k\left(\tau\right)$ is the amount of partial information of the corresponding packet already accumulated in the receiving node $k$ by timeslot $\tau-1$. Moreover, despite that each receiving node may simultaneously overhear the signals transmitted from multiple neighbor nodes, we assume that there is no inter-channel interference among these signals and the successful receiving of each signal is independent of the signals transmitted through other links.\footnote[1]{While this assumption is not practically realizable in wireless scenarios unless we use orthogonal channels, it is a standard assumption in the literature of stochastic network optimizations for wireless networks \cite{Neely_2005}\cite{Neely_Rahul_DIVBAR_2009}.}

As will be shown in Section \ref{sec: network_capacity_region_RMIA}, the throughput potential of the network can be increased by adopting MIA technique instead of REP in the transmissions, since MIA essentially increases the success probability of the transmissions over each link. Here the intuition is: with MIA technique, each transmission might not have to transmit a whole packet in each timeslot but take advantage of the information already accumulated at the receiving node. Only the differential amount of information is required for the decoding at the receiver. This will increase the success probability of each transmission, and therefore increases the average transmission rate.

To implement routing with MIA technique, each node has set up two kinds of queues: the \textit{compact packet queue} (CPQ) and \textit{partial packet queue} (PPQ). CPQs store the packets that have already been decoded and are categorized by packets' commodities; while the pieces of partial information stored in PPQ are distinguished by the packets they belong to. As soon as the partial information of a specific packet accumulated in PPQ of current node exceeds the entropy of that packet, the packet is decoded and moved out of PPQ, and then put into CPQ if the current node gets the forwarding responsibility from the transmitting node.

In this paper, we propose two versions of a transmission scheme based on the MIA technique: Renewal Mutual Information Accumulation (RMIA) transmission and Mutual Information Accumulation (MIA) transmission. With RMIA transmission scheme and multicast effect, each receiving node accumulates the partial information of the packet in the timeslots when the packet gets the transmission opportunity; as soon as one or more receiving node firstly decode the packet, the transmission of the packet from node $n$ stops, and all the partial information of the packet already accumulated in the receiving nodes is cleared, which is called \emph{renewal operation}. The timeslot when the first decoding occurs is called the \emph{first decoding timeslot}; the set of successful receiving nodes is called the \emph{first successful receiver set}. In contrast, with MIA transmission scheme, instead of clearing the partial information after the corresponding packet is decoded, each receiving node retains the partial information of the packet and possibly uses it in the future decoding, when the receiving node overhears another copy of the same packet in later transmissions. The partial information is only cleared when the packet has reached its destination.

For each transmitting node in the network under an arbitrary routing policy with MIA or RMIA, define the timeslots that are used to transmit the same copy of packet from one node as one \textit{epoch}. During each epoch, each receiving node keeps accumulating the partial information of a particular packet until the transmitting node stops transmitting the copy of this packet, which is the end of the epoch. Note that the timeslots making up one epoch of a particular packet may not be contiguous timeslots. Thus, under an arbitrary policy with RMIA or MIA, the timeslots' allocation for different commodities and epoch allocation is shown as Fig. \ref{fig_slots_allocation}, in which the notation ${\rm{Epoch}}_{n,i}^{\left(c\right)}$ denotes the $i$th epoch that is used to transmit commodity $c$ by node $n$; in the silent timeslots, node $n$ does not transmit commodity packets. Then the difference between RMIA and MIA can be expressed as: with RMIA, each receiving node $k \in {\cal K}_n$ implements a renewal operation at the end of each epoch for each transmitting node $n$; while with MIA, the partial information is retained at the end of each epoch for each transmitting node and will possibly facilitate the decoding in later epochs, and the partial information of a particular packet is cleared only when this packet is delivered to its destination.

\begin{figure}
        \includegraphics[height=4.3cm]{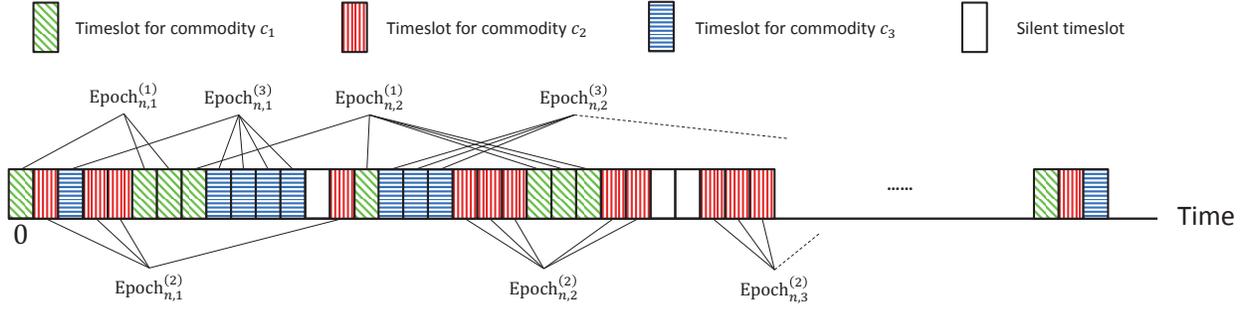}
        \centering{
        \caption{Slot allocation over different commodities and formation of epochs for node $n$ under an arbitrary policy}
        \label{fig_slots_allocation}}
\end{figure}

\subsection{Timing Diagram in One Timeslot and Queuing Dynamics}
\label{subsec: queuing_dynamic_time_diagram}
\begin{figure}
        \includegraphics[height=3.5cm]{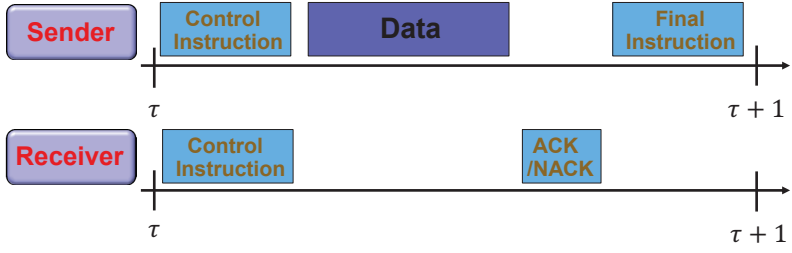}
        \centering{
        \caption{Timing diagram of the working protocol within one timeslot}
        \label{fig_timing_diagram}}
\end{figure}

The timing diagram of the communication protocol between each pair of sending node and receiving node within one timeslot is illustrated in Fig. \ref{fig_timing_diagram}. As is shown in this figure, at the beginning of each timeslot $\tau$, the transmitting node (sender) and receiving node (receiver) exchange their control instructions which include the backlog information of CPQs. Then the transmitting node makes decisions about which commodity to transmit or whether to keep silent in timeslot $\tau$. After the decision is made, the data transmission starts and lasts for a fixed time interval (less than the timeslot length), during which coded bits based on a packet with entropy $H_0$ are being transmitted. After the data transmission period ends, each receiving node sends an ACK/NACK signal back to the transmitting node indicating if the packet is successfully decoded by the receiving node or not (as will be shown in Section \ref{sec: algorithm_discription}, each receiving node under the proposed DIVBAR-MIA algorithm sends two kinds of ACK/NACK signals back to the transmitting node). The transmitting node gathers all the ACK/NACKs from all the receiving nodes and, based on the feedbacks, makes the decision on which successful receiver should get the forwarding responsibility. Finally, the forwarding decision is multicast through a final instruction signal to all the receiving nodes, and correspondingly all the successful receivers, except the one getting the forwarding responsibility, abandon their decoded copies. Note that the final step of the working protocol assumes that each packet always has a single copy being routed in the network.

The queueing dynamics over each timeslot is based on the above timing diagram. Let $Q_n^{\left(c\right)}\left(\tau\right)$ represent the backlog of the CPQ of commodity $c$ in node $n$ at the beginning of timeslot $\tau$. The backlog of each commodity in each node is updated over each timeslot as follows:
\begin{equation}
Q_n^{\left( c \right)}\left( {\tau  + 1} \right) \le \max \left\{ {Q_n^{\left( c \right)}\left( \tau  \right) - \sum\limits_{k \in {{\cal K}_n}} {b_{nk}^{\left( c \right)}} \left( \tau  \right),0} \right\} + \sum\limits_{k \in {{\cal K}_n}} {b_{kn}^{\left( c \right)}} \left( \tau  \right) + a_n^{\left( c \right)}\left( \tau  \right),\ {\rm{for}}\ \forall n\in \cal N,
\label{eq_basic_queuing_dynamics}
\end{equation}
where the term $\sum_{k \in {{\cal K}_n}} {b_{nk}^{\left( c \right)}} \left( \tau  \right)$ represents the output rate; $\sum_{k \in {{\cal K}_n}} {b_{kn}^{\left( c \right)}} \left( \tau  \right)$ represents the endogenous input rate flowing from the neighbor nodes; $a_n^{\left(c\right)}\left(\tau\right)$ is the exogenous input rate of commodity $c$ arriving at node $n$ in timeslot $\tau$. The expression in (\ref{eq_basic_queuing_dynamics}) is an inequality instead of an equality because the actual valid endogenous input rate may be less than $\sum_{k \in {{\cal K}_n}} {b_{kn}^{\left( c \right)}} \left( \tau  \right)$. This occurs if a neighbor node $k\in {\cal K}_n$ has no valid packet of commodity $c$ to send (its CPQ of commodity $c$ is empty), but the decision made on node $k$ is to send a commodity $c$ packet, so node $k$ just sends a null packet that is counted into $\sum_{k \in {{\cal K}_n}} {b_{kn}^{\left( c \right)}} \left( \tau  \right)$ but is not counted in the CPQ backlog. On the other hand, for the output rate $\sum_{k \in {{\cal K}_n}} {b_{nk}^{\left( c \right)}} \left( \tau  \right)$, this issue is solved by the $\max\left\{\cdot\right\}$ operation, which guarantees that the backlog $Q_n^{\left( c \right)}\left( {\tau  + 1} \right)$ can never falls below zero. Here we should note that the value of $b_{nk}^{\left(c\right)}\left(\tau\right)$ is determined after the forwarding decision is made by node $n$ based on ACK/NACK feedbacks from all the receiving nodes in ${\cal K}_n$.

\section{Network Capacity Region with Renewal Mutual Information Accumulation}
\label{sec: network_capacity_region_RMIA}
In this section, we characterize the throughput potential of a stationary wireless network. Following from the definitions in Ref. \cite{Neely_Rahul_DIVBAR_2009}, for a multi-hop, multi-commodity network, let $\left(\lambda_n^{\left(c\right)}\right)$ represent the matrix of exogenous time average input rates, where each entry $\lambda_n^{\left(c\right)}$ represents the exogenous time average input rate of commodity $c$ entering source node $n$, in units of packet/slot. Let $Y_n^{\left(c\right)}\left(t\right)$ represent the number of packets with source node $n$ that have been successfully delivered to destination node $c$ within the first $t$ timeslots. Then a routing algorithm is \emph{rate stable} if
\begin{equation}
\mathop {\lim }\limits_{t \to \infty } \frac{{Y_n^{\left( c \right)}\left( t \right)}}{t} = \lambda _n^{\left( c \right)},\ {\rm{for}}\ n,c\in \cal N.
\end{equation}

With the above definitions, Ref. \cite{Neely_Rahul_DIVBAR_2009} defines the \emph{network capacity region} as the set of all exogenous input rate matrices $\left(\lambda_n^{\left(c\right)}\right)$ that can be stably supported by the network using certain rate stable routing algorithms. In this paper, we inherit the concept of network capacity region and use it to describe the throughput potential of a network. However, considering the effect of transmission scheme on the throughput performance, we also specify the network capacity region with different transmission schemes. For example, all the algorithms discussed in Ref. \cite{Neely_Rahul_DIVBAR_2009} are based on repetition transmission scheme (REP) assumption, and therefore, we specify the network capacity region defined in Ref. \cite{Neely_Rahul_DIVBAR_2009} as \emph{network capacity region with REP} (or REP network capacity region) in this paper, denoted as $\Lambda_{\rm{REP}}$. In our work, we define \emph{network capacity region with RMIA} (or RMIA network capacity region), denoted as $\Lambda_{\rm{RMIA}}$, as the set of all exogenous input rate matrices that can be stably supported by the network using certain rate stable routing algorithms with RMIA transmission scheme.

To prepare for the later proof of the main theorems, we firstly derive some stochastic properties for the amount of information transmitted through an arbitrary link $\left(n,k\right)$ in each timeslot, denoted as $R_{nk}\left(\tau\right)$, which is continuously distributed over $\left[0,\infty\right)$ and is i.i.d. across timeslots. Let $F_{R_{nk}}\left(x\right)$ and $f_{R_{nk}}\left(x\right)$ respectively represent the cdf (cumulative distribution function) and pdf (probability density function) of $R_{nk}\left(\tau\right)$. Let $F_{R_{nk}}^{\left(m\right)}\left(x\right)$ represent the cdf of $\sum_{\tau  = 1}^m {{R_{nk}}\left( \tau  \right)}$, where $F_{R_{nk}}^{\left(1\right)}\left(x\right) = F_{R_{nk}}\left(x\right)$; let $F_{R_{nk}}^{\left(0\right)}\left(x\right)=1$. Now we claim a lemma, whose results will be used in the proofs of later theorems:
\begin{lemma}
\label{lemma: flowing_rate_property}
If $0< F_{R_{nk}}\left(H_0\right)<1$, and $R_{nk}\left(\tau\right)$ is continuously distributed on $\left[0,\infty\right)$, then we have the following relations:
\begin{equation}
F_{{R_{nk}}}^{\left( m \right)}\left( {{H_0}} \right) < F_{{R_{nk}}}^{\left( {m - 1} \right)}\left( {{H_0}} \right){F_{{R_{nk}}}}\left( {{H_0}} \right),\ {\rm{for}}\ m \ge 2;
\label{eq_cdf_information_perslot1}
\end{equation}
\begin{equation}
F_{{R_{nk}}}^{\left( m \right)}\left( {{H_0}} \right) < {\left[ {{F_{{R_{nk}}}}\left( {{H_0}} \right)} \right]^m},{\rm{\ for\ }} m \ge 2;
\label{eq_cdf_information_perslot2}
\end{equation}
\begin{equation}
F_{{R_{nk}}}^{\left( m \right)}\left( {{H_0}} \right) < F_{{R_{nk}}}^{\left( {m'} \right)}\left( {{H_0}} \right), {\ \rm{for}}\ m > m'\ge0.
\label{eq_cdf_information_perslot3}
\end{equation}
\end{lemma}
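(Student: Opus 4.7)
My plan is to prove the three inequalities in the stated order, with part (1) carrying the essential probabilistic content and parts (2) and (3) following by iteration.

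For part (1), I would work with the convolution identity
\begin{equation*}
F^{(m)}_{R_{nk}}(H_0) \;=\; \int_0^{H_0} F^{(m-1)}_{R_{nk}}(H_0 - r)\, f_{R_{nk}}(r)\, dr,
\end{equation*}
which is valid because each $R_{nk}(\tau)$ is non-negative, so the region $\{s_{m-1}+r\le H_0\}$ intersected with $\{r\ge 0\}$ forces $r\in[0,H_0]$. Since $F_{R_{nk}}(H_0)=\int_0^{H_0} f_{R_{nk}}(r)\,dr$, the difference
\begin{equation*}
F^{(m-1)}_{R_{nk}}(H_0)\,F_{R_{nk}}(H_0) - F^{(m)}_{R_{nk}}(H_0) \;=\; \int_0^{H_0}\bigl[F^{(m-1)}_{R_{nk}}(H_0) - F^{(m-1)}_{R_{nk}}(H_0-r)\bigr]\, f_{R_{nk}}(r)\, dr
\end{equation*}
is manifestly non-negative because $F^{(m-1)}_{R_{nk}}$ is non-decreasing. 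The whole game for part (1) is then showing that this integrand is strictly positive on a set of positive Lebesgue measure. A probabilistic shortcut is to observe that $\{S_m \le H_0\}\subseteq\{S_{m-1}\le H_0\}\cap\{R_m\le H_0\}$, and the difference event is $\{S_{m-1}\le H_0,\, R_m\le H_0,\, S_{m-1}+R_m > H_0\}$, whose $(s,r)$-region is an open triangle.

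The main obstacle is this strict-inequality step: I need to rule out the pathological case where the joint density $f_{S_{m-1}}(s)\,f_{R_{nk}}(r)$ vanishes on the triangular region $\{s\le H_0,r\le H_0,s+r>H_0\}$. I plan to leverage the two standing hypotheses. First, $0<F_{R_{nk}}(H_0)<1$ together with continuity of the distribution on $[0,\infty)$ ensures that $f_{R_{nk}}$ is positive on a set of positive measure in any neighborhood of $H_0$ intersected with $[0,H_0]$ (otherwise $F_{R_{nk}}$ would have a flat segment that is inconsistent with being a strictly increasing continuous cdf with mass both below and above $H_0$). Second, I will argue by induction on $m-1$ that $f_{S_{m-1}}$ is likewise positive on a set of positive measure in every subinterval of $(0,H_0)$; the base case $m-1=1$ is immediate, and the inductive step uses that convolution of a density supported near $0$ with one positive near $H_0$ yields a density positive near $H_0$. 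The integral above is then bounded below by a strictly positive quantity, giving~(\ref{eq_cdf_information_perslot1}).

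For part (2), I apply (1) repeatedly. For $m=2$, $F^{(2)}_{R_{nk}}(H_0) < F^{(1)}_{R_{nk}}(H_0)\,F_{R_{nk}}(H_0)=[F_{R_{nk}}(H_0)]^2$ directly. For $m\ge 3$, a descending chain
\begin{equation*}
F^{(m)}_{R_{nk}}(H_0) \;<\; F^{(m-1)}_{R_{nk}}(H_0)\,F_{R_{nk}}(H_0) \;\le\; F^{(m-2)}_{R_{nk}}(H_0)\,[F_{R_{nk}}(H_0)]^2 \;\le\;\cdots\;\le\; [F_{R_{nk}}(H_0)]^m
\end{equation*}
uses (1) strictly at the first step (which is all that is needed for overall strict inequality) and non-strictly thereafter. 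For part (3), I first note that (1) combined with $F_{R_{nk}}(H_0)<1$ gives the strict one-step monotonicity $F^{(m)}_{R_{nk}}(H_0)<F^{(m-1)}_{R_{nk}}(H_0)$ for all $m\ge 2$, and telescoping yields~(\ref{eq_cdf_information_perslot3}) whenever $m>m'\ge 1$. The boundary case $m'=0$ reduces to $F^{(m)}_{R_{nk}}(H_0)<1$, which follows from $F^{(1)}_{R_{nk}}(H_0)=F_{R_{nk}}(H_0)<1$ together with the just-proved monotonicity.
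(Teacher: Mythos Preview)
Your approach mirrors the paper's almost exactly: the convolution identity for part~(1), then iteration for parts~(2) and~(3). The paper is terser---it writes $F^{(m)}_{R_{nk}}(H_0)=\int_0^{H_0}F^{(m-1)}_{R_{nk}}(H_0-x)f_{R_{nk}}(x)\,dx$, bounds the integrand by $F^{(m-1)}_{R_{nk}}(H_0)$, and asserts the strict inequality without further comment.

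Your attempt to justify strictness, however, contains a gap. You claim that continuity of the distribution forces $f_{R_{nk}}$ to be positive on a set of positive measure in every neighborhood of $H_0$, because otherwise the cdf ``would have a flat segment that is inconsistent with being a strictly increasing continuous cdf.'' But continuous distributions need not have strictly increasing cdfs. For instance, take $R_{nk}$ uniform on $[0,H_0/3]\cup[2H_0,3H_0]$: this is continuously distributed with $0<F_{R_{nk}}(H_0)=1/4<1$, yet $f_{R_{nk}}\equiv 0$ on $(H_0/3,2H_0)$. In that example the triangular region $\{s\le H_0,\,r\le H_0,\,s+r>H_0\}$ carries no mass and one obtains $F^{(2)}_{R_{nk}}(H_0)=[F_{R_{nk}}(H_0)]^2$ with equality, so the strict inequality in~(\ref{eq_cdf_information_perslot1}) actually fails under the hypotheses as literally stated. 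The paper's Theorem~\ref{thm: capacity_comparison} assumes the stronger condition $0<F_{R_{nk}}(x)<1$ for all $x>0$, which restores full support and makes your triangle argument work; you should invoke that standing assumption explicitly rather than trying to extract strict monotonicity from continuity alone.
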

\begin{proof}
If $m\ge 2$, we have
\begin{align}
F_{{R_{nk}}}^{\left( m \right)}\left( {{H_0}} \right) &= \Pr \left\{ {\sum\limits_{\tau  =   1}^{m - 1} {{R_{nk}}\left( \tau  \right)}  + {R_{nk}}\left( m \right) < {H_0}} \right\}\nonumber\\
&= \int\limits_0^{{H_0}} {F_{{R_{nk}}}^{\left( {m - 1} \right)}\left( {{H_0} - x} \right){f_{{R_{nk}}}}\left( x \right)dx} \nonumber\\
& < F_{{R_{nk}}}^{\left( {m - 1} \right)}\left( {{H_0}} \right){F_{{R_{nk}}}}\left( {{H_0}} \right),
\label{eq_cdf_information_perslot1_1}
\end{align}
Based on (\ref{eq_cdf_information_perslot1_1}), if continuing the same procedure on $F_{{R_{nk}}}^{\left( {m - 1} \right)}\left( {{H_0}} \right),\cdots,F_{{R_{nk}}}^{\left( {2} \right)}\left( {{H_0}} \right)$, we further get
\begin{equation}
F_{{R_{nk}}}^{\left( m \right)}\left( {{H_0}} \right) < F_{{R_{nk}}}^{\left( {m - 1} \right)}\left( {{H_0}} \right){F_{{R_{nk}}}}\left( {{H_0}} \right) < F_{{R_{nk}}}^{\left( {m - 2} \right)}\left( {{H_0}} \right){\left[ {{F_{{R_{nk}}}}\left( {{H_0}} \right)} \right]^2} <  \cdots  < {\left[ {{F_{{R_{nk}}}}\left( {{H_0}} \right)} \right]^m},\ {\rm{for}}\ m\ge 2,
\end{equation}
and by including the case of $F_{R_{nk}}\left(H_0\right)<F_{R_{nk}}^{\left(0\right)}\left(H_0\right)=1$, we further get
\begin{equation}
F_{{R_{nk}}}^{\left( m \right)}\left( {{H_0}} \right) < F_{{R_{nk}}}^{\left( {m'} \right)}\left( {{H_0}} \right), {\ \rm{if}}\ m > m'\ge 0.
\end{equation}
\end{proof}

\subsection{The network capacity region with RMIA}
\label{subsec: capacity_region_RMIA_main}
In this subsection, we aim to characterizes the network capacity region with RMIA by a stationary randomized policy with RMIA transmissions scheme. In Ref. \cite{Neely_Rahul_DIVBAR_2009}, a stationary randomized policy with REP is defined as: in each timeslot, each node $n$ uses a fixed probability to choose each commodity to transmit, and a fixed probability to forward the decoded packet to each node within the successful receiver set known through ACK/NACKs, where these fixed probabilities are independent of the backlog states. The underlying assumption behind the definition is the REP transmission scheme.

Likewise, the stationary randomized policy with RMIA is defined to have two similar characteristics: each node $n$ uses a fixed probability to choose each commodity to transmit in each timeslot; once a packet of commodity $c$ transmitted by node $n$ is firstly decoded by one or more receiving nodes, node $n$ uses a fixed probability to forward the decoded packet to each successful receiver. The major difference of the stationary randomized policy with RMIA from the REP version lies on the receiver side, where the decoding under RMIA takes advantage of the partial information accumulated during previous transmission attempts.

To begin with, we re-state the characterization of the network capacity region with REP, which was derived in Ref. \cite{Neely_Rahul_DIVBAR_2009}:
\begin{thm}
\label{cap_region_thm}
The network capacity region (with REP) $\Lambda_{\rm{REP}}$ consists of all the exogenous time average input rate matrices $\left( {\lambda _n^{\left( c \right)}} \right)$ for which there exists a stationary randomized policy that chooses probabilities $\alpha _n^{**\left( c \right)}$, $\theta _{nk}^{**\left( c \right)}\left( {{\Omega _n}} \right)$, and forms the time average flow rate $b_{nk}^{**\left( c \right)}$ (in unit of packet/slot), for all nodes $ n,k,c \in \cal N$ and all nonempty subsets ${{\Omega _n}}$ included by node $n$'s neighbor nodes set ${\cal K}_n$, such that:
\begin{equation}
\label{cap_region_const1}
b_{nk}^{**\left( c \right)} \ge 0,b_{cn}^{**\left( c \right)} = 0,b_{nn}^{**\left( c \right)} = 0,{\rm{\ for\ }}n \ne c,
\end{equation}
\begin{equation}
\label{cap_region_const2}
\sum\limits_{k\in {\cal K}_n} {b_{kn}^{**\left( c \right)}}  + \lambda _n^{\left( c \right)} \le \sum\limits_{k\in {\cal K}_n} {b_{nk}^{**\left( c \right)}} ,{\rm{\ for}}\ n \ne c,
\end{equation}
\begin{equation}
\label{cap_region_const3}
b_{nk}^{**\left( c \right)} \le \alpha _n^{**\left( c \right)}\sum\limits_{\Omega_n: {\Omega _n} \subseteq {{\cal K}_n}} {q_{n,{\Omega _n}}^{\rm{rep}}\theta _{nk}^{**\left( c \right)}\left( {{\Omega _n}} \right)},
\end{equation}
where $\alpha _n^{**\left( c \right)}$ is the probability that node $n$ decides to transmit a packet of commodity $c$ in each timeslot; ${q_{n,{\Omega _n}}^{\rm{rep}}}$ is the probability that $\Omega_n$ is the successful receiver set for a packet transmitted by node $n$ with REP; $\theta _{nk}^{**{\left( c \right)}}\left( {{\Omega _n}} \right)$ is the conditional probability that node $n$ forwards a packet of commodity $c$ to node $k$, given that the successful receiver set is $\Omega_n$.
\end{thm}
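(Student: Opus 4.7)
The plan is to prove both directions of the characterization, following the structure of Ref.~\cite{Neely_Rahul_DIVBAR_2009}: an \emph{achievability} direction showing that a stationary randomized policy parameterized by $\alpha_n^{**(c)}$ and $\theta_{nk}^{**(c)}(\Omega_n)$ rate-stably supports any $(\lambda_n^{(c)})$ satisfying (\ref{cap_region_const1})--(\ref{cap_region_const3}), and a \emph{converse} direction showing that any rate-stable policy using REP admits such a stationary randomized decomposition in the time-average sense.

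For achievability, I would construct the stationary randomized policy that, at each slot and independently across slots, picks commodity $c$ at node $n$ with probability $\alpha_n^{**(c)}$ (and stays silent with the residual probability); upon observing the realized successful receiver set $\Omega_n$, it hands the forwarding token to $k\in\Omega_n$ with probability $\theta_{nk}^{**(c)}(\Omega_n)$. Since channels are i.i.d.\ across slots and $\Omega_n$ occurs with probability $q_{n,\Omega_n}^{\rm rep}$ under REP, the expected flow on $(n,k)$ for commodity $c$ matches the right-hand side of (\ref{cap_region_const3}) whenever the relevant CPQ is non-empty, and is otherwise only smaller. Combined with (\ref{cap_region_const2}), a standard Lyapunov drift analysis on $L(\mathbf{Q}) = \tfrac{1}{2}\sum_{n,c}\bigl(Q_n^{(c)}\bigr)^2$, using the queueing dynamics (\ref{eq_basic_queuing_dynamics}), then yields rate stability of every $Q_n^{(c)}$.

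For the converse, I would take any rate-stable algorithm supporting $(\lambda_n^{(c)})$ and define empirical frequencies $\alpha_n^{**(c)}$ and $\theta_{nk}^{**(c)}(\Omega_n)$ as the long-run fractions of timeslots in which node $n$ transmits commodity $c$, respectively, in which (given commodity $c$ and set $\Omega_n$) node $k$ is picked as the forwarder. Boundedness of the arrivals by $A_{\max}$ and compactness of the probability simplex yield such limits along a subsequence; the realized time-average flows must then equal the $b_{nk}^{**(c)}$ satisfying (\ref{cap_region_const3}), while the queuing dynamics (\ref{eq_basic_queuing_dynamics}) together with rate stability ($Y_n^{(c)}(t)/t\to\lambda_n^{(c)}$) force the flow balance (\ref{cap_region_const2}) and the sign/self-loop conditions (\ref{cap_region_const1}).

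The step I expect to be the main obstacle is reconciling the $\le$ (rather than $=$) in (\ref{cap_region_const3}): because a node may be scheduled to transmit commodity $c$ when its CPQ is empty, one must absorb the resulting ``null packet'' slack discussed after (\ref{eq_basic_queuing_dynamics}) into the inequality without inflating the randomized probabilities beyond admissible values. Handling this bookkeeping---together with showing that the subsequence limits defining $\alpha^{**}$ and $\theta^{**}$ can be promoted to genuine Ces\`aro averages---is where most of the technical effort lies; once it is resolved, the remainder reduces to the standard flow-decomposition machinery of Neely--Georgiadis--Tassiulas invoked in Ref.~\cite{Neely_Rahul_DIVBAR_2009}.
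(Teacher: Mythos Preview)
Your proposal is correct and matches the approach the paper points to: the paper does not actually prove Theorem~\ref{cap_region_thm} itself but states that ``the detailed proof of Theorem~\ref{cap_region_thm} is shown in Ref.~\cite{Neely_Rahul_DIVBAR_2009},'' and your two-direction argument (Lyapunov-drift achievability via the stationary randomized policy, plus a converse extracting limiting empirical frequencies $\alpha_n^{**(c)}$, $\theta_{nk}^{**(c)}(\Omega_n)$ along a compact subsequence) is precisely that standard argument. The paper's own proof of the RMIA analog (Corollary~\ref{cap_region_corr}, Appendices~\ref{appendix: necessity_corollary_1} and~\ref{appendix: proof_sufficiency_corollary1}) mirrors exactly this structure, so your outline is on target.

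One minor remark on what you flag as the main obstacle: for the converse you do \emph{not} need to promote the subsequence limits to full Ces\`aro averages. The theorem only asserts the \emph{existence} of a stationary randomized policy with the stated properties, so any subsequential limit point of the empirical frequencies already furnishes admissible probabilities $\alpha_n^{**(c)},\theta_{nk}^{**(c)}(\Omega_n)$ satisfying (\ref{cap_region_const1})--(\ref{cap_region_const3}); there is no need to argue convergence of the full sequence. The null-packet slack you mention is handled exactly as you say, by the inequality in (\ref{cap_region_const3}) rather than equality.
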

In Theorem \ref{cap_region_thm}, the superscript $**$ of a variable indicates that the value of this variable is related to the implemented policy, while the variable $q_{n,\Omega_n}^{\rm{rep}}$ does not depend on the policy but depends on the adopted REP transmission scheme. The detailed proof of Theorem \ref{cap_region_thm} is shown in Ref. \cite{Neely_Rahul_DIVBAR_2009}.

Here we clarify that the summation notation $\sum_{{\Omega _n}:{\Omega _n} \in {{\cal K}_n}} {} $ in (\ref{cap_region_const3}) means the summation over all possible subsets $\Omega_n$ included by the neighbor set ${\cal K}_n$. In this work, from now on, the summation notation with form $\sum_{x:{\rm{\ Expression}}\left( x \right)} {}$ means the summation over all possible $x$ satisfying Expression$\left(x\right)$; in contrast, if the summation notation has a form similar as $\sum_{x \in {{\Phi}}} {} $ without the specification of summing index variable before a colon, by default, it means the summation is over $x$ satisfying $x \in \Phi$.

Theorem \ref{cap_region_thm} is based on the REP assumption. However, the following corollary makes an analogous statement, namely that a stationary randomized policy achieves the network capacity region also holds true with RMIA (however, note that only the \textit{structures} of the solutions are similar, while the actual \textit{values} of the flow rate etc. are different; note that the there is superscript $*$ instead of $**$ on the variables in Corollary \ref{cap_region_corr}).
\begin{corollary}
\label{cap_region_corr}
 With RMIA transmission scheme, the network capacity region $\Lambda_{\rm{RMIA}}$ consists of all the exogenous time average input rate matrices $\left( {\lambda _n^{\left( c \right)}} \right)$ for which there exists a stationary randomized policy that chooses probability $\alpha _n^{*\left( c \right)}$, $\theta _{nk}^{*\left( c \right)}\left( {{\Omega _n}} \right)$ and forms the time average flow rate $b_{nk}^{*\left( c \right)}$ (in unit of packet/slot), for all nodes $ n,k,c \in \cal N$, and all the subsets ${{\Omega _n}}$ included by node $n$'s neighbors ${\cal K}_n$, such that constraints with the similar structures as (\ref{cap_region_const1})-(\ref{cap_region_const3}) are satisfied:
\begin{equation}
\label{cap_region_const1_coll}
b_{nk}^{*\left( c \right)} \ge 0,b_{cn}^{*\left( c \right)} = 0,b_{nn}^{*\left( c \right)} = 0,{\rm{\ for\ }}n \ne c,
\end{equation}
\begin{equation}
\label{cap_region_const2_coll}
\sum\limits_{k\in {\cal K}_n} {b_{kn}^{*\left( c \right)}}  + \lambda _n^{\left( c \right)} \le \sum\limits_{k\in {\cal K}_n} {b_{nk}^{*\left( c \right)}} ,{\rm{for}}\ n \neq c,
\end{equation}
\begin{equation}
\label{cap_region_const3_coll}
b_{nk}^{*\left( c \right)} \le \alpha _n^{*\left( c \right)}\beta _n^{{\rm{rmia}}}\sum\limits_{\Omega_n: {\Omega _n} \subseteq {{\cal K}_n}} {q_{n,{\Omega _n}}^{{\rm{rmia}}}\theta _{nk}^{*\left( c \right)}\left( {{\Omega _n}} \right)},
\end{equation}
where $\alpha _n^{*\left( c \right)}$ is the probability that node $n$ decides to transmit a packet of commodity $c$ in each timeslot; $\beta _n^{{\rm{rmia}}}$ is a constant representing the inverse value of the expected epoch length for node $n$ with RMIA; ${q_{n,{\Omega _n}}^{\rm{rmia}}}$ is the probability that $\Omega_n$ is the first successful receiver set for node $n$ in each epoch with RMIA; ${\theta _{nk}^{*\left( c \right)}\left( {{\Omega _n}} \right)}$ is the conditional probability that node $n$ forwards a packet of commodity $c$ to node $k$, given that the first successful receiver set is $\Omega_n$.
\end{corollary}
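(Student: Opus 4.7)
The plan is to prove Corollary \ref{cap_region_corr} by adapting the proof of Theorem \ref{cap_region_thm} from Ref. \cite{Neely_Rahul_DIVBAR_2009}, with the crucial modification that under RMIA the natural unit of analysis is the \emph{epoch} rather than the timeslot. I will establish both directions separately: sufficiency (every rate matrix admitting probabilities satisfying (\ref{cap_region_const1_coll})-(\ref{cap_region_const3_coll}) lies in $\Lambda_{\rm{RMIA}}$) and necessity (every point of $\Lambda_{\rm{RMIA}}$ admits such probabilities). The bridge between the per-slot view used in Theorem \ref{cap_region_thm} and the per-epoch view required here is the constant $\beta_n^{\rm{rmia}}$, which converts an epoch-completion rate into a per-slot rate.

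For the sufficiency direction, I would construct a stationary randomized policy as follows: whenever node $n$ completes (or has no) epoch, it independently selects a commodity $c$ with probability proportional to $\alpha_n^{*(c)}$ (or stays silent), transmits the head-of-line packet of that commodity until it is first decoded by one or more neighbors in ${\cal K}_n$, and then assigns forwarding responsibility to $k \in \Omega_n$ with conditional probability $\theta_{nk}^{*(c)}(\Omega_n)$. Using Lemma \ref{lemma: flowing_rate_property}, in particular the geometric-type decay in (\ref{eq_cdf_information_perslot2}), I will show that the epoch length $L_n$ has finite mean and that successive epochs at node $n$, conditioned on the chosen commodity, form an i.i.d.\ renewal sequence whose per-epoch statistics depend only on the channel distributions $F_{R_{nk}}$ and on $H_0$. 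The renewal reward theorem then yields the long-run time-average flow rate on link $(n,k)$ for commodity $c$ as exactly $\alpha_n^{*(c)} \beta_n^{\rm{rmia}} \sum_{\Omega_n \subseteq {\cal K}_n} q_{n,\Omega_n}^{\rm{rmia}} \theta_{nk}^{*(c)}(\Omega_n)$. Combined with the flow conservation (\ref{cap_region_const2_coll}) and a standard Lyapunov drift (or fluid-limit) argument exactly as in Ref. \cite{Neely_Rahul_DIVBAR_2009}, rate stability follows.

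For the necessity direction, I would take any rate-stable policy supporting $(\lambda_n^{(c)})$ under RMIA and extract its limiting averages along its sample paths: $\alpha_n^{*(c)}$ as the long-run fraction of slots during which node $n$ is transmitting commodity $c$, and $\theta_{nk}^{*(c)}(\Omega_n)$ as the long-run fraction of node-$n$ epochs whose first successful receiver set is $\Omega_n$ and which are then forwarded to $k$. A subsequence/Cesàro averaging argument (taking limits over a convergent subsequence of finite-horizon averages, as in Ref. \cite{Neely_Rahul_DIVBAR_2009}) shows these limits exist, are non-negative and sum-constrained, and satisfy (\ref{cap_region_const1_coll})-(\ref{cap_region_const3_coll}) after dividing total epoch flow by the total number of slots to obtain the $\beta_n^{\rm{rmia}}$ factor; the conservation relation (\ref{cap_region_const2_coll}) is inherited directly from rate stability at each node.

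The main obstacle is the careful handling of the renewal structure induced by RMIA. Under REP, each slot is a clean i.i.d.\ success/failure trial, making the bound (\ref{cap_region_const3}) transparent; under RMIA, however, a ``trial'' spans a random number of slots, the per-slot receiver-decoding events are coupled across slots through the partial-information memories, and the state only resets upon first decoding. Proving that (i) $q_{n,\Omega_n}^{\rm{rmia}}$ and $1/\beta_n^{\rm{rmia}}$ are well-defined, policy-independent quantities depending only on $\{F_{R_{nk}}\}_{k \in {\cal K}_n}$ and $H_0$, and (ii) sample-path time-averages of flow rates converge to their expected per-epoch values divided by $\mathbb{E}[L_n]$, are the two technical points that require the full force of the renewal reward theorem together with Lemma \ref{lemma: flowing_rate_property} to guarantee finite epoch moments. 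Once this renewal accounting is in place, the remainder of the argument is a transcription of the proof of Theorem \ref{cap_region_thm} with $q_{n,\Omega_n}^{\rm{rep}}$ replaced by $\beta_n^{\rm{rmia}} q_{n,\Omega_n}^{\rm{rmia}}$.
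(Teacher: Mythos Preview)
Your necessity argument is essentially the same as the paper's (Appendix~\ref{appendix: necessity_corollary_1}): extract long-run fractions along sample paths, invoke the law of large numbers for the i.i.d.\ epoch lengths to identify $\beta_n^{\rm{rmia}}$ and $q_{n,\Omega_n}^{\rm{rmia}}$, and pass to a convergent subsequence for the policy-dependent ratios. No issue there.

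The sufficiency direction, however, is where your plan diverges from the paper and where there is a genuine gap. You write that once the renewal-reward accounting gives the correct time-average flow rates, ``a standard Lyapunov drift (or fluid-limit) argument exactly as in Ref.~\cite{Neely_Rahul_DIVBAR_2009}'' finishes the job. The paper explicitly argues that this is \emph{not} the case: under RMIA the per-slot flow indicators $b_{nk}^{(c)}(\tau)$ are no longer i.i.d.\ across slots (they depend on the accumulated partial information), so the one-timeslot drift analysis used for REP in \cite{Neely_Rahul_DIVBAR_2009} does not go through unchanged. The paper instead introduces a \emph{$d$-timeslot average Lyapunov drift} (equation~(\ref{eq_d-slot_lyapunov_drift})) and proves two supporting lemmas: Lemma~\ref{lemma: uniformly convergence}, which shows that the mean time-average flow rate over $[t_0,t_0+t-1]$ converges to $b_{nk}^{*(c)}$ \emph{uniformly in the starting slot $t_0$}, and Lemma~\ref{lemma: strong_stability}, which converts a $d$-slot drift bound into strong stability. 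The uniformity in $t_0$ is the delicate point---$t_0$ may fall mid-epoch, and the paper handles this by sandwiching the target average between averages that start at epoch boundaries (Appendix~\ref{appendix: uniformly_convergence}). Your renewal-reward statement gives almost-sure convergence of time averages starting from time $0$, but that alone does not give the uniform-in-$t_0$ bound needed to control the drift at an arbitrary reference slot.

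A secondary difference: you construct the stationary randomized policy by selecting a commodity \emph{at each epoch completion}, whereas the paper's policy selects a commodity \emph{independently in each timeslot} (so epochs for commodity $c$ are built from non-contiguous slots, cf.\ Fig.~\ref{fig_slots_allocation}). Your construction is simpler and would also achieve the target flow rates, but it is not the policy the corollary refers to; in particular, the necessity side of the paper's argument produces per-slot probabilities $\alpha_n^{*(c)}$, so the sufficiency side must show that \emph{that} per-slot policy stabilizes the network.
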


Corollary \ref{cap_region_corr} is proven by first showing that the given constraints (\ref{cap_region_const1_coll})-(\ref{cap_region_const3_coll}) are necessary for all the policies that support any $\left(\lambda_n^{\left(c\right)}\right)$ in the RMIA network capacity region $\Lambda_{\rm{RMIA}}$, which is shown in Appendix \ref{appendix: necessity_corollary_1}.

As for the sufficiency part, we need to show that if there exists a stationary randomized policy, under which the time average flow rates over the links together with any input rate matrix $\left(\lambda_n^{\left(c\right)}\right)$ within $\Lambda_{\rm{RMIA}}$ satisfy the constraints (\ref{cap_region_const1_coll})-(\ref{cap_region_const3_coll}), then the stationary randomized policy can stably support $\left(\lambda_n^{\left(c\right)}\right)$. The sufficiency part is proven in Appendix \ref{appendix: proof_sufficiency_corollary1}.

The proof of the sufficiency part bears some resemblance to the theoretical analysis in Ref. \cite{Neely_Rahul_DIVBAR_2009}, which analyzes the upper bound of the \emph{one-timeslot Lyapunov drift} of the stationary randomized policy with REP. As shown in Ref. \cite{Neely_Rahul_DIVBAR_2009}, although one-timeslot Lyapunov drift analysis works well with the assumption that the flow rate on each link is i.i.d. across timeslots, which is guaranteed with REP, it is not enough to analyze the statistical properties of the related metrics under the stationary randomized policy with RMIA. In fact, because of using the MIA technique, the flow rate under the stationary randomized policy on each link is no longer i.i.d. across timeslots but is related to the partial information already accumulated in the receiving nodes, which is the result of the previous history. However, with the renewal operation implemented at the end of each epoch for each node, there still exists some underlying good statistical properties in the metrics under the stationary randomized policy with RMIA. As will be shown in proving the sufficiency part of Corollary \ref{cap_region_corr}, for each transmitting node and for each commodity, the lengths of epochs under the stationary randomized policy are i.i.d.. Based on this property, the network throughput analysis can be done by using \emph{$d$-timeslot average Lyapunov drift}, which is defined as follows:
\begin{equation}
\frac{1}{d}\sum\limits_{n,c} {\mathbb{E}_\omega\left\{ {\left. {{{\left( {Q_n^{\left( c \right)}\left( {{t_0} + d} \right)} \right)}^2} - {{\left( {Q_n^{\left( c \right)}\left( {{t_0}} \right)} \right)}^2}} \right|\;{\bf{Q}}\left( {{t_0}} \right)} \right\}},
\label{eq_d-slot_lyapunov_drift}
\end{equation}
where $d$ is a positive interval length (in unit of timeslot); $t_0$ is an arbitrary timeslot; the vector ${\bf{Q}}\left(t_0\right)$ represents the backlog state of the network in timeslot $t_0$; $\mathbb{E}_\omega$ is the expectation operation taken with respect to $\omega$, which is the elementary random event defined for the network over the whole time horizon.

In order to explain the expectation operator $\mathbb{E}_\omega$ in (\ref{eq_d-slot_lyapunov_drift}), we need to define the probability space, denoted as $\left({\cal S}, {\cal F}, \mathbb{P}\right)$, for the network over the whole time horizon. Specifically, we firstly define the following sample spaces and their corresponding $\sigma$-fields:
\begin{itemize}
\item Let ${\cal S}_{\rm{ch}}$ represent the sample space consisting of all possible channel realization sample paths over the whole time horizon in the network; let the set of all possible channel realization events to be the corresponding $\sigma$-field, denoted as ${\cal F}_{\rm{ch}}$.
\item Let ${\cal S}_{\rm{arr}}$ represent the sample space consisting of all possible exogenous input arrival sample paths over the whole time horizon to network; let the set of all possible exogenous arrival events to be the corresponding $\sigma$-field, denoted as ${\cal F}_{\rm{arr}}$.
\item Among all possible policies (with transmission scheme discrimination) that can be implemented on the network over the whole time horizon, index them as $1, 2, 3, \cdots$. Let ${\cal S}_{j,\rm{dec}}$ represent the sample space consisting of the sample paths of all possible decision outcomes under the $j$th policy over the whole time horizon in the network; let the set of all possible decision events to be the corresponding $\sigma$-field, denoted as ${\cal F}_{j,\rm{dec}}$. The definitions of ${\cal S}_{j,\rm{dec}}$ and ${\cal F}_{j,\rm{dec}}$ are for the analysis of the randomized policies, such as the stationary randomized policy introduced in this work. The non-randomized policies, such as DIVBAR-RMIA and DIVBAR-MIA proposed in this work, can also be dominated by this general framework.
\end{itemize}
With the above definitions, the total sample space $\cal S$ for the whole network over the whole time horizon under all possible policies is defined as
\begin{equation}
{\cal S} = {\cal S}_{\rm{ch}} \times {\cal S}_{\rm{arr}} \times {\cal S}_{1,\rm{dec}} \times {\cal S}_{2,\rm{dec}} \times {\cal S}_{3,\rm{dec}} \times\cdots;
\end{equation}
and the total $\sigma$-field for the whole network over the whole time horizon under all possible policies can be defined as
\begin{equation}
{\cal F} = {\cal F}_{\rm{ch}} \times {\cal F}_{\rm{arr}} \times {\cal F}_{1,\rm{dec}} \times {\cal F}_{2,\rm{dec}} \times {\cal F}_{3,\rm{dec}} \times\cdots.
\end{equation}
For the probability measure $\mathbb{P}$, which is a function mapping from $\cal F$ to $\left[0,1\right]$, is determined by the statistics of the channel realizations of the links in the network over the whole time horizon, the statistics of the exogenous input arrivals to the network over the whole time horizon, the statistics of the decisions made by all possible policies, the queueing dynamics, and the constraints imposed on the routing.

The definition of the probability space $\left({\cal S}, {\cal F}, \mathbb{P}\right)$ sets up a unified framework for the analysis of all the random variables, such as the flow rate, backlog state, etc., under any policy. In other words, any random variable $X$ in the routing of the network under any policy is a function mapping from $\cal S$ to the set of real numbers $\mathbb{R}$ with the property that $\left\{\omega \in {\cal S}: X\left(\omega\right)\le x\right\} \in \cal F$, and the variable's distribution function satisfies: $F_X\left(x\right)=\mathbb{P}\left(X\le x\right)$. Therefore, from a general point of view, the expectation operation should be taken with respect to $\omega$, which is the elementary event in $\cal S$. From now on, if not specifically clarified, the expectation operator $\mathbb{E}$ represents $\mathbb{E}_\omega$ for notational simplification.

Based on the probability space $\left({\cal S}, {\cal F}, \mathbb{P}\right)$, going back to (\ref{eq_d-slot_lyapunov_drift}) and following the idea of Lyapunov drift analysis, the key metric should arise from the upper bound of the $d$-timeslot average Lyapunov drift. The intuition behind using the $d$-timeslot average Lyapunov drift is two fold:
\begin{itemize}
\item If $d$ takes a large value, the time interval starting from timeslot $t_0$ to timeslot $t_0+d-1$ has a high probability of including a large number of epochs for each transmitting node. In other words, the key metric for each transmitting node in the upper bound of the $d$-timeslot average Lyapunov drift can be interpreted as the average of the metrics over single epochs, which are within the $d$ timeslots interval. With the i.i.d. property of the epoch lengths, the key metric in the upper bound of the $d$-timeslot average Lyapunov drift shows convergence.
\item Timeslot $t_0$ and timeslot $t_0+d-1$ may be located in the middle of certain epochs, i.e., the $d$ timeslots interval may not contain an integer multiple of epochs. Instead, there might be some "marginal" interval at the beginning or at the end of the $d$ timeslots interval, and the distribution of the "marginal" interval length is difficult to characterize. However, since the Lyapunov drift an averaged over $d$ timeslots, the effect of the "marginal" interval becomes negligible as $d$ grows large.
\end{itemize}

\subsection{Network capacity region: RMIA versus REP}
\label{subsec: capacity_region_RMIA_vs_REP_main}
Comparing with REP, RMIA potentially increases the success probability of the transmission attempts over each wireless link in the network. With this fact, we can intuitively comprehend that the network has an "enhanced ability" of delivering the arriving packets. Based on this intuition, we also predict that the network capacity can also be enlarged by using RMIA instead of REP, which is organized as the following theorem:
\begin{thm}
\label{thm: capacity_comparison}
Let the amount of information transmitted over each link in each timeslot be continuously distributed over $\left[0,\infty\right)$, and the corresponding cdf $F_{R_{nk}}\left(x\right)$ satisfies: $0 < {F_{{R_{nk}}}}\left( {{x}} \right) < 1$, where $x>0$. The RMIA network capacity region $\Lambda_{\rm{RMIA}}$ is strictly larger than the REP network capacity region $\Lambda_{\rm{REP}}$, i.e., $\Lambda_{\rm{RMIA}}  \supset {\Lambda_{\rm{REP}}}$.
\end{thm}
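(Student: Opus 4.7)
My plan is to split Theorem \ref{thm: capacity_comparison} into two claims and handle them separately: the containment $\Lambda_{\rm{REP}}\subseteq\Lambda_{\rm{RMIA}}$ and an explicit witness showing the inclusion is strict. I will first dispatch the witness, which isolates the role of Lemma \ref{lemma: flowing_rate_property} cleanly. Consider the minimal network with transmitter $n$, a single neighbor/destination $k$, and one commodity. By Theorem \ref{cap_region_thm} applied with $|{\cal K}_n|=1$ and $\theta^{**}(\{k\})=1$, the supremum of admissible $\lambda_n^{(k)}$ in $\Lambda_{\rm{REP}}$ equals $1-F_{R_{nk}}(H_0)$. By Corollary \ref{cap_region_corr} the analogous supremum in $\Lambda_{\rm{RMIA}}$ equals $\beta_n^{\rm{rmia}}=1/\mathbb{E}[L^{\rm{rmia}}]$ with $\mathbb{E}[L^{\rm{rmia}}]=\sum_{m\ge 0}F_{R_{nk}}^{(m)}(H_0)$. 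Inequality (\ref{eq_cdf_information_perslot2}) of Lemma \ref{lemma: flowing_rate_property} gives $F_{R_{nk}}^{(m)}(H_0)<[F_{R_{nk}}(H_0)]^m$ strictly for every $m\ge 2$, so termwise summation yields $\mathbb{E}[L^{\rm{rmia}}]<1/(1-F_{R_{nk}}(H_0))$ and therefore $\beta_n^{\rm{rmia}}>1-F_{R_{nk}}(H_0)$. Any rate in the open interval $\left(1-F_{R_{nk}}(H_0),\,\beta_n^{\rm{rmia}}\right)$ then lies in $\Lambda_{\rm{RMIA}}\setminus\Lambda_{\rm{REP}}$, proving strictness.

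For the containment I would take an arbitrary $(\lambda_n^{(c)})\in\Lambda_{\rm{REP}}$, use Theorem \ref{cap_region_thm} to fix REP parameters $\alpha_n^{**(c)},\theta_{nk}^{**(c)}(\Omega_n),b_{nk}^{**(c)}$ satisfying (\ref{cap_region_const1})-(\ref{cap_region_const3}), and construct a stationary-randomized RMIA policy by reusing the same forwarding rule $\theta_{nk}^{*(c)}(\Omega_n)=\theta_{nk}^{**(c)}(\Omega_n)$ while choosing $\alpha_n^{*(c)}$ proportional to $\alpha_n^{**(c)}$ with a compensating ratio. The problem reduces to the pointwise inequality
\begin{equation*}
\beta_n^{\rm{rmia}}\sum_{\Omega_n\subseteq{\cal K}_n}q_{n,\Omega_n}^{\rm{rmia}}\theta_{nk}^{(c)}(\Omega_n)\;\ge\;\sum_{\Omega_n\subseteq{\cal K}_n}q_{n,\Omega_n}^{\rm{rep}}\theta_{nk}^{(c)}(\Omega_n)
\end{equation*}
for every fixed forwarding rule $\theta$. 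If it holds, the RMIA constraint (\ref{cap_region_const3_coll}) can be met with $b_{nk}^{*(c)}\ge b_{nk}^{**(c)}$, after which flow conservation (\ref{cap_region_const2_coll}) and non-negativity (\ref{cap_region_const1_coll}) inherit directly from their REP analogues, yielding $(\lambda_n^{(c)})\in\Lambda_{\rm{RMIA}}$.

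The main obstacle is precisely this pointwise inequality: the two vectors $(q_{n,\Omega_n}^{\rm{rep}})$ and $(q_{n,\Omega_n}^{\rm{rmia}})$ live on the same support but reshuffle mass among subsets $\Omega_n$ in non-trivial ways, so the inequality does not follow from Lemma \ref{lemma: flowing_rate_property} on a link-by-link basis. My plan is a sample-path coupling driven by the lemma: on a common channel-realization path, if $R_{nk}(\tau)\ge H_0$ for some $\tau\le L$ then $k$ must also belong to $\Omega_n$ at the RMIA epoch end, since accumulation can only reduce $k$'s decoding time relative to REP. A Wald-style accounting of stopping-time indicator sums then converts per-slot REP success probabilities into per-epoch RMIA first-set probabilities, and normalization by $\beta_n^{\rm{rmia}}=1/\mathbb{E}[L]$ lines the two sides up. A backup plan, if the coupling does not tightly control arbitrary $\theta$, is to reduce the inequality by linearity to indicator forwarding rules $\theta_{nk}^{(c)}(\Omega_n)=\mathbf{1}[k\in\Omega_n]$ and argue receiver-by-receiver using the same single-link comparison that drives the witness step.
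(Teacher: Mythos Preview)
Both halves of your plan have genuine gaps.

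\textbf{Strictness.} Your witness computes the one-hop REP and RMIA throughputs on a single link, but the theorem concerns the given multi-hop network. There, the REP boundary in the $(n,k)$ direction can far exceed $1-F_{R_{nk}}(H_0)$ because packets may reach $k$ along alternative paths, so a rate in $\bigl(1-F_{R_{nk}}(H_0),\,\beta_n^{\rm rmia}\bigr)$ need not lie outside $\Lambda_{\rm REP}$. The paper instead starts from an arbitrary REP-boundary stationary policy, couples REP and RMIA on the same channel realizations, and exploits the epochs in which the RMIA first-decoding set $\Omega_n$ is nonempty while the corresponding REP success set $\Psi_n$ is empty (the event with probability $q_{n,\emptyset,\Omega_n}^{\rm rep,rmia}>0$, verified via Lemma~\ref{lemma: flowing_rate_property}); this yields a strictly positive extra flow on every link, which is then pushed along a simple path to move one coordinate past the REP boundary.

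\textbf{Containment.} The pointwise inequality you need when reusing $\theta^{**}$ is false. Take two i.i.d.\ receivers $j,k$ whose per-slot information is, with probability close to $1$, concentrated near $0.6H_0$, and exceeds $H_0$ only with small probability $p$ (one can smooth this to satisfy $0<F(x)<1$). Under RMIA almost every epoch has length $2$ and both receivers decode together, so the only contribution to $q_{n,\{k\}}^{\rm rmia}$ comes from the rare one-slot epochs, giving $q_{n,\{k\}}^{\rm rmia}\approx p(1-p)=q_{n,\{k\}}^{\rm rep}$, while $\beta_n^{\rm rmia}\approx 1/2$. Hence $\beta_n^{\rm rmia}q_{n,\{k\}}^{\rm rmia}\approx \tfrac12\,q_{n,\{k\}}^{\rm rep}$, and for the rule $\theta_{nk}(\{k\})=1$, $\theta_{nk}(\{j,k\})=0$ your inequality fails. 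Neither the Wald coupling sketch nor the indicator ``backup'' (which is not a valid forwarding rule and does not capture the $\Omega_n$-dependence of $\theta$) repairs this. The paper's fix is precisely to change the forwarding rule: since $\Psi_n\subseteq\Omega_n$ on every sample path, it defines
\[
\theta_{nk}^{1(c)}(\Omega_n)=\sum_{\Psi_n\subseteq\Omega_n,\;k\in\Psi_n}\frac{q_{n,\Psi_n,\Omega_n}^{\rm rep,rmia}}{q_{n,\Omega_n}^{\rm rmia}}\,\theta_{nk}^{**(c)}(\Psi_n),
\]
which reproduces the REP flow matrix exactly under RMIA and is checked to be a valid forwarding distribution.
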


The detailed proof of Theorem \ref{thm: capacity_comparison} is given in Appendix \ref{appendix: capacity comparison}.

The assumption: $0 < {F_{{R_{nk}}}}\left( {{x}} \right) < 1$, $\forall n,k \in \cal N$, guarantees that the claim in Theorem \ref{thm: capacity_comparison} holds true for any possible (positive) value of packet entropy. With this mild assumption, Theorem \ref{thm: capacity_comparison} demonstrates that the network has a non-zero potential of further increasing its throughput by using transmission scheme RMIA instead of REP. In other words, Theorem \ref{thm: capacity_comparison} shows an opportunity to develop a routing algorithm with RMIA that has a better throughput performance than the original throughput optimal algorithm with REP that achieves $\Lambda_{\rm{REP}}$.

\section{Diversity Backpressure Routing Algorithms with Mutual Information Accumulation}
\label{sec: algorithm_discription}
In Ref. \cite{Neely_Rahul_DIVBAR_2009}, the Diversity Backpressure (DIVBAR) routing algorithm is proposed for wireless ad-hoc network and has been shown to be throughput optimal among all possible algorithms with REP assumption. In this section, we use the DIVBAR algorithm as a reference and develop two routing algorithms with MIA technique: DIVBAR-RMIA and DIVBAR-MIA, in order to further enhance the throughput performance. Both proposed algorithms work in the similar manner as the original DIVBAR algorithm. That is, each node $n$ in the network dynamically make routing decisions based on the observation of the \emph{backlog state}, which is the backlog information of CPQs in node $n$ and in the neighbor nodes $k\in{\cal K}_n$. The key difference between DIVBAR-RMIA and DIVBAR-MIA is: DIVBAR-RMIA clears the partial information of each packet in the network, whenever one copy of the packet being transmitted from node $n$ is firstly decoded by one or more receiving nodes in ${\cal K}_n$; DIVBAR-MIA retains the partial information until the corresponding packet is delivered to the destination.

\subsection{Diversity Backpressure Routing with Renewal Mutual Information Accumulation (DIVBAR-RMIA)}
\label{subsec: DIVBAR-RMIA}
We summarize the DIVBAR-RMIA algorithm for each transmitting node $n$ in its $i$th epoch as the following steps, where the notation in the form of $\hat x$ means that the value of the variable $x$ is specifically determined by DIVBAR-RMIA:
\begin{enumerate}
\item In the starting timeslot $u_{n,i}$ of each epoch $i$ for the transmitting node $n$, node $n$ observes the CPQ backlog of each commodity $c\in \cal N$ in each of its potential receiver $k \in {\cal K}_n$. Combining with its own backlog of CPQ of each commodity, node $n$ computes the d\emph{ifferential backlog coefficient} as follows:
    \begin{equation}
    \hat W_{nk}^{\left( c \right)}\left( {{u_{n,i}}} \right) = \max \left\{ {\hat Q_n^{\left( c \right)}\left( {{u_{n,i}}} \right) - \hat Q_k^{\left( c \right)}\left( {{u_{n,i}}} \right),0} \right\}.
    \label{eq_differential_backlog_main}
    \end{equation}
    \label{step: differential backlog_main_RMIA}
\item For each commodity $c$, the potential receivers in ${\cal K}_n$ are ranked according to their corresponding $W_{nk}^{\left( c \right)}\left( {{u_{n,i}}} \right)$ weights sorted in descending order. We define ${\cal {\hat R}}_{nk}^{\rm{high},\left(c\right)}\left(u_{n,i}\right)$ and ${\cal {\hat R}}_{nk}^{\rm{low},\left(c\right)}\left(u_{n,i}\right)$ respectively as the set of the receivers $j \in {\cal K}_n$ with higher and lower rank than receiver $k$ in timeslot $u_{n,i}$, i.e.,
    \begin{align}
    \hat W_{nj}^{\left( c \right)}\left( {{u_{n,i}}} \right) &\ge \hat W_{nk}^{\left( c \right)}\left( {{u_{n,i}}} \right){\rm{\ for\ }}\forall j \in {\cal {\hat R}}_{nk}^{{\rm{high,}}\left( c \right)}\left( {{u_{n,i}}} \right);\nonumber\\
    \hat W_{nj}^{\left( c \right)}\left( {{u_{n,i}}} \right) &\le \hat W_{nk}^{\left( c \right)}\left( {{u_{n,i}}} \right){\rm{\ for\ }}\forall j \in {\cal {\hat R}}_{nk}^{{\rm{low,}}\left( c \right)}\left( {{u_{n,i}}} \right).
    \end{align}
    \label{step: define_priority_set_RMIA_main}
\item Define $\hat \varphi_{nk}^{\left(c\right)}\left(i\right)$ as the probability that a packet of commodity $c$ is firstly decoded by the receiving node $k \in {\cal K}_n$ in the first decoding timeslot of epoch $i$, while the receiving nodes in set ${\cal {\hat R}}_{nk}^{\rm{high},\left(c\right)}\left(u_{n,i}\right)$ do not successfully decode, i.e., node $k$ has the highest priority among the successful receivers in the first successful receiver set.
    \label{step: define_hiararchy_set_RMIA_main}
\item Define $\hat { c}_{n}\left(i\right)$ as the optimal commodity that maximizes the following backpressure metric:
    \begin{equation}
    \sum\limits_{k \in {{\cal K}_n}} {\hat W_{nk}^{\left( c \right)}\left( {{u_{n,i}}} \right)\hat \varphi _{nk}^{\left( c \right)}\left(i\right)}.
    \label{eq_backpressure_metric_main}
    \end{equation}
    Define $\hat \Xi_{n}\left(i\right)$ as the resulting maximum value:
    \begin{equation}
    {\hat \Xi _n}\left(i\right) = \sum\limits_{k \in {{\cal K}_n}} {\hat W_{nk}^{\left( {{{{\hat c}}_n}\left(i\right)} \right)}\left( {{u_{n,i}}} \right)\hat \varphi _{nk}^{\left( {{{ {\hat c}}_n}\left(i\right)} \right)}\left(i\right)}.
    \label{eq_maximum_metric_RMIA_main}
    \end{equation}
    \label{step: compute_metric_DIVBAR-RMIA_main}
\item If $\hat \Xi_n\left(i\right)>0$, node $n$ chooses a packet at the head of the CPQ of commodity $\hat c_n\left(i\right)$ to transmit in the current timeslot $\tau$, where $\tau= u_{n,i}$. Else node $n$ starts transmitting a null packet.
    \label{step: choose_commodity_to_transmit_RMIA_main}
\item After the transmission in the current timeslot $\tau$, where $\tau \ge u_{n,i}$, each receiver $k\in {\cal K}_n$ sends ACK/NACK back to node $n$ indicating whether node $k$ decoded the packet or not.
    \label{step: generating_ACK_RMIA_main}
\item After gathering all the ACK/NACK feedbacks from all the receiving nodes of ${\cal K}_n$ in the current timeslot $\tau$, where $\tau \ge u_{n,i}$, node $n$ checks if there is any receiving node that decoded the packet in timeslot $\tau$.  If yes, timeslot $\tau$ is the ending timeslot of current epoch, and the algorithm goes to step \ref{step: forwarding_packet}); if not, there's no more operation left in the current timeslot, and the algorithm goes to step \ref{step: keep_transmitting}) for the next timeslot $\tau+1$;
    \label{step: judge_first_decoding_set}
\item Knowing the fact that none of the receivers in set ${\cal K}_n$ successfully decoded the packet being transmitted in the previous timeslot $\tau-1$, node $n$ keeps transmitting the packet in the current timeslot $\tau$, where $\tau\ge u_{n,i}+1$. On the receiver side, each receiving node keeps accumulating the partial information of the packet. Then the algorithm goes back to step \ref{step: generating_ACK_RMIA_main}).
    \label{step: keep_transmitting}
\item After knowing that there is at least one successful receiving node in timeslot $\tau$, node $n$ firstly checks if the packet being transmitted is a null packet or a valid packet. In the former case, the algorithm directly goes to step \ref{step: epoch_end}). In the later case, node $n$ shifts the forwarding responsibility of the decoded packet to the successful receiver $k$ with the largest positive differential backlog coefficient $\hat W_{nk}^{\left(\hat c_n\left(i\right)\right)}\left(u_{n,i}\right)$, while it retains the forwarding responsibility if none of successful receiver has positive differential backlog. Then the algorithm goes to step \ref{step: epoch_end}).
    \label{step: forwarding_packet}
\item If the decoded packet is a null packet, all the copies and partial information of the null packet in the transmitting and receiving nodes are cleared. If the decoded packet is valid, only the node which has the forwarding responsibility keeps the packet while other nodes clear either the partial information or complete copies of the packet. Then the current epoch of transmitting node $n$ ends.
    \label{step: epoch_end}
\end{enumerate}

In the above summary of the DIVBAR-RMIA algorithm, the value of the probability $\hat \varphi_{nk}^{\left(c\right)}\left(i\right)$ can be computed with the knowledge of the distribution of the amount of information transmitted per timeslot $R_{nk}\left(\tau\right)$. Note that $R_{nk}\left(\tau\right)$ is i.i.d. across timeslots. Define $T_n\left(i\right)$ as the number of timeslots in the $i$th epoch. We compute the value of $\hat \varphi_{nk}^{\left(c\right)}\left(i\right)$ as follows:
\begin{align}
\hat \varphi _{nk}^{\left( {{c_n}} \right)}\left( i \right)&= \sum\limits_{m = 1}^\infty  {\Pr \left( {{\rm{node\ }}k{\rm{\ has\ the\ highest\ priority\ in\ the\ first\ successful\ receiver\ set,\ }}{T_n}\left( i \right) = m} \right)}\nonumber\\
&= \sum\limits_{m = 1}^\infty  {\prod\limits_{j \in {\cal {\hat R}}_{nk}^{{\rm{high,}}\left( c \right)}\left( {{u_{n,i}}} \right)} {\Pr \left( {\sum\limits_{\tau  = 1}^m {{{ R}_{nj}}\left( \tau  \right)}  < {H_0}} \right)} }\cdot \prod\limits_{j \in {\cal {\hat R}}_{nk}^{{\rm{low,}}\left( c \right)}\left( {{u_{n,i}}} \right)} {\Pr \left( {\sum\limits_{\tau  = 1}^{m - 1} {{R_{nj}}\left( \tau  \right)}  < {H_0}} \right)}\nonumber\\
&\ \ \ \ \ \ \ \ \ \ \cdot \Pr \left( {\sum\limits_{\tau  = 1}^{m - 1} {{R_{nk}}\left( \tau  \right)}  < {H_0} \le \sum\limits_{\tau  = 1}^m {{R_{nk}}\left( \tau  \right)} } \right)\nonumber\\
&= \sum\limits_{m = 1}^\infty  {\prod\limits_{j \in {\cal {\hat R}}_{nk}^{{\rm{high,}}\left( c \right)}\left( {{u_{n,i}}} \right)} {F_{{R_{nj}}}^{\left( m \right)}\left( {{H_0}} \right)}  \cdot \prod\limits_{j \in {\cal {\hat R}}_{nk}^{{\rm{low,}}\left( c \right)}\left( {{u_{n,i}}} \right)} {F_{{R_{nj}}}^{\left( {m - 1} \right)}\left( {{H_0}} \right)}  \cdot \left[ {F_{{R_{nk}}}^{\left( {m - 1} \right)}\left( {{H_0}} \right) - F_{{R_{nk}}}^{\left( m \right)}\left( {{H_0}} \right)} \right]} ,
\label{eq_probability_for_choosing_commodity_RMIA_main}
\end{align}
where $F_{R_{nk}}^{\left(m\right)}\left(x\right)$ can be computed iteratively:
\begin{equation}
F_{{R_{nk}}}^{\left( m \right)}\left( x \right) = \int\limits_0^{{H_0}} {F_{{R_{nk}}}^{\left( {m - 1} \right)}\left( {x - y} \right){f_{{R_{nk}}}}\left( y \right)dy},\ {\rm{for}}\ k \in {\cal K}_n.
\end{equation}

Moreover, according to the above description, the DIVBAR-RMIA algorithm is a distributed algorithm because, when implementing the algorithm, each node $n$ only requires queue backlog information of its neighbor nodes (potential receivers) and the link success probabilities of those neighbor nodes over the transmission periods of different epoch lengths, which depends on the value of $F_{R_{nk}}^{\left(m\right)}\left(H_0\right)$. Since the network is stationary, the cdf functions $F_{R_{nk}}^{\left(m\right)}\left(x\right)$ can be computed off line, and each node $n$ in the network can compute and store the values of $F_{R_{nk}}^{\left(m\right)}\left(H_0\right)$ with different epoch length $m$ for each potential receiver $k \in {\cal K}_n$ before the algorithm starts. During the algorithm's running time, in the starting timeslot $u_{n,i}$ of each epoch for the transmitting node $n$, node $n$ just uses the sorted differential backlog $\hat W_{nk}\left(u_{n,i}\right)$ obtained from backlog observation and the stored values of $F_{R_{nk}}^{\left(m\right)}\left(H_0\right)$ to compute the value $\hat \varphi_{nk}^{\left(c\right)}\left(i\right)$ according to (\ref{eq_probability_for_choosing_commodity_RMIA_main}), and further compute the metric $\hat \Xi_n\left(i\right)$ according to (\ref{eq_maximum_metric_RMIA_main}) and find the corresponding optimal commodity $\hat c_n\left(i\right)$ to transmit.

Furthermore, with contiguous timeslots consisting of each epoch $i$, the transmission decision made by node $n$ during epoch $i$ is based on the backlog observation in the starting timeslot $u_{n,i}$ of the epoch and remains the same through the whole epoch. On other other hand, node $n$ is not only a transmitting node but also simultaneously a receiving node respective to each of its neighbor nodes $k\in {\cal K}_n$ as well. Note that node $n$ and node $k$ can have non-synchronized epochs due to the different channel conditions on their emanating links. Therefore, it's possible that the backlogs in node $n$ and node $k$ are updated in the middle of the epoch $i$ for node $n$, when an epoch for node $k$ ends and a packet is forwarded from node $k$ to node $n$. In this case, the original backlog state observation made by node $n$ in timeslot $u_{n,i}$ is no longer "fresh" since the update of the backlogs in node $n$ and node $k$, which in turn indicates that the transmission decision made by node $n$ for the remaining part of epoch $i$ is based on an outdated backlog observation. However, as will be shown in Theorem \ref{thm: DIVBAR_MIA_throughput_optimal} later, the throughput performance is not affected by the outdated backlog observation.

\subsection{Diversity Backpressure Routing with Mutual Information Accumulation (DIVBAR-MIA)}
\label{subsec: DIVBAR-MIA}
DIVBAR-RMIA clears the partial information on all nodes as soon as the corresponding packet is firstly decoded by one or more receiving nodes. However, the eliminated partial information could be useful for the future decoding, and therefore, the clearance operation may be a waste of "resource".

In contrast, DIVBAR-MIA uses the same strategy of choosing commodity to transmit and choosing the successful receiving node to forward the decoded packet, but retains the partial information of the corresponding packet until the packet reaches its destination. Moreover, we consider here a version of DIVBAR-MIA that is synchronized with DIBAR-RMIA in the sense that it is set to perform in synchronized epochs with DIVBAR-RMIA, i.e., each node starts to transmit a new copy of packet only when it would also start to transmit a new copy of packet under DIVBAR-RMIA. Thus, DIVBAR-MIA results in that the potential receivers of each transmitting node $n$ may have already accumulated certain amount of partial information of a packet before node $n$ starts transmitting the packet. With the property of having synchronized epochs with DIVBAR-RMIA, DIVBAR-MIA guarantees that the successful receiver set at the end of each epoch of node $n$ includes the first successful receiver set under DIVBAR-RMIA. In other words, the set of decoding nodes might be enlarged at the end of each epoch by using DIVBAR-MIA instead of using DIVBAR-RMIA.

Nevertheless, here "being synchronized with DIVBAR-RMIA" does not mean that we have to run DIVBAR-RMIA simultaneously with DIVBAR-MIA. Instead, DIVBAR-MIA only needs the information of the epochs' starting and ending time under DIVBAR-RMIA. This can be done in each receiving node by keeping checking if the receiving node has accumulated enough amount of partial information to decode the packet being transmitted, without the help of the \emph{pre-accumulated partial information} respective to the starting timeslot of the current epoch starts.

Before summarizing the DIVBAR-MIA algorithm, define $p_{n,i}$ as the index of the packet which is being transmitted by node $n$ in its $i$th epoch. Let $I_{k,p_{n,i}}^{\rm{pre}}\left(u_{n,i}\right)$ represent the amount of pre-accumulated partial information of packet $p_{n,i}$ respective to timeslot $u_{n,i}$ stored in node $k$. Additionally, define $I_{nk}^{\rm{rmia}}\left(u_{n,i},\tau\right)$ as the amount of partial information of packet $p_{n,i}$ purely accumulated by the transmissions in the time interval from timeslot $u_{n,i}$ to the timeslot $\tau$.

The definitions of $I_{k,p_{n,i}}^{\rm{pre}}\left(u_{n,i}\right)$ and $I_{nk,p_{n,i}}^{\rm{rmia}}\left(u_{n,i},\tau\right)$ indicate that in timeslot $\tau$, under RMIA, the receiving node $k$ is only allowed to decode the packet being transmitted using the amount of partial information $I_{nk,p_{n,i}}^{\rm{rmia}}\left(u_{n,i},\tau\right)$, while under MIA, the receiving node $k$ is allowed to decode the packet being transmitted by using the amount of partial information $I_{k,p_{n,i}}^{\rm{pre}}\left(u_{n,i}\right)+I_{nk,p_{n,i}}^{\rm{rmia}}\left(u_{n,i},\tau\right)$. When implementing DIVBAR-MIA, node $k$ is assumed to be able to distinguish the two kinds of partial information belonging to the same packet being transmitted.

Now we summarize DIVBAR-MIA algorithm for each transmitting node $n$ in its $i$th epoch as the following steps, where the notation in the form of $\hat {\hat x}$ means that the value of the variable $x$ is specifically determined by DIVBAR-MIA:
\begin{enumerate}
\item On the transmitter side, node $n$ executes the similar steps as Step \ref{step: differential backlog_main_RMIA})-\ref{step: choose_commodity_to_transmit_RMIA_main}) in the algorithm summary of DIVBAR-RMIA but based on the backlog observations ${\bf{\hat{\hat Q}}}\left(u_{n,i}\right)$ under DIVBAR-MIA, in which the differential backlog coefficient $\hat{\hat{W}}_{nk}^{\left(c\right)}\left(u_{n,i}\right)$, the node sets $\hat {\hat {{\cal R}}}_{nk}^{\rm{high},\left(c\right)}\left(u_{n,i}\right)$ and $\hat {\hat {{\cal R}}}_{nk}^{\rm{low},\left(c\right)}\left(u_{n,i}\right)$, the probability value $\hat {\hat \varphi}_{nk}^{\left(c\right)}\left(i\right)$, the commodity $\hat{\hat{c}}_n\left(i\right)$ that might be chosen to transmit, and the resulting backpressure metric $\hat{\hat{\Xi}}_n\left(i\right)$ are computed.
\item On the receiver side, after the data transmission in timeslot $\tau$, where $\tau \ge u_{n,i}$, each receiving node sends two feedback signals back to node $n$: $\rm (ACK/NACK)_{MIA}$ and $\rm (ACK/NACK)_{RMIA}$. $\rm (ACK/NACK)_{MIA}$ indicates whether node $k$ successfully decodes the packet with MIA, which is true if ${I_{k,{p_{n,i}}}}\left( {{u_{n,i}}} \right) + I_{nk,p_{n,i}}^{\rm{rmia}}\left(u_{n,i},\tau\right)\ge H_0$; $\rm (ACK/NACK)_{RMIA}$ indicates whether the partial information accumulated at node $k$ purely during the current epoch has been enough to decode the packet, which is true if $I_{nk,p_{n,i}}^{\rm{rmia}}\left(u_{n,i},\tau\right)\ge H_0$.
    \label{step: forming_ACKs_MIA_main}
\item After gathering all the $\rm (ACK/NACK)_{MIA}$ and $\rm (ACK/NACK)_{RMIA}$ feedbacks from all the receiving nodes in the current timeslot $\tau$, where $\tau \ge u_{n,i}$, node $n$ firstly check $\rm (ACK/NACK)_{RMIA}$ feedbacks to confirm if there is any receiving node that firstly accumulates enough information $I_{nk,p_{n,i}}^{\rm{rmia}}\left(u_{n,i},\tau\right)$ which exceeds $H_0$. If there is such a receiving node, timeslot $\tau$ should be the ending timeslot of current epoch. Then the algorithm goes to step \ref{step: choose_forwarding_node_MIA_main}). If none of the receiving node has accumulated enough amount of partial information $I_{nk,p_{n,i}}^{\rm{rmia}}\left(u_{n,i},\tau\right)$ to decode the packet, there is no more operation left in the current timeslot, and the algorithm goes to step \ref{step: keep_transmitting_MIA_main}) for the next timeslot $\tau+1$.
\item Knowing the fact that none of the receiving nodes in the set ${\cal K}_n$ has accumulated enough amount of partial information $I_{nk,p_{n,i}}^{\rm{rmia}}\left(u_{n,i},\tau-1\right)$ up to the previous timeslot $\tau-1$, node $n$ keeps transmitting the same packet in the current timeslot $\tau$, where $\tau \ge u_{n,i}+1$. Then the algorithm goes back to step \ref{step: forming_ACKs_MIA_main}).
    \label{step: keep_transmitting_MIA_main}
\item After knowing that there is at least one receiving node that has accumulated enough amount of partial information $I_{nk,p_{n,i}}^{\rm{rmia}}\left(u_{n,i},\tau\right)$ to decode the packet, node $n$ checks if the transmitted packet is a null packet or a valid one. In the former case, the algorithm directly goes to step \ref{step: epoch_end_MIA_main}). In the later case, node $n$ further checks the gathered $\rm (ACK/NACK)_{MIA}$ feedbacks, based on which node $n$ shifts forwarding responsibility of the decoded packet to the successful receiver $k$ with the largest positive differential backlog coefficient $\hat {\hat W}_{nk}^{\left(\hat {\hat c}_n\right)}\left(u_{n,i}\right)$, while it retains the forwarding responsibility if none of the successful receivers has positive differential backlog. Then the algorithm goes to step \ref{step: epoch_end_MIA_main}).
    \label{step: choose_forwarding_node_MIA_main}
\item If the decoded packet is null, any partial information or copies of the null packet in the transmitting and receiving nodes are cleared. If the decoded packet is valid, the nodes having a complete copy of the packet delete their copies except the node which gets the forwarding responsibility, while other receiving nodes retain their incomplete partial information of the packet. Then the current epoch of node $n$ ends.
    \label{step: epoch_end_MIA_main}
\end{enumerate}

Additional to the steps in each epoch shown above, after the packet $p_{n,i}$ is delivered to its destination, all the partial information belonging to packet $p_{n,i}$ stored in the network is cleared in order to free up the memory.

According to above algorithm summary, like DIVBAR-RMIA, DIVBAR-MIA is also a distributed algorithm, because when implementing the algorithm, each node $n$ only requires CPQs' backlog information of its neighbor nodes and the link success probabilities of those neighbor nodes over different epoch lengths. Here note that in the starting timeslot $u_{n,i}$ of each epoch $i$ for node $n$, node $n$ under DIVBAR-MIA does not observe the partial information already accumulated before timeslot $u_{n,i}$ and makes decisions to choose commodity to transmit according to the same strategy as DIVBAR-RMIA. However, on the receiver side, the pre-accumulated partial information can be used by node $k \in {\cal K}_n$ to facilitate the decoding of the packet $p_{n,i}$.

\section{Performance Analysis}
\label{sec: performance_anlaysis_main}
In this section, the performances of DIVBAR-RMIA and DIVBAR-MIA are evaluated. As will be shown in Theorem \ref{thm: DIVBAR_MIA_throughput_optimal}, we first prove the throughput optimality of DIVBAR-RMIA among all possible algorithms with RMIA, i.e., DIVBAR-RMIA can support any (time average) input rate matrix within the RMIA network capacity region $\Lambda_{\rm{RMIA}}$. Secondly, as will be shown in Theorem \ref{thm: DIVBAR-RMIA_vs_DIVBAR-MIA} and Corollary \ref{corollary: DIVBAR-MIA_vs_DIVBAR-RMIA}, we prove that DIVBAR-MIA's throughput performance is at least as good as DIVBAR-RMIA because it is also shown to be able to support any input rate matrix within $\Lambda_{\rm{RMIA}}$.

Theorem \ref{thm: DIVBAR_MIA_throughput_optimal}, Theorem \ref{thm: DIVBAR-RMIA_vs_DIVBAR-MIA} and Corollary \ref{corollary: DIVBAR-MIA_vs_DIVBAR-RMIA}, together with Corollary \ref{cap_region_corr} and Theorem \ref{thm: capacity_comparison} proposed in Section \ref{sec: network_capacity_region_RMIA} indicate the logic of showing that DIVBAR-RMIA and DIVBAR-MIA outperform the original DIVBAR algorithm with REP in terms of throughput performance, which can be summarized into four steps: (i) characterize the network capacity region with RMIA $\Lambda_{\rm{RMIA}}$; (ii) show that $\Lambda_{\rm{RMIA}}$ is strictly larger than the $\Lambda_{\rm{REP}}$; (iii) show that DIVBAR-RMIA can support any input rate matrix within $\Lambda_{\rm{RMIA}}$, which in turn shows that DIVBAR-RMIA outperforms the the original DIVBAR algorithm which reaches the whole $\Lambda_{\rm{REP}}$; (iv) DIVBAR-MIA can support any input rate matrix that can be supported by DIVBAR-RMIA. Note that we do not claim throughput optimality of DIVBAR-MIA among all policies with MIA. Also note that the implementation of DIVBAR-MIA we consider (using synchronized epochs with DIVAR-RMIA) is not the only possible implementation; rather it is chosen because its superiority can be proven exactly.

\subsection{Throughput optimality of DIVBAR-RMIA among all possible policies with RMIA}
\label{subsec: throughput_optimality_DIVBAR-RMIA}
In this subsection, our goal is to analyze the throughput performance of DIVBAR-RMIA algorithm and show that it is throughput optimal among all possible policies with RMIA.

To begin with, we need to analyze the backpressure metric under DIVBAR-RMIA over a single epoch. Firstly, let $\cal P$ represent the set of all possible policies under which the epochs for each transmitting node consist of contiguous timeslots. This definition demonstrates that DIVBAR-RMIA belongs to $\cal P$ and all the policies in $\cal P$ have synchronized epochs for each transmitting node. Then define ${Z_n}\left( {i,{\bf{\hat Q}}\left( {{u_{n,i}}} \right)} \right)$ as the following backpressure metric over the $i$th epoch under a policy in $\cal P$:
\begin{equation}
\label{eq_single_epoch_metric_P}
{Z_n}\left( {i,{\bf{\hat Q}}\left( {{u_{n,i}}} \right)} \right) = \sum\limits_c {\sum\limits_{\tau  = {u_{n,i}}}^{{u_{n,i + 1}} - 1} {\sum\limits_{k \in {{\cal K}_n}} {b_{nk}^{\left( c \right)}\left( \tau  \right)\left[ {\hat Q_n^{\left( c \right)}\left( {{u_{n,i}}} \right) - \hat Q_k^{\left( c \right)}\left( {{u_{n,i}}} \right)} \right]} } },
\end{equation}
where ${\bf{\hat Q}}\left(\tau\right)$ is the backlog state under DIVBAR-RMIA in timeslot $\tau$. With the definitions of ${\cal P}$ and ${\bf{\hat Q}}\left(\tau\right)$, we propose Lemma \ref{lemma: characterize_metric_DIVBAR-RMIA} as follows to characterize DIVBAR-RMIA over a single epoch and explain the origin of the backpressure metric shown as (\ref{eq_backpressure_metric_main}) in step \ref{step: compute_metric_DIVBAR-RMIA_main}) of the DIVBAR-RMIA algorithm summary. This lemma is essential to the proof of the throughput optimality of DIVBAR-RMIA among all the policies with RMIA.

\begin{lemma}
\label{lemma: characterize_metric_DIVBAR-RMIA}
For each node $n$, the metric $\mathbb{E}\left\{ {\left. {{Z_n}\left( {i,{\bf{\hat Q}}\left( {{u_{n,i}}} \right)} \right)} \right|{\bf{\hat Q}}\left( {{u_{n,i}}} \right)} \right\}$ under an arbitrary policy within the restricted policy set ${\cal P}$ is upper bounded as follows:
\begin{equation}
\mathbb{E}\left\{ {\left. {{Z_n}\left( {i,{\bf{\hat Q}}\left( {{u_{n,i}}} \right)} \right)} \right|{\bf{\hat Q}}\left( {{u_{n,i}}} \right)} \right\}\le \hat \Xi_n\left(i\right),
\label{eq_metric_single_epoch_comparison_DIVBAR-RMIA_main}
\end{equation}
where the upper bound $\hat \Xi_n\left(i\right)$ is the metric value shown in (\ref{eq_maximum_metric_RMIA_main}). The upper bound is achieved if implementing the DIVBAR-RMIA algorithm.
\end{lemma}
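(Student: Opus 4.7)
The plan is to isolate the contribution of a single epoch $i$ to $Z_n$ and then optimize over the two substantive decisions that any policy in ${\cal P}$ can make within that epoch: first, which commodity $c$ is transmitted (or the decision to stay silent or transmit a null packet), and second, after observing the first successful receiver set $\Omega$ at the end of the epoch, which node in $\Omega$ (if any) is granted the forwarding responsibility. Because at most one copy of one packet can be forwarded per epoch (by the constraint $\sum_{c}\sum_{k\in{\cal K}_n}b_{nk}^{(c)}(\tau)\le 1$) and any non-zero flow can occur only in the final timeslot of the epoch when decoding first succeeds, the sum defining $Z_n(i,{\bf{\hat Q}}(u_{n,i}))$ collapses to the single term $\hat Q_n^{(\tilde c)}(u_{n,i})-\hat Q_{\tilde k}^{(\tilde c)}(u_{n,i})$ whenever the policy forwards a commodity-$\tilde c$ packet to a chosen receiver $\tilde k\in\Omega$, and equals $0$ when no forwarding occurs.

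I would first condition on the commodity choice $c$ and the random first successful receiver set $\Omega$. Since only members of $\Omega$ are eligible to be forwarded to, and the policy is always free to retain the packet (contributing zero), the conditional expectation of the single-epoch metric is upper bounded by $\max_{k\in\Omega}\hat W_{nk}^{(c)}(u_{n,i})$, with $\hat W_{nk}^{(c)}=\max\{\hat Q_n^{(c)}-\hat Q_k^{(c)},0\}$ absorbing the retention option. Taking expectation over $\Omega$ and unpacking the definition of $\hat\varphi_{nk}^{(c)}(i)$, which is precisely the probability that $k\in\Omega$ while every node in $\hat{\cal R}_{nk}^{{\rm high},(c)}(u_{n,i})$ is absent from $\Omega$, one identifies $\mathbb{E}[\max_{k\in\Omega}\hat W_{nk}^{(c)}\mid{\bf{\hat Q}}(u_{n,i})]$ with the sum $\sum_{k\in{\cal K}_n}\hat W_{nk}^{(c)}(u_{n,i})\hat\varphi_{nk}^{(c)}(i)$. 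Because any (possibly randomized) commodity-selection rule yields an expected metric that is a convex combination of these per-commodity sums, it cannot exceed the maximum over $c$, which is exactly $\hat\Xi_n(i)$.

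For the achievability half, I would verify by inspection that DIVBAR-RMIA realizes equality: step 4 of its algorithmic summary picks $\hat c_n(i)$ as the maximizer of $\sum_k\hat W_{nk}^{(c)}\hat\varphi_{nk}^{(c)}(i)$, and step 9 forwards the decoded packet to the successful receiver carrying the largest strictly positive $\hat W_{nk}^{(\hat c_n(i))}(u_{n,i})$ (retaining the packet when all $\hat W$'s vanish), so it exactly implements the within-epoch optimum identified above. The one delicate point is confirming that $\mathbb{E}[\max_{k\in\Omega}\hat W_{nk}^{(c)}]$ admits the explicit series representation in equation (\ref{eq_probability_for_choosing_commodity_RMIA_main}); this reduces to conditioning on the epoch length $m$ and exploiting independence of the per-slot information increments $R_{nj}(\tau)$ across receivers $j$ and timeslots, together with the fact that the common packet entropy $H_0$ makes the epoch-length and first-receiver-set distributions independent of the commodity chosen. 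I expect this identification — rather than the optimization itself — to be the main technical obstacle in writing out a rigorous proof.
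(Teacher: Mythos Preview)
Your proposal is correct and follows essentially the same approach as the paper's proof in Appendix~\ref{appendix: single_epoch_opt_DIVBAR-RMIA}: collapse the epoch sum to the single first-decoding timeslot, replace raw differentials by $\hat W_{nk}^{(c)}$ to absorb the retention option, bound the forwarding contribution by $\max_{k}\{X_{nk}^{{\cal P},{\rm RMIA}}(i)\hat W_{nk}^{(c)}\}$, identify its expectation with $\sum_k \hat W_{nk}^{(c)}\hat\varphi_{nk}^{(c)}(i)$, and finally optimize over the commodity choice. The only cosmetic difference is that the paper works directly with the indicator $X_{nk}^{{\cal P},{\rm RMIA}}(i)$ rather than conditioning on the set $\Omega$, and it does not re-derive the series form of $\hat\varphi_{nk}^{(c)}(i)$ inside the lemma (that computation already appears in~(\ref{eq_probability_for_choosing_commodity_RMIA_main})), so what you flag as the main technical obstacle is in fact handled elsewhere and is not part of this proof.
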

A detailed proof of Lemma \ref{lemma: characterize_metric_DIVBAR-RMIA} is given in Appendix \ref{appendix: single_epoch_opt_DIVBAR-RMIA}.

After characterizing the backpressure metric under DIVBAR-RMIA over a single epoch by Lemma \ref{lemma: characterize_metric_DIVBAR-RMIA}, we start proving the throughput optimality of DIVBAR-RMIA, which is to show that strong stability can be achieved under DIVBAR-RMIA whenever the input rate matrix is within the RMIA network capacity region $\Lambda_{\rm{RMIA}}$. With this goal, we propose the following theorem:
\begin{thm}
\label{thm: DIVBAR_MIA_throughput_optimal}
DIVBAR-RMIA is throughput optimal with the RMIA assumption: DIVBAR-RMIA can stably support any exogenous input rate matrix within network capacity region $\Lambda_{\rm{RMIA}}$, i.e., for an exogenous input rate matrix $\left(\lambda_n^{\left(c\right)}\right)$, if $\exists \varepsilon >0$ satisfying $\left(\lambda_n^{\left(c\right)}+\varepsilon\right)\in \Lambda_{\rm{RMIA}}$, then there exists an integer $D>0$, such that the mean time average backlog of the whole network can be upper bounded as follows:
\begin{equation}
\mathop {\lim \sup }\limits_{t \to \infty } \frac{1}{t}\sum\limits_{\tau  = 0}^{t - 1} {\sum\limits_{n,c} {\mathbb{E}\left\{ {\hat Q_n^{\left( c \right)}\left( \tau  \right)} \right\}} }  \le \frac{2\left[B\left(D\right)+C\left(D\right)\right]}{\varepsilon },
\label{eq_strong_stability_hat_main}
\end{equation}
when implementing the DIVBAR-RMIA algorithm, where
\begin{equation}
B\left(D\right)=N^2D\left[1+\left(N+A_{\rm{max}}\right)^2\right];\ C\left(D\right)=4ND\left(N+A_{\rm{max}}+1\right);\nonumber
\end{equation}
$\hat Q_n^{\left(c\right)}\left(\tau\right)$ is the backlog (CPQ backlog) of commodity $c$ in node $n$ in timeslot $\tau$ under the DIVBAR-RMIA algorithm.
\end{thm}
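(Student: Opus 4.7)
The plan is to prove Theorem \ref{thm: DIVBAR_MIA_throughput_optimal} via a $d$-timeslot average Lyapunov drift argument, leveraging Lemma \ref{lemma: characterize_metric_DIVBAR-RMIA} to establish per-epoch optimality of DIVBAR-RMIA and comparing against the stationary randomized policy guaranteed by Corollary \ref{cap_region_corr}. First, I would fix the quadratic Lyapunov function $L({\bf{\hat Q}}(\tau)) = \sum_{n,c} \bigl(\hat Q_n^{(c)}(\tau)\bigr)^2$ and form the $d$-slot drift $\Delta_d(t_0) = \mathbb{E}\{L({\bf{\hat Q}}(t_0+d)) - L({\bf{\hat Q}}(t_0)) \mid {\bf{\hat Q}}(t_0)\}$. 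Squaring the queueing dynamics (\ref{eq_basic_queuing_dynamics}) and telescoping over $\tau = t_0, \ldots, t_0+d-1$ produces a bounded ``noise'' term $B(D)$ of the form $N^2 D[1+(N+A_{\max})^2]$ from flow-rate and arrival upper bounds, plus the cross term $-2\sum_{\tau=t_0}^{t_0+d-1}\sum_{n,c,k}\hat Q_n^{(c)}(\tau)\bigl[b_{nk}^{(c)}(\tau)-b_{kn}^{(c)}(\tau)\bigr] - 2\sum_{\tau,n,c} \hat Q_n^{(c)}(\tau) a_n^{(c)}(\tau)$; a routine rearrangement converts the flow double sum into $-2\sum_{\tau,n,c,k} b_{nk}^{(c)}(\tau)\bigl[\hat Q_n^{(c)}(\tau)-\hat Q_k^{(c)}(\tau)\bigr]$.

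Next, I would regroup the inner sum epoch-by-epoch for each transmitting node, replacing $\hat Q_n^{(c)}(\tau)$ inside each epoch by $\hat Q_n^{(c)}(u_{n,i})$ at a controlled cost $C(D) = 4ND(N+A_{\max}+1)$ that absorbs the maximum per-slot backlog perturbation; this cost is the second ``marginal'' constant in the theorem. What emerges is exactly the epoch-indexed backpressure expression $\sum_i Z_n(i,{\bf{\hat Q}}(u_{n,i}))$ of (\ref{eq_single_epoch_metric_P}), summed over the epochs that fall (wholly or partially) within $[t_0,t_0+d-1]$. Lemma \ref{lemma: characterize_metric_DIVBAR-RMIA} then asserts that DIVBAR-RMIA maximizes $\mathbb{E}\{Z_n(i,{\bf{\hat Q}}(u_{n,i}))\mid {\bf{\hat Q}}(u_{n,i})\}$ over every policy in $\mathcal{P}$, in particular over the stationary randomized policy with RMIA that, by Corollary \ref{cap_region_corr} applied to the slightly inflated rate vector $\lambda_n^{(c)}+\varepsilon$, produces flow rates $b_{nk}^{*(c)}$ satisfying the RMIA capacity inequalities. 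Substituting those rates into $Z_n(\cdot)$ and applying the flow balance (\ref{cap_region_const2_coll}) yields an upper bound of the form $-2\varepsilon\sum_{\tau,n,c}\hat Q_n^{(c)}(\tau)$ for the epoch sum under the stationary policy, and hence, by Lemma \ref{lemma: characterize_metric_DIVBAR-RMIA}, for the epoch sum under DIVBAR-RMIA as well.

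Combining these pieces gives the Foster-Lyapunov inequality
\begin{equation}
\Delta_d(t_0) \le 2\bigl[B(D)+C(D)\bigr] - 2\varepsilon \sum_{\tau=t_0}^{t_0+d-1}\sum_{n,c} \mathbb{E}\bigl\{\hat Q_n^{(c)}(\tau) \bigm| {\bf{\hat Q}}(t_0)\bigr\}.
\end{equation}
Taking unconditional expectations, summing over $t_0=0,d,2d,\ldots$, telescoping, dividing by the elapsed time, and letting the horizon grow delivers exactly the bound (\ref{eq_strong_stability_hat_main}) claimed in the theorem, with $D$ chosen as the common epoch-window size used in the regrouping step. Strong stability then implies rate stability at input rate $(\lambda_n^{(c)})$, which establishes throughput optimality within $\Lambda_{\mathrm{RMIA}}$.

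The hard part will be the epoch-regrouping step and the handling of asynchrony. DIVBAR-RMIA bases its decisions on the backlog snapshot ${\bf{\hat Q}}(u_{n,i})$ at the start of its own epoch, but the $d$-slot drift accumulates $\hat Q_n^{(c)}(\tau)$ over arbitrary $\tau$; moreover, different transmitters have non-aligned epoch boundaries, and the interval $[t_0,t_0+d-1]$ will in general slice through partial epochs at both endpoints. Controlling these three mismatches --- stale backlog observations within an epoch, unsynchronized epoch grids across nodes, and fractional epochs at the interval edges --- by a single additive constant $C(D)$ (uniform in $t_0$) is the delicate technical task; it relies on the uniform bound $N+A_{\max}+1$ on the per-slot change of any $\hat Q_n^{(c)}$ and on truncating each epoch's contribution at the window boundary, with the amortized error vanishing after division by $t$ in the final limit.
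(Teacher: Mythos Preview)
Your overall architecture is right --- $d$-slot Lyapunov drift, epoch regrouping, comparison with the stationary randomized policy from Corollary~\ref{cap_region_corr} --- but the step where you invoke Lemma~\ref{lemma: characterize_metric_DIVBAR-RMIA} against the stationary randomized policy does not go through, and this is the crux of the whole argument.

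Lemma~\ref{lemma: characterize_metric_DIVBAR-RMIA} only compares DIVBAR-RMIA to policies in the restricted class $\mathcal{P}$: policies whose epochs for each transmitter consist of \emph{contiguous} timeslots. All policies in $\mathcal{P}$ share exactly the same epoch boundaries $u_{n,i}$ (the epoch end depends only on channel realizations), which is what makes the per-epoch comparison in (\ref{eq_metric_single_epoch_comparison_DIVBAR-RMIA_main}) meaningful. The stationary randomized policy with RMIA is \emph{not} in $\mathcal{P}$: it re-draws the commodity independently in every timeslot with probabilities $\alpha_n^{*(c)}$, so its epochs for a given commodity are interleaved across non-contiguous timeslots (cf.\ Fig.~\ref{fig_slots_allocation}). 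Consequently you cannot plug its flow rates into $Z_n(i,{\bf{\hat Q}}(u_{n,i}))$ and appeal to Lemma~\ref{lemma: characterize_metric_DIVBAR-RMIA}; there is no common epoch $i$ on which to compare. Relatedly, the flow-balance inequality (\ref{cap_region_const2_coll}) is a \emph{time-average} statement about $b_{nk}^{*(c)}$, not a per-epoch one, so ``substituting those rates into $Z_n(\cdot)$'' does not directly yield the $-\varepsilon\sum Q$ term you want.

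The paper closes this gap by inserting a chain of intermediate policies between $\hat{Policy}$ and a modified stationary policy ${Policy'}^*$ (which coincides with DIVBAR-RMIA on $[0,t_0-1]$ and with $Policy^*$ from $t_0$ onward). First $\tilde{\tilde{Policy}}$, a non-causal policy that \emph{is} in $\mathcal{P}$ (shared epochs with DIVBAR-RMIA) and transmits a single fixed commodity $\tilde c$; Lemma~\ref{lemma: characterize_metric_DIVBAR-RMIA} legitimately compares $\hat{Policy}$ with $\tilde{\tilde{Policy}}$ epoch-by-epoch (Lemma~\ref{lemma: hat_vs_tildetilde}). Then $\tilde{Policy}$, which starts fresh at $t_0$ and transmits $\tilde c$ throughout, is compared to $\tilde{\tilde{Policy}}$ by a renewal argument counting epochs covering $[t_0,t_0+t-1]$. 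Finally, $\tilde{Policy}$ is compared to ${Policy'}^*$ by letting $t\to\infty$ and invoking the law of large numbers on the i.i.d.\ epoch metrics (equations (\ref{eq_backpressure_convergence})--(\ref{eq_compare_stationary_vs_tilde})); this is where the time-average flow rates $b_{nk}^{*(c)}$ and the uniform convergence of Lemma~\ref{lemma: uniformly convergence} enter to produce the $-\frac{\varepsilon}{2}\sum_{n,c}\hat Q_n^{(c)}(t_0)$ term. The additive constants $C_1(t)$, $C_2(t)$ that accumulate along this chain combine into your $C(D)$. Without these intermediaries, the per-epoch optimality of DIVBAR-RMIA and the time-average feasibility of $Policy^*$ live in incompatible frameworks, and your proposed direct substitution leaves an unbridged gap.
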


A detailed proof of Theorem \ref{thm: DIVBAR_MIA_throughput_optimal} is given in Appendix \ref{appendix: DIVBAR-RMIA_optimallity}. The proof follows the similar intuitions summarized for the case of stationary randomized policy in Section \ref{sec: network_capacity_region_RMIA} and adopts $d$-timeslot Lyapunov drift to analyze the key metrics under DIVBAR-RMIA and related intermediate policies. As $d$ grows large, on the one hand, the key metrics under the related metrics converge, and on the other hand, the effect of the "marginal" interval due to the arbitrary starting timeslot $t_0$ gradually vanishes.

\subsection{Throughput performance of DIVBAR-MIA}
\label{subsec: DIVBAR-MIA_throughput_performance}
In contrast with the renewal operation under DIVBAR-RMIA, each receiving node under DIVBAR-MIA retains the partial information of each packet until the packet reaches its final destination, and the epochs of DIVBAR-MIA is set to be synchronized with epochs under DIVBAR-RMIA. Therefore, with the help of the pre-accumulated information in the receiving nodes of ${\cal K}_n$, the successful receiver set at the end of each epoch under DIVBAR-MIA should include the first successful receiver set under DIVBAR-RMIA. This intuition indicates that the throughput performance of DIVBAR-MIA should be at least as good as DIVBAR-RMIA. With this intuition, the following theorem is proposed:

\begin{thm}
\label{thm: DIVBAR-RMIA_vs_DIVBAR-MIA}
DIVBAR-MIA can support any input rate matrix within the RMIA network capacity region, i.e., for an exogenous input rate matrix $\left(\lambda_n^{\left(c\right)}\right)$, if $\exists \varepsilon >0$ satisfying $\left(\lambda_n^{\left(c\right)}+\varepsilon\right)\in \Lambda_{\rm{RMIA}}$, then there exists an integer $D>0$, such that the mean time average backlog of the whole network can be upper bounded as follows:
\begin{equation}
\mathop {\lim \sup }\limits_{t \to \infty } \frac{1}{t}\sum\limits_{\tau  = 0}^{t - 1} {\sum\limits_{n,c} {\mathbb{E}\left\{ {\hat {\hat Q}_n^{\left( c \right)}\left( \tau  \right)} \right\}} }  \le \frac{2\left[B\left(D\right)+C\left(D\right)\right]}{\varepsilon },
\label{eq_strong_stability_hat_main}
\end{equation}
when implementing the DIVBAR-MIA algorithm, where
\begin{equation}
B\left(D\right)=N^2D\left[1+\left(N+A_{\rm{max}}\right)^2\right];\ C\left(D\right)=4ND\left(N+A_{\rm{max}}+1\right);\nonumber
\end{equation}
$\hat {\hat Q}_n^{\left(c\right)}\left(\tau\right)$ is the backlog (CPQ backlog) of commodity $c$ in node $n$ in timeslot $\tau$ under the DIVBAR-MIA algorithm.
\end{thm}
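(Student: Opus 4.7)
The plan is to lift the Lyapunov drift machinery of Theorem \ref{thm: DIVBAR_MIA_throughput_optimal} onto DIVBAR-MIA by a pathwise coupling that exploits the synchronization of epochs. First I would fix the probability space $\left({\cal S}, {\cal F}, \mathbb{P}\right)$ from Section \ref{sec: network_capacity_region_RMIA}, feed the same channel and arrival sample paths into DIVBAR-MIA, and evolve its backlog state ${\bf{\hat{\hat Q}}}\left(\tau\right)$. For each transmitter $n$ I set up, alongside the true DIVBAR-MIA run, a \emph{virtual} DIVBAR-RMIA evaluator that observes ${\bf{\hat{\hat Q}}}\left(u_{n,i}\right)$ at each epoch start (epoch boundaries declared via the $\left({\rm ACK}/{\rm NACK}\right)_{\rm RMIA}$ feedbacks exactly as in the DIVBAR-MIA description). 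Because DIVBAR-MIA imports DIVBAR-RMIA's commodity-selection and forwarding rules verbatim on the observed backlog, and because the epoch end is triggered only by within-epoch accumulated information (independent of the pre-accumulated information), the virtual DIVBAR-RMIA and DIVBAR-MIA share identical epoch boundaries and identical commodity choices $\hat{\hat c}_n\left(i\right)$ on every $\omega$. This lets me invoke Lemma \ref{lemma: characterize_metric_DIVBAR-RMIA} with ${\bf{\hat{\hat Q}}}$ substituted for ${\bf{\hat Q}}$, yielding a sharp characterization ${\hat \Xi}^{\rm virt}_n\left(i\right)$ of the virtual per-epoch backpressure metric.

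Next I would prove the key pathwise dominance: the true per-epoch metric ${Z_n}\left(i,{\bf{\hat{\hat Q}}}\left(u_{n,i}\right)\right)$ realised by DIVBAR-MIA satisfies ${Z_n}\left(i,{\bf{\hat{\hat Q}}}\left(u_{n,i}\right)\right) \ge {\hat \Xi}^{\rm virt}_n\left(i\right)$ almost surely. The reason is that at the end of each synchronized epoch the set of receivers raising $\left({\rm ACK}/{\rm NACK}\right)_{\rm MIA}$ contains the first successful receiver set used by the virtual DIVBAR-RMIA, since the pre-accumulated partial information $I_{k,p_{n,i}}^{\rm pre}\left(u_{n,i}\right)$ can only enlarge the decoding set. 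Hence DIVBAR-MIA's max-weight forwarding over a superset selects a recipient whose nonnegative differential backlog is at least as large as the recipient selected by the virtual policy.

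I would then replay the proof of Theorem \ref{thm: DIVBAR_MIA_throughput_optimal} verbatim with $\hat Q$ replaced by $\hat{\hat Q}$. Expanding the $d$-timeslot quadratic Lyapunov drift produces the constant term $B\left(D\right)+C\left(D\right)$ exactly as in Appendix \ref{appendix: DIVBAR-RMIA_optimallity} (it depends only on $A_{\max}$ and on the per-slot unit flow bound, both of which are unchanged), while the cross term telescopes, epoch by epoch and node by node, into $-2\sum_{n,i} Z_n\left(i,{\bf{\hat{\hat Q}}}\left(u_{n,i}\right)\right)$ plus two "marginal" fragments at the two ends of the window whose contribution stays $O\left(1/d\right)$. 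Inserting the pathwise dominance above and then invoking Corollary \ref{cap_region_corr} with slack $\varepsilon$, so that a stationary randomized RMIA policy dominates the surrogate sum $\sum_{n,i}{\hat \Xi}^{\rm virt}_n\left(i\right)$ from below, extracts the $-\varepsilon \sum_{n,c}\hat{\hat Q}_n^{\left(c\right)}$ term needed for Lyapunov telescoping over time, and the bound $2\left[B\left(D\right)+C\left(D\right)\right]/\varepsilon$ follows.

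The main obstacle will be the coupling bookkeeping, specifically three points that must each be justified carefully. First, the epoch length distribution under DIVBAR-MIA must coincide, sample-path by sample-path, with that of the virtual DIVBAR-RMIA on the same backlog state, so the i.i.d.\ epoch-length property underlying the convergence as $d\to\infty$ carries over without modification. Second, the pathwise dominance of the second step must survive the conditional expectation $\mathbb{E}\left\{\cdot \mid {\bf{\hat{\hat Q}}}\left(t_0\right)\right\}$ used in the drift computation, even though the decoding sets controlling the dominance depend on the pre-accumulated information, which in turn depends on the full prior history of the network, not just on ${\bf{\hat{\hat Q}}}\left(t_0\right)$. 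Third, the marginal-epoch contributions from non-synchronized node epochs straddling the two ends of the $d$-window must remain $O\left(1\right)$ per node, which requires the same uniform moment bound on epoch lengths (via Lemma \ref{lemma: flowing_rate_property}) that was already used for DIVBAR-RMIA. Once these technical items are in place, no genuinely new estimates are required and the proof reduces to a notational rewrite of Theorem \ref{thm: DIVBAR_MIA_throughput_optimal}.
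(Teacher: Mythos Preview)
Your proposal is correct and follows essentially the same approach as the paper: your ``virtual DIVBAR-RMIA evaluator'' is precisely the paper's intermediate policy $\hat{Policy'}$, your pathwise decoding-set containment argument is the content of Lemma~\ref{lemma: hathat_vs_hatprime} (proved via $\hat{\hat{X}}_{nk}^{{\rm MIA},(c)}(i) \ge X_{nk}^{{\cal P},{\rm RMIA}}(i)$), and the remainder is indeed a notational rewrite of the proof of Theorem~\ref{thm: DIVBAR_MIA_throughput_optimal} with ${\bf \hat Q}$ replaced by ${\bf \hat{\hat Q}}$, routed through the same chain of intermediate policies $\tilde{\tilde{Policy}}$, $\tilde{Policy}$, ${Policy'}^*$. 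One minor imprecision: the dominance you state as ``almost surely'' against $\hat{\Xi}^{\rm virt}_n(i)$ only holds in conditional expectation, since $\hat{\Xi}^{\rm virt}_n(i)$ is itself an expectation; what holds pathwise is the comparison of the \emph{realized} per-epoch metrics under DIVBAR-MIA and under the virtual RMIA policy, which suffices after taking expectations.
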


The detailed proof of Theorem \ref{thm: DIVBAR-RMIA_vs_DIVBAR-MIA} is given in Appendix \ref{appendix: DIVBAR-MIA_vs_DIVBAR-RMIA}, and the proof strategy is similar to that of Theorem \ref{thm: DIVBAR_MIA_throughput_optimal}.

Based on Theorem \ref{thm: DIVBAR-RMIA_vs_DIVBAR-MIA}, we can further compare DIVBAR-MIA and DIVBAR-RMIA in the throughput performance by showing the following corollary:
\begin{corollary}
The throughput performance of DIVBAR-MIA is at least as good as DIVBAR-RMIA, i.e., DIVBAR-MIA can stably support any input rate matrix that can be supported by DIVBAR-RMIA.
\label{corollary: DIVBAR-MIA_vs_DIVBAR-RMIA}
\end{corollary}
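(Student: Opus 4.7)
The plan is to read the corollary off directly from Theorem \ref{thm: DIVBAR_MIA_throughput_optimal} and Theorem \ref{thm: DIVBAR-RMIA_vs_DIVBAR-MIA} combined with the definition of $\Lambda_{\rm{RMIA}}$, so no fresh Lyapunov-drift or epoch-analysis work is required. The heart of the observation is that the two theorems exhibit the \emph{same} sufficient condition (strict interior membership in $\Lambda_{\rm{RMIA}}$) and the \emph{same} explicit mean time-average backlog bound $2[B(D)+C(D)]/\varepsilon$; hence the set of rate matrices stabilized by DIVBAR-MIA already contains the set stabilized by DIVBAR-RMIA.

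First, I would fix an arbitrary exogenous rate matrix $(\lambda_n^{(c)})$ that is stably supported by DIVBAR-RMIA. Since DIVBAR-RMIA implements the renewal (clearance) operation at the end of every epoch, it qualifies as a rate-stable routing algorithm operating under the RMIA transmission scheme, so the definition of the RMIA network capacity region in Section \ref{sec: network_capacity_region_RMIA} places $(\lambda_n^{(c)})$ in $\Lambda_{\rm{RMIA}}$. In the generic situation where $(\lambda_n^{(c)})$ is strictly interior to $\Lambda_{\rm{RMIA}}$, i.e., $\exists\,\varepsilon > 0$ with $(\lambda_n^{(c)} + \varepsilon) \in \Lambda_{\rm{RMIA}}$ — which is exactly the hypothesis under which Theorem \ref{thm: DIVBAR_MIA_throughput_optimal} produces strong stability of DIVBAR-RMIA — I would plug the same $\varepsilon$ into Theorem \ref{thm: DIVBAR-RMIA_vs_DIVBAR-MIA}. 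That theorem immediately returns the same backlog bound for the DIVBAR-MIA queues $\hat{\hat Q}_n^{(c)}(\tau)$, yielding strong stability of DIVBAR-MIA under $(\lambda_n^{(c)})$.

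For rate matrices on the boundary of $\Lambda_{\rm{RMIA}}$ that DIVBAR-RMIA only supports in the weaker rate-stable sense, I would close the argument by an approximation: take a sequence $(\lambda_n^{(c)} - \delta_m)$ with $\delta_m \downarrow 0$, each strictly interior to $\Lambda_{\rm{RMIA}}$; by the previous paragraph DIVBAR-MIA strongly (hence rate) stabilizes every element of this sequence, and a standard monotonicity argument on $Y_n^{(c)}(t)/t$ transfers rate stability to the limiting matrix $(\lambda_n^{(c)})$. I expect this boundary reduction to be the only genuinely subtle point, since Theorems \ref{thm: DIVBAR_MIA_throughput_optimal} and \ref{thm: DIVBAR-RMIA_vs_DIVBAR-MIA} are both phrased only for strict-interior matrices; but because the two theorems offer identical sufficient conditions and identical backlog bounds, the inclusion ``supportable by DIVBAR-RMIA $\subseteq$ supportable by DIVBAR-MIA'' drops out with no additional technical machinery, which is exactly what the corollary asserts.
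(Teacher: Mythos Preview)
Your proposal is correct and takes essentially the same route as the paper: observe that DIVBAR-RMIA, being an RMIA policy, cannot support anything outside $\Lambda_{\rm RMIA}$ (definition of the capacity region), while Theorem~\ref{thm: DIVBAR-RMIA_vs_DIVBAR-MIA} shows DIVBAR-MIA supports every rate matrix strictly inside $\Lambda_{\rm RMIA}$; combined with Theorem~\ref{thm: DIVBAR_MIA_throughput_optimal} this gives the inclusion claimed. The paper's proof is exactly this two-line argument and does not separately treat boundary rate matrices; your extra limiting step for the boundary is not in the paper (and, as written, the ``standard monotonicity argument on $Y_n^{(c)}(t)/t$'' would need more care to make rigorous, but it is also unnecessary since both Theorems~\ref{thm: DIVBAR_MIA_throughput_optimal} and~\ref{thm: DIVBAR-RMIA_vs_DIVBAR-MIA} are stated under the same strict-interior hypothesis).
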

\begin{proof}
Theorem \ref{thm: DIVBAR_MIA_throughput_optimal} shows that any input rate matrix within $\Lambda_{\rm{RMIA}}$ can be stably supported by DIVBAR-RMIA. On the other hand, $\Lambda_{\rm{RMIA}}$ is defined as the set of input rate matrices that can be stably supported by all possible algorithms with RMIA. Combining the two aspects, $\Lambda_{\rm{RMIA}}$ is exactly the set of input rate matrices that can be supported by DIVBAR-RMIA.

According to Theorem \ref{thm: DIVBAR-RMIA_vs_DIVBAR-MIA}, DIVBAR-MIA breaks the RMIA assumption and is shown to be able to support any input rate matrix within $\Lambda_{\rm{RMIA}}$, which indicates that any input rate matrix that can be stably supported by DIVBAR-RMIA can also be stably supported by DIVBAR-MIA, i.e., the throughput performance of DIVBAR-MIA is at least as good as DIVBAR-RMIA
\end{proof}

\section{Simulations}
\label{sec: simulations}
\begin{figure}
        \includegraphics[height=6cm]{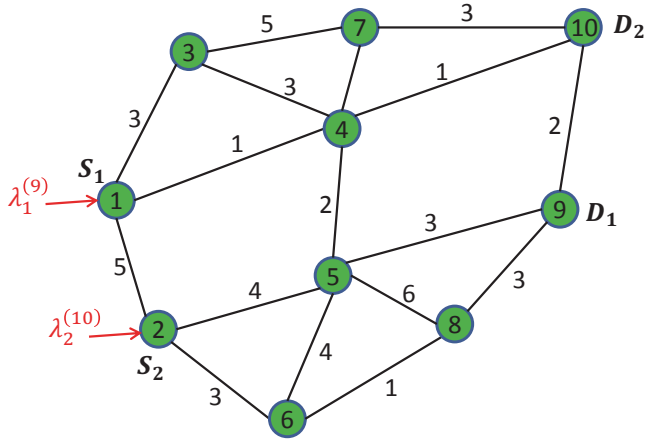}
        \centering{
        \caption{The ad-hoc network being simulated}
        \label{simulation_network}}
\end{figure}

\begin{figure}
        \includegraphics[height=8cm]{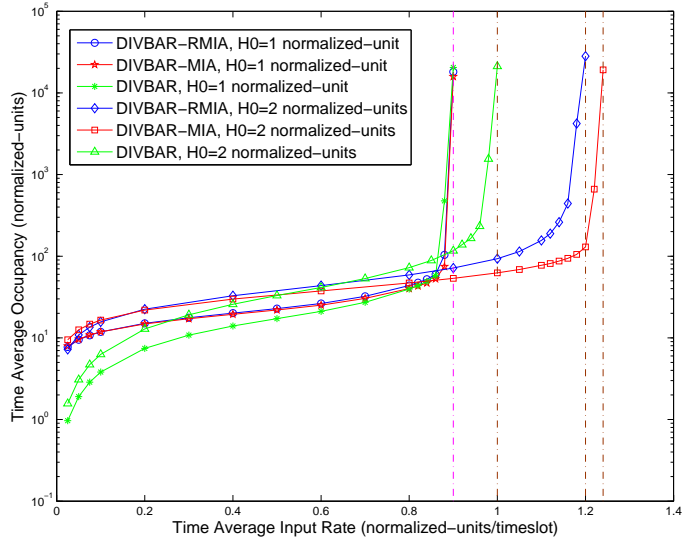}
        \centering{
        \caption{Throughput performance comparison among DIVBAR-MIA, DIVBAR-RMIA and DIVBAR algorithms with different packet lengths}
        \label{throughput_comparison}}
\end{figure}

Example simulations are carried out in the ad-hoc wireless network shown in Fig. \ref{simulation_network}. All the links in the network are independent non-interfering links, each of which is subject to Rayleigh fading (independent among links and timeslots), while the average channel states are static. The number on each link represents the mean SNR value (linear scale) over that link; the time average exogenous input rates $\lambda_1^{\left(9\right)}$ and $\lambda_2^{\left(10\right)}$ are set to be the same.

Simulations are conducted comparing throughput performance of the three algorithms: DIVBAR-MIA, DIVBAR-RMIA, and regular DIVBAR (with REP). Fig. \ref{throughput_comparison} shows the time average occupancy (total time average backlog in the network measured in \emph{normalized-units}) vs. exogenous time average input rate measured in normalized-units/timeslot. Here a normalized-unit has to be long enough (contain sufficient number of bits) to allow the application of a capacity achieving code. The maximum supportable throughput corresponds to the input rate at which the occupancy goes towards very large values (due to a finite number of simulation time, it does not approach infinity in our simulations). As is shown in the figure, the throughput under the DIVBAR-MIA algorithm is generally the largest among the three algorithms; the throughput under DIVBAR-RMIA algorithm is smaller than that of DIVBAR-MIA; the throughput under both algorithms are larger than that of the regular DIVBAR algorithm. These observations are in line with the theoretical analysis.

The simulation of the throughput comparison is carried out under different packet entropy conditions. The entropy contained in each packet is denoted by $H_0$ as is shown in the figure. When $H_0=1$ normalized-unit, Fig. \ref{throughput_comparison} shows that the throughput under the three algorithms are nearly identical. This phenomenon is caused by the fact that the packet length is generally small compared to the transmission ability of the links in the network. Therefore nodes in the network can usually achieve a successful transmission over a link at the first attempt, which results in that (R)MIA has little benefit. However, as $H_0$ increases to 2 normalized-units, the success probability in a single attempt decreases. Nodes under regular DIVBAR increase the chance of successful transmission just through trying more times, while DIVBAR-MIA and DIVBAR-RMIA accumulate information in each attempt, which will facilitate the future transmissions. Thus the throughput difference between DIVBAR and DIVBAR-(R)MIA becomes obvious.

\section{Conclusions}
\label{sec: conclusions}
In this paper, we proposed two distributed routing algorithms: DIVBAR-RMIA and DIVBAR-MIA, which exploit mutual information accumulation technique as the physical layer transmission scheme for the routing in multi-hop, multi-commodity wireless ad-hoc networks with unreliable links. After setting up a proper network model, including designing the queue structure of each network node to implement MIA or RMIA, and the working diagram within each timeslot, we analyzed the network's throughput potential with MIA technique by characterizing and analyzing the network capacity region with RMIA, and proved that the proposed two algorithms have superior throughput performance compared to the original DIVBAR with REP. Simulation results confirmed the throughput performance enhancement.

\section*{Acknowledgements}
The authors would like to thank Prof. Michael J. Neely for many helpful discussions.

%%%%%%%%%%%%%%%%%%%%%%%%%%%%%%%%%%%%%%%%%%%%%%%%%%%%%%%%%%%%%%%%%%%%%%%%%%%%%%%%%%
%%%%%%%%%%%%%%%%%%%%%%%%%%%%%%%%%%%%%%%%%%%%%%%%%%%%%%%%%%%%%%%%%%%%%%%%%%%%%%%%%%
%%%%%%%%%%%%%%%%%%%%%%%%%%%%%%%%%%%%%%%%%%%%%%%%%%%%%%%%%%%%%%%%%%%%%%%%%%%%%%%%%%
\appendices
\section{Policies used in the theoretical analysis}
\label{appendix: policy_list}
\begin{figure}
        \includegraphics[height=15cm]{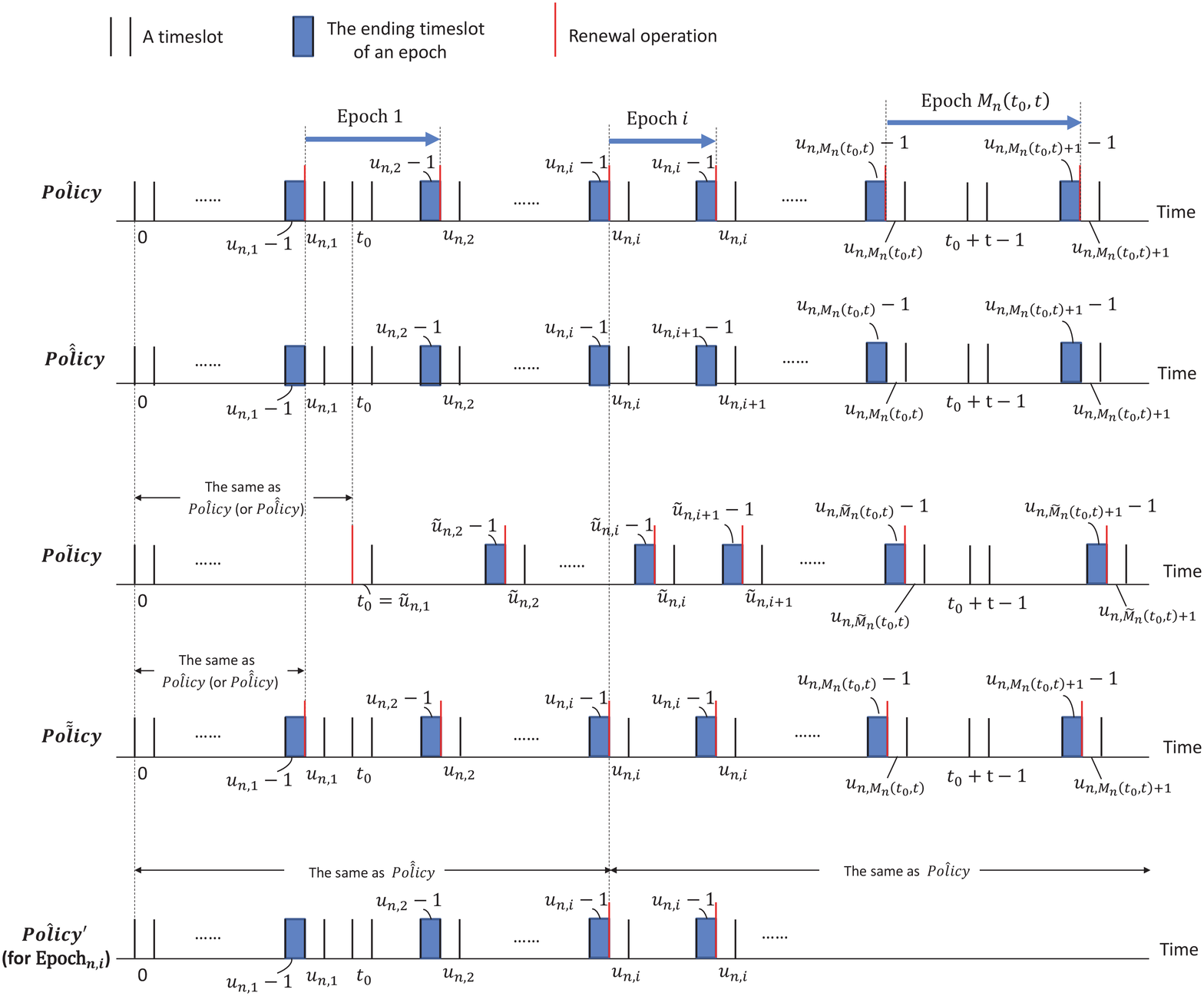}
        \centering{
        \caption{Epochs for each transmitting node $n$ in the network under $\hat{Policy}$, $\hat{\hat{Policy}}$, $\tilde{Policy}$, $\tilde{\tilde{Policy}}$, and $\hat{Policy'}$}
        \label{fig_epoch relations among several policies}}
\end{figure}

\begin{enumerate}[a)]
\item \textbf{$Policy^{**}$}: (firstly defined in Theorem \ref{cap_region_thm}) the stationary randomized policy with REP that can support any exogenous input rate matrix $\left(\lambda_n^{\left(c\right)}\right)$ within $\Lambda_{\rm{REP}}$.
\item \textbf{$Policy^*$}: (firstly defined in Corollary \ref{cap_region_corr}) the stationary randomized policy with RMIA that can support any exogenous input rate matrix $\left({\lambda}_n^{\left(c\right)}\right)$ within $\Lambda_{\rm{RMIA}}$.
\item \textbf{$Policy^1$}: (firstly defined in the proof of Theorem \ref{thm: capacity_comparison}) the stationary randomized policy with RMIA that forms the same flow rate matrix as $Policy^{**}$ with REP.
\item \textbf{$Policy^2$}: (firstly defined in the proof of Theorem \ref{thm: capacity_comparison}) the stationary randomized policy with RMIA that supports the increased input rate matrix $\left({\lambda'}_n^{\left(c\right)}\right)$ being located within $\Lambda_{\rm{RMIA}}$ but outside of $\Lambda_{\rm{REP}}$.
\item \textbf{$\hat{Policy}$}: (firstly defined in the proof of Theorem \ref{thm: DIVBAR_MIA_throughput_optimal}) the DIVBAR-RMIA policy.
\item \textbf{${Policy'}^{*}$}: (firstly defined in the proof of Theorem \ref{thm: DIVBAR_MIA_throughput_optimal} (or Theorem \ref{thm: DIVBAR-RMIA_vs_DIVBAR-MIA})) the intermediate policy with the following properties: it is the same as DIVBAR-RMIA (or DIVBAR-MIA) in the interval from timeslot $0$ to timeslot $t_0-1$; starting from timeslot $t_0$ without using the pre-accumulated partial information, it is the same as $Policy^*$ starting from timeslot $0$.
\item \textbf{$\tilde{Policy}$}: (firstly defined in the proof of Theorem \ref{thm: DIVBAR_MIA_throughput_optimal} (or Theorem \ref{thm: DIVBAR-RMIA_vs_DIVBAR-MIA})) the intermediate policy with the following properties: it is the same as DIVBAR-RMIA (or DIVBAR-MIA) in the interval from timeslot $0$ to timeslot $t_0-1$; starting from timeslot $t_0$ without using the pre-accumulated partial information, each node chooses the commodity to transmit according to the maximization of a backpressure metric, and keeps transmitting the packets of the chosen commodity in later timeslots with RMIA, and forwards each decoded packet to the successful receiver with the largest positive differential backlog observed in timeslot $t_0$.
\item \textbf{$\tilde{Policy^*}$}: (firstly defined in the proof of Theorem \ref{thm: DIVBAR_MIA_throughput_optimal} (or Theorem \ref{thm: DIVBAR-RMIA_vs_DIVBAR-MIA})) the intermediate policy with the following properties:  it is the same as DIVBAR-RMIA (or DIVBAR-MIA) in the interval from timeslot $0$ to timeslot $t_0-1$; starting from timeslot $t_0$ without using the pre-accumulated partial information, each node uses the same probabilities as $Policy^{*}$ to choose commodities to transmit but uses the same strategy as $\tilde{Policy}$ to forward the decoded packets.
\item \textbf{$\tilde{\tilde{Policy}}$}: (firstly defined in the proof of Theorem \ref{thm: DIVBAR_MIA_throughput_optimal} (or Theorem \ref{thm: DIVBAR-RMIA_vs_DIVBAR-MIA})) the intermediate and non-causal policy with the following properties: the epochs for each node have contiguous timeslots; for each node, it is the same as DIVBAR-RMIA (or DIVBAR-MIA) in the interval from timeslot $0$ to $u_{n,1}-1$ ($u_{n,1}$ is the starting timeslot of the epoch for the transmitting node $n$ that includes timeslot $t_0$); starting from timeslot $u_{n,1}$ without using the pre-accumulated partial information, each node chooses the same commodity to transmit as that chosen by $\tilde{Policy}$ in timeslot $t_0$ ($u_{n,1}\le t_0$), and keeps transmitting the packets of the chosen commodity during the later timeslots with RMIA, and forwards each decoded packet to the receiver with the largest differential backlog formed under DIVBAR-RMIA (or DIVBAR-MIA) in timeslot $t_0$.
\item \textbf{$\hat{\hat{Policy}}$}: (firstly defined in the proof of Theorem \ref{thm: DIVBAR-RMIA_vs_DIVBAR-MIA}): the DIVBAR-MIA policy.
\item \textbf{$\hat{Policy'}$}: (firstly defined in the proof of Theorem \ref{thm: DIVBAR-RMIA_vs_DIVBAR-MIA}) the intermediate policy with the property that: for each node $n$, it is the same as DIVBAR-MIA from timeslot $0$ to timeslot $u_{n,i}-1$, where $u_{n,i}$ is the starting timeslot of a particular epoch $i$ for node $n$ under DIVBAR-MIA, while starting from timeslot $u_{n,i}$ without using the pre-accumulated partial information, it is the same as DIVBAR-RMIA.
\end{enumerate}

Here we clarify that the intermediate policies $\tilde{Policy}$, $\tilde{\tilde{Policy}}$, $\tilde{Policy^*}$ are specified by the timeslot $t_0$, which can be arbitrary and is the starting timeslot of the interval for the Lyapunov drift analysis. However, here we omit putting the notation $t_0$ on these policies in order to simplify the notations. Similarly, although $\hat{Policy'}$ is designed for each epoch, the epoch index notation is omitted for the same reason.

Fig. \ref{fig_epoch relations among several policies} shows the epochs for each node $n$ under several policies whose epochs for each transmitting node consist of contiguous timeslots. In Fig. \ref{fig_epoch relations among several policies}, $u_{n,i}$ is the starting timeslot of the $i$th epoch of $\hat{Policy}$, $\hat{\hat{Policy}}$, and $\tilde{\tilde{Policy}}$ for node $n$ counting from the epoch that includes timeslot $t_0$; $\tilde{u}_{n,i}$ is the starting timeslot of the $i$th epoch under $\tilde{Policy}$ for node $n$, where $\tilde{u}_{n,1}=t_0$.

\section{Proof of the necessity part of Corollary \ref{cap_region_corr}}
\label{appendix: necessity_corollary_1}
Most parts of the necessity proof are similar to the proof of Theorem \ref{cap_region_thm} given in Ref. \cite{Neely_Rahul_DIVBAR_2009}. We start from the assumption that, for an input rate matrix $\left( {\lambda _{nk}^{\left( c \right)}} \right)$ in the network capacity region $\Lambda_{\rm{RMIA}}$, there exists a policy that can support it.

Same as in Ref. \cite{Neely_Rahul_DIVBAR_2009}, we define a \emph{unit} as a copy of a packet. Two units are said to be \emph{distinct} if they are copies of different original packets. When a packet is successfully transmitted from one node to another, we say that the original unit is retained in the transmitting node while a copy of the unit is created in the new node. After the forwarding decision of one transmission is made, only one of all the non-distinct units is kept, either in the transmitting node or in one of the successful receiving nodes.

Let $A_n^{\left(c\right)}\left(t\right)$ represent the total number of commodity $c$ distinct units that exogenously arrive at node $n$ during the first $t$ timeslots. Define $Y_n^{\left(c\right)}\left(t\right)$ as the total number of distinct units with source node $n$ and commodity $c$ that are delivered to the destination up to time $t$. Because of the assumption that the policy is \textit{rate stable}, for any node $n$ and commodity $c$, the delivery rate is equal to the input rate:
\begin{equation}
\label{eq_rate_stable}
\mathop {\lim }\limits_{t \to \infty } \frac{{Y_n^{\left( c \right)}\left( t \right)}}{t} = \mathop {\lim }\limits_{t \to \infty } \frac{{A_n^{\left( c \right)}\left( t \right)}}{t} = \lambda _n^{\left( c \right)}\rm{\ with\ prob.\ 1}.
\end{equation}
Let ${\cal U}_j^{\left(c\right)}\left(t\right)$ be the set of distinct units that are the first to reach their destination $c$ from the source node $j$ during the first $t$ timeslots. Define $G_{nk}^{\left(c\right)}\left(t\right)$ to be the total number of units of commodity $c$ within the set $\bigcup\limits_{j \in \cal N} {{\cal U}_j^{\left( c \right)}\left( t \right)}$ that are forwarded from node $n$ to node $k$ within the first $t$ timeslots. Then for node $n$ and commodity $c$, it follows that
\begin{equation}
\label{in_and_out}
Y_n^{\left( c \right)}\left( t \right) + \sum\limits_{k: k \in {{\cal K}_n}} {G_{kn}^{\left( c \right)}\left( t \right)}  = \sum\limits_{k: k \in {{\cal K}_n}} {G_{nk}^{\left( c \right)}\left( t \right)} ,{\rm{\ for\ }}n \ne c.
\end{equation}
Now define the following variables for all nodes $n,k \in \cal N$ and all commodities $c\in \cal N$:
\begin{itemize}
    \item $\alpha _n^{\left( c \right)}\left( {t} \right)$: the number of times node $n$ decides to transmit the units of commodity $c$ from the head of its CPQ during the first $t$ timeslots.
    \item $\beta_{n}^{\rm{rmia},\left(c\right)}\left(t\right)$: the number of epochs of commodity $c$ for node $n$ with RMIA that end during the first $t$ timeslots.
    \item $q_{n,{\Omega _n}}^{\rm{rmia},\left( c \right)}\left( {t} \right)$: the number of units of commodity $c$ sent by node $n$ and successfully received by the set of nodes $\Omega_n$ with RMIA during the first $t$ timeslots.
    \item $\theta _{nk}^{\left( c \right)}\left( {{\Omega _n},t} \right)$: the number of times the units of commodity $c$ within $\bigcup\limits_{j \in \cal N} {{\cal U}_j^{\left( c \right)}\left( t \right)}$ are forwarded from node $n$ to node $k$ during the first $t$ timeslots, given that the successful receiver set is $\Omega_n$.
\end{itemize}
Then we have
\begin{equation}
\label{time_average_relation}
\frac{{G_{nk}^{\left( c \right)}\left( t \right)}}{t} = \frac{{\alpha _n^{\left( c \right)}\left( t \right)}}{t}\frac{{\beta _n^{\rm{rmia},\left( c \right)}\left( t \right)}}{{\alpha _n^{\left( c \right)}\left( t \right)}}\sum\limits_{{\Omega _n}} {\frac{{q_{n,{\Omega _n}}^{\rm{rmia},\left( c \right)}\left( t \right)}}{{\beta _n^{\rm{rmia},\left( c \right)}\left( t \right)}}\frac{{\theta _{nk}^{\left( c \right)}\left( {{\Omega _n},t} \right)}}{{q_{n,{\Omega _n}}^{\rm{rmia},\left( c \right)}\left( t \right)}}} ,
\end{equation}
where we define ${0 \mathord{\left/{\vphantom {0 0}} \right.\kern-\nulldelimiterspace} 0}\buildrel \Delta \over = 0$ for terms on the right hand side of the above equation, and the definition is also effective for the similar forms of formulas in other parts of the paper. Note firstly that for all $t$, we have:
\begin{equation}
\label{omega_theta_ratio_bound}
0 \le \frac{{\alpha _n^{\left( c \right)}\left( t \right)}}{t} \le 1,\;0 \le \frac{{\theta _{nk}^{\left( c \right)}\left( {{\Omega _n},t} \right)}}{{q_{n,{\Omega _n}}^{\rm{rmia},\left( c \right)}\left( t \right)}} \le 1.
\end{equation}
Moreover, we mainly need to show that $\mathop {\lim }\limits_{t \to \infty } {{\beta _n^{\rm{rmia},\left( c \right)}\left( t \right)} \mathord{\left/
 {\vphantom {{\beta _n^{\left( c \right)}\left( t \right)} {\alpha _n^{\left( c \right)}\left( t \right)}}} \right.
 \kern-\nulldelimiterspace} {\alpha _n^{\left( c \right)}\left( t \right)}}$ and $\mathop {\lim }\limits_{t \to \infty } {{q_{n,{\Omega _n}}^{\rm{rmia},\left( c \right)}\left( t \right)} \mathord{\left/{\vphantom {{q_{n,{\Omega _n}}^{\rm{rmia},\left( c \right)}\left( t \right)} {\beta _n^{\rm{rmia},\left( c \right)}\left( t \right)}}} \right.\kern-\nulldelimiterspace} {\beta _n^{\rm{rmia},\left( c \right)}\left( t \right)}}$ exist and are well-defined.

Let $T_n^{\left( c \right)}\left( i \right)$ represent the epoch length of the $i$th unit of commodity $c$. First note that the expectation of $T_n^{\left( c \right)}\left( i \right)$ exists because
\begin{align}
\mathbb{E}\left\{ {{T_n^{\left(c\right)}}\left( i \right)} \right\} &= \sum\limits_{m = 1}^\infty  {\Pr \left\{ {{T_n^{\left(c\right)}}\left( i \right) \ge m} \right\}} \nonumber\\
&= \sum\limits_{m = 1}^\infty  {\prod\limits_{j \in {{\cal K}_n}} {F_{nj}^{\left( {m - 1} \right)}\left( {{H_0}} \right)} } \nonumber\\
&\buildrel (a) \over < \sum\limits_{m = 1}^\infty  {\prod\limits_{j \in {{\cal K}_n}} {{{\left[ {{F_{{R_{nj}}}}\left( {{H_0}} \right)} \right]}^{m - 1}}} } = \frac{1}{{\prod\limits_{j \in {{\cal K}_n}} {\left[ {1 - {F_{{R_{nj}}}}\left( {{H_0}} \right)} \right]} }} < \infty ,
\label{eq_finite_epoch_length}
\end{align}
where the step (a) is due to (\ref{eq_cdf_information_perslot2}) in Lemma \ref{lemma: flowing_rate_property}.

Since ${\beta _n^{\rm{rmia},\left( c \right)}\left( t \right)}$ units have been transmitted during the ${\alpha_n^{\left( c \right)}\left( t \right)}$ timeslots, we have
\begin{equation}
\sum\limits_{i = 1}^{\beta _n^{\rm{rmia},\left( c \right)}\left( t \right)} {T_n^{\left( c \right)}\left( i \right)}  \le \alpha _n^{\left( c \right)}\left( t \right) < \sum\limits_{i = 1}^{\beta _n^{\rm{rmia},\left( c \right)}\left( t \right) + 1} {T_n^{\left( c \right)}\left( i \right)}.
\end{equation}
Then it follows that
\begin{equation}
\label{beta_ratio_omega_limit_proof}
\frac{1}{{\frac{1}{{\beta _n^{\rm{rmia},\left( c \right)}\left( t \right)}}\sum\limits_{i = 1}^{\beta _n^{\rm{rmia},\left( c \right)}\left( t \right) + 1} {T_n^{\left( c \right)}\left( i \right)} }} < \frac{{\beta _n^{\rm{rmia},\left( c \right)}\left( t \right)}}{{\alpha _n^{\left( c \right)}\left( t \right)}} \le \frac{1}{{\frac{1}{{\beta _n^{\rm{rmia},\left( c \right)}\left( t \right)}}\sum\limits_{i = 1}^{\beta _n^{\rm{rmia},\left( c \right)}\left( t \right)} {T_n^{\left( c \right)}\left( i \right)} }}.
\end{equation}
Because of the renewal operation at the end of each epoch, $\left\{{T_n^{\left( c \right)}\left( i \right)}:i\ge 1\right\}$ are i.i.d. over epochs and $\mathbb{E}\left\{ {T_n^{\left( c \right)}\left( i \right)} \right\}$ is well defined as is shown in (\ref{eq_finite_epoch_length}). According to the law of large numbers, the denominators of the lower and upper bounds in (\ref{beta_ratio_omega_limit_proof}) therefore approach $\mathbb{E}\left\{ {T_n^{\left( c \right)}\left( i \right)} \right\}$ with probability 1 as $t\rightarrow \infty $. Then it follows that
\begin{equation}
\label{beta_ratio_omega_limit}
\mathop {\lim }\limits_{t \to \infty } \frac{{\beta _n^{\rm{rmia},\left( c \right)}\left( t \right)}}{{\alpha _n^{\left( c \right)}\left( t \right)}} = \frac{1}{{\mathbb{E}\left\{ {T_n^{\left( c \right)}\left( i \right)} \right\}}}\buildrel \Delta \over =\beta _n^{\rm{rmia}}{\rm{\ with\ prob.\ }}1.
\end{equation}
Here the notation $\beta _n^{\rm{rmia}}$ does not have the superscript $c$ because $\mathbb{E}\left\{ {T_n^{\left( c \right)}\left( i \right)} \right\}$ is the same for all commodities. $\beta _n^{\rm{rmia}}$ can be intuitively interpreted as the "average frequency" of the epochs for the RMIA transmission scheme.

Moreover, for ${{q_{n,{\Omega _n}}^{\rm{rmia},\left( c \right)}\left( t \right)} \mathord{\left/{\vphantom {{q_{n,{\Omega _n}}^{\rm{rmia},\left( c \right)}\left( t \right)} {\beta _n^{\rm{rmia},\left( c \right)}\left( t \right)}}} \right.\kern-\nulldelimiterspace} {\beta _n^{\rm{rmia},\left( c \right)}\left( t \right)}}$, with the renewal operation, the decoding sets for each transmitting node $n$ across epochs are identically and independently distributed. Then we get by the strong law of large numbers that
\begin{equation}
\label{eq_q_ratio_limit}
\mathop {\lim }\limits_{t \to \infty } \frac{{q_{n,{\Omega _n}}^{\rm{rmia},\left( c \right)}\left( t \right)}}{{\beta _n^{\rm{rmia},\left( c \right)}\left( t \right)}} = q_{n,{\Omega _n}}^{{\rm{rmia}}}{\rm{\ with\ prob.\ }}1,
\end{equation}
where ${q_{n,{\Omega _n}}^{{\rm{rmia}}}}$ is the probability that $\Omega_n$ is the first successful receiver set at the end of each epoch. ${q_{n,{\Omega _n}}^{{\rm{rmia}}}}$ is the same for all commodities.

Define $g_{nk}^{\left( c \right)}\left( t \right) \buildrel \Delta \over = {{G_{nk}^{\left( c \right)}\left( t \right)} \mathord{\left/{\vphantom {{G_{nk}^{\left( c \right)}\left( t \right)} t}} \right.\kern-\nulldelimiterspace} t}$, and note that:
\begin{equation}
\label{flow_variable_g_bounded}
0 \le g_{nk}^{\left( c \right)}\left( t \right) \le 1,{\rm{\ }}g_{cn}^{\left( c \right)}\left( t \right) = g_{nn}^{\left( c \right)}\left( t \right) = 0.
\end{equation}

Since the constraints defined in (\ref{omega_theta_ratio_bound}) show closed and bounded regions with finite dimensions, a subsequence $t_i$ must exists, over which the individual terms in (\ref{omega_theta_ratio_bound}) and (\ref{flow_variable_g_bounded}) converge to constants $\alpha _n^{*\left( c \right)}$ and $\theta _{nk}^{*\left( c \right)}\left( {{\Omega _n}} \right)$. Furthermore, suppose there is a stationary randomized policy that decides to transmit a commodity $c$ unit every timeslot with a fixed probability $\alpha_n^{*\left( c \right)}$ and chooses node $k$ to get the forwarding responsibility with a fixed conditional probability $\theta _{nk}^{*\left( c \right)}\left( {{\Omega _n}} \right)$, given that the set of nodes $\Omega_n$ successfully received the packet. In this case, according to the law of large numbers, the values $\alpha_n^{*\left( c \right)}$ and $\theta _{nk}^{*\left( c \right)}\left( {{\Omega _n}} \right)$ are the limit values over the whole timeslot sequence $\left\{t\right\}$, i.e., the converging subsequence $\left\{t_i\right\}$ becomes $\left\{t\right\}$, and therefore it follows that
\begin{equation}
\label{omega_limit}
\mathop {\lim }\limits_{{t} \to \infty } \frac{{\alpha _n^{\left( c \right)}\left( {{t}} \right)}}{{{t}}} = \alpha _n^{*\left( c \right)}\rm{\ with\ prob.\ 1},
\end{equation}
\begin{equation}
\label{theta_limit}
\mathop {\lim }\limits_{{t} \to \infty } \frac{{\theta _{nk}^{\left( c \right)}\left( {{\Omega _n},{t}} \right)}}{{q_{n,{\Omega _n}}^{\rm{rmia},\left( c \right)}\left( {{t}} \right)}} = \theta _{nk}^{*\left( c \right)}\left( {{\Omega _n}} \right)\rm{\ with\ prob.\ 1},
\end{equation}
\begin{equation}
\label{eq_g_limit}
\mathop {\lim }\limits_{{t} \to \infty } g_{nk}^{\left( c \right)}\left( {{t}} \right) = b_{nk}^{*\left( c \right)}\rm{\ with\ prob.\ 1}.
\end{equation}
Combining (\ref{flow_variable_g_bounded}) and (\ref{eq_g_limit}), we have
\begin{equation}
b_{nk}^{*\left( c \right)} \ge 0,b_{cn}^{*\left( c \right)} = 0,b_{nn}^{*\left( c \right)} = 0,{\rm{\ for\ }}n \ne c.
\end{equation}
Furthermore, dividing both sides of (\ref{in_and_out}) by $t$ and using the results of (\ref{eq_rate_stable}) and (\ref{eq_g_limit}) yields:
\begin{equation}
\lambda _n^{\left( c \right)} + \sum\limits_{k \in {{\cal K}_n}} {b_{kn}^{*\left( c \right)}}  = \sum\limits_{k \in {{\cal K}_n}} {b_{nk}^{*\left( c \right)}} ,{\rm{\ for\ }}n \ne c.
\end{equation}
Likewise, according to (\ref{beta_ratio_omega_limit}), (\ref{eq_q_ratio_limit}), and (\ref{omega_limit})-(\ref{eq_g_limit}), taking the limit $t\rightarrow\infty$ in (\ref{time_average_relation}) yields:
\begin{equation}
\label{eq_average_rate_with_s}
b_{nk}^{*\left( c \right)} = \alpha _n^{*\left( c \right)}\beta _n^{{\rm{rmia}}}\sum\limits_{\Omega_n: {\Omega _n} \subseteq {{\cal K}_n}} {q_{n,{\Omega _n}}^{{\rm{rmia}}}\theta _{nk}^{*\left( c \right)}\left( {{\Omega _n}} \right)}.
\end{equation}

Thus, the RMIA network capacity region guarantees the existence of a stationary randomized policy with fixed probabilities $\alpha _n^{*\left( c \right)}$ and ${\theta _{nk}^{*\left( c \right)}\left( {{\Omega _n}} \right)}$, such that, for $\forall n,k,c \in \cal N$, it satisfies the constraints shown as (\ref{cap_region_const1_coll})-(\ref{cap_region_const3_coll}).

\section{Proof of the sufficiency part of Corollary \ref{cap_region_corr}}
\label{appendix: proof_sufficiency_corollary1}
To prepare for the proof, we firstly propose the following two lemmas:
\begin{lemma}
\label{lemma: uniformly convergence}
With RMIA transmission scheme, $\forall \varepsilon  > 0$, there exists an integer $D_{nk}^{\left(c\right)}>0$ such that, for $\forall t_0\ge 0$ and $\forall t \ge D_{nk}^{\left(c\right)}>0$ ($t_0$ and $t$ are integers), the mean time average flow rate of commodity $c$ through link $\left(n,k\right)$ over the interval from timeslot $t_0$ to timeslot $t_0+t-1$ satisfies:
\begin{equation}
\label{general_time_average}
\left| {\frac{1}{t}\sum\limits_{\tau  = {t_0}}^{{t_0} + t - 1} {\mathbb{E}\left\{ {b_{nk}^{\left( c \right)}\left( \tau  \right)} \right\}}  - b_{nk}^{\left( c \right)}} \right| \le \varepsilon.
\end{equation}
\end{lemma}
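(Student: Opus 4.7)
The plan is to exploit the i.i.d.\ epoch structure induced by the renewal operation. Under the stationary randomized policy with RMIA, the successive epochs of node $n$ are i.i.d., with common length $T_n$ whose expectation is finite by the computation leading to (19). Crucially, the flow variable $b_{nk}^{\left(c\right)}\left(\tau\right)$ is nonzero only in the terminal timeslot of an epoch, and equals $1$ there precisely when three independent events all occur: (i) commodity $c$ is the one being transmitted in that epoch (probability $\alpha_n^{*\left(c\right)}$), (ii) the first successful receiver set $\Omega_n$ contains $k$ (with per-set probability $q_{n,\Omega_n}^{\rm{rmia}}$), and (iii) $k$ is assigned forwarding responsibility (probability $\theta_{nk}^{*\left(c\right)}\left(\Omega_n\right)$). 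Letting $N\left(t_0,t_0+t-1\right)$ denote the number of epochs of node $n$ that terminate in $\left[t_0,t_0+t-1\right]$, the first step would be to write
\[
\sum_{\tau=t_0}^{t_0+t-1}\mathbb{E}\!\left\{b_{nk}^{\left(c\right)}\left(\tau\right)\right\} \;=\; \alpha_n^{*\left(c\right)}\sum_{\Omega_n\subseteq {\cal K}_n} q_{n,\Omega_n}^{\rm{rmia}}\,\theta_{nk}^{*\left(c\right)}\left(\Omega_n\right)\cdot \mathbb{E}\left\{N\left(t_0,t_0+t-1\right)\right\}.
\]
Comparing with (\ref{cap_region_const3_coll}), the lemma reduces to showing that $\mathbb{E}\left\{N\left(t_0,t_0+t-1\right)\right\}/t \to \beta_n^{\rm{rmia}}=1/\mathbb{E}\{T_n\}$ uniformly in $t_0$.

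Next, I would treat the epoch-completion times as a (delayed) renewal process and bound the two boundary contributions, namely the residual lifetime of the epoch covering $t_0$ and the overshoot past $t_0+t-1$. The essential uniformity ingredient is the geometric tail on epoch lengths that follows from Lemma~\ref{lemma: flowing_rate_property}: iterating (\ref{eq_cdf_information_perslot2}) yields $\Pr\left\{T_n\ge m\right\}\le \prod_{j\in{\cal K}_n}\left[F_{R_{nj}}\left(H_0\right)\right]^{m-1}$, so $T_n$ has exponential moments and, in particular, $\mathbb{E}\{T_n^2\}<\infty$. Combining this with a Lorden-type inequality for renewal counts yields a constant $C_n$, depending only on node $n$'s channel statistics, such that $\bigl|\mathbb{E}\left\{N\left(t_0,t_0+t-1\right)\right\}-t\,\beta_n^{\rm{rmia}}\bigr|\le C_n$ for every $t_0\ge 0$ and every integer $t\ge 1$. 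Dividing by $t$ and choosing $D_{nk}^{\left(c\right)}$ large enough that $C_n/D_{nk}^{\left(c\right)}$ multiplied by the total forwarding probability is at most $\varepsilon$ finishes the argument.

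The main obstacle is precisely the uniformity in $t_0$: a direct strong law of large numbers over i.i.d.\ epochs only gives convergence for windows beginning at $t_0=0$, where the renewal process is guaranteed to start at an epoch boundary. For arbitrary $t_0$, the epoch straddling $t_0$ has a size-biased length distribution (the inspection paradox), and its residual life is not distributed as a fresh $T_n$; moreover the distribution of this residual itself varies with $t_0$ until the process equilibrates. The geometric tail from Lemma~\ref{lemma: flowing_rate_property} is exactly what is needed to dominate the residual uniformly, since it bounds the size-biased mean by a constant depending only on network parameters. Once the uniform renewal estimate is established, the remaining step is routine: under the stationary randomized policy the per-epoch commodity choice, first successful receiver set, and forwarding decision are independent of the epoch length, so expectations factor as in the display above, and the factored renewal bound transfers directly to the flow rate.
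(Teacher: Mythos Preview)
Your renewal-theoretic plan has the right instinct, but the displayed factorization has two concrete defects. First, under the paper's stationary randomized policy the commodity is chosen \emph{per timeslot} with probability $\alpha_n^{*\left(c\right)}$, not per epoch; the epochs of different commodities therefore interleave (Fig.~\ref{fig_slots_allocation}), and there is no single renewal process ``epochs of node $n$'' with inter-arrival law $T_n$ on which to run a Lorden bound. You would instead have to work with the commodity-$c$ epoch count on the original time axis, which is a compound of Bernoulli thinning and the RMIA epoch length, with $\alpha_n^{*\left(c\right)}$ absorbed into the renewal rate rather than placed outside. Second, your claim that the first successful receiver set is independent of the epoch length is false: $T_n$ and $\Omega_n$ are both functionals of the \emph{same} channel realizations within the epoch, so conditioning on whether a given epoch terminates inside the window biases the distribution of $\Omega_n$. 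The exact identity you write therefore does not hold; only an approximate version with $O(1)$ boundary error is available, and making that $O(1)$ uniform in $t_0$ is exactly the work that remains. Both defects are repairable, but not with the argument as stated.

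The paper sidesteps both issues with a different, more elementary device. It introduces \emph{extended epochs} on link $(n,k)$ for commodity $c$: the intervals between successive timeslots at which a commodity-$c$ packet is actually forwarded from $n$ to $k$. These forwarding instants are regeneration points of the relevant sub-process (renewal clearing plus memoryless per-timeslot commodity selection), so the process shifted to start at any such instant has exactly the distribution of the process started at time $0$. The arbitrary-window sum $\sum_{\tau=t_0}^{t_0+t-1}b_{nk}^{\left(c\right)}\left(\tau\right)$ is then sandwiched between two length-$t$ sums anchored at the adjacent regeneration instants $\overline{t}_{nk,M_0}^{\left(c\right)}$ and $\overline{t}_{nk,M_0+1}^{\left(c\right)}$, incurring an additive correction of at most $2$; both bounding sums have the $t_0=0$ distribution, whose convergence in mean follows from almost-sure convergence together with dominated convergence. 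No Lorden inequality, no second-moment bound, and no reward factorization are required.
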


The proof of Lemma \ref{lemma: uniformly convergence} is shown in Appendix \ref{appendix: uniformly_convergence} and is non-trivial due to the requirement that the value of $D_{nk}^{\left(c\right)}$ does not depend on $t_0$, which is an arbitrary timeslot index.

\begin{lemma}
\label{lemma: strong_stability}
If there exists a constant $\varepsilon>0$ and integer $d>0$ such that for each timeslot $t_0$ and the backlogs ${\bf{Q}}\left(t_0\right)$ of the network in timeslot $t_0$, the $d$-timeslot average Lyapunov drift satisfies:
\begin{equation}
\label{eq_D_step_lyapunov_drift}
\frac{1}{d}\sum\limits_{n,c} {\mathbb{E}\left\{ {\left. {{{\left( {Q_n^{\left( c \right)}\left( {{t_0} + d} \right)} \right)}^2} - {{\left( {Q_n^{\left( c \right)}\left( {{t_0}} \right)} \right)}^2}} \right|\;{\bf{Q}}\left( {{t_0}} \right)} \right\}}  \le B\left(d\right) - \varepsilon \sum\limits_{n,c} {Q_n^{\left( c \right)}\left( {{t_0}} \right)},
\end{equation}
where $B\left(d\right)$ is a constant related to $d$. Then the mean time average backlog of the whole network satisfies:
\begin{equation}
\label{eq_mean_time_average_backlog_bounded}
\mathop {\lim \sup }\limits_{t \to \infty } \frac{1}{t}\sum\limits_{\tau  = 0}^{t - 1} {\sum\limits_{n,c} {\mathbb{E}\left\{ {Q_n^{\left( c \right)}\left( \tau  \right)} \right\}} }  \le \frac{B\left(d\right)}{\varepsilon }.
\end{equation}
\end{lemma}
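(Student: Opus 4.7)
\section*{Proof proposal for Lemma \ref{lemma: strong_stability}}

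The plan is to sum the hypothesized $d$-timeslot drift inequality over a sliding starting point $t_0 = 0, 1, \ldots, T-1$, telescope the left-hand side, use nonnegativity of the quadratic Lyapunov function to discard the surviving tail, and then divide by $T$ and take $T \to \infty$.

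First I would introduce the Lyapunov function $L(\tau) = \sum_{n,c} (Q_n^{(c)}(\tau))^2$ and rewrite the hypothesis, after taking total expectation over $\mathbf{Q}(t_0)$, as
\begin{equation}
\frac{1}{d}\,\mathbb{E}\{L(t_0+d) - L(t_0)\} \le B(d) - \varepsilon \sum_{n,c}\mathbb{E}\{Q_n^{(c)}(t_0)\}, \qquad t_0 \ge 0.
\label{eq:drift-avg}
\end{equation}
Summing \eqref{eq:drift-avg} from $t_0 = 0$ to $T-1$ collapses the left-hand side to
\begin{equation}
\frac{1}{d}\,\mathbb{E}\!\left\{\sum_{\tau=T}^{T+d-1} L(\tau) - \sum_{\tau=0}^{d-1} L(\tau)\right\},
\end{equation}
because every $L(\tau)$ with $d \le \tau \le T-1$ appears once with a $+$ sign and once with a $-$ sign. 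Since $L(\tau) \ge 0$, the first sum inside the expectation can be dropped when moving it to the other side, producing the bound
\begin{equation}
\varepsilon \sum_{t_0=0}^{T-1} \sum_{n,c} \mathbb{E}\{Q_n^{(c)}(t_0)\} \;\le\; T\,B(d) \;+\; \frac{1}{d}\,\mathbb{E}\!\left\{\sum_{\tau=0}^{d-1} L(\tau)\right\}.
\label{eq:summed}
\end{equation}

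Next I would divide \eqref{eq:summed} by $T$ and take $\limsup_{T\to\infty}$. The term $\frac{1}{Td}\mathbb{E}\{\sum_{\tau=0}^{d-1} L(\tau)\}$ vanishes as $T\to\infty$ provided the initial second moments are finite (which follows from the standing boundedness of arrivals $A_{\max}$ and from assuming the network starts empty, or from any finite initial state); this leaves exactly
\begin{equation}
\limsup_{T\to\infty}\frac{1}{T}\sum_{\tau=0}^{T-1}\sum_{n,c}\mathbb{E}\{Q_n^{(c)}(\tau)\} \;\le\; \frac{B(d)}{\varepsilon},
\end{equation}
which is the claimed bound \eqref{eq_mean_time_average_backlog_bounded}.

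The only delicate point, rather than any real obstacle, is verifying that the initial $L$-values are finite so that the tail term is negligible; this is immediate from finite initial backlogs. I would make that a one-line remark. Everything else is a straightforward telescoping argument, identical in spirit to the one-timeslot Foster--Lyapunov bound used in \cite{Neely_book_2010,Neely_Rahul_DIVBAR_2009}, with the $d$-timeslot averaging absorbed transparently through the factor $1/d$ that already sits in \eqref{eq:drift-avg}.
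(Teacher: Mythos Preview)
Your proposal is correct and follows essentially the same telescoping argument as the paper: take total expectation of the drift hypothesis, sum over $t_0=0,\ldots,T-1$ so that the left side collapses to $\frac{1}{d}\bigl(\sum_{\tau=T}^{T+d-1}\mathbb{E}\{L(\tau)\}-\sum_{\tau=0}^{d-1}\mathbb{E}\{L(\tau)\}\bigr)$, drop the nonnegative tail, divide by $T$, and let $T\to\infty$. The paper's version is slightly terser and does not explicitly comment on finiteness of the initial $L$-values, but otherwise the steps coincide.
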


The proof of Lemma \ref{lemma: strong_stability} is shown in Appendix \ref{appendix: stong_stability}.

With Lemma \ref{lemma: uniformly convergence} and Lemma \ref{lemma: strong_stability}, the main strategy of proving the sufficiency part of Corollary \ref{cap_region_corr} is to show that the network can achieve strong stability with the input rate matrix $\left(\lambda_n^{\left(c\right)}\right)$, if there exists a stationary randomized policy and a constant $\varepsilon>0$, such that the flow rates under the stationary randomized policy, denoted as $Policy^*$, and the input rate matrix $\left(\lambda_n^{\left(c\right)}+\varepsilon\right)$ satisfy (\ref{cap_region_const2_coll}), i.e.,
\begin{equation}
\label{cap_region_const4_coll}
\sum\limits_{k\in {\cal K}_n} {b_{kn}^{*\left( c \right)}}  + \left(\lambda _n^{\left( c \right)} + \varepsilon\right) \le \sum\limits_{k\in {\cal K}_n} {b_{nk}^{*\left( c \right)}} ,{\rm{\ for}}\ n \neq c.
\end{equation}

Start by extending one-timeslot queueing dynamic shown as (\ref{eq_basic_queuing_dynamics}) to $t$-timeslot queueing dynamic under $Policy^*$:
\begin{equation}
\label{eq_new_D_step_queuing_dynamics}
Q_n^{*\left( c \right)}\left( {{t_0} + t} \right) \le \max \left\{ {Q_n^{*\left( c \right)}\left( {{t_0}} \right) - \sum\limits_{\tau  = {t_0}}^{{t_0} + t - 1} {\sum\limits_{k \in {{\cal K}_n}} {b_{nk}^{\left( *c \right)}\left( \tau  \right)} } } , 0\right\} + \sum\limits_{\tau  = {t_0}}^{{t_0} + t - 1} {\sum\limits_{k \in {{\cal K}_n}} {b_{kn}^{*\left( c \right)}\left( \tau  \right)} }  + \sum\limits_{\tau  = {t_0}}^{{t_0} + t - 1} {a_n^{\left( c \right)}\left( \tau  \right)},
\end{equation}
where $t_0\ge0$; $t\ge1$. Here we claim that, through the whole paper, $t_0$ and $t$ are always used as timeslot index and the number of timeslots, respectively, and therefore always take integer values. Square both sides of (\ref{eq_new_D_step_queuing_dynamics}) and sum it over $n,c\in\left\{1,\cdots,N\right\}$ and divide it by $t$ to get
\begin{align}
\label{eq_square_inequality1}
\frac{{\sum\limits_{n,c} {{{\left( {Q_n^{*\left( c \right)}\left( {{t_0} + t} \right)} \right)}^2}} }}{t} \le &\frac{{\sum\limits_{n,c} {{{\left( {Q_n^{*\left( c \right)}\left( {{t_0}} \right)} \right)}^2}} }}{t} \nonumber\\
&+ \frac{1}{t}\sum\limits_{n,c} {\left[ {{{\left[ {\sum\limits_{\tau  = {t_0}}^{{t_0} + t - 1} {\sum\limits_{k \in {K_n}} {b_{nk}^{*\left( c \right)}\left( \tau  \right)} } } \right]}^2} + {{\left[ {\sum\limits_{\tau  = {t_0}}^{{t_0} + t - 1} {\sum\limits_{k \in {K_n}} {b_{kn}^{*\left( c \right)}\left( \tau  \right)} }  + \sum\limits_{\tau  = {t_0}}^{{t_0} + t - 1} {a_n^{\left( c \right)}\left( \tau  \right)} } \right]}^2}} \right]}\nonumber\\
 &- 2\sum\limits_{n,c} {Q_n^{*\left( c \right)}\left( {{t_0}} \right)\left[ {\frac{1}{t}\sum\limits_{\tau  = {t_0}}^{{t_0} + t - 1} {\sum\limits_{k \in {{\cal K}_n}} {b_{nk}^{*\left( c \right)}\left( \tau  \right)} }  - \frac{1}{t}\sum\limits_{\tau  = {t_0}}^{{t_0} + t - 1} {\sum\limits_{k \in {{\cal K}_n}} {b_{kn}^{*\left( c \right)}\left( \tau  \right)} }  - \frac{1}{t}\sum\limits_{\tau  = {t_0}}^{{t_0} + t - 1} {a_n^{\left( c \right)}\left( \tau  \right)} } \right]}.
\end{align}
Under an arbitrary policy, $\sum\limits_{k \in {{\cal K}_n}} {b_{nk}^{\left( c \right)}\left( \tau  \right)}  \le 1$ and $a_n^{\left( c \right)}\left( \tau  \right) \le {A_{\max }}$, and therefore we can bound the second and third square terms on the right hand side of the above inequality, which is also valid for any other policy:
\begin{align}
&\frac{1}{t}\sum\limits_{n,c} {\left[ {{{\left[ {\sum\limits_{\tau  = {t_0}}^{{t_0} + t - 1} {\sum\limits_{k \in {K_n}} {b_{nk}^{*\left( c \right)}\left( \tau  \right)} } } \right]}^2} + {{\left[ {\sum\limits_{\tau  = {t_0}}^{{t_0} + t - 1} {\sum\limits_{k \in {K_n}} {b_{kn}^{*\left( c \right)}\left( \tau  \right)} }  + \sum\limits_{\tau  = {t_0}}^{{t_0} + t - 1} {a_n^{\left( c \right)}\left( \tau  \right)} } \right]}^2}} \right]} \le {N^2}t\left[ {1 + {{\left( {N + {A_{\max }}} \right)}^2}} \right] \buildrel \Delta \over = B\left(t\right).
\label{eq_B_value}
\end{align}
Then take conditional expectations on both sides of (\ref{eq_square_inequality1}), given the backlog state observation ${\bf Q}^{*}\left(t_0\right)$, to get
\begin{align}
\label{eq_corrl_lyapunov_drift1}
&\frac{1}{t}\sum\limits_{n,c} {\mathbb{E}\left\{ {\left. {{{\left( {Q_n^{*\left( c \right)}\left( {{t_0} + t} \right)} \right)}^2} - {{\left( {Q_n^{*\left( c \right)}\left( {{t_0}} \right)} \right)}^2}} \right|\;{{\bf{Q}}^*}\left( {{t_0}} \right)} \right\}}  \le B\left( t \right)\nonumber\\
&\ \ \ \ -2\sum\limits_{n,c} {Q_n^{*\left( c \right)}\left( {{t_0}} \right)\mathbb{E}\left\{ {\left. {\sum\limits_{k \in {{\cal K}_n}} {\frac{1}{t}\sum\limits_{\tau  = {t_0}}^{{t_0} + t - 1} {b_{nk}^{*\left( c \right)}\left( \tau  \right)} }  - \sum\limits_{k \in {{\cal K}_n}} {\frac{1}{t}\sum\limits_{\tau  = {t_0}}^{{t_0} + t - 1} {b_{nk}^{*\left( c \right)}\left( \tau  \right)} }  - \frac{1}{t}\sum\limits_{\tau  = {t_0}}^{{t_0} + t - 1} {a_n^{\left( c \right)}\left( \tau  \right)} } \right|{{\bf{Q}}^*}\left( {{t_0}} \right)} \right\}}.
\end{align}
On the right hand side of (\ref{eq_corrl_lyapunov_drift1}), the expectation term can be rewritten as follows:
\begin{align}
\label{eq_rewrite_average_rate_term}
&\ \ \ \ {\mathbb{E}\left\{ {\left. {\sum\limits_{k \in {{\cal K}_n}} {\frac{1}{t}\sum\limits_{\tau  = {t_0}}^{{t_0} + t - 1} {b_{nk}^{*\left( c \right)}\left( \tau  \right)} }  - \sum\limits_{k \in {{\cal K}_n}} {\frac{1}{t}\sum\limits_{\tau  = {t_0}}^{{t_0} + t - 1} {b_{nk}^{*\left( c \right)}\left( \tau  \right)} }  - \frac{1}{t}\sum\limits_{\tau  = {t_0}}^{{t_0} + t - 1} {a_n^{\left( c \right)}\left( \tau  \right)} } \right|{{\bf{Q}}^*}\left( {{t_0}} \right)} \right\}}\nonumber\\
&=\sum\limits_{k \in {\mathcal{K}_n}} {\left[ {\frac{1}{t}\sum\limits_{\tau  = {t_0}}^{{t_0} + t - 1} {\mathbb{E}\left\{ {b_{nk}^{\left( c \right)}\left( \tau  \right)} \right\}}  - b_{nk}^{\left( c \right)}} \right]}  - \sum\limits_{k \in {\mathcal{K}_n}} {\left[ {\frac{1}{t}\sum\limits_{\tau  = {t_0}}^{{t_0} + t - 1} {\mathbb{E}\left\{ {b_{kn}^{\left( c \right)}\left( \tau  \right)} \right\}}  - b_{kn}^{\left( c \right)}} \right]}\nonumber\\
&\ \ \ + \left[ {\sum\limits_{k \in {\mathcal{K}_n}} {b_{nk}^{\left( c \right)}}  - \sum\limits_{k \in {\mathcal{K}_n}} {b_{kn}^{\left( c \right)}}  - \lambda _n^{\left( c \right)}} \right] + \left[ {\lambda _n^{\left( c \right)} - \frac{1}{t}\sum\limits_{\tau  = {t_0}}^{{t_0} + t - 1} {\mathbb{E}\left\{ {a_n^{\left( c \right)}\left( \tau  \right)} \right\}} } \right],
\end{align}
where ${\bf{Q}}^*\left(t_0\right)$ is dropped from the given condition of the expectation because the decisions made under a stationary randomized policy are independent of the backlog observations. To further manipulate (\ref{eq_rewrite_average_rate_term}), due to Lemma \ref{lemma: uniformly convergence}, for link $\left(n,k\right)$, there exists an integer $D_{nk}^{\left(c\right)}>0$, such that, $\forall t\ge D_{nk}^{\left(c\right)}$ and $\forall t_0\ge 0$,
\begin{equation}
\label{eq_time_average_flow rate_bound}
\left| {\frac{1}{t}\sum\limits_{\tau  = {t_0}}^{{t_0} + t - 1} {\mathbb{E}\left\{ {b_{nk}^{\left( c \right)}\left( \tau  \right)} \right\}}  - b_{nk}^{\left( c \right)}} \right| \le \frac{\varepsilon }{{8N}}.
\end{equation}
Moreover, because $a_{n}^{\left(c\right)}\left(t\right)$ are i.i.d and bounded over timeslots, according to the law of large numbers and dominated convergence property (see Ref. \cite{domenate_convergence_Wong_Hajek_1985}, Chapter 1), there exists an integer $D_{a,n}^{\left(c\right)}$ such that, $\forall t\ge D_{a,n}^{\left(c\right)}$ and $\forall t_0\ge 0$,
\begin{equation}
\label{eq_time_average_input_rate_bound}
\left| {\frac{1}{t}\sum\limits_{\tau  = {t_0}}^{{t_0} + t - 1} {\mathbb{E}\left\{ {a_{nk}^{\left( c \right)}\left( \tau  \right)} \right\}}  - \lambda _n^{\left( c \right)}} \right| \le \frac{\varepsilon }{4}.
\end{equation}
Then choosing $D^* = \max \left\{ {D_{nk}^{\left( c \right)},D_{kn}^{\left( c \right)},D_{a,n}^{\left( c \right)}:\ n,k,c \in \cal N} \right\}$, and plugging (\ref{cap_region_const4_coll}), (\ref{eq_time_average_flow rate_bound}) and (\ref{eq_time_average_input_rate_bound}) into (\ref{eq_rewrite_average_rate_term}), it follows that, $\forall t\ge D^*$ and $\forall t_0\ge 0$,
\begin{align}
\label{eq_bound_average_rate_term}
&\ \ \ {\mathbb{E}\left\{ {\left. {\sum\limits_{k \in {{\cal K}_n}} {\frac{1}{t}\sum\limits_{\tau  = {t_0}}^{{t_0} + t - 1} {b_{nk}^{*\left( c \right)}\left( \tau  \right)} }  - \sum\limits_{k \in {{\cal K}_n}} {\frac{1}{t}\sum\limits_{\tau  = {t_0}}^{{t_0} + t - 1} {b_{nk}^{*\left( c \right)}\left( \tau  \right)} }  - \frac{1}{t}\sum\limits_{\tau  = {t_0}}^{{t_0} + t - 1} {a_n^{\left( c \right)}\left( \tau  \right)} } \right|{{\bf{Q}}^*}\left( {{t_0}} \right)} \right\}}\nonumber\\
&\ge  - \frac{\varepsilon }{{8N}}N - \frac{\varepsilon }{{8N}}N + \varepsilon  - \frac{\varepsilon }{{4}} = \frac{\varepsilon }{2}.
\end{align}
Plugging (\ref{eq_bound_average_rate_term}) back into (\ref{eq_corrl_lyapunov_drift1}), we get that there is an integer $D^*>0$ such that, $\forall t_0 \ge 0$ and $\forall t\ge D^*$,
\begin{equation}
\frac{1}{t}\sum\limits_{n,c} {\mathbb{E}\left\{ {\left. {{{\left( {Q_n^{*\left( c \right)}\left( {{t_0} + t} \right)} \right)}^2} - {{\left( {Q_n^{*\left( c \right)}\left( {{t_0}} \right)} \right)}^2}} \right|\;{{\bf{Q}}^*}\left( {{t_0}} \right)} \right\}}  \le B\left( t \right)-\varepsilon\sum\limits_{n,c} {Q_n^{*\left( c \right)}\left( {{t_0}} \right)}.
\end{equation}
When $t=D^*$, the $D^*$ slots Lyapunov drift is upper bounded as follows:
\begin{equation}
\label{eq_corr1_lyapunov_drift}
\frac{1}{D^*}\sum\limits_{n,c} {\mathbb{E}\left\{ {\left. {{{\left( {Q_n^{*\left( c \right)}\left( {{t_0} + D^*} \right)} \right)}^2} - {{\left( {Q_n^{*\left( c \right)}\left( {{t_0}} \right)} \right)}^2}} \right|\;{{\bf{Q}}^*}\left( {{t_0}} \right)} \right\}}  \le B\left( D^* \right)-\varepsilon\sum\limits_{n,c} {Q_n^{*\left( c \right)}\left( {{t_0}} \right)}.
\end{equation}

The $D^*$-timeslot average Lyapunov drift bound shown on the right hand side of (\ref{eq_corr1_lyapunov_drift}) satisfies the condition required by Lemma \ref{lemma: strong_stability}, and therefore we can conclude that
\begin{equation}
\label{eq_mean_time_average_backlog_bounded_policy_star}
\mathop {\lim \sup }\limits_{t \to \infty } \frac{1}{t}\sum\limits_{\tau  = 0}^{t - 1} {\sum\limits_{n,c} {\mathbb{E}\left\{ {Q_n^{\left( c \right)}\left( \tau  \right)} \right\}} }  \le \frac{B\left(D^*\right)}{\varepsilon },
\end{equation}
which shows \emph{strong stability}.

%*******************************************************************************************************
\section{Proof of Theorem \ref{thm: capacity_comparison}}
\label{appendix: capacity comparison}
The proof of this theorem includes two aspects: first, all the input rate matrices in $\Lambda_{\rm{REP}}$ are included in $\Lambda_{\rm{RMIA}}$, which is shown in Subsection \ref{subsec_capacity_comparison_sub1}; second, there exists at least one input rate matrix that is in $\Lambda_{\rm{RMIA}}$ but is outside of $\Lambda_{\rm{REP}}$, which is shown in Subsection \ref{subsec_capacity_comparison_sub2}.

\subsection{All input rate matrices within $\Lambda_{\rm{REP}}$ are included in $\Lambda_{\rm{RMIA}}$}
\label{subsec_capacity_comparison_sub1}
For this part of the proof, the goal is to show that, for any input rate matrix $\left(\lambda_n^{\left(c\right)}\right)$ within $\Lambda_{\rm{REP}}$, there must exist a stationary randomized policy with RMIA that can stably support $\left(\lambda_n^{\left(c\right)}\right)$. Thus, if we can find a stationary randomized policy with RMIA that can also form the time average flow rate matrix $\left(b_{nk}^{**\left(c\right)}\right)$, then it naturally follows that $\left(\lambda_n^{\left(c\right)}\right)$ can be supported. To be specific, we claim as follows:

\emph{Let $Policy^{**}$ represent the stationary randomized policy with REP that can support any input rate matrix within $\Lambda_{\rm{REP}}$ and forms a flow rate matrix $\left(b_{nk}^{**\left(c\right)}\right)$, then there must exist another stationary randomized policy with RMIA, denoted as $Policy^{1}$, such that the flow rate matrix $\left(b_{nk}^{**\left(c\right)}\right)$ can also be formed. Here node $n$ under $Policy^{**}$ with REP uses probability $\alpha_n^{**\left(c\right)}$ to choose commodity $c$ to transmit and probability $\theta_{nk}^{**\left(c\right)}\left(\Omega_n\right)$ to choose node $k$ as the forwarding node, given the successful receiver set $\Omega_n$; $Policy^1$ also uses probability $\alpha_n^{**\left(c\right)}$ to choose commodity $c$ to transmit but uses $\theta_{nk}^{1\left(c\right)}\left(\Omega_n\right)$ to choose node $k$ as the forwarding node, given the successful receiver set $\Omega_n$, where
\begin{equation}
\theta _{nk}^{1\left( c \right)}\left( {{\Omega _n}} \right) = \sum\limits_{{\Psi _n}:{\Psi _n} \subseteq {\Omega _n},k \in {\Psi _n}} {\frac{{q_{n,{\Psi _n},{\Omega _n}}^{{\rm{rep}},{\rm{rmia}}}}}{{q_{n,{\Omega _n}}^{{\rm{rmia}}}}}\theta _{nk}^{**\left( c \right)}\left( {{\Psi _n}} \right)},\ \Omega_n \neq \emptyset.
\label{eq_forwarding_prob_policy1}
\end{equation}
In (\ref{eq_forwarding_prob_policy1}), $q_{n,\Omega_n}^{\rm{rmia}}$ represents the probability that the first successful receiver set is $\Omega_n$ at the end of each epoch when node $n$ is transmitting with RMIA; $q_{n,\Psi_n,\Omega_n}^{\rm{rep,rmia}}$ represents the probability that, when node $n$ is transmitting, $\Omega_n$ is the first successful receiver set at the ending timeslot of each epoch if with RMIA, while in the same timeslot, the set of nodes $\Psi_n$ (possibly empty) decode the unit if with REP.}

Now we start proving the claim above. Since both $Policy^*$ and $Policy^1$ use the same probabilities $\left\{\alpha_n^{**\left(c\right)}\right\}$ to choose commodities to transmit, the ultimate goal of showing the existence of $Policy^1$ with RMIA relies on whether we can find a set of probability values $\left\{\theta_{nk}^{1\left(c\right)}\left(\Omega_n\right)\right\}$ used to choose forwarding nodes under RMIA, such that, combined with $\left\{\alpha_n^{**\left(c\right)}\right\}$, the flow matrix $\left(b_{nk}^{**\left(c\right)}\right)$ can be formed.

With the same probabilities $\left\{\alpha_n^{**\left(c\right)}\right\}$ used to choose commodities to transmit by $Policy^*$ and $Policy^1$, we can make a further assumption that, for each commodity $c$, $Policy^1$ uses exactly the same set of timeslots to transmit commodity $c$ units as $Policy^{**}$. Admittedly, this assumption seems to be stronger than the assumption that the two policies just use the same probabilities $\left\{\alpha_n^{**\left(c\right)}\right\}$. However, since we only aim to show the existence of $Policy^1$ based on the existence of $Policy^{**}$, this stronger assumption does not matter. In fact, under the constraint that the two policies are both stationary randomized, making decision on which commodity to transmit at the beginning of each timeslot does not depend on the physical layer transmission scheme used, and therefore, we can imagine that the step of choosing commodities to transmit can be firstly done without specification of the two policies. After that, node $n$ starts transmitting and forwarding the packet with the chosen commodity respectively according to $Policy^{**}$ or $Policy^1$.

With the assumption of the same set of timeslots used for transmitting each commodity under $Policy^{**}$ and $Policy^1$, on the one hand, we can guarantee that the channel realizations in the set of timeslots under $Policy^{**}$ and $Policy^1$ used for transmitting each commodity packets are exactly the same. On the other hand, with RMIA, each receiving node in each timeslot not only uses the information transmitted in the current timelsot to decode the packet being transmitted, but also uses the partial information possibly accumulated during previous timeslots as well, while with REP, each receiving node can not use this "inheritance" from the previous timeslots. Therefore, these two facts yield the following two conclusions:
\begin{enumerate}
\item the set of decoding timeslots (successful receiving timeslots) under $Policy^{**}$ with REP must be a subset of first decoding timeslots under $Policy^1$ with RMIA;
\item the successful receiver set in each decoding timeslot under $Policy^{**}$ with REP must be a subset of the first successful receiver set in each first decoding timeslot under $Policy^1$ with RMIA.
\end{enumerate}

Based on the above two conclusions, an intuitive approach of letting the flow rates under a policy with RMIA to be the same as the flow rates under $Policy^{**}$ with REP is: with the transmitting timeslots determined by $\left\{\alpha_n^{**\left(c\right)}\right\}$, each node $n$ with RMIA treats the set of decoding nodes $\Omega_n$ in each first-decoding timeslot as if the set of decoding nodes is $\Psi_n$ ($\Psi_n \subseteq \Omega_n$) when transmitting with REP and uses probability $\theta_{nk}^{**\left(c\right)}\left(\Psi_n\right)$ to forward the decoded packet to node $k \in \Psi_n$. Here $\Psi_n$ is possibly empty and $\theta_{nk}^{**\left(c\right)}\left(\Psi_n\right)$ is correspondingly $0$. After averaging the forwarding probabilities for each receiving node over the time horizon, $Policy^1$ can be formed.

Following this intuition, in a first-decoding timeslot of one epoch under RMIA, let ${\cal A}_{n,\Psi_n,\Omega_n}^{\rm{rep,rmia}}$ represent the event that the first successful receiver set is $\Omega_n$ ($\Omega_n\neq \emptyset$), while the successful receiver set is $\Psi_n$ if with REP. Then let $q_{n,\Psi_n,\Omega_n}^{{\rm{rep,rmia}}\left(c\right)}\left(t\right)$ represent the number of epochs with RMIA in the first $t$ timeslots, in which ${\cal A}_{n,\Psi_n,\Omega_n}^{\rm{rep,rmia}}$ happens; let $q_{n,{\Psi _n}}^{{\rm{rep}},\left(c\right)}\left( t \right)$ represent the number of timeslots of transmitting commodity $c$ with REP, in which the successful receiver set is $\Psi_n$ ($\Psi_n\neq \emptyset$), in the first $t$ timeslots of the time horizon. Then it naturally follows that
\begin{equation}
q_{n,{\Psi _n}}^{{\rm{rep}},\left(c\right)}\left( t \right) = \sum\limits_{{\Omega _n}:{\Psi _n} \subseteq {\Omega _n}} {q_{n,{\Psi _n},{\Omega _n}}^{{\rm{rep,rmia}},\left(c\right)}\left( t \right)},\ \Psi_n \neq \emptyset.
\label{eq_q_relation1}
\end{equation}
Additionally, by the law of large numbers, with REP, the ratio of the timeslots with decoding set $\Psi_n$ over the total number of transmission slots converges to the probability that $\Psi_n$ is the decoding set in each timeslot, i.e.,
\begin{equation}
\mathop {\lim }\limits_{t \to \infty } \frac{{q_{n,{\Psi _n}}^{{\rm{rep}},\left(c\right)}\left( t \right)}}{{\alpha _n^{**\left( c \right)}\left( t \right)}} = q_{n,{\Psi _n}}^{{\rm{rep}}},{\rm{\ with\ prob}}{\rm{.\ }}1,\ \Psi_n \neq \emptyset.
\label{eq_q_relation3}
\end{equation}
Moreover, we have
\begin{equation}
\sum\limits_{{\Omega _n}:{\Psi _n} \subseteq {\Omega _n}} {\frac{{q_{n,{\Psi _n},{\Omega _n}}^{{\rm{rep}},{\rm{rmia}},\left(c\right)}\left( t \right)}}{{\alpha _n^{**\left( c \right)}\left( t \right)}}}  = \sum\limits_{{\Omega _n}:{\Psi _n} \subseteq {\Omega _n}} {\frac{{\beta _n^{{\rm{rmia}},\left(c\right)}\left( t \right)}}{{\alpha _n^{**\left( c \right)}\left( t \right)}}} \frac{{q_{n,{\Psi _n},{\Omega _n}}^{{\rm{rep}},{\rm{rmia}},\left(c\right)}\left( t \right)}}{{\beta _n^{{\rm{rmia}},\left(c\right)}\left( t \right)}},\ \Psi_n \neq \emptyset
\label{eq_q_relation4}
\end{equation}
where, as defined in the proof of Corollary \ref{cap_region_corr}, ${\beta _n^{{\rm{rmia}}}\left( t \right)}$ is the number of epochs within the $\alpha_n^{\left(c\right)}\left(t\right)$ transmission timeslots, and we have ${{\mathop {\lim }\limits_{t \to \infty } \beta _n^{{\rm{rmia}},\left(c\right)}\left( t \right)} \mathord{\left/{\vphantom {{\mathop {\lim }\limits_{t \to \infty } \beta _n^{{\rm{rmia}},\left(c\right)}\left( t \right)} {\alpha _n^{**\left( c \right)}\left( t \right)}}} \right.\kern-\nulldelimiterspace} {\alpha _n^{**\left( c \right)}\left( t \right)}} = {1 \mathord{\left/{\vphantom {1 {\mathbb{E}\left\{ {{T_n}} \right\}}}} \right.\kern-\nulldelimiterspace} {\mathbb{E}\left\{ {{T_n}} \right\}}} \buildrel \Delta \over = \beta _n^{{\rm{rmia}}}$ with probability 1. Additionally, since the sets of decoding nodes $\Psi_n$ and $\Omega_n$ only depend on the channel realizations of the timeslots in the current epoch and are independent of other epochs, the occurrence of the event that the successful receiver set is $\Psi_n$ if with REP and the first successful receiver set is $\Omega_n$ if with RMIA at the end of each epoch has a fixed probability, denoted as ${q_{n,{\Psi _n},{\Omega _n}}^{{\rm{rep}},{\rm{rmia}}}}$. Then according to the law of large numbers, we have
\begin{equation}
\mathop {\lim }\limits_{t \to \infty } \frac{{q_{n,{\Psi _n},{\Omega _n}}^{{\rm{rep}},{\rm{rmia}},\left(c\right)}\left( t \right)}}{{\beta _n^{{\rm{rmia},\left(c\right)}}\left( t \right)}} = q_{n,{\Psi _n},{\Omega _n}}^{{\rm{rep}},{\rm{rmia}}},{\rm{\ with\ prob}}{\rm{.\ }}1,\ \Omega_n \neq \emptyset.
\label{eq_q_relation4.5}
\end{equation}
In (\ref{eq_q_relation4.5}), we don't need to set $\Psi_n$ nonempty, but $\Omega_n \neq \emptyset$ is sufficient because this convergence also holds true for the case that $\Psi_n=\emptyset,\ \Omega_n \neq \emptyset$; $q_{n,{\Psi _n},{\Omega _n}}^{{\rm{rep}},{\rm{rmia}}}$ does not have a superscript $c$ because this probability value is the same for all commodities. Then going back to (\ref{eq_q_relation4}), let $t\rightarrow \infty$ on both sides, we have
\begin{equation}
\mathop {\lim }\limits_{t \to \infty } \sum\limits_{{\Omega _n}:{\Psi _n} \subseteq {\Omega _n}} {\frac{{q_{n,{\Psi _n},{\Omega _n}}^{{\rm{rep}},{\rm{rmia}},\left(c\right)}\left( t \right)}}{{\alpha _n^{**\left( c \right)}\left( t \right)}}}  = \sum\limits_{{\Omega _n}:{\Psi _n} \subseteq {\Omega _n}} {\beta _n^{{\rm{rmia}}}q_{n,{\Psi _n},{\Omega _n}}^{{\rm{rep}},{\rm{rmia}}}},\ {\rm{with\ prob.\ }}1,\ \Omega_n \neq \emptyset.
\label{eq_q_relation5}
\end{equation}
Divide both sides of (\ref{eq_q_relation1}) by $\alpha_n^{**\left(c\right)}\left(t\right)$ and combine it with (\ref{eq_q_relation5}) and (\ref{eq_q_relation4}), we get
\begin{equation}
q_{n,{\Psi _n}}^{{\rm{rep}}} = \sum\limits_{{\Omega _n}:{\Psi _n} \subseteq {\Omega _n}} {\beta _n^{{\rm{rmia}}}q_{n,{\Psi _n},{\Omega _n}}^{{\rm{rep}},{\rm{rmia}}}},\ \Psi_n \neq \emptyset.
\label{eq_q_relation6}
\end{equation}

As is shown in Theorem \ref{cap_region_thm}, $Policy^{**}$ respectively uses probabilities $\left\{\alpha_n^{**\left(c\right)}\right\}$ and $\left\{\theta_{nk}^{**\left(c\right)}\left(\Omega_n\right)\right\}$ to choose commodities to transmit and the forwarding nodes, and form flow rate matrix $\left(b_{nk}^{**\left(c\right)}\right)$ supporting arbitrary input rate matrix $\left(\lambda_n^{\left(c\right)}\right)$ within $\Lambda_{\rm{REP}}$. According to Theorem \ref{cap_region_thm}, we have, $\forall \left(n,k\right) \in {\cal N}$,
\begin{align}
b_{nk}^{**\left( c \right)} = \alpha _n^{**\left( c \right)}\sum\limits_{{\Psi _n}:k \in {\Psi _n}} {q_{n,{\Psi _n}}^{{\rm{rep}}}\theta _{nk}^{**\left( c \right)}\left( {{\Psi _n}} \right)}.
\label{eq_in_thm2_average_flowing_rate_rep1}
\end{align}

Plug (\ref{eq_q_relation6}) into (\ref{eq_in_thm2_average_flowing_rate_rep1}) and it follows that
\begin{align}
b_{nk}^{**\left( c \right)} &  = \alpha _n^{**\left( c \right)}\sum\limits_{{\Psi _n}:k \in {\Psi _n}} {\ \sum\limits_{{\Omega _n}:{\Psi _n} \subseteq {\Omega _n}} {\beta _n^{{\rm{rmia}}}q_{n,{\Psi _n},{\Omega _n}}^{{\rm{rep}},{\rm{rmia}}}} }\  \theta _{nk}^{**\left( c \right)}\left( {{\Psi _n}} \right) \nonumber\\
&= \alpha _n^{**\left( c \right)}\beta _n^{{\rm{rmia}}}\sum\limits_{{\Omega _n}:k \in {\Omega _n}} {q_{n,{\Omega _n}}^{{\rm{rmia}}}\sum\limits_{{\Psi _n}:{\Psi _n} \subseteq {\Omega _n},k \in {\Psi _n}} {\frac{{q_{n,{\Psi _n},{\Omega _n}}^{{\rm{rep}},{\rm{rmia}}}}}{{q_{n,{\Omega _n}}^{{\rm{rmia}}}}}\theta _{nk}^{**\left( c \right)}\left( {{\Psi _n}} \right)} }.
\label{eq_in_thm2_average_flowing_rate1}
\end{align}
In (\ref{eq_in_thm2_average_flowing_rate1}), ${q_{n,{\Omega _n}}^{{\rm{rmia}}}}$ is the probability that the first successful receiver set is $\Omega_n$ at the end of one epoch and can be put into the denominator because this probability value must be non-zero, which can be shown as follows: let $T_n$ represent the epoch length for transmitting node $n$ under RMIA; let ${\cal A}_{n,\Omega_n}^{\rm{rmia}}$ represent the event that $\Omega_n$ is the first successful receiver set when node $n$ is transmitting with RMIA, and then we have
\begin{align}
q_{n,{\Omega _n}}^{{\rm{rmia}}} &= \Pr \left( {{\cal A}_{n,{\Omega _n}}^{{\rm{rmia}}}} \right) = \sum\limits_{m = 1}^\infty  {\Pr \left( {{T_n} = m,{\cal A}_{n,{\Omega _n}}^{{\rm{rmia}}}} \right)} \nonumber\\
&= \sum\limits_{m = 1}^\infty\  {\prod\limits_{j:j \in {\Omega _n}} {\Pr \left( {\sum\limits_{i = 1}^{m - 1} {{R_{nj}}\left( i \right)}  < {H_0},\sum\limits_{i = 1}^m {{R_{nj}}\left( i \right)}  \ge {H_0}} \right)} \prod\limits_{j:j \notin {\Omega _n},j \in {{\cal K}_n}} {\Pr \left( {\sum\limits_{i = 1}^m {{R_{nj}}\left( i \right)}  < {H_0}} \right)} } \nonumber\\
&= \sum\limits_{m = 1}^\infty\  {\prod\limits_{j:j \in {\Omega _n}} {\left[ {F_{{R_{nj}}}^{\left( {m - 1} \right)}\left( {{H_0}} \right) - F_{{R_{nj}}}^{\left( m \right)}\left( {{H_0}} \right)} \right]} \prod\limits_{j:j \notin {\Omega _n},j \in {K_n}} {F_{{R_{nj}}}^{\left( m \right)}\left( {{H_0}} \right)} }>0,
\label{eq_q_omega_RMIA}
\end{align}
where, according to (\ref{eq_cdf_information_perslot2}) in Lemma \ref{lemma: flowing_rate_property}, we have ${F_{{R_{nj}}}^{\left( {m - 1} \right)}\left( {{H_0}} \right) - F_{{R_{nj}}}^{\left( m \right)}\left( {{H_0}} \right)}>0$. Then combined with the fact that ${F_{{R_{nj}}}^{\left( m \right)}\left( {{H_0}} \right)}>0$, (\ref{eq_q_omega_RMIA}) therefore indicates that $q_{n,{\Omega _n}}^{{\rm{rmia}}}>0$.

If we define
\begin{equation}
\theta _{nk}^{1\left( c \right)}\left( {{\Omega _n}} \right) = \sum\limits_{{\Psi _n}:{\Psi _n} \subseteq {\Omega _n},k \in {\Psi _n}} {\frac{{q_{n,{\Psi _n},{\Omega _n}}^{{\rm{rep}},{\rm{rmia}}}}}{{q_{n,{\Omega _n}}^{{\rm{rmia}}}}}\theta _{nk}^{**\left( c \right)}\left( {{\Psi _n}} \right)},
\label{eq_the_value_of_theta1}
\end{equation}
then by plugging (\ref{eq_the_value_of_theta1}) into (\ref{eq_in_thm2_average_flowing_rate1}), the flow rate can be expressed as
\begin{equation}
b_{nk}^{**\left( c \right)} = \alpha _n^{**\left( c \right)}\beta _n^{{\rm{rmia}}}\sum\limits_{{\Omega _n}:k \in {\Omega _n}} {q_{n,{\Omega _n}}^{{\rm{rmia}}}\theta _{nk}^{1\left( c \right)}\left( {{\Omega _n}} \right)}.
\label{eq_in_thm2_average_flowing_rate2}
\end{equation}
According to (\ref{eq_in_thm2_average_flowing_rate2}), if $\theta_{nk}^{1\left(c\right)}\left(\Omega_n\right)$ under $Policy^1$ with RMIA takes value shown as (\ref{eq_the_value_of_theta1}), the flow rate matrix $\left(b_{nk}^{**\left(c\right)}\right)$ can be formed. Then the later part of the proof is to check whether the set of probability values of $\left\{\theta_{nk}^{1\left(c\right)}\left(\Omega_n\right)\right\}$ shown in (\ref{eq_the_value_of_theta1}) are valid forwarding probability values under $Policy^1$ with RMIA.

Note that the only criterion of checking the validity of forwarding probabilities for a transmission node is to check if the summation of forwarding probabilities of all the successful receiving nodes, given a successful receiver set $\Psi_n$, is no larger than 1. Under $Policy^{**}$, whose forwarding probabilities $\left\{\theta_{nk}^{**\left(c\right)}\left(\Psi_n\right)\right\}$ are valid, we have, $\forall n \in \cal N$,
\begin{equation}
\sum\limits_{k:k \in {\Psi _n}} {\theta _{nk}^{**\left( c \right)}\left( {{\Psi _n}} \right)}  \le 1,\ {\Psi _n} \subseteq {{\cal K}_n}.
\label{eq_valid_forwarding_prob1}
\end{equation}
With RMIA, going back to the definition of ${q_{n,{\Psi _n},{\Omega _n}}^{{\rm{rep}},{\rm{rmia}},\left(c\right)}\left( t \right)}$, firstly we have
\begin{equation}
q_{n,{\Omega _n}}^{{\rm{rmia}},\left(c\right)}\left( t \right) = \sum\limits_{{\Psi _n}:\Psi  \subseteq {\Omega _n}} {q_{n,{\Psi _n},{\Omega _n}}^{{\rm{rep}},{\rm{rmia}},\left(c\right)}\left( t \right)} ,\ {\Omega _n} \ne \emptyset,
\label{eq_q_relation8}
\end{equation}
where, in retrospect, $q_{n,{\Omega _n}}^{{\rm{rmia}},\left(c\right)}\left( t \right)$ is the number of epochs that the first successful receiver set is $\Omega_n$ during the first $t$ timeslots. Additionally, by the law of large numbers, we have
\begin{equation}
\mathop {\lim }\limits_{t \to \infty } \frac{{q_{n,{\Omega _n}}^{{\rm{rmia}},\left(c\right)}\left( t \right)}}{{\alpha _n^{**\left( c \right)}\left( t \right)}} = \mathop {\lim }\limits_{t \to \infty } \frac{{\beta _n^{{\rm{rmia}},\left(c\right)}\left( t \right)}}{{\alpha _n^{**\left( c \right)}\left( t \right)}}\frac{{q_{n,{\Omega _n}}^{{\rm{rmia}},\left(c\right)}\left( t \right)}}{{\beta _n^{{\rm{rmia}},\left(c\right)}\left( t \right)}} = \beta _n^{{\rm{rmia}}}q_{n,{\Omega _n}}^{{\rm{rmia}}},\ \rm{with\ prob.\ 1,}\ \Omega_n \neq \emptyset,
\label{eq_q_relation9}
\end{equation}
Moreover, with the same reasoning, we have
\begin{align}
\mathop {\lim }\limits_{t \to \infty } \sum\limits_{{\Psi _n}:{\Psi _n} \subseteq {\Omega _n}} {\frac{{q_{n,{\Psi _n},{\Omega _n}}^{{\rm{rep}},{\rm{rmia}}\left(c\right)}\left( t \right)}}{{\alpha _n^{**\left( c \right)}\left( t \right)}}}&= \mathop {\lim }\limits_{t \to \infty } \sum\limits_{{\Psi _n}:{\Psi _n} \subseteq {\Omega _n}} {\frac{{\beta _n^{{\rm{rmia}}\left(c\right)}\left( t \right)}}{{\alpha _n^{**\left( c \right)}\left( t \right)}}} \frac{{q_{n,{\Psi _n},{\Omega _n}}^{{\rm{rep}},{\rm{rmia}}\left(c\right)}\left( t \right)}}{{\beta _n^{{\rm{rmia}}\left(c\right)}\left( t \right)}} \nonumber\\
&= \beta _n^{{\rm{rmia}}}\sum\limits_{{\Psi _n}:{\Psi _n} \subseteq {\Omega _n}} {q_{n,{\Psi _n},{\Omega _n}}^{{\rm{rep}},{\rm{rmia}}}},\  \rm{with\ prob.\ 1,}\ \Omega_n \neq \emptyset.
\label{eq_q_relation10}
\end{align}
Divide both sides of (\ref{eq_q_relation8}) by $\alpha_n^{**\left(c\right)}\left(t\right)$, and combine it with (\ref{eq_q_relation9}) and (\ref{eq_q_relation10}), it follows that
\begin{equation}
q_{n,{\Omega _n}}^{{\rm{rmia}}} = \sum\limits_{{\Psi _n}:{\Psi _n} \subseteq {\Omega _n}} {q_{n,{\Psi _n},{\Omega _n}}^{{\rm{rep}},{\rm{rmia}}}},\ \Omega_n \neq \emptyset.
\label{eq_q_relation11}
\end{equation}

With (\ref{eq_valid_forwarding_prob1}) and (\ref{eq_q_relation11}), the check is done by summing (\ref{eq_the_value_of_theta1}) over $k \in \Omega_n$:
\begin{align}
\sum\limits_{k:k \in {\Omega _n}} {\theta _{nk}^{1\left( c \right)}\left( {{\Omega _n}} \right)}  &= \sum\limits_{{\Psi _n}:{\Psi _n} \subseteq {\Omega _n},{\Psi _n} \ne \emptyset } {\frac{{q_{n,{\Psi _n},{\Omega _n}}^{{\rm{rep}},{\rm{rmia}}}}}{{q_{n,{\Omega _n}}^{{\rm{rmia}}}}}\sum\limits_{k:k \in {\Psi _n}} {\theta _{nk}^{**\left( c \right)}\left( {{\Psi _n}} \right)} } \nonumber\\
&\le \sum\limits_{{\Psi _n}:{\Psi _n} \subseteq {\Omega _n},{\Psi _n}\ne \emptyset} {\frac{{q_{n,{\Psi _n},{\Omega _n}}^{{\rm{rep}},{\rm{rmia}}}}}{{q_{n,{\Omega _n}}^{{\rm{rmia}}}}}} \nonumber\\
&= \frac{{\sum\limits_{{\Psi _n}:{\Psi _n} \subseteq {\Omega _n},{\Psi _n} \ne \emptyset } {q_{n,{\Psi _n},{\Omega _n}}^{{\rm{rep}},{\rm{rmia}}}} }}{{\sum\limits_{{\Psi _n}:{\Psi _n} \subseteq {\Omega _n},{\Psi _n} \ne \emptyset } {q_{n,{\Psi _n},{\Omega _n}}^{{\rm{rep}},{\rm{rmia}}}}  + q_{n,\emptyset ,{\Omega _n}}^{{\rm{rep}},{\rm{rmia}}}}} \le 1.
\label{eq_q_relation12}
\end{align}
According to the result shown as (\ref{eq_q_relation12}), the set of forwarding probability values under $Policy^1$ shown by (\ref{eq_the_value_of_theta1}) are valid forwarding probability values.

Thus, as long as $Policy^{**}$ with REP forms a flow rate matrix $\left(b_{nk}^{**\left(c\right)}\right)$ and supports $\left(\lambda_n^{\left(c\right)}\right) \in \Lambda_{\rm{REP}}$, there must exist a $Policy^{1}$ with RMIA that can form the same flow rate matrix $\left(b_{nk}^{**\left(c\right)}\right)$ and supports $\left(\lambda_n^{\left(c\right)}\right)$.

\subsection{There exists at least one exogenous input rate matrix which is within $\Lambda_{\rm{RMIA}}$ but outside of $\Lambda_{\rm{REP}}$}
\label{subsec_capacity_comparison_sub2}
The proof strategy of this subsection is to do some modifications on $Policy^1$ and develop a new stationary randomized policy with RMIA, denoted as $Policy^2$, such that it can support the input rate increase $\Delta_{n_0}^{\left(c_0\right)}$ additional to the $\left(n_0,c_0\right)$th dimension of the input rate matrix $\left(\lambda_n^{\left(c\right)}\right)\in \Lambda_{\rm{REP}}$ and forms a new input rate matrix $\left({\lambda'}_n^{\left(c\right)}\right)$ outside of $\Lambda_{\rm{REP}}$ satisfying
\begin{align}
\label{eq_enlarged_input_rate_matrix}
{\lambda '}_{{n_0}}^{\left( {{c_0}} \right)} = \lambda _{{n_0}}^{\left( {{c_0}} \right)} + \Delta_{n_0}^{\left(c_0\right)} ,{\rm{\ }}\Delta_{n_0}^{\left(c_0\right)}  &> 0,\ \lambda_{n_0}^{\left(c_0\right)}>0;\nonumber\\
{\lambda '}_n^{\left( c \right)} = \lambda _n^{\left( c \right)},{\rm{\ }}\left( {n,c} \right) &\ne \left( {{n_0},{c_0}} \right),
\end{align}
where $\left(n_0,c_0\right)$ is a chosen node-commodity pair, such that $\lambda_{n_0}^{\left(c_0\right)}>0$.

The proof is divided into two parts: deriving the flow rate increase and lower bounding the flow increase.

\subsubsection{\textbf{Deriving the flow rate increase}}
In this part of the proof, we exploit that if it is possible to further increase the forwarding probability $\theta_{nk}^{1\left(c\right)}\left(\Omega_n\right)$ under $Policy^1$ with RMIA, the flow rate over a link can further increase. We follow the intuition that the same flow rates as those under $Policy^{**}$ with REP can be achieved with RMIA if each node $n$ treats the set of decoding nodes $\Omega_n$ with RMIA as $\Psi_n$ if with REP, assuming the same transmission timeslots. According to this strategy, in the first decoding timeslots of commodity $c_0$ packets when ${\cal A}_{n,\emptyset,\Omega_n}^{\rm{rep,rmia}}$ happens, i.e., the first successful receiver set is $\Omega_n$ (nonempty) with RMIA while the successful receiver set is $\emptyset$ with REP, the transmitting node $n$ retains the packet, which might be a waste of opportunities. However, if node $n$ uses probability 1 to forward the decoded packet to node $k\in \Omega_n$ in those timeslots, the time average flow rate through link $\left(n,k\right)$ can increase. The average flow increase can be computed as follows:
\begin{align}
\Delta _{nk}^{\left( {{c_0}} \right)} &= \mathop {\lim }\limits_{t \to \infty } \sum\limits_{{\Omega _n}:k \in {\Omega _n}} {\frac{{q_{n,\emptyset ,{\Omega _n}}^{{\rm{rep}},{\rm{rmia}},\left(c_0\right)}\left( t \right) \cdot 1}}{t}}  \nonumber\\
&= \mathop {\lim }\limits_{t \to \infty } \sum\limits_{{\Omega _n}:k \in {\Omega _n}} {\frac{{\alpha _n^{**\left( c_0 \right)}\left( t \right)}}{t}\frac{{\beta _n^{{\rm{rmia}},\left(c_0\right)}\left( t \right)}}{{\alpha _n^{**\left( c_0 \right)}\left( t \right)}}\frac{{q_{n,\emptyset ,{\Omega _n}}^{{\rm{rep}},{\rm{rmia}},\left(c_0\right)}\left( t \right)}}{{\beta _n^{{\rm{rmia}},\left(c_0\right)}\left( t \right)}}} \nonumber\\
&= \alpha _n^{**\left( c_0 \right)}\beta _n^{{\rm{rmia}}}\sum\limits_{{\Omega _n}:k \in {\Omega _n}} {q_{n,\emptyset ,{\Omega _n}}^{{\rm{rep}},{\rm{rmia}}}}\ \rm{with\ prob.\ 1.}
\label{eq_flow_increase5}
\end{align}
Then based on (\ref{eq_in_thm2_average_flowing_rate1}) and (\ref{eq_flow_increase5}), the increased flow rate can be computed as
\begin{align}
b_{nk}^{**\left( c_0 \right)} + \Delta _{nk}^{\left( {{c_0}} \right)} & =\alpha _n^{**\left( c_0 \right)}\sum\limits_{{\Omega _n}:k \in {\Omega _n}} {\beta _n^{{\rm{rmia}}}q_{n,{\Omega _n}}^{{\rm{rmia}}}\left[ {\sum\limits_{{\Psi _n}:{\Psi _n} \subseteq {\Omega _n},k \in {\Psi _n}} {\frac{{q_{n,{\Psi _n},{\Omega _n}}^{{\rm{rep}},{\rm{rmia}}}}}{{q_{n,{\Omega _n}}^{{\rm{rmia}}}}}\theta _{nk}^{**\left( c_0 \right)}\left( {{\Psi _n}} \right)}  + \frac{{q_{n,\emptyset ,{\Omega _n}}^{{\rm{rep}},{\rm{rmia}}}}}{{q_{n,{\Omega _n}}^{{\rm{rmia}}}}}} \right]}\nonumber\\
&\buildrel \Delta \over = \alpha _n^{**\left( c_0 \right)}\sum\limits_{{\Omega _n}:k \in {\Omega _n}} {\beta _n^{{\rm{rmia}}}q_{n,{\Omega _n}}^{{\rm{rmia}}}\theta _{nk}^{1'\left( c_0 \right)}\left( {{\Omega _n}} \right)},
\end{align}
where $\theta _{nk}^{1'\left( c_0 \right)}\left( {{\Omega _n}} \right)$ is the equivalent overall forwarding probability of the new stationary randomized policy:
\begin{equation}
\theta _{nk}^{1'\left( c_0 \right)}\left( {{\Omega _n}} \right) = \sum\limits_{{\Psi _n}:{\Psi _n} \subseteq {\Omega _n},k \in {\Psi _n}} {\frac{{q_{n,{\Psi _n},{\Omega _n}}^{{\rm{rep}},{\rm{rmia}}}}}{{q_{n,{\Omega _n}}^{{\rm{rmia}}}}}\theta _{nk}^{**\left( c_0 \right)}\left( {{\Psi _n}} \right)}  + \frac{{q_{n,\emptyset ,{\Omega _n}}^{{\rm{rep}},{\rm{rmia}}}}}{{q_{n,{\Omega _n}}^{{\rm{rmia}}}}}= \theta _{nk}^{1\left( c_0 \right)}\left( {{\Omega _n}} \right) + \frac{{q_{n,\emptyset ,{\Omega _n}}^{{\rm{rep}},{\rm{rmia}}}}}{{q_{n,{\Omega _n}}^{{\rm{rmia}}}}}.
\label{eq_forwarding_prob_1prime1}
\end{equation}
Additionally, note that for each transmitting node, the flow increase shown by (\ref{eq_flow_increase5}) is for one chosen receiving node $k$, while the flow rates to other receiving nodes do not change, i.e.,
\begin{equation}
\theta _{nj}^{1'\left( c_0 \right)}\left( {{\Omega _n}} \right) = \theta _{nj}^{1\left( c_0 \right)}\left( {{\Omega _n}} \right) = \sum\limits_{{\Psi _n}:{\Psi _n} \subseteq {\Omega _n},j \in {\Psi _n}} {\frac{{q_{n,{\Psi _n},{\Omega _n}}^{{\rm{rep}},{\rm{rmia}}}}}{{q_{n,{\Omega _n}}^{{\rm{rmia}}}}}\theta _{nj}^{**\left( c_0 \right)}\left( {{\Psi _n}} \right)},\ j\in {\cal K}_n,\ j \neq k.
\label{eq_forwarding_prob_1prime2}
\end{equation}

With (\ref{eq_forwarding_prob_1prime1}) and (\ref{eq_forwarding_prob_1prime2}), now we start checking the validity of the new forwarding probabilities as follows:
\begin{align}
\sum\limits_{j:j \in {\Omega _n}} {\theta _{nj}^{1'\left( c_0 \right)}\left( {{\Omega _n}} \right)}  &= \theta _{nk}^{1'\left( c_0 \right)}\left( {{\Omega _n}} \right) + \sum\limits_{j:j \in {\Omega _n},j \ne k} {\theta _{nj}^{1'\left( c_0 \right)}\left( {{\Omega _n}} \right)}  \nonumber\\
&= \sum\limits_{{\Psi _n}:{\Psi _n} \subseteq {\Omega _n},{\Psi _n} \ne \emptyset } {\frac{{q_{n,{\Psi _n},{\Omega _n}}^{{\rm{rep}},{\rm{rmia}}}}}{{q_{n,{\Omega _n}}^{{\rm{rmia}}}}}\sum\limits_{j:j \in {\Psi _n}} {\theta _{nj}^{**\left( c_0 \right)}\left( {{\Psi _n}} \right)} }  + \frac{{q_{n,\emptyset ,{\Omega _n}}^{{\rm{rep}},{\rm{rmia}}}}}{{q_{n,{\Omega _n}}^{{\rm{rmia}}}}}\nonumber\\
&\le \sum\limits_{{\Psi _n}:{\Psi _n} \subseteq {\Omega _n},{\Psi _n} \ne \emptyset } {\frac{{q_{n,{\Psi _n},{\Omega _n}}^{{\rm{rep}},{\rm{rmia}}}}}{{q_{n,{\Omega _n}}^{{\rm{rmia}}}}}}  + \frac{{q_{n,\emptyset ,{\Omega _n}}^{{\rm{rep}},{\rm{rmia}}}}}{{q_{n,{\Omega _n}}^{{\rm{rmia}}}}}\nonumber\\
&= \frac{{\sum\limits_{{\Psi _n}:{\Psi _n} \subseteq {\Omega _n},{\Psi _n} \ne \emptyset } {q_{n,{\Psi _n},{\Omega _n}}^{{\rm{rep}},{\rm{rmia}}}}  + q_{n,\emptyset ,{\Omega _n}}^{{\rm{rep}},{\rm{rmia}}}}}{{\sum\limits_{{\Psi _n}:{\Psi _n} \subseteq {\Omega _n}} {q_{n,{\Psi _n},{\Omega _n}}^{{\rm{rep}},{\rm{rmia}}}} }} = 1.
\label{eq_forwarding_prob_1prime3}
\end{align}

With the flow increase on a single link, we further construct the increased input rate matrix $\left({\lambda'}_n^{\left(c\right)}\right)$. First let path $l$ be an arbitrary simple path leading from node $n_0$ to node $c_0$. According to (\ref{eq_flow_increase5}), consider the smallest possible flow rate increase among all the links along path $l$ for commodity $c_0$:
\begin{equation}
\Delta _{l,\min }^{\left( c_0 \right)} = \mathop {\min }\limits_{\left( {n,k} \right)} \left\{ {\Delta _{nk}^{\left( c_0 \right)},\left( {n,k} \right) \in l} \right\},
\label{eq_mini_increase_link}
\end{equation}
and denote the corresponding link as the \emph{minimum-flow-increase link} of path $l$. As is computed by (\ref{eq_flow_increase5}), each link $\left(n,k\right)$ along route $l$ can support a flow rate increase as much as $\Delta_{nk}^{\left(c_0\right)}$, which is no less than $\Delta_{l,\rm{min}}$. Therefore, each link $\left(n,k\right)$ along path $l$ can support a flow rate increase of $\Delta_{l,\rm{min}}$ just by assigning a new forwarding probability $\theta _{nk}^{2\left( c \right)}\left( {{\Omega _n}} \right)$ that is smaller than or equal to $\theta _{nk}^{1'\left( c \right)}\left( {{\Omega _n}} \right)$, i.e., there exists a value $\xi_{nk,\emptyset,\Omega_n}^{\left(c\right)}$, where $0<\xi_{nk,\emptyset,\Omega_n}^{\left(c_0\right)}\le 1$, such that
\begin{equation}
\Delta _{l,\min }^{\left( {{c_0}} \right)} = \Delta _{nk}^{\left( {{c_0}} \right)}\xi _{nk,\emptyset ,{\Omega _n}}^{\left( {{c_0}} \right)} = \alpha _n^{**\left( {{c_0}} \right)}\beta _n^{{\rm{rmia}}}\xi _{nk,\emptyset ,{\Omega _n}}^{\left( {{c_0}} \right)}\sum\limits_{{\Omega _n}:k \in {\Omega _n}} {q_{n,\emptyset ,{\Omega _n}}^{{\rm{rep}},{\rm{rmia}}}}.
\label{eq_policy2_flowing_increase}
\end{equation}

With (\ref{eq_policy2_flowing_increase}), we develop another policy $Policy^2$, which chooses commodity $c$ to transmit with probability $\alpha_n^{**\left(c\right)}$ in each timeslot and chooses the successful receiver $k$ as the forwarding node with probability $\theta_{nk}^{2\left(c\right)}\left(\Omega_n\right)$, which satisfies the following relations:
\begin{equation}
\theta _{nk}^{2\left( c_0 \right)}\left( {{\Omega _n}} \right) = \theta _{nk}^{1\left( c_0 \right)}\left( {{\Omega _n}} \right) + \frac{{q_{n,\emptyset ,{\Omega _n}}^{{\rm{rep}},{\rm{rmia}}}}}{{q_{n,{\Omega _n}}^{{\rm{rmia}}}}}\xi _{nk,\emptyset ,{\Omega _n}}^{\left( {{c_0}} \right)},\ 0<\xi _{nk,\emptyset ,{\Omega _n}}^{\left( {{c_0}} \right)}\le 1,{\ \rm{for}}\ \forall \left(n,k\right) \in l;
\end{equation}
\begin{equation}
\theta _{nk}^{2\left( c_0 \right)}\left( {{\Omega _n}} \right) = \theta _{nk}^{1\left( c_0 \right)}\left( {{\Omega _n}} \right), \ {\rm{for\ }}\forall \left(n,k\right) \notin l;
\end{equation}
\begin{equation}
\theta _{nk}^{2\left( c \right)}\left( {{\Omega _n}} \right) = \theta _{nk}^{1\left( c \right)}\left( {{\Omega _n}} \right), \ {\rm{for\ }} \forall c \neq c_0.
\end{equation}
Correspondingly, under $Policy^2$, the time average flow rate over each link can be expressed as follows:
\begin{equation}
b_{nk}^{2\left( c_0 \right)} = b_{nk}^{**\left( c_0 \right)} + \Delta _{l,{\rm{min}}}^{\left( c_0 \right)},{\rm{\ }}{\ \rm{for}}\ \forall \left(n,k\right) \in l;
\label{eq_flowing_rate_policy2_1}
\end{equation}
\begin{equation}
b_{nk}^{2\left( c_0 \right)} = b_{nk}^{**\left( c_0 \right)},\ {\rm{for\ }}\forall \left(n,k\right) \notin l;
\label{eq_flowing_rate_policy2_2}
\end{equation}
\begin{equation}
b_{nk}^{2\left( c \right)} = b_{nk}^{**\left( c \right)},\ {\rm{for\ }} \forall c \neq c_0.
\label{eq_flowing_rate_policy2_3}
\end{equation}
With (\ref{eq_flowing_rate_policy2_1})-(\ref{eq_flowing_rate_policy2_3}), the increased input rate matrix $\left({\lambda'}_n^{\left(c\right)}\right)$ can be supported under $Policy^2$. The explanation is: when the input rate matrix changes from $\left({\lambda}_n^{\left(c\right)}\right)$ to $\left({\lambda'}_n^{\left(c\right)}\right)$, where the $\left(n_0,c_0\right)$th entry increases by $\Delta_{l,\rm{min}}^{\left(c_0\right)}$ while other entries don't change, the adoption of $Policy^2$ can guarantee that the flow rate along the simple path $l$ for commodity $c_0$ can increase by $\Delta_{l,\rm{min}}^{\left(c_0\right)}$. Thus, $Policy^2$ with RMIA can stably support the exogenous input rate matrix $\left({\lambda'}_n^{\left(c\right)}\right)$.\\

\subsubsection{\textbf{Lower bounding the flow rate increase}}
The next step is to choose a proper simple path $l$ among all the simple paths leading from node $n_0$ to node $c_0$ for $Policy^2$, such that $\Delta_{l,\min}^{\left(c_0\right)}$ is lower bounded and $\left({\lambda'}_n^{\left(c\right)}\right)$ exceeds the boundary of $\Lambda_{\rm{REP}}$ in its $\left(n_0,c_0\right)$th dimension when $\left({\lambda}_n^{\left(c\right)}\right)$ sufficiently approaches the boundary.

Before starting the later proof, some definitions and explanations are necessary:
\begin{itemize}
\item Let $\cal L$ represent the set of the simple paths leading from node $n_0$ to node $c_0$.
\item Let $L$ represent the total number of paths in $\cal L$. Note that $L$ is a finite fixed integer because the network is stationary with finite number of nodes and links.
\item Define the set of flow paths from node $n_0$ to node $c_0$, which contain a common simple path but possibly different loops, as a \emph{path cluster}. After eliminating the loop part, each flowing path from node $n_0$ to node $c_0$ is reduced into a simple path. Here the path cluster is the set of paths which can be reduced into the same simple path. An important property for a path cluster is that each packet flowing through a path in one path cluster must flow through the corresponding common simple path, which is defined as the \emph{main trunk} of the path cluster.
\item For an arbitrary simple path $l$ leading from node $n_0$ to node $c_0$, define the link as the \emph{bottleneck link}, through which the time average flow rate of commodity $c_0$ under $Policy^{**}$ is the smallest among all the links of the path $l$, and denote the corresponding \emph{bottleneck flow rate} of commodity $c_0$ as $b_l^{**\left(c_0\right)}$. To be specific, for a simple path $l$ with $J$ hops, whose nodes' index sequence is $n_0,n_{l,1},\cdots,n_{l,J-1}, c_0$, the bottleneck flow rate is defined as
    \begin{equation}
    b_l^{**\left( {{c_0}} \right)} = \min \left\{ {b_{{n_0}{n_{l,1}}}^{**\left( {{c_0}} \right)},b_{{n_{l,1}}{n_{l,2}}}^{**\left( {{c_0}} \right)}, \cdots ,b_{{n_{l,J - 2}}{n_{l,J - 1}}}^{**\left( {{c_0}} \right)},b_{{n_{l,J - 1}}{c_0}}^{**\left( {{c_0}} \right)}} \right\}.
    \end{equation}
\end{itemize}

Define $Y_{n_0}^{**\left(c_0\right)}\left(t\right)$ as the number of packets delivered to destination $c_0$ with source node $n_0$ during the first $t$ timeslots under $Policy^{**}$; define $G_l^{**\left(c_0\right)}\left(t\right)$ as the number of units (copies of packets) passing through the bottleneck link of simple path $l$ during the first $t$ timeslots under $Policy^{**}$ with REP. Then we have the following relationship
\begin{equation}
Y_{{n_0}}^{**\left( {{c_0}} \right)}\left( t \right) \le \sum\limits_{l = 1}^L {G_l^{**\left( c_0 \right)}\left( t \right)}.
\label{eq_bottleneck_flow1}
\end{equation}
The inequality in (\ref{eq_bottleneck_flow1}) is due to the following reasons:
\begin{itemize}
\item The same packet may flow through the same bottleneck link more than once during the first $t$ timeslots, and therefore are counted multiple times in $G_l^{**\left(c_0\right)}\left(t\right)$. In contrast, $Y_{n_0}^{**\left(c_0\right)}\left(t\right)$ only counts each delivered packet once.
\item Some units with commodity $c_0$ but with different source nodes may pass through the bottleneck links, which are also counted by $G_l^{**\left(c_0\right)}\left(t\right)$.
\item Different simple paths possibly have a common bottleneck link. Thus, some duplicated terms of the bottleneck rate might be added on the right hand side of (\ref{eq_bottleneck_flow1}).
\end{itemize}

Under $Policy^{**}$ and with input rate matrix $\left(\lambda_{n}^{\left(c\right)}\right)$, the network is rate stable, and therefore we have
\begin{equation}
\lambda _{{n_0}}^{\left(c_0\right)} = \mathop {\lim }\limits_{t \to \infty } \frac{{Y_{{n_0}}^{**\left(c_0\right)}\left( t \right)}}{t} \le \mathop {\lim }\limits_{t \to \infty } \frac{{\sum\limits_{l = 1}^L {G_l^{**\left( c_0 \right)}\left( t \right)} }}{t} = \sum\limits_{l = 1}^L {b_l^{**\left( {{c_0}} \right)}}.
\label{eq_flow_rate_lower_bound1}
\end{equation}
Among the $L$ single trunk routes, if denoting the simple path with the largest bottleneck rate under $Policy^{**}$ as $l_{\rm{max}}$ and the corresponding bottleneck rate as
\begin{equation}
b_{{l_{\max }}}^{**\left( {{c_0}} \right)} = \mathop {\max }\limits_{l \in \cal L} \left\{ {b_l^{**\left( {{c_0}} \right)}} \right\},
\end{equation}
we can get the following inequality based on (\ref{eq_flow_rate_lower_bound1}):
\begin{equation}
b_{{l_{\max }}}^{**\left( {{c_0}} \right)} \ge \frac{1}{L}\sum\limits_{l = 1}^L {b_l^{**\left( {{c_0}} \right)}}  \ge \frac{{\lambda _{{n_0}}^{{\left(c_0\right)}}}}{L}.
\label{eq_flow_rate_lower_bound2}
\end{equation}

If we choose $l_{\rm{max}}$ as the simple path sustaining the corresponding flow rate increase $\Delta_{l_{\rm{max}},\rm{min}}^{\left(c_0\right)}$ under $Policy^2$, according to (\ref{eq_flow_increase5}) and (\ref{eq_mini_increase_link}) in Part 1), and letting $\left(\tilde n, \tilde k\right)$ represent the minimum-flow-increase link on the path $l_{\max}$, we have
\begin{align}
\Delta _{{l_{\max }},\min }^{\left( {{c_0}} \right)} &= \frac{{\sum\limits_{{\Omega _{\tilde n}}:\tilde k \in {\Omega _{\tilde n}}} {q_{\tilde n,\emptyset ,{\Omega _{\tilde n}}}^{{\rm{rep}},{\rm{rmia}}}} }}{{\sum\limits_{{\Omega _{\tilde n}}:\tilde k \in {\Omega _{\tilde n}}} {q_{\tilde n,{\Omega _{\tilde n}}}^{{\rm{rmia}}}} }}\alpha _{\tilde n}^{**\left( {{c_0}} \right)}\beta _{\tilde n}^{{\rm{rmia}}}\sum\limits_{{\Omega _{\tilde n}}:\tilde k \in {\Omega _{\tilde n}}} {q_{\tilde n,{\Omega _{\tilde n}}}^{{\rm{rmia}}}} \nonumber\\
&\ge \frac{{\sum\limits_{{\Omega _{\tilde n}}:\tilde k \in {\Omega _{\tilde n}}} {q_{\tilde n,\emptyset ,{\Omega _{\tilde n}}}^{{\rm{rep}},{\rm{rmia}}}} }}{{\sum\limits_{{\Omega _{\tilde n}}:\tilde k \in {\Omega _{\tilde n}}} {q_{\tilde n,{\Omega _{\tilde n}}}^{{\rm{rmia}}}} }}\alpha _{\tilde n}^{**\left( {{c_0}} \right)}\beta _{\tilde n}^{{\rm{rmia}}}\sum\limits_{{\Omega _{\tilde n}}:\tilde k \in {\Omega _{\tilde n}}} {q_{\tilde n,{\Omega _n}}^{{\rm{rmia}}}\theta _{\tilde n\tilde k}^{1\left( {{c_0}} \right)}\left( {{\Omega _{\tilde n}}} \right)}.
\label{eq_flow_increase6}
\end{align}
Note that the term $\alpha _{\tilde n}^{**\left( {{c_0}} \right)}\beta _{\tilde n}^{{\rm{rmia}}}\sum\limits_{{\Omega _{\tilde n}}:\tilde k \in {\Omega _{\tilde n}}} {q_{\tilde n,{\Omega _n}}^{{\rm{rmia}}}\theta _{\tilde n\tilde k}^{1\left( {{c_0}} \right)}\left( {{\Omega _{\tilde n}}} \right)}$ is the flow rate expression of link $\left(\tilde n,\tilde k\right)$ under $Policy^1$, which is the same as the flow rate of link $\left(\tilde n,\tilde k\right)$ under $Policy^{**}$ and is no less than the bottleneck flow rate of path $l_{\rm{max}}$. Then combined with (\ref{eq_flow_rate_lower_bound2}), it follows that
\begin{equation}
\Delta _{{l_{\max }},\min }^{\left( {{c_0}} \right)} \ge \frac{{\sum\limits_{{\Omega _{\tilde n}}:\tilde k \in {\Omega _{\tilde n}}} {q_{\tilde n,\emptyset ,{\Omega _{\tilde n}}}^{{\rm{rep}},{\rm{rmia}}}} }}{{\sum\limits_{{\Omega _{\tilde n}}:\tilde k \in {\Omega _{\tilde n}}} {q_{\tilde n,{\Omega _{\tilde n}}}^{{\rm{rmia}}}} }}b_{\tilde n\tilde k}^{**\left( {{c_0}} \right)} \ge \frac{{\sum\limits_{{\Omega _{\tilde n}}:\tilde k \in {\Omega _{\tilde n}}} {q_{\tilde n,\emptyset ,{\Omega _{\tilde n}}}^{{\rm{rep}},{\rm{rmia}}}} }}{{\sum\limits_{{\Omega _{\tilde n}}:\tilde k \in {\Omega _{\tilde n}}} {q_{\tilde n,{\Omega _{\tilde n}}}^{{\rm{rmia}}}} }}b_{{l_{\max }}}^{**\left( {{c_0}} \right)}\ge \frac{{\sum\limits_{{\Omega _{\tilde n}}:\tilde k \in {\Omega _{\tilde n}}} {q_{\tilde n,\emptyset ,{\Omega _{\tilde n}}}^{{\rm{rep}},{\rm{rmia}}}} }}{{\sum\limits_{{\Omega _{\tilde n}}:\tilde k \in {\Omega _{\tilde n}}} {q_{\tilde n,{\Omega _{\tilde n}}}^{{\rm{rmia}}}} }}\frac{{\lambda _{{n_0}}^{\left( {{c_0}} \right)}}}{L} \buildrel \Delta \over  = \eta _{{l_{\max }}}^{{\rm{rep}},{\rm{rmia}}}\lambda _{{n_0}}^{\left( {{c_0}} \right)},
\label{eq_flowing_increaes_lower_bound}
\end{equation}
where
\begin{equation}
\eta _{{l_{\max }}}^{{\rm{rep}},{\rm{rmia}}} = \frac{{\sum\limits_{{\Omega _{\tilde n}}:\tilde k \in {\Omega _{\tilde n}}} {q_{\tilde n,\emptyset ,{\Omega _{\tilde n}}}^{{\rm{rep}},{\rm{rmia}}}} }}{{L\sum\limits_{{\Omega _{\tilde n}}:\tilde k \in {\Omega _{\tilde n}}} {q_{\tilde n ,{\Omega _{\tilde n}}}^{{\rm{rmia}}}} }}.
\label{eq_topology_coff_eta}
\end{equation}
Here note that $\eta _{{l_{\max }}}^{{\rm{rep}},{\rm{rmia}}}$ must be positive, which can be checked with the help of (\ref{eq_cdf_information_perslot1}) in Lemma \ref{lemma: flowing_rate_property} as follows:
\begin{align}
\sum\limits_{{\Omega _n}:k \in {\Omega _n}} {q_{n,\emptyset ,{\Omega _n}}^{{\rm{rep,rmia}}}} &=\sum\limits_{{\Omega _n}:k \in {\Omega _n}} {\Pr \left( {{\cal A}_{n,\emptyset ,{\Omega _n}}^{{\rm{rep,rmia}}}} \right)} =\sum\limits_{m = 1}^\infty  {\sum\limits_{{\Omega _n}:k \in {\Omega _n}} {\Pr \left( {{T_n} = m,{\cal A}_{n,\emptyset ,{\Omega _n}}^{{\rm{rep,rmia}}}} \right)} } \nonumber\\
&= \sum\limits_{m = 1}^\infty  {\Pr \left( {\sum\limits_{i = 1}^{m - 1} {{R_{nk}}\left( i \right)}  < {H_0},{R_{nk}}\left( m \right) < {H_0},\sum\limits_{i = 1}^m {{R_{nk}}\left( i \right)}  \ge {H_0}} \right)}\nonumber\\
&\ \ \ \ \ \ \ \ \cdot\prod\limits_{j:j \ne k,j \in {{\cal K}_n}} {\Pr \left( {\sum\limits_{i = 1}^{m - 1} {{R_{nj}}\left( i \right)}  < {H_0},{R_{nj}}\left( m \right) < {H_0}} \right)} \nonumber\\
& = \sum\limits_{m = 1}^\infty  {\left[ {F_{{R_{nk}}}^{\left( {m - 1} \right)}\left( {{H_0}} \right){F_{{R_{nk}}}}\left( {{H_0}} \right) - F_{{R_{nk}}}^{\left( m \right)}\left( {{H_0}} \right)} \right]\prod\limits_{j:j \ne k,j \in {{\cal K}_n}} {F_{{R_{nj}}}^{\left( {m - 1} \right)}\left( {{H_0}} \right){F_{{R_{nj}}}}\left( {{H_0}} \right)} }>0.
\label{eq_sum_q_Psi_Omega1}
\end{align}
Then combining (\ref{eq_sum_q_Psi_Omega1}) with the result from (\ref{eq_q_omega_RMIA}) that $q_{n,{\Omega _n}}^{{\rm{rmia}}}>0$, it follows that for $\tilde n$, $\eta _{{l_{\max }}}^{{\rm{rep}},{\rm{rmia}}}>0$

Moreover, note that all the components of $\eta _{{l_{\max }}}^{{\rm{rep}},{\rm{rmia}}}$ ------ $q_{\tilde n, \emptyset, \Omega_{\tilde n}}^{\rm{rep, rmia}}$, $q_{\tilde n, \Omega_{\tilde n}}^{\rm{rmia}}$, and $L$------ depend on the network topology and $Policy^{**}$, under which the choice of $l_{\max}$ and the minimum-flow-increase link $\left(\tilde n, \tilde k\right)$ on the path $l_{\max}$ are fixed due to the stationary network topology. Thus $\eta _{{l_{\max }}}^{{\rm{rep}},{\rm{rmia}}}$ remains as a fixed positive constant when the input rate matrix $\left(\lambda_n^{\left(c\right)}\right)$ changes.

Then let $\left(\lambda_n^{\left(c\right)}\right)$ approach the boundary of the REP network capacity region $\Lambda_{\rm{REP}}$, which results in that $\lambda_{n_0}^{\left(c_0\right)}$ approaches to the $\left(n_0,c_0\right)$th entry of the boundary of $\Lambda_{\rm{REP}}$, denoted as $\Lambda_{n_0,\rm{REP}}^{\left(c_0\right)}$. To be specific, we can let $\lambda_{n_0}^{\left(c_0\right)}$ be so close to $\Lambda_{n_0,\rm{REP}}^{\left(c_0\right)}$ that $\frac{1}{{1 + \eta _{{l_{\max }}}^{{\rm{rep}},{\rm{rmia}}}}}\Lambda _{{n_0},{\rm{REP}}}^{\left( {{c_0}} \right)} < \lambda _{{n_0}}^{\left( {{c_0}} \right)} < \Lambda _{{n_0},{\rm{REP}}}^{\left( {{c_0}} \right)}$, then it follows based on (\ref{eq_flowing_increaes_lower_bound}) that
\begin{equation}
\lambda _{{n_0}}^{\left( {{c_0}} \right)} + \Delta _{{l_{\max }},\min }^{\left( {{c_0}} \right)} > \left( {1 + \eta _{{l_{\max }}}^{{\rm{rep}},{\rm{rmia}}}} \right)\frac{{\Lambda _{{n_0},{\rm{REP}}}^{\left( {{c_0}} \right)}}}{{1 + \eta _{{l_{\max }}}^{{\rm{rep}},{\rm{rmia}}}}} = \Lambda _{{n_0},{\rm{REP}}}^{\left( {{c_0}} \right)},
\label{eq_exceeds_boundary_Lambda}
\end{equation}
which demonstrates that $\left({\lambda'}_n^{\left(c\right)}\right)$ exceeds the boundary of $\Lambda_{\rm{REP}}$ in its $\left(n_0,c_0\right)$th dimension. Thus, there exists an input rate matrix that can be supported under RMIA but is outside of $\Lambda_{\rm{REP}}$.

The above analysis is derived in a conservative way: the increase from the original input rate matrix $\left(\lambda_n^{\left(c\right)}\right)$ in $\Lambda_{\rm{REP}}$ to the new one $\left({\lambda'}_n^{\left(c\right)}\right)$ in $\Lambda_{\rm{RMIA}}$ is only on a single entry $\left(n_0,c_0\right)$. Note that the exogenous input packet streams of commodity $c_0$ may enter the network through different nodes, and correspondingly, for commodity $c_0$, we can find multiple input rate matrices outside of $\Lambda_{\rm{REP}}$ that are increased from $\left(\lambda_n^{\left(c\right)}\right)$ at different single entries when switching from REP to RMIA. Moreover, it is even possible to simultaneously increase multiple entries of $\left(\lambda_n^{\left(c\right)}\right)$ with a common commodity component $c_0$, as long as the flow paths from different source nodes to node $c_0$ sustaining the flow increase are strongly disjoint (having no common node) except at node $c_0$. Furthermore, it is obvious that the positive supportable input rate increases can be achieved simultaneously for all the commodities without affecting each other, which forms further increased input rate matrices. Even more to follow, these increased matrices form a convex hull, in which additional input rate matrices can be found satisfying the condition of being outside of $\Lambda_{\rm{REP}}$ but in $\Lambda_{\rm{RMIA}}$.\\

Summarizing Subsection \ref{subsec_capacity_comparison_sub1} and Subsection \ref{subsec_capacity_comparison_sub2}, we conclude that any input rate matrix in $\Lambda_{\rm {REP}}$ is included by $\Lambda_{\rm {RMIA}}$, while there exists at least one input rate matrix $\left({\lambda'}_n^{\left(c\right)}\right)$ that is in $\Lambda_{\rm {RMIA}}$ but is outside of $\Lambda_{\rm {REP}}$. Therefore, $\Lambda_{\rm{REP}} \subset \Lambda_{\rm{RMIA}}$.

%*******************************************************************************************************
\section{Proof of Lemma \ref{lemma: characterize_metric_DIVBAR-RMIA}}
\label{appendix: single_epoch_opt_DIVBAR-RMIA}
Firstly, the expectation of the metric shown as (\ref{eq_single_epoch_metric_P}) given the backlog state ${\bf{\hat Q}}\left(u_{n,i}\right)$ can be upper bounded as follows:
\begin{align}
\mathbb{E}\left\{ {\left. {{Z_n}\left( {i,{\bf{\hat Q}}\left( {{u_{n,i}}} \right)} \right)} \right|{\bf{\hat{Q}}}\left( {{u_{n,i}}} \right)} \right\} &= \sum\limits_c {\mathbb{E}\left\{ {\left. {\sum\limits_{\tau  = {u_{n,i}}}^{{u_{n,i + 1}} - 1} {\sum\limits_{k \in {{\cal K}_n}} {b_{nk}^{\left( c \right)}\left( \tau  \right)\left[ {\hat Q_n^{\left( c \right)}\left( {{u_{n,i}}} \right) - \hat Q_k^{\left( c \right)}\left( {{u_{n,i}}} \right)} \right]} } } \right|{\bf{\hat Q}}\left( {{u_{n,i}}} \right)} \right\}}\nonumber\\
&\buildrel (a) \over \le \sum\limits_c {\mathbb{E}\left\{ {\left. {\sum\limits_{\tau  = {u_{n,i}}}^{{u_{n,i + 1}} - 1} {\sum\limits_{k \in {{\cal K}_n}} {b_{nk}^{\left( c \right)}\left( \tau  \right)\hat W_{nk}^{\left( c \right)}\left( {{u_{n,i}}} \right)} } } \right|{\bf{\hat Q}}\left( {{u_{n,i}}} \right)} \right\}}\nonumber\\
&\buildrel (b) \over = \sum\limits_c {\mathbb{E}\left\{ {\left. {\sum\limits_{k \in {{\cal K}_n}} {b_{nk}^{\left( c \right)}\left( {{u_{n,i + 1}} - 1} \right)\hat W_{nk}^{\left( c \right)}\left( {{u_{n,i}}} \right)} } \right|{\bf{\hat Q}}\left( {{u_{n,i}}} \right)} \right\}},
\label{eq_key_metric_over_single_epoch_RMIA_main1}
\end{align}
In (\ref{eq_key_metric_over_single_epoch_RMIA_main1}), the upper bound (a) is achieved in the case: $b_{nk}^{\left(c\right)}\left(\tau\right)=0$ when $W_{nk}^{\left(c\right)}\left(u_{n,i}\right)=0$, i.e., node $n$ never forwards a packet of commodity $c$ to node $k \in {\cal K}_n$ if node $k$ has non-positive differential backlog (zero differential backlog coefficient) of commodity $c$, which is consistent with the description in step \ref{step: forwarding_packet}) of the algorithm summary of DIVBAR-RMIA. The equality (b) in (\ref{eq_key_metric_over_single_epoch_RMIA_main1}) holds true because of the fact that, for any policy in set $\cal P$, $b_{nk}^{\left(c\right)}\left(\tau\right)=0$ when $\tau$ is not the first decoding timeslot of epoch $i$ for node $n$, i.e., $b_{nk}^{\left(c\right)}\left(\tau\right)=0$ when $u_{n,i}\le \tau<u_{n,i+1}-1$.

Additionally, we need to define several variables. Let $\mu_n\left(i\right)$ represent the variable that takes value 1 if node $n$ decides to transmit a commodity packet in its $i$th epoch, and takes value 0 if node $n$ decides to transmit a null packet; let $\mu_n^{\left(c\right)}\left(i\right)$ represent the variable that takes value 1 if node $n$ decides to transmit a packet of commodity $c$ in its $i$th epoch, and takes value 0 otherwise. Then we have
\begin{equation}
\sum\limits_c {\mu _n^{\left( c \right)}\left( i \right)}  = {\mu _n}\left( i \right) \le 1.
\label{eq_transmitting_decision}
\end{equation}
Define $X_{nk}^{{\cal P},\rm{RMIA}}\left(i\right)$ as the random variable that takes value 1 if node $k \in {\cal K}_n$ firstly decodes the packet transmitted by node $n$ under a policy within $\cal P$ and ends epoch $i$, and takes value $0$ otherwise. Here we use superscripts $\cal P$ and $\rm{RMIA}$ on $X_{nk}^{{\cal P},\rm{RMIA}}\left(i\right)$ to indicate that this variable is the same under all the policies with RMIA in $\cal P$, because all the policies in $\cal P$ have coherent epochs. The randomness of $X_{nk}^{{\cal P},\rm{RMIA}}\left(i\right)$ lies in the channel realizations of all the outgoing links of node $n$ in epoch $i$. Note that $X_{nk}^{{{\cal P},\rm{RMIA}}}\left( i \right)\mu _n^{\left( c \right)}\left( i \right)\in \left\{0,1\right\}$. Considering the fact that in the ending timeslot (first decoding timeslot) of epoch $i$, $b_{nk}^{\left(c\right)}\left(u_{n,i+1}-1\right)$ can be 1 only if $X_{nk}^{{\cal P},\rm{RMIA}}\left( i \right)\mu _n^{\left( c \right)}\left( i \right)=1$, we have the following relation:
\begin{equation}
b_{nk}^{\left( c \right)}\left( u_{n,i+1}-1  \right) = b_{nk}^{\left( c \right)}\left( u_{n,i+1}-1  \right)X_{nk}^{{\cal P},\rm{RMIA}}\left( i  \right)\mu _n^{\left( c \right)}\left( i \right).
\end{equation}

Then it follows from (\ref{eq_key_metric_over_single_epoch_RMIA_main1}) that
\begin{align}
&\ \ \ \ \mathbb{E}\left\{ {\left. {{Z_n}\left( {i,{\bf{\hat Q}}\left( {{u_{n,i}}} \right)} \right)} \right|{\bf{\hat Q}}\left( {{u_{n,i}}} \right)} \right\} \nonumber\\
&\le \sum\limits_c {\mathbb{E}\left\{ {\left. {\mu _n^{\left( c \right)}\left( i \right)\sum\limits_{k \in {{\cal K}_n}} {b_{nk}^{\left( c \right)}\left( {{u_{n,i + 1}} - 1} \right)X_{nk}^{{\cal P},{\rm{RMIA}}}\left( i \right)\hat W_{nk}^{\left( c \right)}\left( {{u_{n,i}}} \right)} } \right|{\bf{\hat Q}}\left( {{u_{n,i}}} \right)} \right\}}\nonumber\\
&= \sum\limits_c {\mathbb{E}\left\{ {\left. {\sum\limits_{k \in {{\cal K}_n}} {b_{nk}^{\left( c \right)}\left( {{u_{n,i + 1}} - 1} \right)X_{nk}^{{\cal P},{\rm{RMIA}}}\left( i \right)\hat W_{nk}^{\left( c \right)}\left( {{u_{n,i}}} \right)} } \right|{\bf{\hat Q}}\left( {{u_{n,i}}} \right),\mu _n^{\left( c \right)}\left( i \right) = 1} \right\}\mathbb{E}{\left\{ {\left. {\mu _n^{\left( c \right)}\left( i \right)} \right|{\bf{\hat Q}}\left( {{u_{n,i}}} \right)} \right\}}}\nonumber\\
&\buildrel (a) \over \le \sum\limits_c {\mathbb{E}\left\{ {\left. {\mathop {\max }\limits_{k \in {{\cal K}_n}} \left\{ {X_{nk}^{{{\cal P},\rm{RMIA}}}\left( i \right)\hat W_{nk}^{\left( c \right)}\left( {{u_{n,i}}} \right)} \right\}} \right|{\bf{\hat Q}}\left( {{u_{n,i}}} \right),\mu _n^{\left( c \right)}\left( i \right) = 1} \right\}\mathbb{E}{\left\{ {\left. {\mu _n^{\left( c \right)}\left( i \right)} \right|{\bf{\hat Q}}\left( {{u_{n,i}}} \right)} \right\}}}.\ \ \ \ \ \ \ \ \
\label{eq_key_metric_over_single_epoch_RMIA_main2}
\end{align}
The inequality (a) in (\ref{eq_key_metric_over_single_epoch_RMIA_main2}) holds true because $b_{nk}^{\left(c\right)}\left(u_{n,i+1}-1\right)\ge 0$ and $\sum\nolimits_{k \in {{\cal K}_n}} {b_{nk}^{\left( c \right)}\left( {{u_{n,i + 1}} - 1} \right) \le 1}$; the inequality (a) becomes an equality in the case: $b_{nk}^{\left(c\right)}\left(u_{n,i+1}-1\right)=1$ only if node $k$ has the largest positive term $X_{nk}^{{\cal P},\rm{RMIA}}\left(i\right)W_{nk}^{\left(c\right)}\left(i\right)$, i.e., node $n$ forwards a packet to node $k$ only if node $k$ is the successful receiver with the largest differential backlog of commodity $c$, which is the same as step \ref{step: forwarding_packet}) of the algorithm summary of DIVBAR-RMIA.

In order to further compute the metric in the upper bound of (\ref{eq_key_metric_over_single_epoch_RMIA_main2}), firstly let $\hat 1_{nk,i}^{\left(c\right)}$ represent the indicator function that takes value 1 if and only if $X_{nk}^{{\cal P},\rm{RMIA}}=1$, while $X_{nj}^{{\cal P},\rm{RMIA}}=0$ for all $j \in \hat {\cal R}_{nk}^{{\rm{high}},\left(c\right)}\left(u_{n,i}\right)$, where $\hat {\cal R}_{nk}^{{\rm{high}},\left(c\right)}\left(u_{n,i}\right)$ is determined by the backlog state ${\bf{\hat Q}}\left(u_{n,i}\right)$ and is defined in step \ref{step: define_hiararchy_set_RMIA_main}) of the algorithm summary of DIVBAR-RMIA. In other words, $\hat 1_{nk,i}^{\left(c\right)}$ indicates the event that, when node $n$ is transmitting, node $k$ has the largest differential backlog coefficient $\hat W_{nk}^{\left(c\right)}\left(u_{n,i}\right)$ among the receivers in the first successful receiver set. Then the largest differential backlog coefficient of the first successful receiver set can be expressed as follows:
\begin{equation}
\mathop {\max }\limits_{k \in {{\cal K}_n}} \left\{ {X_{nk}^{{\cal P},{\rm{RMIA}}}\left( i\right)\hat W_{nk}^{\left( c \right)}\left( {{u_{n,i}}} \right)} \right\} = \sum\limits_{k \in {{\cal K}_n}} {\hat W_{nk}^{\left( c \right)}\left( {{u_{n,i}}} \right)\hat 1_{nk,i}^{\left( c \right)}},
\end{equation}
and its conditional expectation given the backlog state observation ${\bf{\hat{Q}}}\left(u_{n,i}\right)$ and that a packet of a commodity $c$ is transmitted by node $n$ in epoch $i$ can be computed as follows:
\begin{equation}
\mathbb{E}\left\{ {\left. {\mathop {\max }\limits_{k \in {{\cal K}_n}} \left\{ {X_{nk}^{{\cal P},{\rm{RMIA}}}\left( i \right)\hat W_{nk}^{\left( c \right)}\left( {{u_{n,i}}} \right)} \right\}} \right|{\bf{\hat Q}}\left( {{u_{n,i}}} \right),\mu _n^{\left( c \right)}\left( i \right) = 1} \right\} = \sum\limits_{k \in {{\cal K}_n}} {\hat W_{nk}^{\left( c \right)}\left( {{u_{n,i}}} \right)\hat \varphi _{nk}^{\left( c \right)}\left( i \right)},
\label{eq_metric_single_epoch_choose_commodity_2}
\end{equation}
where $\hat \varphi _{nk}^{\left( c \right)}\left( i \right)$ is defined in (\ref{eq_backpressure_metric_main}) in step \ref{step: compute_metric_DIVBAR-RMIA_main}) of the algorithm description of DIVBAR-RMIA and represents the probability that $\hat 1_{nk,i}^{\left(c\right)}=1$, given the backlog state observation ${\bf{\hat{Q}}}\left(u_{n,i}\right)$ and the decision that a packet of commodity $c$ is transmitted by node $n$ in epoch $i$. This computation result is the same as the backpressure metric expression for each commodity $c$ over a single epoch under DIVBAR-RMIA shown as (\ref{eq_backpressure_metric_main}) in step \ref{step: compute_metric_DIVBAR-RMIA_main}) of its algorithm summary.

Then letting $\hat c_n\left(i\right)$ represent the commodity that maximizes the metric in (\ref{eq_metric_single_epoch_choose_commodity_2}), and denoting the corresponding metric value as $\hat \Xi_n\left(i\right)$, we plug (\ref{eq_metric_single_epoch_choose_commodity_2}) into (\ref{eq_key_metric_over_single_epoch_RMIA_main2}) to get the following upper bound shown as (\ref{eq_metric_single_epoch_comparison_DIVBAR-RMIA_main}):
\begin{align}
\mathbb{E}\left\{ {\left. {{Z_n}\left( {i,{\bf{\hat Q}}\left( {{u_{n,i}}} \right)} \right)} \right|{\bf{\hat Q}}\left( {{u_{n,i}}} \right)} \right\} &\buildrel (a) \over \le \sum\limits_{k \in {{\cal K}_n}} {\hat W_{nk}^{\left( {{{\hat c}_n}\left( i \right)} \right)}\left( {{u_{n,i}}} \right)\hat \varphi _{nk}^{\left( {{{\hat c}_n}\left( i \right)} \right)}\left( i \right)} \sum\limits_{c } {\mathbb{E}\left\{ {\left. {\mu _n^{\left( c \right)}\left( i \right)} \right|{\bf{\hat Q}}\left( {{u_{n,i}}} \right)} \right\}}\nonumber\\
&= {{\hat \Xi }_n}\left( i \right)\mathbb{E}\left\{ {\left. {{\mu _n}\left( i \right)} \right|{\bf{\hat Q}}\left( {{u_{n,i}}} \right)} \right\}\nonumber\\
&\buildrel (b) \over \le {{\hat \Xi }_n}\left( i \right).
\label{eq_metric_single_epoch_choose_commodity_3}
\end{align}
The inequality (a) in (\ref{eq_metric_single_epoch_choose_commodity_3}) becomes an equality in the case that $\mu_n^{\left(\hat c_n\right)}\left(i\right)=1$ if and only if $\mu_n\left(i\right)=1$, i.e., node $n$ only transmit the packet whose commodity maximizes the metric of (\ref{eq_metric_single_epoch_choose_commodity_2}) if it decides to transmit a commodity packet. The upper bound condition of (b) in (\ref{eq_metric_single_epoch_choose_commodity_3}) can be achieved by setting $\mu_n\left(i\right)=1$ if $\hat \Xi_n\left(i\right)>0$ and setting $\mu_n\left(i\right)=0$ if $\hat \Xi_n\left(i\right)=0$. These two cases that achieve the upper bounds (a) and (b) can be realized by implementing step \ref{step: choose_commodity_to_transmit_RMIA_main}) in the algorithm description of DIVBAR-RMIA.

In summary, from the upper bound achieving conditions of (\ref{eq_key_metric_over_single_epoch_RMIA_main1}), (\ref{eq_key_metric_over_single_epoch_RMIA_main2}), and (\ref{eq_metric_single_epoch_choose_commodity_3}), we can see that the backpressure metric $\mathbb{E}\left\{ {\left. {{Z_n}\left( {i,{\bf{\hat Q}}\left( {{u_{n,i}}} \right)} \right)} \right|{\bf{\hat Q}}\left( {{u_{n,i}}} \right)} \right\}$ over a single epoch under a policy within $\cal P$ can be maximized if the policy is chosen as DIVBAR-RMIA.

%**********************************************************************************
\section{Proof of Theorem \ref{thm: DIVBAR_MIA_throughput_optimal}}
\label{appendix: DIVBAR-RMIA_optimallity}
Denote $\hat{Policy}$ as the DIVBAR-RMIA policy and any variable uniquely specified by $\hat{Policy}$ is denoted in the form $\hat{x}$. To achieve strong stability, Ref. \cite{Neely_Rahul_DIVBAR_2009} analyzes the one-timeslot Lyapunov drift on both DIVBAR policy and stationary randomized policy with REP. Then the strong stability can be shown through the comparison between the upper bounds of Lyapunov drifts under the two policies. The Lyapunov drift analysis with RMIA assumption has some similarity with that in the REP case. However, as explained in Section \ref{sec: network_capacity_region_RMIA}, with RMIA, $d$-timeslot average Lyapunov drift needs to be analyzed to compare the two policies, where $d$ is sufficiently large. Additionally, the stationary randomized policy $Policy^*$ that can support all input rate matrices within $\Lambda_{\rm{RMIA}}$ is not directly used in our proof of strong stability with RMIA, instead, the comparison of the upper bound of the $d$-slot Lyapunov drift is between $\hat{Policy}$ and a modified version of the stationary randomized policy: ${Policy'}^*$, which satisfies the following properties: \emph{it is the same as DIVBAR-RMIA in the interval from timeslot $0$ to timeslot $t_0-1$; starting from timeslot $t_0$ without using the pre-accumulated partial information, it is the same as $Policy^*$ starting from timeslot $0$.}

Moreover, in order to facilitate the comparison between $\hat{Policy}$ and ${Policy'}^*$, an intermediate policy $\tilde{Policy}$ is introduced, which is the intermediate policy with the following properties: \emph{it is the same as DIVBAR-RMIA in the interval from timeslot $0$ to timeslot $t_0-1$; starting from timeslot $t_0$ without using the pre-accumulated partial information, each node chooses the commodity to transmit according to the maximization of a backpressure metric, and keeps transmitting the packets of the chosen commodity in later timeslots with RMIA, and forwards each decoded packet to the successful receiver with the largest positive differential backlog observed in timeslot $t_0$.} The epochs for each node $n$ under $\tilde{Policy}$ are shown by Fig. \ref{fig_epoch relations among several policies} in Appendix \ref{appendix: policy_list}.

We use four steps to compare the upper bounds of Lyapunov drift under DIVBAR-RMIA and under the stationary randomized policy over a $D$ timeslots interval.
\begin{enumerate}[1)]
\item Transform the comparison on the upper bounds of Lyapunov drift under $\hat{Policy}$ and ${Policy'}^*$ to the comparison on the key backpressure metrics under the two policies, which is shown in Subsection \ref{subsec_througput_optimal_sub1}.
\item Compare the key metrics under $\hat{Policy}$ and $\tilde{Policy}$, which is shown in Subsection \ref{subsec_througput_optimal_sub2}.
\item Compare the key metrics under $\tilde{Policy}$ and ${Policy'}^*$, which is shown in Subsection \ref{subsec_througput_optimal_sub3}.
\item Combine the results in 2) and 3) and get the conclusion of strong stability, which is shown in Subsection \ref{subsec_througput_optimal_sub4}.
\end{enumerate}

\subsection{Transforming the comparison on the upper bound of Lyapunov drift to the comparison on the key backpressure metric}
\label{subsec_througput_optimal_sub1}
Similar to the derivations from (\ref{eq_new_D_step_queuing_dynamics})-(\ref{eq_corrl_lyapunov_drift1}) in Section \ref{sec: network_capacity_region_RMIA}, the $t$-timeslot Lyapunov drift under $\hat{Policy}$ is
\begin{align}
&\ \ \ \ \frac{1}{t}\sum\limits_{n,c} \mathbb{E} \left\{ {\left. {{{\left( {\hat Q_n^{\left( c \right)}\left( {{t_0} + t} \right)} \right)}^2} - {{\left( {\hat Q_n^{\left( c \right)}\left( {{t_0}} \right)} \right)}^2}} \right|{\bf{\hat Q}}\left( {{t_0}} \right)} \right\}\nonumber\\
& = B\left(t\right) + \frac{2}{t}\sum\limits_{n,c} {\hat Q_n^{\left( c \right)}\left( {{t_0}} \right)\sum\limits_{\tau  = {t_0}}^{{t_0} + t - 1} {\mathbb{E}\left\{ {a_n^{\left( c \right)}\left( \tau  \right)} \right\}} } -\frac{2}{t}\sum\limits_{n,c} {\hat Q_n^{\left( c \right)}\left( {{t_0}} \right)\mathbb{E}\left\{ {\left. {\sum\limits_{\tau  = {t_0}}^{{t_0} + t - 1} {\left[ {\sum\limits_{k \in {{\cal K}_n}} {\hat b_{nk}^{\left( c \right)}\left( \tau  \right)}  - \sum\limits_{k \in {{\cal K}_n}} {\hat b_{kn}^{\left( c \right)}\left( \tau  \right)} } \right]} } \right|{\bf{\hat Q}}\left( {{t_0}} \right)} \right\}},
\label{eq_DIVBAR_RMIA_Lyapunov_drift}
\end{align}
where $B\left(t\right)=N^2t\left[1+\left(N+A_{\rm{max}}\right)^2\right]$ as is defined by (\ref{eq_B_value}) in the proof of Corollary \ref{cap_region_corr}. Since the following equation holds true for networks:
\begin{equation}
\sum\limits_{n,c} {\hat Q_n^{\left( c \right)}\left( {{t_0}} \right)\sum\limits_{\tau  = {t_0}}^{{t_0} + t - 1} {\left[ {\sum\limits_{k \in {{\cal K}_n}} {\hat b_{nk}^{\left( c \right)}\left( \tau  \right)}  - \sum\limits_{k \in {{\cal K}_n}} {\hat b_{kn}^{\left( c \right)}\left( \tau  \right)} } \right]} }  = \sum\limits_{n,c} {\sum\limits_{\tau  = {t_0}}^{{t_0} + t - 1} {\sum\limits_{k \in {{\cal K}_n}} {\hat b_{nk}^{\left( c \right)}\left( \tau  \right)\left[ {\hat Q_n^{\left( c \right)}\left( {{t_0}} \right) - \hat Q_k^{\left( c \right)}\left( {{t_0}} \right)} \right]} } },
\end{equation}
then (\ref{eq_DIVBAR_RMIA_Lyapunov_drift}) becomes
\begin{align}
&\ \ \ \ \frac{1}{t}\sum\limits_{n,c} \mathbb{E} \left\{ {\left. {{{\left( {\hat Q_n^{\left( c \right)}\left( {{t_0} + t} \right)} \right)}^2} - {{\left( {\hat Q_n^{\left( c \right)}\left( {{t_0}} \right)} \right)}^2}} \right|{\bf{\hat Q}}\left( {{t_0}} \right)} \right\}\nonumber\\
& \le B\left(t\right) + \frac{2}{t}\sum\limits_{n,c} {\hat Q_n^{\left( c \right)}\left( {{t_0}} \right)\sum\limits_{\tau  = {t_0}}^{{t_0} + t - 1} {\mathbb{E}\left\{ {a_n^{\left( c \right)}\left( \tau  \right)} \right\}} } - \frac{2}{t}\sum\limits_{n,c} {\mathbb{E}\left\{ {\left. {\sum\limits_{\tau  = {t_0}}^{{t_0} + t - 1} {\sum\limits_{k \in {{\cal K}_n}} {\hat b_{nk}^{\left( c \right)}\left( \tau  \right)\left[ {\hat Q_n^{\left( c \right)}\left( {{t_0}} \right) - \hat Q_k^{\left( c \right)}\left( {{t_0}} \right)} \right]} } } \right|{\bf{\hat Q}}\left( {{t_0}} \right)} \right\}}\nonumber\\
&\buildrel \Delta \over = B\left(t\right) + \frac{2}{t}\sum\limits_{n,c} {\hat Q_n^{\left( c \right)}\left( {{t_0}} \right)\sum\limits_{\tau  = {t_0}}^{{t_0} + t - 1} {\mathbb{E}\left\{ {a_n^{\left( c \right)}\left( \tau  \right)} \right\}} } - 2\sum\limits_n {\mathbb{E}\left\{ {\left. {\left. {{{\hat Z}_n}\left( {{\bf{\hat Q}}\left( {{t_0}} \right)} \right)} \right|_{{t_0}}^{{t_0} + t - 1}} \right|{\bf{\hat Q}}\left( {{t_0}} \right)} \right\}},
\label{eq_DIVBAR_RMIA_Lyapunov_drift1}
\end{align}
where, under an arbitrary policy, the summation metric $\left. {{{ Z}_n}\left( {{\bf{ \hat Q}}\left( {{t_0}} \right)} \right)} \right|_{{t_0}}^{{t_0} + t - 1}$ represents the following expression:
\begin{equation}
\left. {{Z_n}\left( {{\bf{ \hat Q}}\left( {{t_0}} \right)} \right)} \right|_{{t_0}}^{{t_0} + t - 1} = \frac{1}{t}\sum\limits_{\tau  = {t_0}}^{{t_0} + t - 1} {\sum\limits_c {\sum\limits_{k \in {{\cal K}_n}} {b_{nk}^{\left( c \right)}\left( \tau  \right)\left[ { \hat Q_n^{\left( c \right)}\left( {{t_0}} \right) -  \hat Q_k^{\left( c \right)}\left( {{t_0}} \right)} \right]} } }.
\label{eq_general_backpressure_metric}
\end{equation}

In order to facilitate the comparison of the values of the key metric $\sum\limits_n {\mathbb{E}\left\{ {\left. {\left. {{Z_n}\left( {{\bf{\hat Q}}\left( {{t_0}} \right)} \right)} \right|_{{t_0}}^{{t_0} + t - 1}} \right|{\bf{\hat Q}}\left( {{t_0}} \right)} \right\}}$ respectively under $\hat{Policy}$ and ${Policy'}^*$, we introduce the intermediate policy: $\tilde{Policy}$, as is described just before this subsection and in Appendix \ref{appendix: policy_list}. $\tilde{Policy}$ can be shown to maximize the key metric $\sum\limits_n {\mathbb{E}\left\{ {\left. {\left. {{Z_n}\left( {{\bf{\hat Q}}\left( {{t_0}} \right)} \right)} \right|_{{t_0}}^{{t_0} + t - 1}} \right|{\bf{\hat Q}}\left( {{t_0}} \right)} \right\}}$ and serves as a "bridge" connecting $\hat{Policy}$ and ${Policy'}^*$. Therefore, the proof proceeds into the following two steps respectively shown as Subsection \ref{subsec_througput_optimal_sub2} and Subsection \ref{subsec_througput_optimal_sub3}.

\subsection{Comparison on the key backpressure metric between $\tilde{Policy}$ and ${Policy'}^*$}
\label{subsec_througput_optimal_sub2}
In this part of proof, both policies $\tilde{Policy}$ and ${Policy'}^*$ are analyzed on the interval starting from timeslot $t_0$ to timeslot $t_0+t-1$ (denoted as $\left[t_0,t_0+t-1\right]$) because they are the same before timeslot $t_0$. As we know, in the implementation of DIVBAR-RMIA, the partial information accumulated at the receivers by the beginning of timeslot $t_0$ may not be zero, but this issue will be dealt with in Subsection \ref{subsec_througput_optimal_sub3}.

The proof in this section consists of two parts: first to show that $\tilde{Policy}$ has the best forwarding strategy, according to which the optimal forwarding node are confirmed; second to compare the transmitting strategies of $\tilde{Policy}$ and $Policy^*$, which concerns the strategy of choosing commodities to transmit.

\subsubsection{\textbf{Confirming the best forwarding strategy}}
Define $\alpha _n^{\left( c \right)}\left( {{t_0},t} \right)$ as the number of timeslots used to transmit commodity $c$ packets by node $n$ within the interval from timeslot $t_0$ to $t_0+t-1$; let $\left\{\tau_j^{\left(c\right)}\right\}$ represent the subsequence of the timeslots when node $n$ is transmitting commodity $c$ packets. Then based on (\ref{eq_general_backpressure_metric}), it follows that
\begin{align}
&\ \ \ \ \sum\limits_n {\mathbb{E}\left\{ {\left. {\left. {{{ Z}_n}\left( {{\bf{\hat Q}}\left( {{t_0}} \right)} \right)} \right|_{{t_0}}^{{t_0} + t - 1}} \right|{\bf{\hat Q}}\left( {{t_0}} \right)} \right\}}\nonumber\\
&= \sum\limits_n {\sum\limits_c {\mathbb{E}\left\{ {\left. {\frac{{\alpha _n^{\left( c \right)}\left( {{t_0},t} \right)}}{t}\frac{1}{{\alpha _n^{\left( c \right)}\left( {{t_0},t} \right)}}\sum\limits_{j = 0}^{\alpha _n^{\left( c \right)}\left( {{t_0},t} \right) - 1} {\sum\limits_{k \in {{\cal K}_n}} {b_{nk}^{\left( c \right)}\left( {\tau _j^{\left( c \right)}} \right)\left[ {\hat Q_n^{\left( c \right)}\left( {{t_0}} \right) - \hat Q_k^{\left( c \right)}\left( {{t_0}} \right)} \right]} } } \right|{\bf{\hat Q}}\left( {{t_0}} \right)} \right\}} }.
\label{eq_backpressure_inequality2}
\end{align}

Additionally, based on the definition of the epoch shown as Fig. \ref{fig_slots_allocation} in Section \ref{sec: network_model}, define the number of timeslots in the $i$th epoch of commodity $c$ being counted from timeslot $t_0$ as $T_n^{\left(c\right)}\left(i\right)$. For transmitting node $n$ and with the assumption of not using the pre-accumulated partial information by timeslot $t_0$, define $\overline M_n^{\left(c\right)}\left(t_0,t\right)$ as the number of epochs of commodity $c$ for node $n$ that are entirely located within the interval $\left[t_0,t_0+t-1\right]$, i.e.,
\begin{equation}
\overline M_n^{\left( c \right)}\left( {{t_0},t} \right) = \max \left\{ {m:\sum\limits_{i = 1}^m {{T_n^{\left(c\right)}}\left( i \right)}  \le \alpha _n^{\left( c \right)}\left( {{t_0},t} \right)} \right\}.
\label{eq_definition_M}
\end{equation}

In the timeslots of transmitting commodity $c$ packets, each epoch for node $n$ consists of timeslots contiguous in the subsequence $\left\{\tau_j^{\left(c\right)}\right\}$. Based on this, define $X_{nk}^{{\rm{RMIA}}, \left(c\right)}\left(i\right)$ as the random variable that takes value $1$ if the receiving node $k \in {\cal K}_n$ decodes the packet at the end of the $i$th epoch of commodity $c$ under a policy with RMIA; and takes value $0$ otherwise. Then with the similar derivation strategy as in the proof of Lemma \ref{lemma: characterize_metric_DIVBAR-RMIA}, (\ref{eq_backpressure_inequality2}) can be upper bounded as follows:
\begin{align}
&\ \ \ \ \sum\limits_n {\mathbb{E}\left\{ {\left. {\left. {{Z_n}\left( {{\bf{\hat Q}}\left( {{t_0}} \right)} \right)} \right|_{{t_0}}^{{t_0} + t - 1}} \right|{\bf{\hat Q}}\left( {{t_0}} \right)} \right\}}\nonumber\\
&=\sum\limits_n {\sum\limits_c {\mathbb{E}\left\{ {\left. {\frac{{\alpha _n^{\left( c \right)}\left( {{t_0},t} \right)}}{t}\frac{1}{{\alpha _n^{\left( c \right)}\left( {{t_0},t} \right)}}\sum\limits_{i = 1}^{\overline M_n^{\left( c \right)}\left( {{t_0},t} \right)} {\sum\limits_{j = u_{n,i}^{\left( c \right)}}^{u_{n,i + 1}^{\left( c \right)} - 1} {\sum\limits_{k \in {{\cal K}_n}} {b_{nk}^{\left( c \right)}\left( {\tau _j^{\left( c \right)}} \right)\left[ {\hat Q_n^{\left( c \right)}\left( {{t_0}} \right) - \hat Q_k^{\left( c \right)}\left( {{t_0}} \right)} \right]} } } } \right|{\bf{\hat Q}}\left( {{t_0}} \right)} \right\}} }\nonumber\\
&\le \sum\limits_n {\sum\limits_c {\mathbb{E}\left\{ {\left. {\frac{{\alpha _n^{\left( c \right)}\left( {{t_0},t} \right)}}{t}\frac{1}{{\alpha _n^{\left( c \right)}\left( {{t_0},t} \right)}}\sum\limits_{i = 1}^{\overline M_n^{\left( c \right)}\left( {{t_0},t} \right)} {\mathop {\max }\limits_{k \in {{\cal K}_n}} \left\{ {X_{nk}^{{\rm{RMIA,}}\left( c \right)}\left( i \right)\hat W_{nk}^{\left( c \right)}\left( {{t_0}} \right)} \right\}} } \right|{\bf{\hat Q}}\left( {{t_0}} \right)} \right\}} },\ \ \ \ \ \ \ \ \ \ \
\label{eq_backpressure_inequality3}
\end{align}
where $\hat W_{nk}^{\left( c \right)}\left( {{t_0}} \right) = \max \left\{ {\hat Q_n^{\left( c \right)}\left( {{t_0}} \right) - \hat Q_k^{\left( c \right)}\left( {{t_0}} \right),0} \right\}$; $u_{n,i}^{\left(c\right)}$ is the starting timeslot of the $i$th epoch of commodity $c$ for node $n$ being counted from timeslot $t_0$; $\overline M_n^{\left(c\right)}\left(t_0,t\right)$ can be $0$ and correspondingly, the whole summation metric can be $0$. In (\ref{eq_backpressure_inequality3}), the upper bound can be achieved if node $n$ only forwards the decoded packet to the successful receiver $k$ with the largest positive differential backlog $\hat W_{nk}^{\left( c \right)}\left( {{t_0}} \right)$, which is consistent with the forwarding strategy of $\tilde{Policy}$. Thus, $\tilde{Policy}$ has the optimal forwarding strategy in the sense of maximizing the key backpressure metric of each commodity.

\subsubsection{\textbf{Comparing the strategies of choosing commodities to transmit under $\tilde{Policy}$ and ${Policy'}^*$}}
Now the remaining part is to confirm the strategy of choosing commodities to transmit, which affects the value of $\alpha_n^{\left(c\right)}\left(t_0,t\right)$. We introduce another intermediate policy, denoted as \emph{$\tilde{Policy^*}$, with the following properties: it is the same as DIVBAR-RMIA in the interval from timeslot $0$ to timeslot $t_0-1$; starting from timeslot $t_0$ without using the pre-accumulated partial information, each node uses the same probabilities as $Policy^{*}$ to choose commodities to transmit but uses the same strategy as $\tilde{Policy}$ to forward the decoded packets.} The later proof logic is to firstly compare $\tilde{Policy^*}$ and ${Policy'}^*$, and then compare $\tilde{Policy}$ and $\tilde{Policy^*}$.

According to (\ref{eq_backpressure_inequality3}), the comparison of the backpressure metric values under $\tilde{Policy^*}$ and ${Policy'}^*$ is shown as follows:
\begin{align}
&\ \ \ \ \sum\limits_n {\mathbb{E}\left\{ {\left. {\left. {{{ \tilde {Z}^*}_n}\left( {{\bf{\hat Q}}\left( {{t_0}} \right)} \right)} \right|_{{t_0}}^{{t_0} + t - 1}} \right|{\bf{\hat Q}}\left( {{t_0}} \right)} \right\}}\nonumber\\
&=\sum\limits_n {\sum\limits_c {\mathbb{E}\left\{ {\left. {\frac{{\alpha _n^{*\left( c \right)}\left( {{t_0},t} \right)}}{t}\frac{1}{{\alpha _n^{*\left( c \right)}\left( {{t_0},t} \right)}}\sum\limits_{i = 1}^{\overline M_n^{\left( c \right)}\left( {{t_0},t} \right) } {\mathop {\max }\limits_{k \in {{\cal K}_n}} \left\{ {{X^{\rm{RMIA}}_{nk}}\left( {\tau _{{u^{\left( c \right)}_{n,i + 1}} - 1}^{\left( c \right)}} \right)\hat W_{nk}^{\left( c \right)}\left( {{t_0}} \right)} \right\}} } \right|{\bf{\hat Q}}\left( {{t_0}} \right)} \right\}} }\nonumber\\
&\ge \sum\limits_n {\sum\limits_c {\mathbb{E}\left\{ {\left. {\frac{{\alpha _n^{*\left( c \right)}\left( {{t_0},t} \right)}}{t}\frac{1}{{\alpha _n^{*\left( c \right)}\left( {{t_0},t} \right)}}\sum\limits_{i = 1}^{\overline M_n^{\left( c \right)}\left( {{t_0},t} \right)} {\sum\limits_{k \in {{\cal K}_n}} {{b'}_{nk}^{*\left( c \right)}\left( {\tau _{u_{n,i + 1}^{\left( c \right)} - 1}^{\left( c \right)}} \right)\left[ {\hat Q_n^{\left( c \right)}\left( {{t_0}} \right) - \hat Q_k^{\left( c \right)}\left( {{t_0}} \right)} \right]} } } \right|{\bf{\hat Q}}\left( {{t_0}} \right)} \right\}} }\nonumber\\
&= \sum\limits_n {\mathbb{E}\left\{ {\left. {\left. {{{Z'}^*_n}\left( {{\bf{\hat Q}}\left( {{t_0}} \right)} \right)} \right|_{{t_0}}^{{t_0} + t - 1}} \right|{\bf{\hat Q}}\left( {{t_0}} \right)} \right\}}.
\label{eq_compare_inter_vs_stationary}
\end{align}

The result of (\ref{eq_compare_inter_vs_stationary}) demonstrates that the key metric value under $\tilde{Policy^*}$ is no less than the key metric value under ${Policy'}^*$ over the interval from timeslot $t_0$ to timeslot $t_0+t-1$, which finishes the comparison between the two polices.

The next step is to compare the key metrics under $\tilde{Policy}$ and $\tilde{Policy^*}$. Consider that both $\tilde{Policy}$ and $\tilde {Policy^*}$ use fixed probabilities to choose commodities to transmit in each timeslot ($\tilde{Policy}$ keeps transmitting a single commodity during the $t$ timeslots, which is equivalent to choosing this commodity to transmit with probability 1). To facilitate the later proof, we restrict the policy set to the one, denoted as $\cal Y$, \emph{which consists of the policies that are the same as DIVBAR-RMIA from timeslot $0$ to timeslot $t_0-1$ and, from timeslot $t_0$ without using the pre-accumulated partial information, use all possible fixed probabilities to choose commodities to transmit and use backpressure strategy to forward the decoded packet.} Note that both $\tilde{Policy}$ or $\tilde{Policy}^*$ belong to $\cal Y$. Let $Z_n^{\left(c\right)}\left(i,{{\bf{\hat Q}}\left( {{t_0}} \right)}\right)$ represent the backpressure metric value for node $n$ over a single epoch of commodity $c$ under a policy with backpressure forwarding strategy, i.e.,
\begin{equation}
Z_n^{\left( c \right)}\left( {i,{\bf{\hat Q}}\left( {{t_0}} \right)} \right) = \mathop {\max }\limits_{k \in {{\cal K}_n}} \left\{ {{X^{\rm{RMIA}}_{nk}}\left( {\tau _{{u_{n,i + 1}} - 1}^{\left( c \right)}} \right)\hat W_{nk}^{\left( c \right)}\left( {{t_0}} \right)} \right\},
\label{eq_metric_single_epoch_arbitrary_core}
\end{equation}
where $i$ is the epoch's index. Therefore, the metric expression shown as (\ref{eq_metric_single_epoch_arbitrary_core}) over a single epoch for each node is valid for each policy in $\cal Y$. With the renewal operation, the value of $Z_n^{\left( c \right)}\left( {i,{\bf{\hat Q}}\left( {{t_0}} \right)} \right)$ under each policy in $\cal Y$ only depends on the backlog state ${{\bf{\hat Q}}\left( {{t_0}} \right)}$ and the channel realizations in the $i$th epoch for node $n$, and therefore $\left\{Z_n^{\left( c \right)}\left( {i,{\bf{\hat Q}}\left( {{t_0}} \right)} \right): i\ge 1\right\}$ are i.i.d. across epochs. Thus, to simplify the notation, for an arbitrary policy within $\cal Y$, we can safely use the following notation:
\begin{equation}
\mathbb{E}\left\{ {\left. {Z_n^{\left( c \right)}\left( {i,{\bf{\hat Q}}\left( {{t_0}} \right)} \right)} \right|{\bf{\hat Q}}\left( {{t_0}} \right)} \right\} \buildrel \Delta \over = z_n^{\left( c \right)}\left( {{\bf{\hat Q}}\left( {{t_0}} \right)} \right),
\label{eq_metric_single_epoch_arbitrary}
\end{equation}
Moreover, for the same commodity $c$, since the policies in $\cal Y$ have the same forwarding strategy, even $Z_n^{\left( c \right)}\left( {i,{\bf{\hat Q}}\left( {{t_0}} \right)} \right)$ under different policies in $\cal Y$ have the same distribution. Therefore, it is unnecessary to use any notation on $z_n^{\left( c \right)}\left( {{\bf{\hat Q}}\left( {{t_0}} \right)} \right)$ to specify the policy being used, as long as the policy is in $\cal Y$.

Under a policy within $\cal Y$, (\ref{eq_backpressure_inequality3}) can be written as
\begin{equation}
\sum\limits_n {\mathbb{E}\left\{ {\left. {\left. {{{ Z}_n}\left( {{\bf{\hat Q}}\left( {{t_0}} \right)} \right)} \right|_{{t_0}}^{{t_0} + t - 1}} \right|{\bf{\hat Q}}\left( {{t_0}} \right)} \right\}}=\sum\limits_n {\sum\limits_c {\mathbb{E}\left\{ {\left. {\frac{{\alpha _n^{\left( c \right)}\left( {{t_0},t} \right)}}{t}\frac{{\frac{1}{{\overline M_n^{\left( c \right)}\left( {{t_0},t} \right) }}\sum\limits_{i = 1}^{\overline M_n^{\left( c \right)}\left( {{t_0},t} \right)} {Z_n^{\left( c \right)}\left( {i,{\bf{\hat Q}}\left( {{t_0}} \right)} \right)} }}{{\frac{{\alpha _n^{\left( c \right)}\left( {{t_0},t} \right)}}{{\overline M_n^{\left( c \right)}\left( {{t_0},t} \right) }}}}} \right|{\bf{\hat Q}}\left( {{t_0}} \right)} \right\}} }.
\label{eq_backpressure_inequality4}
\end{equation}
Since each node $n$ has a fixed probability $\alpha_n^{\left(c\right)}$ to choose commodity $c$ to transmit, according to the strong law of large numbers, we have
\begin{equation}
\mathop {\lim }\limits_{t \to \infty } \frac{{\alpha _n^{\left( c \right)}\left( {{t_0},t} \right)}}{t} = \alpha _n^{\left( c \right)}{\rm{\ with\ prob }}{\rm{.\ }}1.
\label{eq_convergence_alpha}
\end{equation}
Additionally, since ${\lim _{t \to \infty }}\overline M_n^{\left( c \right)}\left( {{t_0},t} \right) = \infty {\rm{ }}$ with prob. 1, and ${Z_n^{\left( c \right)}\left( {i,{\bf{\hat Q}}\left( {{t_0}} \right)} \right)}$ are i.i.d. over different epochs, we can also use the strong law of large numbers to get
\begin{align}
\mathop {\lim }\limits_{t \to \infty } \frac{1}{{\overline M_n^{\left( c \right)}\left( {{t_0},t} \right)}}\sum\limits_{i = 1}^{\overline M_n^{\left( c \right)}\left( {{t_0},t} \right) } {Z_n^{\left( c \right)}\left( {i,{\bf{\hat Q}}\left( {{t_0}} \right)} \right)} & = z_n^{\left( c \right)}\left( {{\bf{\hat Q}}\left( {{t_0}} \right)} \right){\rm{\ with\ prob}}{\rm{.\ }}1
\label{eq_convergence_Z}
\end{align}
Moreover, since $\sum\limits_{i = 1}^{\overline M_n^{\left( c \right)}\left( {{t_0},t} \right) } {{T_n^{\left(c\right)}}\left( i \right)}  \le \alpha _n^{\left( c \right)}\left( {{t_0},t} \right) < \sum\limits_{i = 1}^{\overline M_n^{\left( c \right)}\left( {{t_0},t} \right)+1} {{T_n^{\left(c\right)}}\left( i \right)}$, we can get
\begin{equation}
\frac{1}{{\overline M_n^{\left( c \right)}\left( {{t_0},t} \right)}}\sum\limits_{i = 1}^{\overline M_n^{\left( c \right)}\left( {{t_0},t} \right)} {{T_n^{\left(c\right)}}\left( i \right)}  \le \frac{{\alpha _n^{\left( c \right)}\left( {{t_0},t} \right)}}{{\overline M_n^{\left( c \right)}\left( {{t_0},t} \right)}} < \frac{{\overline M_n^{\left( c \right)}\left( {{t_0},t} \right)+1}}{{\overline M_n^{\left( c \right)}\left( {{t_0},t} \right) }}\cdot \frac{1}{{\overline M_n^{\left( c \right)}\left( {{t_0},t} \right)+1}}\sum\limits_{i = 1}^{\overline M_n^{\left( c \right)}\left( {{t_0},t} \right)+1} {{T_n^{\left(c\right)}}\left( i \right)}.
\label{eq_convergence_T1}
\end{equation}
According to the strong law of large numbers, we have
\begin{equation}
\mathop {\lim }\limits_{t \to \infty } \frac{1}{{\overline M_n^{\left( c \right)}\left( {{t_0},t} \right)}}\sum\limits_{i = 1}^{\overline M_n^{\left( c \right)}\left( {{t_0},t} \right)} {{T_n^{\left(c\right)}}\left( i \right)}  =\mathbb{E}\left\{T_n^{\left(c\right)}\left(i\right)\right\} {\rm{\ with\ prob\ }}{\rm{. }}1
\end{equation}
\begin{equation}
\mathop {\lim }\limits_{t \to \infty } \frac{{\overline M_n^{\left( c \right)}\left( {{t_0},t} \right)+1}}{{\overline M_n^{\left( c \right)}\left( {{t_0},t} \right) }}\cdot \frac{1}{{\overline M_n^{\left( c \right)}\left( {{t_0},t} \right)+1}}\sum\limits_{i = 1}^{\overline M_n^{\left( c \right)}\left( {{t_0},t} \right)+1} {{T_n^{\left(c\right)}}\left( i \right)} =\mathbb{E}\left\{T_n^{\left(c\right)}\left(i\right)\right\}{\rm{\ with\ prob}}{\rm{.\ }}1
\label{eq_convergence_T2}
\end{equation}
combining (\ref{eq_convergence_T1})-(\ref{eq_convergence_T2}) and denoting $\mathbb{E}\left\{T_n\left(i\right)\right\} \buildrel \Delta \over = \mathbb{E}\left\{T_n\right\}$ ($\left\{T_n^{\left(c\right)}\left(i\right): i\ge 1, c \in {\cal N}\right\}$ are i.i.d.), it follows that
\begin{equation}
\mathop {\lim }\limits_{t \to \infty } \frac{{\alpha _n^{\left( c \right)}\left( {{t_0},t} \right)}}{{\overline M_n^{\left( c \right)}\left( {{t_0},t} \right)}} = \mathbb{E}\left\{ {{T_n}} \right\},\ {\rm{with\ prob.}}\ 1
\label{eq_convergence_T3}
\end{equation}

Given the backlog state ${{\bf{\hat Q}}\left( {{t_0}} \right)}$, plugging (\ref{eq_convergence_alpha}), (\ref{eq_convergence_Z}) and (\ref{eq_convergence_T3}) back into (\ref{eq_backpressure_inequality4}) yields
\begin{equation}
\mathop {\lim }\limits_{t \to \infty } \sum\limits_n {\mathbb{E}\left\{ {\left. {\left. {{Z_n}\left( {{\bf{\hat Q}}\left( {{t_0}} \right)} \right)} \right|_{{t_0}}^{{t_0} + t - 1}} \right|{\bf{\hat Q}}\left( {{t_0}} \right)} \right\}}  = {\sum\limits_n {\sum\limits_c {\alpha _n^{\left( c \right)}\frac{{z_n^{\left( c \right)}\left( {{\bf{\hat Q}}\left( {{t_0}} \right)} \right)}}{{\mathbb{E}\left\{ {{T_n}} \right\}}}} } }.
\label{eq_backpressure_convergence}
\end{equation}
Since $\sum\limits_c {\alpha _n^{\left( c \right)}}  \le 1$, it follows that
\begin{equation}
\sum\limits_n {\sum\limits_c {\alpha _n^{\left( c \right)}\frac{{z_n^{\left( c \right)}\left( {{\bf{\hat Q}}\left( {{t_0}} \right)} \right)}}{{\mathbb{E}\left\{ {{T_n}} \right\}}}} }  \le \sum\limits_n {\frac{{\mathop {\max }\limits_c \left\{ {z_n^{\left( c \right)}\left( {{\bf{\hat Q}}\left( {{t_0}} \right)} \right)} \right\}}}{{\mathbb{E}\left\{ {{T_n}} \right\}}}} .
\label{eq_compare_convergence_limit}
\end{equation}
If defining $\tilde c = \mathop {\arg \max }\limits_c \left\{ {z_n^{\left( c \right)}\left( {{\bf{\tilde Q}}\left( {{t_0}} \right)} \right)} \right\}$, (\ref{eq_compare_convergence_limit}) becomes an equality when
\begin{equation}
\alpha _n^{\left( c \right)} = \left\{ {\begin{array}{*{20}{c}}
{1,{\rm{\ if\ }}c = \tilde c}\\
{0,{\rm{\ if\ }}c \ne \tilde c}
\end{array}} \right.,
\end{equation}
i.e., the equality holds true when node $n$ chooses commodity $\tilde c$ to transmit through the whole $t$ timeslots interval. As we know, $\tilde{Policy}$'s strategy of choosing commodity to transmit satisfies the equality condition in (\ref{eq_compare_convergence_limit}).

Now we start comparing $\tilde{Policy}$ and $\tilde{Policy^*}$. First, under $\tilde {Policy^*}$, due to (\ref{eq_backpressure_convergence}), for $\forall \varepsilon >0$, there exists an integer $\tilde{D}^*$ such that whenever $t\geq \tilde{D}^*$, we have
\begin{equation}
\left| {\sum\limits_n {\mathbb{E}\left\{ {\left. {\left. {{\tilde Z^*_n}\left( {{\bf{\hat Q}}\left( {{t_0}} \right)} \right)} \right|_{{t_0}}^{{t_0} + t - 1}} \right|{\bf{\hat Q}}\left( {{t_0}} \right)} \right\}}  - \sum\limits_n {\sum\limits_c {\alpha _n^{*\left( c \right)}\frac{{z_n^{\left( c \right)}\left( {{\bf{\hat Q}}\left( {{t_0}} \right)} \right)}}{{\mathbb{E}\left\{ {{T_n}} \right\}}}} } } \right|\le \frac{\varepsilon }{16}\sum\limits_{n,c} {\hat Q_n^{\left( c \right)}\left( {{t_0}} \right)}.
\label{eq_backpressure_deviation2}
\end{equation}
In (\ref{eq_backpressure_deviation2}), we choose ${{\varepsilon \sum\limits_{n,c} {\hat Q_n^{\left( c \right)}\left( {{t_0}} \right)} } \mathord{\left/
 {\vphantom {{\varepsilon \sum\limits_{n,c} {\hat Q_n^{\left( c \right)}\left( {{t_0}} \right)} } {16}}} \right.
 \kern-\nulldelimiterspace} {16}}$ as the deviation bound so as to guarantee that the value of $\tilde{D}^*$ does not depend on $t_0$. Similarly for $\tilde {Policy}$, there exists an integer $\tilde D_1$ such that, $\forall t \geq \tilde D_1$, we have
\begin{equation}
\left| {\sum\limits_n {\mathbb{E}\left\{ {\left. {\left. {{{\tilde Z}_n}\left( {{\bf{\hat Q}}\left( {{t_0}} \right)} \right)} \right|_{{t_0}}^{{t_0} + t - 1}} \right|{\bf{\hat Q}}\left( {{t_0}} \right)} \right\}}  - \sum\limits_n {\frac{{\mathop {\max }\limits_c \left\{ {z_n^{\left( c \right)}\left( {{\bf{\hat Q}}\left( {{t_0}} \right)} \right)} \right\}}}{{\mathbb{E}\left\{ {{T_n}} \right\}}}} } \right| \le \frac{\varepsilon }{16}\sum\limits_{n,c} {\hat Q_n^{\left( c \right)}\left( {{t_0}} \right)}.
\label{eq_backpressure_deviation3}
\end{equation}

Now choose ${{\tilde D}_2} = \max \left\{ {{{\tilde D}^*},{{\tilde D}_1}} \right\}$, based on (\ref{eq_compare_convergence_limit}), (\ref{eq_backpressure_deviation2}) and (\ref{eq_backpressure_deviation3}), we can get, $\forall t\geq \tilde D_2$,
\begin{align}
\sum\limits_n {\mathbb{E}\left\{ {\left. {\left. {{{\tilde Z}_n}\left( {{\bf{\hat Q}}\left( {{t_0}} \right)} \right)} \right|_{{t_0}}^{{t_0} + t - 1}} \right|{\bf{\hat Q}}\left( {{t_0}} \right)} \right\}}  &\ge \sum\limits_n {\frac{{\mathop {\max }\limits_c \left\{ {z_n^{\left( c \right)}\left( {{\bf{\hat Q}}\left( {{t_0}} \right)} \right)} \right\}}}{{\mathbb{E}\left\{ {{T_n}} \right\}}}}  - \frac{\varepsilon }{16}\sum\limits_{n,c} {\hat Q_n^{\left( c \right)}\left( {{t_0}} \right)}\nonumber\\
&\ge \sum\limits_n {\sum\limits_c {\alpha _n^{*\left( c \right)}\frac{{z_n^{\left( c \right)}\left( {{\bf{\hat Q}}\left( {{t_0}} \right)} \right)}}{{\mathbb{E}\left\{ {{T_n}} \right\}}} - \frac{\varepsilon }{16}\sum\limits_{n,c} {\hat Q_n^{\left( c \right)}\left( {{t_0}} \right)} } } \nonumber\\
&= \sum\limits_n {\mathbb{E}\left\{ {\left. {\left. {\tilde Z_n^*\left( {{\bf{\hat Q}}\left( {{t_0}} \right)} \right)} \right|_{{t_0}}^{{t_0} + t - 1}} \right|{\bf{\hat Q}}\left( {{t_0}} \right)} \right\}}  - \frac{\varepsilon }{8}\sum\limits_{n,c} {\hat Q_n^{\left( c \right)}\left( {{t_0}} \right)}.
\label{eq_compare_inter_vs_tilde}
\end{align}

Summarizing the results in (\ref{eq_compare_inter_vs_tilde}) and (\ref{eq_compare_inter_vs_stationary}), we finished the comparison of the key metrics  under $\tilde{Policy}$ and $Policy^*$ as follows: for $\forall t \ge \tilde D^*$,
\begin{equation}
\sum\limits_n {\mathbb{E}\left\{ {\left. {\left. {{{\tilde Z}_n}\left( {{\bf{\hat Q}}\left( {{t_0}} \right)} \right)} \right|_{{t_0}}^{{t_0} + t - 1}} \right|{\bf{\hat Q}}\left( {{t_0}} \right)} \right\}}  \ge \sum\limits_n {\mathbb{E}\left\{ {\left. {\left. {{Z'}_n^*\left( {{\bf{\hat Q}}\left( {{t_0}} \right)} \right)} \right|_{{t_0}}^{{t_0} + t - 1}} \right|{\bf{\hat Q}}\left( {{t_0}} \right)} \right\}}  - \frac{\varepsilon }{8}\sum\limits_{n,c} {\hat Q_n^{\left( c \right)}\left( {{t_0}} \right)}.
\label{eq_compare_stationary_vs_tilde}
\end{equation}

\subsection{Comparison on the key backpressure metric between $\hat{Policy}$ and $\tilde{Policy}$}
\label{subsec_througput_optimal_sub3}
The goal of this part of proof is to compare values of the key metric $\sum\limits_n {\mathbb{E}\left\{ {\left. {\left. {{ Z_n}\left( {{\bf{\hat Q}}\left( {{t_0}} \right)} \right)} \right|_{{t_0}}^{{t_0} + t - 1}} \right|{\bf{\hat Q}}\left( {{t_0}} \right)} \right\}}$ under $\hat {Policy}$ and $\tilde{Policy}$.
If implementing $\hat {Policy}$, the arbitrary timeslot $t_0$ may not be the starting timeslot of one epoch, which means that $\hat{Policy}$ may not start transmitting a new packet from timeslot $t_0$. In contrast, $\tilde {Policy}$ starts transmitting a packet from $t_0$ without using the pre-accumulated partial information, i.e., timeslot $t_0$ is the starting timeslot of a new epoch. Thus, as is shown in Fig \ref{fig_epoch relations among several policies} in Appendix \ref{appendix: policy_list}, within the interval $[t_0,t_0+t-1]$, $\hat{Policy}$ and $\tilde{Policy}$ may not have the synchronized epochs, which makes the direct comparison between them difficult.

In order to deal with the non-synchronized epoches, this part of the proof introduces another intermediate policy: $\tilde{\tilde{Policy}}$, \emph{the intermediate and non-causal policy with the following properties: the epochs for any node have contiguous timeslots; for each node, it is the same as DIVBAR-RMIA in the interval from timeslot $0$ to $u_{n,1}-1$ ($u_{n,1}$ is the starting timeslot of the epoch for the transmitting node $n$ that includes timeslot $t_0$); starting from timeslot $u_{n,1}$ without using the pre-accumulated partial information, each node chooses the same commodity to transmit as that chosen by $\tilde{Policy}$ in timeslot $t_0$ ($u_{n,1}\le t_0$), and keeps transmitting the packets of the chosen commodity during the later timeslots with RMIA, and forwards each decoded packet to the receiver with the largest differential backlog formed under DIVBAR-RMIA in timeslot $t_0$.} Here note that $\tilde{c}$ is actually decided based on the value ${{\bf{\hat Q}}\left( {{t_0}} \right)}$, which might be the backlog state in a later timeslot with respect to timeslot $u_{n,1}$ and results from $\hat{Policy}$. With this non-causality, $\tilde{\tilde{Policy}}$ is a non-realizable policy but is used to facilitate the theoretical analysis.

On the one hand, based on the above definition, we can guarantee that $\tilde{\tilde{Policy}}$ has synchronized epochs with $\hat{Policy}$; on the other hand, each node under $\tilde{\tilde{Policy}}$ transmits the packets of the same commodity as $\tilde{Policy}$ from timeslot $u_{n,i}$, $\tilde{\tilde{Policy}}$ serves as a "bridge" connecting $\hat{Policy}$ and $\tilde{Policy}$, and the proof logic naturally becomes the following two steps: first to compare the metric values under $\hat{Policy}$ and $\tilde{\tilde{Policy}}$; second to compare the metric values under $\tilde{\tilde{Policy}}$ and $\tilde{Policy}$.

\subsubsection{\textbf{Comparison between $\hat{Policy}$ and $\tilde{\tilde{Policy}}$}}
To begin with, define $M_n\left(t_0,t\right)$ as the minimum number of epochs for node $n$ that cover the time interval $\left[t_0,t_0+t-1\right]$ under $\hat{Policy}$ or $\tilde{\tilde{Policy}}$, i.e.
\begin{equation}
{M_n}\left( {{t_0},t} \right) = \min \left\{ {m:{u_{n,1}} + \sum\limits_{i = 1}^m {{T_n}\left( i \right)}  - 1 \ge {t_0} + t - 1} \right\},
\label{eq_def_of_M}
\end{equation}
where $T_n\left(i\right)$ is the number of timeslots in the $i$th epoch of an arbitrary commodity for node $n$. The covering of the interval $\left[t_0,t_0+t-1\right]$ is demonstrated by Fig. \ref{fig_epoch relations among several policies}. Additionally, because $\left\{{\tilde {\tilde Z}_n^{\left( c \right)}\left( {i,{\bf{\hat Q}}\left( {{t_0}} \right)} \right)}, i\ge 1\right\}$ are i.i.d., we can use the following notation under $\tilde{\tilde{Policy}}$:
\begin{equation}
\mathbb{E}\left\{ {\left. {\tilde {\tilde Z}_n^{\left( c \right)}\left( {i,{\bf{\hat Q}}\left( {{t_0}} \right)} \right)} \right|{\bf{\hat Q}}\left( {{t_0}} \right)} \right\}\buildrel \Delta \over =\tilde {\tilde z}_n^{\left( c \right)}\left( {{\bf{\hat Q}}\left( {{t_0}} \right)} \right) .
\end{equation}

With the definitions of $M_n\left(t_0,t\right)$ and $\tilde {\tilde z}_n^{\left( c \right)}\left( {{\bf{\hat Q}}\left( {{t_0}} \right)} \right)$, we propose the following lemma to compare $\hat{Policy}$ and $\tilde{\tilde{Policy}}$:
\begin{lemma}
\label{lemma: hat_vs_tildetilde}
There exists a positive integer $\hat D_2$ such that, for $\forall t \ge \hat D_2$, $\hat{Policy}$ and $\tilde{\tilde{Policy}}$ satisfy the following relationship given the backlog state ${\bf{\hat Q}}\left(t_0\right)$:
\begin{equation}
\sum\limits_n {\mathbb{E}\left\{ {\left. {\left. {{{\hat Z}_n}\left( {{\bf{\hat Q}}\left( {{t_0}} \right)} \right)} \right|_{{t_0}}^{{t_0} + t - 1}} \right|{\bf{\hat Q}}\left( {{t_0}} \right)} \right\}} \geq \frac{1}{t}\sum\limits_n {\tilde {\tilde z}_n\left( {{\bf{\hat Q}}\left( {{t_0}} \right)} \right)\mathbb{E}\left\{ {{M_n}\left( {{t_0},t} \right)} \right\}}  - \left[ {N{C_2}\left( t \right) + {C_1}\left( t \right) + \frac{\varepsilon }{8}\sum\limits_{n,c} {\hat Q_n^{\left( c \right)}\left( {{t_0}} \right)} } \right],
\label{eq_key_metric_comparison1}
\end{equation}
where $C_1\left(t\right)=Nt\left(N+A_{\max}+1\right)$; $C_2\left(t\right)=t\left(N+A_{\max}+1\right)$.
\end{lemma}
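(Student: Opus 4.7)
My plan is to exploit that $\hat{Policy}$ and $\tilde{\tilde{Policy}}$ share the same synchronized, contiguous epoch boundaries $\{u_{n,i}\}$ by construction and differ only in their per-epoch actions. Since $\tilde{\tilde{Policy}}$ consequently lies in the policy class $\mathcal{P}$ of Lemma \ref{lemma: characterize_metric_DIVBAR-RMIA}, applying that lemma at every epoch start yields the epochwise domination
\begin{equation*}
\mathbb{E}\bigl\{Z_n(i,{\bf{\hat Q}}(u_{n,i}))\,\big|\,{\bf{\hat Q}}(u_{n,i})\bigr\}_{\hat{Policy}} \;\ge\; \mathbb{E}\bigl\{Z_n(i,{\bf{\hat Q}}(u_{n,i}))\,\big|\,{\bf{\hat Q}}(u_{n,i})\bigr\}_{\tilde{\tilde{Policy}}},
\end{equation*}
so $\hat{Policy}$'s per-epoch backpressure metric dominates that of $\tilde{\tilde{Policy}}$ whenever both metrics use the backlog at the current epoch's start as their weights.

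\textbf{Converting to the lemma's form.} The LHS and RHS of (\ref{eq_key_metric_comparison1}) both use the weights ${\bf{\hat Q}}(t_0)$ rather than ${\bf{\hat Q}}(u_{n,i})$, so two reference changes must be performed. First, inside each per-epoch metric I replace ${\bf{\hat Q}}(u_{n,i})$ by ${\bf{\hat Q}}(t_0)$: since any single backlog $\hat{Q}_n^{(c)}$ can change by at most $N+A_{\max}+1$ per slot (at most $N$ endogenous arrivals, $A_{\max}$ exogenous arrivals and one departure), one has $|\hat{Q}_n^{(c)}(u_{n,i})-\hat{Q}_n^{(c)}(t_0)|\le (N+A_{\max}+1)|u_{n,i}-t_0|$, and combined with the per-epoch flow bound of $1$, the aggregate replacement cost across nodes, commodities and epochs is of order $N(N+A_{\max}+1)t$, producing the constants $C_1(t)$ and $NC_2(t)$. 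Second, I will handle the marginal mismatch between $[t_0,t_0+t-1]$ and the union of the $M_n(t_0,t)$ covering epochs, since the first covering epoch may begin before $t_0$ and the last may extend past $t_0+t-1$: each such boundary contribution is bounded by one epoch's worth of weighted flow, itself no larger than $\sum_{n,c}\hat{Q}_n^{(c)}(t_0)$, and its relative importance vanishes as $t$ grows thanks to the finite expected epoch length in (\ref{eq_finite_epoch_length}), which feeds the $\tfrac{\varepsilon}{8}\sum_{n,c}\hat{Q}_n^{(c)}(t_0)$ residual for $\hat{D}_2$ large enough. Finally, because $\tilde{\tilde{Policy}}$ transmits the same commodity $\tilde{c}$ across all epochs under the renewal operation, the sequence $\{\tilde{\tilde{Z}}_n^{\mathrm{epoch}}(i;{\bf{\hat Q}}(t_0))\}_{i\ge 1}$ is i.i.d.\ with common conditional mean $\tilde{\tilde{z}}_n({\bf{\hat Q}}(t_0))$, and $M_n(t_0,t)$ is a stopping time for the associated epoch-length filtration; Wald's identity then supplies $\tilde{\tilde{z}}_n({\bf{\hat Q}}(t_0))\,\mathbb{E}\{M_n(t_0,t)\}$, matching the first term on the RHS of (\ref{eq_key_metric_comparison1}).

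\textbf{Main obstacle.} The delicate step is the change of weighting reference. Because $|{\bf{\hat Q}}(u_{n,i})-{\bf{\hat Q}}(t_0)|$ can scale linearly with $|u_{n,i}-t_0|$, the aggregate replacement error is genuinely $O(t)$, which is precisely why $C_1(t)$ and $C_2(t)$ themselves are linear in $t$. Executing this step rigorously requires a deterministic per-slot accounting via $N+A_{\max}+1$ rather than statistical averaging, together with meticulous bookkeeping over the double sum in $(c,k)$ inside the definition of $Z_n$. A secondary but crucial difficulty is controlling the randomness of $M_n(t_0,t)$ in a manner that is uniform in the arbitrary starting slot $t_0$: the integrable epoch length (\ref{eq_finite_epoch_length}) together with the strong law of large numbers concentrates $M_n(t_0,t)/t$ around $1/\mathbb{E}\{T_n\}$ for large $t$, which is what permits the marginal-epoch residual to be absorbed into the $\tfrac{\varepsilon}{8}\sum_{n,c}\hat{Q}_n^{(c)}(t_0)$ slack with a threshold $\hat{D}_2$ that does not depend on $t_0$.
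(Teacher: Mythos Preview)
Your proposal is correct and follows essentially the same three-step architecture as the paper's proof in Appendix~\ref{appendix: hat_vs_tildetilde}: (i) pass from the interval metric to a sum over the covering epochs while absorbing the boundary mismatch into the $\tfrac{\varepsilon}{8}$ slack, (ii) switch the backlog weights between ${\bf\hat Q}(t_0)$ and ${\bf\hat Q}(u_{n,i})$ at the cost of $C_1(t)$ and $C_2(t)$, and (iii) invoke Lemma~\ref{lemma: characterize_metric_DIVBAR-RMIA} epoch by epoch and then sum.

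Two small points where your sketch deviates from the paper's execution are worth flagging. First, the boundary-epoch residual is handled more crudely than you suggest: since at most one packet is forwarded per epoch with weight at most $\sum_c\hat Q_n^{(c)}(t_0)$, the paper simply uses $1/t\le\varepsilon/8$ once $t\ge\hat D_1:=\lceil 8/\varepsilon\rceil$; no SLLN or concentration of $M_n(t_0,t)/t$ is needed here, and the finite expected epoch length $T_{\max}$ enters only to bound $\mathbb{E}\{t_0-u_{n,1}\}$ in the $C_1(t)$ computation. Second, in place of Wald's identity the paper uses the indicator $1_n(i)=\mathbb{1}\{M_n(t_0,t)\ge i\}$ and the observation that $1_n(i)$ depends only on $T'_{n,t_0}(1),T_n(2),\dots,T_n(i-1)$, hence is independent of $\tilde{\tilde Z}_n(i,{\bf\hat Q}(t_0))$; this is of course the standard proof of Wald, so your invocation is equivalent.
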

The detailed proof of Lemma \ref{lemma: hat_vs_tildetilde} is shown in Appendix \ref{appendix: hat_vs_tildetilde}. Lemma \ref{lemma: hat_vs_tildetilde} demonstrates the comparison between $\hat{Policy}$ and $\tilde{\tilde{Policy}}$ on the key expectation metrics. The term $\sum\limits_n {\mathbb{E}\left\{ {\left. {\left. {{{\hat Z}_n}\left( {{\bf{\hat Q}}\left( {{t_0}} \right)} \right)} \right|_{{t_0}}^{{t_0} + t - 1}} \right|{\bf{\hat Q}}\left( {{t_0}} \right)} \right\}}$ is the key metric over the interval $\left[t_0,t_0+t-1\right]$ under $\hat{Policy}$, while the term $\sum\limits_n {\tilde {\tilde z}_n\left( {{\bf{\hat Q}}\left( {{t_0}} \right)} \right)\mathbb{E}\left\{ {{M_n}\left( {{t_0},t} \right)} \right\}}$ is the key metric over the shortest interval of multiple epochs that covers $\left[t_0,t_0+t-1\right]$ under $\tilde{\tilde{Policy}}$. Note that the two intervals may have a marginal difference at the two ends, as is shown by Fig \ref{fig_epoch relations among several policies} in Appendix \ref{appendix: policy_list}. The derivation result demonstrates that the effect of the marginal part vanishes as $t$ grows. The constant term $N{C_2}\left( t \right) + {C_1}\left( t \right)$ arises from the switching backlog coefficients during the derivations.

\subsubsection{\textbf{Comparison between $\tilde{\tilde{Policy}}$ and $\tilde{Policy}$}}
The next step is to compare $\tilde{\tilde{Policy}}$ and $\tilde{Policy}$. A challenge in the comparison between the two policies is that they don't have synchronized epochs. However, dealing with this issue becomes easier after focusing the analysis on the new metric $\frac{1}{t}\sum\limits_n {{{\tilde {\tilde z}}_n}\left( {{\bf{\hat Q}}\left( {{t_0}} \right)} \right)\mathbb{E}\left\{ {{M_n}\left( {{t_0},t} \right)} \right\}}$.

Similar to (\ref{eq_def_of_M}), first define $\tilde M_n\left(t_0,t\right)$ as the minimum number of epochs for node $n$ that covers the interval $\left[t_0,t_0+t-1\right]$ under $\tilde{Policy}$. Correspondingly, define $\tilde u_{n,i}$ as the starting timeslot of the $i$th epoch under $\tilde{Policy}$, where $\tilde u_{n,1}=t_0$. Note that ${{{\tilde {\tilde Z}}_n}\left( {i,{\bf{\hat Q}}\left( {{t_0}} \right)} \right)}$ and ${{{\tilde Z}_n}\left( {j,{\bf{\hat Q}}\left( {{t_0}} \right)} \right)}$, where $1\le i \le M_n\left(t_0,t\right)$ and $1\le j \le \tilde M_n\left(t_0,t\right)$, are identically distributed since $\tilde{\tilde{Policy}}$ and $\tilde{Policy}$ choose the same commodity to transmit and use the same forwarding strategy based on the same backlog state ${\bf{\hat{Q}}}\left(t_0\right)$ in these epochs. Therefore, we get
\begin{equation}
\tilde {\tilde z}_n\left( {{\bf{\hat Q}}\left( {{t_0}} \right)} \right)=\tilde z_n\left( {{\bf{\hat Q}}\left( {{t_0}} \right)} \right).
\label{eq_key_metric_comparison_single_epoch}
\end{equation}
On the other hand, under the two policies, the number of epochs for node $n$ covering $\left[t_0,t_0+t-1\right]$ might be different. Specifically, since $u_{n,1}\le t_0 = \tilde{u}_{n,1}$, we can guarantee that, with any possible channel realization (common for both policies), $u_{n,i} \le \tilde{u}_{n,i}$, and correspondingly, we have
\begin{equation}
M_n\left(t_0,t\right) \ge \tilde{M}_n\left(t_0,t\right).
\label{eq_comparison_number_of_epochs}
\end{equation}
Based on the results shown in (\ref{eq_key_metric_comparison_single_epoch}) and (\ref{eq_comparison_number_of_epochs}), we have
\begin{equation}
\tilde {\tilde z}_n\left( {{\bf{\hat Q}}\left( {{t_0}} \right)} \right)\mathbb{E}\left\{ {{M_n}\left( {{t_0},t} \right)} \right\} \ge \tilde z_n\left( {{\bf{\hat Q}}\left( {{t_0}} \right)} \right)\mathbb{E}\left\{ {{{\tilde M}_n}\left( {{t_0},t} \right)} \right\}.
\label{eq_key_metric_comparison2}
\end{equation}

Furthermore, under $\tilde{Policy}$, considering that $M_n\left(t_0,t\right)\le t$ because $\tilde T_n\left(i\right)\ge 1$, define the following indicator function of integer $i = 1,2,\cdots,t$:
\begin{equation}
{\tilde 1_n}\left( i \right) = \left\{ {\begin{array}{*{20}{c}}
{1,\;1 \le i \le {{\tilde M}_n}\left( {{t_0},t} \right) \le t}\\
{0,{\rm{\ \ \ \ }}\;{{\tilde M}_n}\left( {{t_0},t} \right) < i \le t,}
\end{array}} \right.
\label{eq_indicator_function}
\end{equation}
and denote $\tilde T_n\left(i\right)$ as the length of the $i$th epoch of commodity $\tilde c$ under $\tilde{Policy}$. Then, similar to the proof of Lemma \ref{lemma: hat_vs_tildetilde} in Appendix \ref{appendix: hat_vs_tildetilde}, $\tilde Z_n\left(i,{\bf \hat Q}\left(t_0\right)\right)$ and $\tilde 1_n\left(i\right)$ are independent because $\left\{\tilde Z_n\left(i,{\bf \hat Q}\left(t_0\right)\right)\right\}$ are i.i.d. and $\tilde 1_n\left(i\right)$ only depends on $\tilde T_n\left(1\right), \cdots, \tilde T_n\left(i-1\right)$. Therefore it follows that
\begin{align}
\mathbb{E}\left\{ {\left. {\sum\limits_{i = 1}^{{{\tilde M}_n}\left( {{t_0},t} \right)} {{{\tilde Z}_n}\left( {i,{\bf{\hat Q}}\left( {{t_0}} \right)} \right)} } \right|{\bf{\hat Q}}\left( {{t_0}} \right)} \right\} &= \mathbb{E}\left\{ {\left. {\sum\limits_{i = 1}^t {{{\tilde Z}_n}\left( {i,{\bf{\hat Q}}\left( {{t_0}} \right)} \right){{\tilde 1}_n}\left( i \right)} } \right|{\bf{\hat Q}}\left( {{t_0}} \right)} \right\}\nonumber\\
&=\sum\limits_{i = 1}^t {\mathbb{E}\left\{ {\left. {{{\tilde Z}_n}\left( {i,{\bf{\hat Q}}\left( {{t_0}} \right)} \right)} \right|{\bf{\hat Q}}\left( {{t_0}} \right)} \right\}} \mathbb{E}\left\{ {{{\tilde 1}_n}\left( i \right)} \right\}\nonumber\\
&= \tilde z_n\left( {{\bf{\hat Q}}\left( {{t_0}} \right)} \right)\sum\limits_{i = 1}^t {\Pr \left\{ {{{\tilde M}_n}\left( {{t_0},t} \right) \ge i} \right\}} \nonumber\\
&=\tilde z_n\left( {{\bf{\hat Q}}\left( {{t_0}} \right)} \right)\mathbb{E}\left\{ {{{\tilde M}_n}\left( {{t_0},t} \right)} \right\}.
\label{eq_key_metric_comparison3}
\end{align}
Plugging (\ref{eq_key_metric_comparison3}) into (\ref{eq_key_metric_comparison2}) yields
\begin{equation}
\tilde {\tilde z}_n\left( {{\bf{\hat Q}}\left( {{t_0}} \right)} \right)\mathbb{E}\left\{ {{M_n}\left( {{t_0},t} \right)} \right\} \ge \mathbb{E}\left\{ {\left. {\sum\limits_{i = 1}^{{{\tilde M}_n}\left( {{t_0},t} \right)} {{{\tilde Z}_n}\left( {i,{\bf{\hat Q}}\left( {{t_0}} \right)} \right)} } \right|{\bf{\hat Q}}\left( {{t_0}} \right)} \right\},
\label{eq_key_metric_comparison3.5}
\end{equation}
which completes the comparison between $\tilde{\tilde{Policy}}$ and $\tilde{Policy}$.

\subsubsection{\textbf{Comparison between $\hat {Policy}$ and $\tilde {Policy}$}}
Plug (\ref{eq_key_metric_comparison3.5}) back into (\ref{eq_key_metric_comparison1}) in Lemma \ref{lemma: hat_vs_tildetilde}, which forms the following inequality: $\forall t\ge \hat D_2$,
\begin{align}
&\ \ \ \sum\limits_n {\mathbb{E}\left\{ {\left. {\left. {{{\hat Z}_n}\left( {{\bf{\hat Q}}\left( {{t_0}} \right)} \right)} \right|_{{t_0}}^{{t_0} + t - 1}} \right|{\bf{\hat Q}}\left( {{t_0}} \right)} \right\}} \nonumber\\
&\geq \sum\limits_n {\mathbb{E}\left\{ {\left. {\frac{1}{t}\sum\limits_{i = 1}^{{{\tilde M}_n}\left( {{t_0},t} \right)} {{{\tilde Z}_n}\left( {i,{\bf{\hat Q}}\left( {{t_0}} \right)} \right)} } \right|{\bf{\hat Q}}\left( {{t_0}} \right)} \right\}}-\left[ {N{C_2}\left( t \right) + {C_1}\left( t \right) + \frac{\varepsilon }{8}\sum\limits_{n,c} {\hat Q_n^{\left( c \right)}\left( {{t_0}} \right)} } \right].
\label{eq_key_metric_comparison4}
\end{align}

Moreover, under $\tilde{Policy}$, we have
\begin{align}
&\ \ \ \ \sum\limits_n {\mathbb{E}\left\{ {\left. {\frac{1}{t}\sum\limits_{i = 1}^{{{\tilde M}_n}\left( {{t_0},t} \right)} {{{\tilde Z}_n}\left( {i,{\bf{\hat Q}}\left( {{t_0}} \right)} \right)} } \right|{\bf{\hat Q}}\left( {{t_0}} \right)} \right\}}\ \ \ \ \ \ \ \ \ \ \ \ \ \ \ \ \ \ \ \ \ \ \ \ \ \ \ \ \ \ \ \ \ \nonumber
\end{align}
\begin{align}
&= \sum\limits_n {\mathbb{E}\left\{ {\frac{1}{t}\sum\limits_{\tau  = {t_0}}^{{t_0} + t - 1} {\sum\limits_c {\sum\limits_{k \in {K_n}} {\tilde b_{nk}^{\left( c \right)}\left( \tau  \right)\left[ {\hat Q_n^{\left( c \right)}\left( {{t_0}} \right) - \hat Q_k^{\left( c \right)}\left( {{t_0}} \right)} \right]} } } } \right.} \nonumber\\
&\ \ \ \ \ \ \left. {\left. { + \frac{1}{t}\sum\limits_{\tau  = {t_0} + t}^{{{\tilde u}_{n,{{\tilde M}_n}\left( {{t_0},t} \right) + 1}} - 1} {\sum\limits_c {\sum\limits_{k \in {K_n}} {\tilde b_{nk}^{\left( c \right)}\left( \tau  \right)\left[ {\hat Q_n^{\left( c \right)}\left( {{t_0}} \right) - \hat Q_k^{\left( c \right)}\left( {{t_0}} \right)} \right]} } } } \right|{\bf{\hat Q}}\left( {{t_0}} \right)} \right\}\nonumber\\
&\ge \sum\limits_n {\mathbb{E}\left\{ {\left. {\left. {{{\tilde Z}_n}\left( {{\bf{\hat Q}}\left( {{t_0}} \right)} \right)} \right|_{{t_0}}^{{t_0} + t - 1}} \right|{\bf{\hat Q}}\left( {{t_0}} \right)} \right\}}.
\label{eq_key_metric_comparison5}
\end{align}
Plug the result of (\ref{eq_key_metric_comparison5}) into (\ref{eq_key_metric_comparison4}) to finally obtain the comparison between the key metric values under $\hat{Policy}$ and $\tilde{Policy}$ over the timeslots interval $\left[t_0, t_0+t-1\right]$, and form the following inequality: $\forall t \ge \hat D_2$,
\begin{align}
&\ \ \ \sum\limits_n {\mathbb{E}\left\{ {\left. {\left. {{{\hat Z}_n}\left( {{\bf{\hat Q}}\left( {{t_0}} \right)} \right)} \right|_{{t_0}}^{{t_0} + t - 1}} \right|{\bf{\hat Q}}\left( {{t_0}} \right)} \right\}} \nonumber\\
&\ge\sum\limits_n {\mathbb{E}\left\{ {\left. {\left. {{{\tilde Z}_n}\left( {{\bf{\hat Q}}\left( {{t_0}} \right)} \right)} \right|_{{t_0}}^{{t_0} + t - 1}} \right|{\bf{\hat Q}}\left( {{t_0}} \right)} \right\}} - \left[ {N{C_2}\left( t \right) + {C_1}\left( t \right) + \frac{\varepsilon }{8}\sum\limits_{n,c} {\hat Q_n^{\left( c \right)}\left( {{t_0}} \right)} } \right].
\label{eq_key_metric_comparison6}
\end{align}

\subsection{Strong stability achieved under $\hat{Policy}$}
\label{subsec_througput_optimal_sub4}
Combining the result of the comparison on the key metrics between $\tilde{Policy}$ and ${Policy'}^*$ shown as (\ref{eq_compare_stationary_vs_tilde}) in Subsection \ref{subsec_througput_optimal_sub2} and the result of the comparison on the key metrics between $\hat{Policy}$ and $\tilde{Policy}$ shown as (\ref{eq_key_metric_comparison6}) in Subsection \ref{subsec_througput_optimal_sub3}, the comparison on the key backpressure metric between $\hat{Policy}$ and ${Policy'}^*$ is shown as follows: $\forall t \ge \max\left\{\hat D_2, \tilde D_2\right\}$,
\begin{align}
&\ \ \ \sum\limits_n {\mathbb{E}\left\{ {\left. {\left. {{{\hat Z}_n}\left( {{\bf{\hat Q}}\left( {{t_0}} \right)} \right)} \right|_{{t_0}}^{{t_0} + t - 1}} \right|{\bf{\hat Q}}\left( {{t_0}} \right)} \right\}}\nonumber\\
&\ge \sum\limits_n {\mathbb{E}\left\{ {\left. {\left. {{Z'}_n^*\left( {{\bf{\hat Q}}\left( {{t_0}} \right)} \right)} \right|_{{t_0}}^{{t_0} + t - 1}} \right|{\bf{\hat Q}}\left( {{t_0}} \right)} \right\}}  - \left[N{C_2}\left( t \right) + {C_1}\left( t \right) + \frac{\varepsilon }{4}\sum\limits_{n,c} {\hat Q_n^{\left( c \right)}\left( {{t_0}}, \right)}\right].
\label{eq_key_metric_comparison7}
\end{align}

Going back to (\ref{eq_DIVBAR_RMIA_Lyapunov_drift1}) in Subsection \ref{subsec_througput_optimal_sub1}, after plugging (\ref{eq_key_metric_comparison7}) into (\ref{eq_DIVBAR_RMIA_Lyapunov_drift1}), the $t$-slot average Lyapunov drift of $\hat{Policy}$ at arbitrary timeslot $t_0$ can be further upper bounded as follows:
\begin{align}
&\ \ \ \ \frac{1}{t}\sum\limits_{n,c} \mathbb{E} \left\{ {\left. {{{\left( {\hat Q_n^{\left( c \right)}\left( {{t_0} + t} \right)} \right)}^2} - {{\left( {\hat Q_n^{\left( c \right)}\left( {{t_0}} \right)} \right)}^2}} \right|{\bf{\hat Q}}\left( {{t_0}} \right)} \right\}\nonumber\\
&\le B\left(t\right) + 2\left[ {{C_1}\left( t \right) + N{C_2}\left( t \right)} \right] + \frac{\varepsilon}{2} \sum\limits_{n,c} {\hat Q_n^{\left( c \right)}\left( {{t_0}} \right)}\nonumber\\
&\ \ \ - 2\left[ {\sum\limits_n {\mathbb{E}\left\{ {\left. {\left. {{Z'}_n^*\left( {{\bf{\hat Q}}\left( {{t_0}} \right)} \right)} \right|_{{t_0}}^{{t_0} + t - 1}} \right|{\bf{\hat Q}}\left( {{t_0}} \right)} \right\}}  - \frac{1}{t}\sum\limits_{n,c} {\hat Q_n^{\left( c \right)}\left( {{t_0}} \right)\sum\limits_{\tau  = {t_0}}^{{t_0} + t - 1} {\mathbb{E}\left\{ {a_n^{\left( c \right)}\left( \tau  \right)} \right\}} } } \right]\nonumber\\
&\buildrel\Delta\over = B\left(t\right) + 2\left[ {{C_1}\left( t \right) + N{C_2}\left( t \right)} \right] + \frac{\varepsilon}{2} \sum\limits_{n,c} {\hat Q_n^{\left( c \right)}\left( {{t_0}} \right)} -2\Upsilon \left( {{\bf{\hat Q}}\left( {{t_0}} \right)} \right),
\label{eq_lyapunov_drift_bound1}
\end{align}
where $\Upsilon \left( {{\bf{\hat Q}}\left( {{t_0}} \right)} \right)$ in (\ref{eq_lyapunov_drift_bound1}) is as follows:
\begin{align}
\Upsilon\left( {{\bf{\hat Q}}\left( {{t_0}} \right)} \right)&= \sum\limits_n {\mathbb{E}\left\{ {\left. {\left. {{Z'}_n^*\left( {{\bf{\hat Q}}\left( {{t_0}} \right)} \right)} \right|_{{t_0}}^{{t_0} + t - 1}} \right|{\bf{\hat Q}}\left( {{t_0}} \right)} \right\}}  - \frac{1}{t}\sum\limits_{n,c} {\hat Q_n^{\left( c \right)}\left( {{t_0}} \right)\sum\limits_{\tau  = {t_0}}^{{t_0} + t - 1} {\mathbb{E}\left\{ {a_n^{\left( c \right)}\left( \tau  \right)} \right\}} }\nonumber\\
&=\sum\limits_{n,c} {\hat Q_n^{\left( c \right)}\left( {{t_0}} \right)\frac{1}{t}\sum\limits_{\tau  = {t_0}}^{{t_0} + t - 1} {\mathbb{E}\left\{ {\sum\limits_{k \in {{\cal K}_n}} {{b'}_{nk}^{*\left( c \right)}\left( \tau  \right)}  - \sum\limits_{k \in {{\cal K}_n}} {{b'}_{kn}^{*\left( c \right)}\left( \tau  \right) - } a_n^{\left( c \right)}\left( \tau  \right)} \right\}} },
\label{eq_key_metric_star}
\end{align}
where the given backlog state observation ${{\bf{\hat Q}}\left( {{t_0}} \right)}$ in the expectation is dropped because, from timeslot $t_0$, ${Policy'}^*$ is the same as the stationary randomized policy $Policy^*$ starting from timeslot $0$ and makes all decisions independent of the backlog observations. According to (\ref{eq_bound_average_rate_term}) in appendix \ref{appendix: proof_sufficiency_corollary1}, there exists a positive integer $D^*$ such that, for $\forall t\ge D^*$, we have
\begin{equation}
\frac{1}{t}\sum\limits_{\tau  = {t_0}}^{{t_0} + t - 1} {\mathbb{E}\left\{ {\sum\limits_{k \in {{\cal K}_n}} {{b'}_{nk}^{*\left( c \right)}\left( \tau  \right)}  - \sum\limits_{k \in {{\cal K}_n}} {{b'}_{kn}^{*\left( c \right)}\left( \tau  \right) - } a_n^{\left( c \right)}\left( \tau  \right)} \right\}}  = \frac{1}{t}\sum\limits_{\tau  = 0}^{t - 1} {\mathbb{E}\left\{ {\sum\limits_{k \in {{\cal K}_n}} {b_{nk}^{*\left( c \right)}\left( \tau  \right)}  - \sum\limits_{k \in {{\cal K}_n}} {b_{kn}^{*\left( c \right)}\left( \tau  \right) - } a_n^{\left( c \right)}\left( \tau  \right)} \right\}}  \ge \frac{\varepsilon }{2}.
\label{eq_convergence_star2}
\end{equation}

Plug (\ref{eq_convergence_star2}) into right hand side of (\ref{eq_key_metric_star}), and it follows that, $\forall t\ge D^*$,
\begin{equation}
\Upsilon \left( {{\bf{\hat Q}}\left( {{t_0}} \right)} \right) \ge \frac{\varepsilon}{2} \sum\limits_{n,c} {\hat Q_n^{\left( c \right)}\left( {{t_0}} \right)}.
\label{eq_key_metric_star2}
\end{equation}
Then further plug (\ref{eq_key_metric_star2}) back into (\ref{eq_lyapunov_drift_bound1}), and let $t\ge {\rm{max}}\left\{\hat D_2, \tilde D_2, D^*\right\}$ to get
\begin{align}
\frac{1}{t}\sum\limits_{n,c} \mathbb{E} \left\{ {\left. {{{\left( {\hat Q_n^{\left( c \right)}\left( {{t_0} + t} \right)} \right)}^2} - {{\left( {\hat Q_n^{\left( c \right)}\left( {{t_0}} \right)} \right)}^2}} \right|{\bf{\hat Q}}\left( {{t_0}} \right)} \right\} & \le B\left(t\right) + C\left(t\right) - \frac{\varepsilon}{2} \sum\limits_{n,c} {\hat Q_n^{\left( c \right)}\left( {{t_0}} \right)}.
\label{eq_lyapunov_drift_bound2}
\end{align}
where $C\left(t\right)=2\left[C_1\left(t\right)+NC_2\left(t\right)\right]=4Nt\left(N+A_{\rm{max}}+1\right)$. Now we let $t=D = {\rm{max}}\left\{\hat D_2, \tilde D_2, D^*\right\}$, and (\ref{eq_lyapunov_drift_bound2}) becomes
\begin{equation}
\frac{1}{D}\sum\limits_{n,c} \mathbb{E} \left\{ {\left. {{{\left( {\hat Q_n^{\left( c \right)}\left( {{t_0} + D} \right)} \right)}^2} - {{\left( {\hat Q_n^{\left( c \right)}\left( {{t_0}} \right)} \right)}^2}} \right|{\bf{\hat Q}}\left( {{t_0}} \right)} \right\}\le B\left(D\right) + C\left(D\right) - \frac{\varepsilon}{2} \sum\limits_{n,c} {\hat Q_n^{\left( c \right)}\left( {{t_0}} \right)},
\label{eq_lyapunov_drift_bound3_0}
\end{equation}
where $B\left(D\right)=N^2D\left[1+\left(N+A_{\rm{max}}\right)^2\right]$; $C\left(D\right)=4ND\left(N+A_{\rm{max}}+1\right)$.

Now a proper upper bound for the $D$-timeslot average Lyapunov drift of $\hat{Policy}$ is formed by (\ref{eq_lyapunov_drift_bound3_0}), which satisfies the condition of Lemma \ref{lemma: strong_stability} in Appendix \ref{appendix: proof_sufficiency_corollary1}. Then it follows that, with the arbitrary input rate matrix $\left(\lambda_n^{\left(c\right)}\right)$ within $\Lambda_{\rm{RMIA}}$ such that $\exists \varepsilon >0$, $\left(\lambda_n^{\left(c\right)}+\varepsilon\right)\in \Lambda_{\rm{RMIA}}$, we get
\begin{equation}
\mathop {\lim \sup }\limits_{t \to \infty } \frac{1}{t}\sum\limits_{\tau  = 0}^{t - 1} {\sum\limits_{n,c} {\mathbb{E}\left\{ {\hat Q_n^{\left( c \right)}\left( \tau  \right)} \right\}} }  \le \frac{2\left[B\left(D\right)+C\left(D\right)\right]}{\varepsilon },
\label{eq_strong_stability_hat}
\end{equation}
which demonstrates the strong stability in the network. Thus, DIVBAR-RMIA is throughput optimal among all the policies with RMIA. Here the values of $B\left(D\right)$ and $C\left(D\right)$ are linear functions of $D$, which increases as $\varepsilon$ decreases, and therefore, the smaller $\varepsilon$ is, the larger the upper bound of the mean time average total backlog is.

%****************************************************************************
\section{Proof of Theorem \ref{thm: DIVBAR-RMIA_vs_DIVBAR-MIA}}
\label{appendix: DIVBAR-MIA_vs_DIVBAR-RMIA}
Similar to the proof of Theorem \ref{thm: DIVBAR_MIA_throughput_optimal}, the ultimate goal of the proof in this theorem is to show the strong stability under DIVBAR-MIA with any input rate matrix $\left(\lambda_n^{\left(c\right)}\right)$ within $\Lambda_{\rm{RMIA}}$.

Define $\hat{\hat{Policy}}$ as the DIVBAR-RMIA policy and any variable uniquely specified by $\hat{\hat{Policy}}$ is denoted in the form $\hat {\hat x}$. The proof strategy in this theorem is the same as that of Theorem \ref{thm: DIVBAR_MIA_throughput_optimal}. We aim to compare the upper bounds of the $d$-slot average Lyapunov drift under $\hat{\hat{Policy}}$ and ${Policy'}^*$, where ${Policy'}^*$ is the modified version of the stationary randomized policy $Policy^*$: \emph{being the same as $\hat{\hat{Policy}}$ in the interval from timeslot $0$ to timeslot $t_0-1$ and, from timeslot $t_0$ without using the pre-accumulated partial information, is the same as $Policy^*$ starting from timeslot $0$}. Note that the only difference between the ${Policy'}^*$ introduced here and the one introduced in Theorem \ref{thm: DIVBAR_MIA_throughput_optimal} lies in the interval $\left[0,t_0-1\right]$, where the one introduced here is the same as DIVBAR-MIA, while the one introduced in Theorem \ref{thm: DIVBAR_MIA_throughput_optimal} is the same as DIVBAR-RMIA.

In the proof, we also introduce the intermediate policy $\tilde{Policy}$ for the arbitrary timeslot $t_0$, whose definition is the same as the one introduced in Theorem \ref{thm: DIVBAR_MIA_throughput_optimal} except that the $\tilde{Policy}$ here is the same as DIVBAR-MIA instead of DIVVAR-RMIA in the interval $\left[0,t_0-1\right]$, and the decisions made during the time starting from timeslot $t_0$ under $\tilde{Policy}$ here is based on the backlog observations ${\bf{\hat{\hat{Q}}}}\left(t_0\right)$.

The whole proof of this theorem is also divided into four steps:
\begin{enumerate}[1.]
\item Transform the comparison on the upper bounds of Lyapunov drift under $\hat{\hat {Policy}}$ and ${Policy'}^*$ to the comparison on key backpressure metrics under the two policies, which is shown in Subsection \ref{subsec_DIVBAR-RMIA_vs_DIVBAR-MIA_sub1}.
\item Compare the key metrics under $\hat{\hat{Policy}}$ and $\tilde{Policy}$, which is shown in Subsection \ref{subsec_DIVBAR-RMIA_vs_DIVBAR-MIA_sub2}.
\item Compare the key metrics under $\tilde{Policy}$ and ${Policy'}^*$, which is shown in Subsection \ref{subsec_DIVBAR-RMIA_vs_DIVBAR-MIA_sub3}.
\item Combine the results in 2 and 3 and get the conclusion of the strong stability, which is shown in Subsection \ref{subsec_DIVBAR-RMIA_vs_DIVBAR-MIA_sub4}.
\end{enumerate}

\subsection{Transforming the comparison on the upper bounds of Lyapunov drift to the comparison on the key backpressure metric}
\label{subsec_DIVBAR-RMIA_vs_DIVBAR-MIA_sub1}
Similar to the proof of Theorem \ref{thm: DIVBAR_MIA_throughput_optimal} in Appendix \ref{subsec_througput_optimal_sub1}, we start from the queueing dynamics and derive the upper bound of the $t$-slot average Lyapunov drift under $\hat{\hat{Policy}}$ as follows:
\begin{align}
&\ \ \ \ \frac{1}{t}\sum\limits_{n,c} \mathbb{E} \left\{ {\left. {{{\left( {\hat {\hat Q}_n^{\left( c \right)}\left( {{t_0} + t} \right)} \right)}^2} - {{\left( {\hat {\hat Q}_n^{\left( c \right)}\left( {{t_0}} \right)} \right)}^2}} \right|{\bf{\hat {\hat Q}}}\left( {{t_0}} \right)} \right\}\nonumber\\
&\le B\left( t \right) + \frac{2}{t}\sum\limits_{n,c} {\hat {\hat Q}_n^{\left( c \right)}\left( {{t_0}} \right)\sum\limits_{\tau  = {t_0}}^{{t_0} + t - 1} {\mathbb{E}\left\{ {a_n^{\left( c \right)}\left( \tau  \right)} \right\}} } - 2\sum\limits_n {\mathbb{E}\left\{ {\left. {\left. {{{\hat {\hat Z}}_n}\left( {{\bf{\hat {\hat Q}}}\left( {{t_0}} \right)} \right)} \right|_{{t_0}}^{{t_0} + t - 1}} \right|{\bf{\hat {\hat Q}}}\left( {{t_0}} \right)} \right\}},
\label{eq_upper_bound_lyapunov_drift_hathat}
\end{align}
where $B\left(t\right)=N^2t\left[1+\left(N+A_{\rm{max}}\right)^2\right]$; under an arbitrary policy, the metric $\left. {{{ { Z}}_n}\left( {{\bf{\hat {\hat Q}}}\left( {{t_0}} \right)} \right)} \right|_{{t_0}}^{{t_0} + t - 1}$ is as follows:
\begin{equation}
\left. {{{ { Z}}_n}\left( {{\bf{\hat {\hat Q}}}\left( {{t_0}} \right)} \right)} \right|_{{t_0}}^{{t_0} + t - 1} = \frac{1}{t}\sum\limits_{\tau  = {t_0}}^{{t_0} + t - 1} {\sum\limits_c {\sum\limits_{k \in {{\cal K}_n}} { { b}_{nk}^{\left( c \right)}\left( \tau  \right)\left[ {\hat {\hat Q}_n^{\left( c \right)}\left( {{t_0}} \right) - \hat {\hat Q}_k^{\left( c \right)}\left( {{t_0}} \right)} \right]} } }.
\end{equation}

Then the proof reduces to the comparison of the key metric $\mathbb{E}\left\{ {\left. {\left. {{Z_n}\left( {{\bf{\hat {\hat Q}}}\left( {{t_0}} \right)} \right)} \right|_{{t_0}}^{{t_0} + t - 1}} \right|{\bf{\hat {\hat Q}}}\left( {{t_0}} \right)} \right\}$ under $\hat{\hat{Policy}}$ and ${Policy'}^*$. In order to facilitate the comparison on the key metrics, we introduce the intermediate policy $\tilde{Policy}$. Then the later proof naturally consists of two parts: first to compare $\tilde{Policy}$ and $Policy^*$; second to compare $\hat{\hat{Policy}}$ and $\tilde{Policy}$, which are shown as the following two subsections: Subsection \ref{subsec_DIVBAR-RMIA_vs_DIVBAR-MIA_sub2} and Subsection \ref{subsec_DIVBAR-RMIA_vs_DIVBAR-MIA_sub3}.

\subsection{Comparison on the key backpressure metric between $\tilde{Policy}$ and ${Policy'}^*$}
\label{subsec_DIVBAR-RMIA_vs_DIVBAR-MIA_sub2}
The comparison on the key metric $\mathbb{E}\left\{ {\left. {\left. {{Z_n}\left( {{\bf{\hat {\hat Q}}}\left( {{t_0}} \right)} \right)} \right|_{{t_0}}^{{t_0} + t - 1}} \right|{\bf{\hat {\hat Q}}}\left( {{t_0}} \right)} \right\}$ is the same as that in the proof of Theorem \ref{thm: DIVBAR_MIA_throughput_optimal} shown in Appendix \ref{subsec_througput_optimal_sub2}, except that the backlog coefficient here is ${{\bf{\hat {\hat Q}}}\left( {{t_0}} \right)}$. The final comparison results should be, for arbitrary starting timeslot $t_0$, there exists an integer $\tilde{D}_2>0$, such that, for $\forall t\ge \tilde{D}_2$,
\begin{equation}
\sum\limits_n {\mathbb{E}\left\{ {\left. {\left. {{{\tilde Z}_n}\left( {{\bf{\hat {\hat Q}}}\left( {{t_0}} \right)} \right)} \right|_{{t_0}}^{{t_0} + t - 1}} \right|{\bf{\hat {\hat Q}}}\left( {{t_0}} \right)} \right\}}  \ge \sum\limits_n {\mathbb{E}\left\{ {\left. {\left. {{Z'}_n^*\left( {{\bf{\hat {\hat Q}}}\left( {{t_0}} \right)} \right)} \right|_{{t_0}}^{{t_0} + t - 1}} \right|{\bf{\hat {\hat Q}}}\left( {{t_0}} \right)} \right\}}  - \frac{\varepsilon }{8}\sum\limits_{n,c} {\hat {\hat Q}_n^{\left( c \right)}\left( {{t_0}} \right)}.
\label{eq_comparison_result_tilde_vs_star}
\end{equation}

\subsection{Comparison on the key backpressure metric between $\hat{\hat{Policy}}$ and $\tilde{Policy}$}
\label{subsec_DIVBAR-RMIA_vs_DIVBAR-MIA_sub3}
A direct comparison between $\hat{\hat{Policy}}$ and $\tilde{Policy}$ is difficult to analyze for two reasons: firstly, the policies have asynchronized epochs during the interval $\left[t_0,t_0+t-1\right]$; secondly, the two policies have different transmission schemes, i.e, $\hat{\hat{Policy}}$ does not clear partial information at the end of each epoch, while $\tilde{Policy}$ does. To respectively deal with the two difficulties, we introduce two intermediate policies: $\tilde{\tilde{Policy}}$ and $\hat{Policy'}$.

The intermediate policy $\tilde{\tilde{Policy}}$ with RMIA is also non-causal and is the same as the one introduced in the proof of Theorem \ref{thm: DIVBAR_MIA_throughput_optimal} except that the $\tilde{\tilde{Policy}}$ here is the same as DIVBAR-MIA instead of DIVBAR-RMIA in the interval $\left[0,u_{n,1}-1\right]$, where $u_{n,1}$ is the starting timeslot of the epoch for node $n$ that includes timeslot $t_0$ under $\hat{\hat{Policy}}$. As a summary, $\tilde{\tilde{Policy}}$ chooses the same commodity $\tilde{c}$ as $\tilde{Policy}$ to transmit, while it has exactly coherent epochs as $\hat{\hat{Policy}}$. The intermediate policy $\hat{Policy'}$ is used to compare the metric values under $\hat{\hat{Policy}}$ and $\tilde{\tilde{Policy}}$ over a single epoch for each node $n$ with starting timeslot $u_{n,i}$. To be specific, \emph{for each node $n$, $\hat{Policy'}$ is the same as DIVBAR-MIA from timeslot $0$ to timeslot $u_{n,i}-1$, while starting from timeslot $u_{n,i}$ without using the pre-accumulated partial information, it is the same as DIVBAR-RMIA.} Note that the only difference between $\hat{Policy'}$ and $\hat{Policy}$ over epoch $i$ is that they make decisions based on different backlog states: ${{\bf{\hat {\hat Q}}}\left( {{u_{n,i}}} \right)}$ and ${{\bf{{\hat Q}}}\left( {{u_{n,i}}} \right)}$, respectively.

With similar logic as Theorem \ref{thm: DIVBAR_MIA_throughput_optimal}, the later proof in this subsection consists of two steps: 1) compare $\hat{\hat{Policy}}$ with $\tilde{\tilde{Policy}}$, in which $\hat{Policy'}$ is used as the intermediate policy in the comparison on the metric a single epoch; 2) compare $\tilde{\tilde{Policy}}$ and $\tilde{Policy}$.

\subsubsection{\textbf{Comparison between $\hat{\hat{Policy}}$ and $\tilde{\tilde{Policy}}$}}
Since $\hat{\hat{Policy}}$ and $\tilde{\tilde{Policy}}$ have coherent epochs with $\hat{Policy}$, the variables $M_n\left(t_0,t\right)$ and  $\left\{u_{n,i}:1\le i \le M_n\left(t_0,t\right)\right\}$ used in the proof of Theorem \ref{thm: DIVBAR_MIA_throughput_optimal} are still valid for implementing $\hat{\hat{Policy}}$  and $\tilde{\tilde{Policy}}$. Similar to the proof of Lemma \ref{lemma: hat_vs_tildetilde} shown in Appendix \ref{subsec_hat_vs_tildetilde_1} except switching the backlog coefficients to ${{\bf{\hat {\hat Q}}}\left( {{u_{n,i}}} \right)}$ and ${{\bf{{\hat {\hat Q}}}}\left( {{t_0}} \right)}$, we get, for $\forall t\ge \max\{\hat {\hat D}_1, T_{\max}\}\buildrel \Delta \over = \hat{\hat D}_2$,
\begin{equation}
\mathbb{E}\left\{ {\left. {\left. {{{\hat {\hat Z}}_n}\left( {{\bf{\hat Q}}\left( {{t_0}} \right)} \right)} \right|_{{t_0}}^{{t_0} + t - 1}} \right|{\bf{\hat {\hat Q}}}\left( {{t_0}} \right)} \right\} \ge \sum\limits_n {\mathbb{E}\left\{ {\left. {\frac{1}{t}\sum\limits_{i = 1}^{{M_n}\left( {{t_0},t} \right)} {{{\hat {\hat Z}}_n}\left( {i,{\bf{\hat {\hat Q}}}\left( {{u_{n,i}}} \right)} \right)} } \right|{\bf{\hat {\hat Q}}}\left( {{t_0}} \right)} \right\}}  - {C_1}\left( t \right) - \frac{\varepsilon }{8}\sum\limits_{n,c} {\hat {\hat Q}_n^{\left( c \right)}\left( {{t_0}} \right)},
\label{eq_key_metric_hathat3}
\end{equation}
where ${C_1}\left( t \right) = Nt\left( {N + {A_{\max }} + 1} \right)$, and
\begin{equation}
{{\hat {\hat Z}}_n}\left( {i,{\bf{\hat {\hat Q}}}\left( {{t_0}} \right)} \right) = \sum\limits_{\tau  = {u_{n,i}}}^{{u_{n,i + 1}} - 1} {\sum\limits_c {\sum\limits_{k \in {{\cal K}_n}} {\hat {\hat b}_{nk}^{\left( c \right)}\left( \tau  \right)\left[ {\hat {\hat Q}_n^{\left( c \right)}\left( {{t_0}} \right) - \hat {\hat Q}_k^{\left( c \right)}\left( {{t_0}} \right)} \right]} } } ,{\rm{\ }}1 \le i \le {M_n}\left( {{t_0},t} \right).
\end{equation}

Furthermore, we compare the key metric values for node $n$ over a single epoch: $\mathbb{E}\left\{ {\left. {{Z_n}\left( {i,{\bf{\hat {\hat Q}}}\left( {{u_{n,i}}} \right)} \right)} \right|{\bf{\hat {\hat Q}}}\left( {{t_0}} \right)} \right\}$ under $\hat{\hat{Policy}}$ and $\tilde{\tilde{Policy}}$.
Considering that $\hat{\hat{Policy}}$ and $\tilde{\tilde{Policy}}$ have synchronized epochs, the indicator function $1_n\left(i\right)$ defined in (\ref{eq_indicator_function_hat}) in Appendix \ref{subsec_hat_vs_tildetilde_2} can still be used, and correspondingly, we have
\begin{equation}
\mathbb{E}\left\{ {\left. {{{\hat {\hat Z}}_n}\left( {i,{\bf{\hat {\hat Q}}}\left( {{u_{n,i}}} \right)} \right)} \right|{\bf{\hat {\hat Q}}}\left( {{t_0}} \right),1 \le i \le {M_n}\left( {{t_0},t} \right)} \right\} = \mathbb{E}\left\{ {\left. {{{\hat {\hat Z}}_n}\left( {i,{\bf{\hat {\hat Q}}}\left( {{u_{n,i}}} \right)} \right)} \right|{\bf{\hat {\hat Q}}}\left( {{t_0}} \right),{1_n}\left( i \right) = 1} \right\}.
\end{equation}

As described before, for a particular epoch $i$ for node $n$, we introduce the intermediate policy $\hat{Policy'}$, which implements the same strategy as DIVBAR-RMIA over epoch $i$ based on the backlog state ${{\bf{\hat {\hat Q}}}\left( {{u_{n,i}}} \right)}$. Under either $\hat{\hat{Policy}}$ or $\hat{Policy'}$, node $n$ observes the backlog state ${{\bf{\hat {\hat Q}}}\left( {{u_{n,i}}} \right)}$ and follows the backpressure strategy to make routing decisions. However, under $\hat{\hat{Policy}}$, each receiver may also have some pre-accumulated partial information by the beginning of timeslot $u_{n,i}$ and can take advantage of it to decode the packet being transmitted, while under $\hat{Policy'}$, each receiver can not use the pre-accumulated partial information. Thus, by the ending timeslot of epoch $i$, the successful receiver set under $\hat{\hat{Policy}}$ should include the first successful receiver set under $\hat{Policy'}$, which leads to the intuition that $\hat{\hat{Policy}}$ should perform at least as well as $\hat{Policy'}$. Following this intuition, we summarize the comparison between $\hat{\hat{Policy}}$ and $\hat{Policy'}$ over a single epoch as the following lemma:

\begin{lemma}
\label{lemma: hathat_vs_hatprime}
The key metrics $\hat{\hat{Policy}}$ and $\hat{Policy'}$ over a single epoch have the following relationship: for $\forall t \ge \hat{\hat{D}}_2$,
\begin{equation}
\mathbb{E}\left\{ {\left. {{{\hat {\hat Z}}_n}\left( {i,{\bf{\hat {\hat Q}}}\left( {{u_{n,i}}} \right)} \right)} \right|{\bf{\hat {\hat Q}}}\left( {{t_0}} \right),{1_n}\left( i \right) = 1} \right\} \ge \mathbb{E}\left\{ {\left. {\hat Z{'_n}\left( {i,{\bf{\hat {\hat Q}}}\left( {{u_{n,i}}} \right)} \right)} \right|{\bf{\hat {\hat Q}}}\left( {{t_0}} \right),{1_n}\left( i \right) = 1} \right\}.
\label{eq_key_metric_comparison_hathat_vs_hatprime}
\end{equation}
\end{lemma}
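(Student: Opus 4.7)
The plan is to establish the inequality via a coupling argument on the underlying probability space, exploiting that $\hat{\hat{Policy}}$ and $\hat{Policy'}$ make identical transmission decisions and are driven by the same channel realizations within epoch $i$, so the only asymmetry is that $\hat{\hat{Policy}}$ decodes on a (weakly) larger successful receiver set at the epoch's terminal timeslot. First, I would fix an outcome $\omega \in \mathcal{S}$ and note that, since $\hat{\hat{Policy}}$ and $\hat{Policy'}$ coincide over $[0,u_{n,i}-1]$, the backlog state ${\bf{\hat{\hat Q}}}(u_{n,i})$, the neighbors' differential backlog coefficients, and the ranking sets $\hat {\hat {\cal R}}_{nk}^{{\rm{high}},(c)}(u_{n,i})$ and $\hat {\hat {\cal R}}_{nk}^{{\rm{low}},(c)}(u_{n,i})$ are identical under both policies. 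Because both policies evaluate the backpressure metric using the same stored values of $F_{R_{nk}}^{(m)}(H_0)$ (recall that $\hat{Policy'}$ by construction discards pre-accumulated information from timeslot $u_{n,i}$ on), they choose the same commodity $\tilde c_{n}(i) = \hat{\hat{c}}_n(i)$ and begin transmitting in the same timeslot $u_{n,i}$.

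Next, I would use the synchronization property guaranteed by the construction: epoch $i$ ends at the first timeslot $\tau$ in which at least one receiver $k \in {\cal K}_n$ has accumulated (purely within the current epoch) enough information $I_{nk,p_{n,i}}^{\rm{rmia}}(u_{n,i},\tau) \ge H_0$. This event depends only on the in-epoch channel realizations $R_{nk}(u_{n,i}),\ldots,R_{nk}(\tau)$, which are shared by both policies under the coupling, so $u_{n,i+1}$ is identical under $\hat{\hat{Policy}}$ and $\hat{Policy'}$. Let $\Omega_n^{\rm{rmia}}$ be the first successful receiver set at the ending timeslot $u_{n,i+1}-1$ under $\hat{Policy'}$, and let $\Omega_n^{\rm{mia}}$ be the set of receivers that have fully decoded the packet by that timeslot under $\hat{\hat{Policy}}$. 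Because each $k \in \Omega_n^{\rm{rmia}}$ satisfies $I_{nk,p_{n,i}}^{\rm{rmia}}(u_{n,i},u_{n,i+1}-1) \ge H_0$, it trivially also satisfies $I_{k,p_{n,i}}^{\rm{pre}}(u_{n,i}) + I_{nk,p_{n,i}}^{\rm{rmia}}(u_{n,i},u_{n,i+1}-1) \ge H_0$, hence $\Omega_n^{\rm{rmia}} \subseteq \Omega_n^{\rm{mia}}$.

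The forwarding rule (step \ref{step: choose_forwarding_node_MIA_main} of DIVBAR-MIA and step \ref{step: forwarding_packet} of DIVBAR-RMIA) assigns the forwarding responsibility to the receiver maximizing $\hat{\hat W}_{nk}^{(\hat{\hat c}_n(i))}(u_{n,i})$ over the successful receiver set, provided the maximum is positive, and assigns the responsibility nowhere (contributing $0$ to the single-epoch metric, since that component in the summation is $\max\{\cdot,0\}$-bounded) otherwise. Since $\Omega_n^{\rm{rmia}} \subseteq \Omega_n^{\rm{mia}}$, the maximum under $\hat{\hat{Policy}}$ is taken over a superset, yielding
\begin{equation}
\max_{k \in \Omega_n^{\rm{mia}}} \hat{\hat W}_{nk}^{(\hat{\hat c}_n(i))}(u_{n,i}) \;\ge\; \max_{k \in \Omega_n^{\rm{rmia}}} \hat{\hat W}_{nk}^{(\hat{\hat c}_n(i))}(u_{n,i}),
\end{equation}
with the convention that the maximum over an empty set is $0$. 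Translating back via the expression $Z_n(i,\cdot) = \sum_c \mu_n^{(c)}(i) \sum_{k \in {\cal K}_n} b_{nk}^{(c)}(u_{n,i+1}-1)[\hat{\hat Q}_n^{(c)} - \hat{\hat Q}_k^{(c)}]$ exactly as in the proof of Lemma \ref{lemma: characterize_metric_DIVBAR-RMIA}, this pathwise inequality gives $\hat{\hat Z}_n(i,{\bf{\hat{\hat Q}}}(u_{n,i})) \ge \hat{Z'}_n(i,{\bf{\hat{\hat Q}}}(u_{n,i}))$ almost surely on the coupling.

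Finally, I would take conditional expectations on both sides given ${\bf{\hat{\hat Q}}}(t_0)$ and the event $\{1_n(i)=1\}$ (i.e., epoch $i$ exists within the $t$-timeslot interval); since $1_n(i)$ is determined by $T_n(1),\ldots,T_n(i-1)$ and is thus independent of the in-epoch metric by the renewal structure used in Subsection \ref{subsec_DIVBAR-RMIA_vs_DIVBAR-MIA_sub3}, the inequality survives the conditioning and yields (\ref{eq_key_metric_comparison_hathat_vs_hatprime}). The only subtle point, which I expect to be the main obstacle, is verifying that the coupling is well defined on the probability space $({\cal S},{\cal F},\mathbb{P})$: one must carefully exhibit both policies as functions on a common sample space so that the channel realizations $\{R_{nk}(\tau)\}$ and the identical backlog observation together determine both sample paths, and verify that the set of receivers decoding under MIA but not under RMIA is measurable so that the ``superset'' comparison can be performed pathwise before taking expectations.
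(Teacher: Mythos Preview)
Your proposal is correct and follows essentially the same approach as the paper's proof: both argue that $\hat{\hat{Policy}}$ and $\hat{Policy'}$ make identical commodity decisions at $u_{n,i}$ (same backlog observation, same backpressure rule), that the epoch boundaries coincide, and that the MIA successful receiver set at the epoch's terminal slot is a superset of the RMIA one, so the forwarding max-weight is at least as large. The only cosmetic difference is that you phrase the comparison as a pathwise coupling inequality before taking expectations, whereas the paper works directly with the indicator variables $\hat{\hat X}_{nk}^{{\rm MIA},(c)}(i) \ge X_{nk}^{{\cal P},{\rm RMIA}}(i)$ inside conditional expectations and then integrates out ${\bf \hat{\hat Q}}(u_{n,i})$; the underlying argument is identical.
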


The detailed proof of Lemma \ref{lemma: hathat_vs_hatprime} is shown in Appendix \ref{appendix: hathat_vs_hatprime}.

The next step is to compare $\tilde{\tilde{Policy}}$ and $\hat{Policy'}$ over a single epoch, and the mathematical manipulations are similar to the proof of Lemma \ref{lemma: hat_vs_tildetilde} shown in Appendix \ref{subsec_hat_vs_tildetilde_2}, except that the backlog coefficients are switched to ${{\bf{\hat {\hat Q}}}\left( {{u_{n,i}}} \right)}$ and ${{\bf{{\hat {\hat Q}}}}\left( {{t_0}} \right)}$. The comparison result is shown as follows: for $\forall t \ge \hat{\hat{D}}_2$,
\begin{equation}
\mathbb{E}\left\{ {\left. {\hat Z{'_n}\left( {i,{\bf{\hat {\hat Q}}}\left( {{u_{n,i}}} \right)} \right)} \right|{\bf{\hat {\hat Q}}}\left( {{t_0}} \right),{1_n}\left( i \right) = 1} \right\} \ge \mathbb{E}\left\{ {\left. {{{\tilde {\tilde Z}}_n}\left( {i,{\bf{\hat {\hat Q}}}\left( {{t_0}} \right)} \right)} \right|{\bf{\hat {\hat Q}}}\left( {{t_0}} \right),{1_n}\left( i \right) = 1} \right\}-C_2\left(t\right).
\label{eq_key_metric_comparison_hatprime_vs_tildetilde3}
\end{equation}
where $C_2\left(t\right)=t\left(N+A_{\max}+1\right)$.

Combining the results shown as (\ref{eq_key_metric_comparison_hathat_vs_hatprime}) and (\ref{eq_key_metric_comparison_hatprime_vs_tildetilde3}), we finally get the comparison result between $\hat{\hat{Policy}}$ and $\tilde{\tilde{Policy}}$ over a single epoch shown as follows: for $\forall t \ge \hat{\hat{D}}_2$,
\begin{equation}
\mathbb{E}\left\{ {\left. {{{\hat {\hat Z}}_n}\left( {i,{\bf{\hat {\hat Q}}}\left( {{u_{n,i}}} \right)} \right)} \right|{\bf{\hat {\hat Q}}}\left( {{t_0}} \right),{1_n}\left( i \right) = 1} \right\} \ge \mathbb{E}\left\{ {\left. {{{\tilde {\tilde Z}}_n}\left( {i,{\bf{\hat {\hat Q}}}\left( {{t_0}} \right)} \right)} \right|{\bf{\hat {\hat Q}}}\left( {{t_0}} \right),{1_n}\left( i \right) = 1} \right\} - {C_2}\left( t \right).
\label{eq_key_metric_comparison_hathat_vs_tildetidle}
\end{equation}

Based on the comparison result shown as (\ref{eq_key_metric_comparison_hathat_vs_tildetidle}) over a single epoch between $\hat{\hat{Policy}}$ and $\tilde{\tilde{Policy}}$, we extend the comparison to $M_n\left(t_0,t\right)$ epochs. Following the proof of Lemma \ref{lemma: hat_vs_tildetilde} shown in Appendix \ref{subsec_hat_vs_tildetilde_3} except changing backlog coefficients to ${\bf{\hat{\hat Q}}}\left(u_{n,i}\right)$ and ${\bf{\hat{\hat Q}}}\left(t_0\right)$, we can get the comparison result as follows: for $\forall t \ge \hat{\hat D}_2$,
\begin{equation}
\sum\limits_n {\mathbb{E}\left\{ {\left. {\left. {{{\hat {\hat Z}}_n}\left( {{\bf{\hat {\hat Q}}}\left( {{t_0}} \right)} \right)} \right|_{{t_0}}^{{t_0} + t - 1}} \right|{\bf{\hat {\hat Q}}}\left( {{t_0}} \right)} \right\}} \geq \frac{1}{t}\sum\limits_n {\tilde {\tilde z}\left( {{\bf{\hat {\hat Q}}}\left( {{t_0}} \right)} \right)\mathbb{E}\left\{ {{M_n}\left( {{t_0},t} \right)} \right\}}  - \left[ {N{C_2}\left( t \right) + {C_1}\left( t \right) + \frac{\varepsilon }{8}\sum\limits_{n,c} {\hat {\hat Q}_n^{\left( c \right)}\left( {{t_0}} \right)} } \right].
\label{eq_key_metric_comparison_hathat_vs_tildetilde6}
\end{equation}

\subsubsection{\textbf{Comparison between $\tilde{\tilde{Policy}}$ and $\tilde{Policy}$}}
The comparison of $\tilde{\tilde{Policy}}$ and $\tilde{Policy}$ is the same as the proof in Theorem \ref{thm: DIVBAR_MIA_throughput_optimal} shown as the step 2) of Appendix \ref{subsec_througput_optimal_sub3}, except that the backlog state observations coefficients are switched to ${{\bf{\hat {\hat Q}}}\left( {{t_0}} \right)}$. Then the comparison result should be, for $\forall n \in \cal N $,
\begin{equation}
{{\tilde {\tilde z}}_n}\left( {{\bf{\hat {\hat Q}}}\left( {{t_0}} \right)} \right)\mathbb{E}\left\{ {{M_n}\left( {{t_0},t} \right)} \right\} \ge {{\tilde z}_n}\left( {{\bf{\hat {\hat Q}}}\left( {{t_0}} \right)} \right)\mathbb{E}\left\{ {{{\tilde M}_n}\left( {{t_0},t} \right)} \right\} = \mathbb{E}\left\{ {\left. {\sum\limits_{i = 1}^{{{\tilde M}_n}\left( {{t_0},t} \right)} {{{\tilde Z}_n}\left( {i,{\bf{\hat {\hat Q}}}\left( {{t_0}} \right)} \right)} } \right|{\bf{\hat {\hat Q}}}\left( {{t_0}} \right)} \right\}.
\label{eq_key_metric_comparison_tildetilde_vs_tilde}
\end{equation}

Combining the results of the above two steps shown as (\ref{eq_key_metric_comparison_tildetilde_vs_tilde}) and (\ref{eq_key_metric_comparison_hathat_vs_tildetilde6}), we get, $\forall t\ge \hat D_2$,
\begin{align}
&\ \ \ \ \sum\limits_n {\mathbb{E}\left\{ {\left. {\left. {{{\hat {\hat Z}}_n}\left( {{\bf{\hat {\hat Q}}}\left( {{t_0}} \right)} \right)} \right|_{{t_0}}^{{t_0} + t - 1}} \right|{\bf{\hat {\hat Q}}}\left( {{t_0}} \right)} \right\}} \nonumber\\
& \ge \sum\limits_n {\mathbb{E}\left\{ {\left. {\frac{1}{t}\sum\limits_{i = 1}^{{{\tilde M}_n}\left( {{t_0},t} \right)} {{{\tilde Z}_n}\left( {i,{\bf{\hat {\hat Q}}}\left( {{t_0}} \right)} \right)} } \right|{\bf{\hat {\hat Q}}}\left( {{t_0}} \right)} \right\}} - \left[ {N{C_2}\left( t \right) + {C_1}\left( t \right) + \frac{\varepsilon }{8}\sum\limits_{n,c} {\hat {\hat Q}_n^{\left( c \right)}\left( {{t_0}} \right)} } \right]\nonumber\\
&\ge \sum\limits_n {\mathbb{E}\mathbb{}\left\{ {\left. {\left. {{{\tilde Z}_n}\left( {{\bf{\hat {\hat Q}}}\left( {{t_0}} \right)} \right)} \right|_{{t_0}}^{{t_0} + t - 1}} \right|{\bf{\hat {\hat Q}}}\left( {{t_0}} \right)} \right\}}  - \left[ {N{C_2}\left( t \right) + {C_1}\left( t \right) + \frac{\varepsilon }{8}\sum\limits_{n,c} {\hat {\hat Q}_n^{\left( c \right)}\left( {{t_0}} \right)} } \right].
\label{eq_key_metric_comparison_hathat_tilde2_1}
\end{align}

\subsection{Strong stability achieved under $\hat{\hat{Policy}}$}
\label{subsec_DIVBAR-RMIA_vs_DIVBAR-MIA_sub4}
Combining (\ref{eq_comparison_result_tilde_vs_star}) in Appendix \ref{subsec_DIVBAR-RMIA_vs_DIVBAR-MIA_sub2} and (\ref{eq_key_metric_comparison_hathat_tilde2_1}) in Appendix \ref{subsec_DIVBAR-RMIA_vs_DIVBAR-MIA_sub3}: if we let $t\ge \max\left\{\hat {\hat D}_2,\tilde D_2\right\}$, we have
\begin{align}
&\ \ \ \ \sum\limits_n {\mathbb{E}\left\{ {\left. {\left. {{{\hat {\hat Z}}_n}\left( {{\bf{\hat {\hat Q}}}\left( {{t_0}} \right)} \right)} \right|_{{t_0}}^{{t_0} + t - 1}} \right|{\bf{\hat {\hat Q}}}\left( {{t_0}} \right)} \right\}}\nonumber\\
&\ge \sum\limits_n {\mathbb{E}\left\{ {\left. {\left. {{Z'}_n^*\left( {{\bf{\hat {\hat Q}}}\left( {{t_0}} \right)} \right)} \right|_{{t_0}}^{{t_0} + t - 1}} \right|{\bf{\hat {\hat Q}}}\left( {{t_0}} \right)} \right\}}  - \left[N{C_2}\left( t \right) + {C_1}\left( t \right) + \frac{\varepsilon }{4}\sum\limits_{n,c} {\hat {\hat Q}_n^{\left( c \right)}\left( {{t_0}}, \right)}\right].
\label{eq_key_metric_comparison_hathat_vs_star}
\end{align}

Going back to (\ref{eq_upper_bound_lyapunov_drift_hathat}) and plugging (\ref{eq_key_metric_comparison_hathat_vs_star}) into it, and following the steps similar to the proof of Theorem \ref{thm: DIVBAR_MIA_throughput_optimal} shown in Appendix \ref{subsec_througput_optimal_sub4}, we get
\begin{equation}
\frac{1}{t}\sum\limits_{n,c} \mathbb{E} \left\{ {\left. {{{\left( {\hat {\hat Q}_n^{\left( c \right)}\left( {{t_0} + t} \right)} \right)}^2} - {{\left( {\hat {\hat Q}_n^{\left( c \right)}\left( {{t_0}} \right)} \right)}^2}} \right|{\bf{\hat {\hat Q}}}\left( {{t_0}} \right)} \right\}\le B\left(t\right) + C\left(t\right) - \frac{\varepsilon}{2} \sum\limits_{n,c} {\hat {\hat Q}_n^{\left( c \right)}\left( {{t_0}} \right)},
\label{eq_lyapunov_drift_bound2.7}
\end{equation}
where  $C\left(t\right)=4Nt\left(N+A_{\rm{max}}+1\right)$; $t \ge {\rm{max}}\left\{\hat D_2, \tilde D_2, D^*\right\}\buildrel \Delta \over = D$; in retrospect, $D^*$ is defined for (\ref{eq_bound_average_rate_term}) in Appendix \ref{appendix: proof_sufficiency_corollary1}. Now we set $t = D$, and rewrite (\ref{eq_lyapunov_drift_bound2.7}) as
\begin{equation}
\frac{1}{D}\sum\limits_{n,c} \mathbb{E} \left\{ {\left. {{{\left( {\hat {\hat Q}_n^{\left( c \right)}\left( {{t_0} + D} \right)} \right)}^2} - {{\left( {\hat {\hat Q}_n^{\left( c \right)}\left( {{t_0}} \right)} \right)}^2}} \right|{\bf{\hat {\hat Q}}}\left( {{t_0}} \right)} \right\}\le B\left(D\right) + C\left(D\right) - \frac{\varepsilon}{2} \sum\limits_{n,c} {\hat {\hat Q}_n^{\left( c \right)}\left( {{t_0}} \right)}.
\label{eq_lyapunov_drift_bound3}
\end{equation}
According to Lemma \ref{lemma: strong_stability} in Appendix \ref{appendix: proof_sufficiency_corollary1}, we finally get the result of the strong stability:
\begin{equation}
\mathop {\lim \sup }\limits_{t \to \infty } \frac{1}{t}\sum\limits_{\tau  = 0}^{t - 1} {\sum\limits_{n,c} {\mathbb{E}\left\{ {\hat {\hat Q}_n^{\left( c \right)}\left( \tau  \right)} \right\}} }  \le \frac{2\left[B\left(D\right)+C\left(D\right)\right]}{\varepsilon }.
\label{eq_strong_stability_hathat}
\end{equation}

As is shown in (\ref{eq_strong_stability_hathat}), DIVBAR-MIA ($\hat{\hat{Policy}}$) can support any exogenous input rate within the RMIA network capacity region $\Lambda_{\rm{RMIA}}$, i.e., $\exists \varepsilon >0$ such that $\left(\lambda_n^{\left(c\right)}+\varepsilon \right)\in \Lambda$, the mean time average backlog of the whole network under DIVBAR-MIA is upper bounded by a constant ${{2\left[ {B\left( D \right) + C\left( D \right)} \right]} \mathord{\left/
 {\vphantom {{\left[ {B\left( D \right) + C\left( D \right)} \right]} \varepsilon }} \right.
 \kern-\nulldelimiterspace} \varepsilon }$, which indicates the strong stability.

%**********************************************************************************
\section{Proof of Lemma \ref{lemma: uniformly convergence}}
\label{appendix: uniformly_convergence}
For later use, we first define the \emph{extended epoch of commodity $c$ on link $\left(n,k\right)$} as the time interval, whose starting timeslot is the first timeslot after a forwarding timeslot of a commodity $c$ packet from node $n$ to node $k$, and whose ending timeslot is the next forwarding timeslot of a commodity $c$ packet from node $n$ to node $k$. Here timeslot $0$ is the starting timeslot of the first extended epoch of all the commodities on link $\left(n,k\right)$, though there is no forwarding timeslot before timeslot $0$. With this definition, we can guarantee that all the receiving nodes of a transmitting node $n$ start accumulating the partial information of a commodity $c$ packet from zero at the beginning of each extended epoch on link $\left(n,k\right)$. Additionally, $b_{nk}^{\left(c\right)}\left(\tau\right)=1$ only when timeslot $\tau$ is the ending timeslot of an extended epoch of commodity $c$ on link $\left(n,k\right)$.

For a commodity $c$, considering the fact that the arbitrary timeslot $t_0$ must be located between the forwarding timeslots of two successive extended epochs of this commodity, the main strategy of the later proof is to find two suitable time averaging sequences, which respectively upper bound and lower bound the target time average sequence $\frac{1}{t}\sum\nolimits_{\tau  = {t_0}}^{{t_0} + t - 1} {\mathbb{E}\left\{ {b_{nk}^{\left( c \right)}\left( \tau  \right)} \right\}}$ but converge to the same value with a convergence speed independent of $t_0$.

Suppose up to timeslot $t_0$, $M_0$ units of commodity $c$ have been forwarded from node $n$ to node $k$, where $M_0 \ge 0$ (here $M_0=0$ means that the timeslot $t_0$ is within the first extended epoch of commodity $c$ on link $\left(n,k\right)$, whose starting timeslot is $0$). Therefore, timeslot $t_0$ must be located between the two forwarding timeslots of the $M_0$th and $\left(M_0+1\right)$th extended epochs of commodity $c$. To be specific, if ${\overline t}_{nk,i-1}^{\left(c\right)}$ represents the starting timeslot of the $i$th extended epoch of commodity $c$ on link $\left(n,k\right)$, and correspondingly, ${\overline t}_{nk,i}^{\left(c\right)}-1$ is the forwarding timeslot (ending timeslot) of $i$th extended epoch of commodity $c$ on link $\left(n,k\right)$, then we have
\begin{equation}
{\overline t}_{nk,M_0}^{\left(c\right)}-1\le t_0 < {\overline t}_{nk,M_0+1}^{\left(c\right)}-1,\ M_0\ge 1,
\label{eq_t_zero_location}
\end{equation}
and if $t_0$ is in the first extended epoch, then $0\le t_0 \le \overline t_{nk,1}^{\left(c\right)}$.

\begin{figure}
        \centering
        \subfigure[For lower bounding $\frac{1}{t}\sum\nolimits_{\tau  = 0}^{t - 1} {\mathbb{E}\left\{ {b_{nk}^{\left( c \right)}\left( \tau  \right)} \right\}}$: if ${t_0} \ge {\overline t}_{nk,M_0}^{\left(c\right)} $]{
        \includegraphics[width = 12cm]{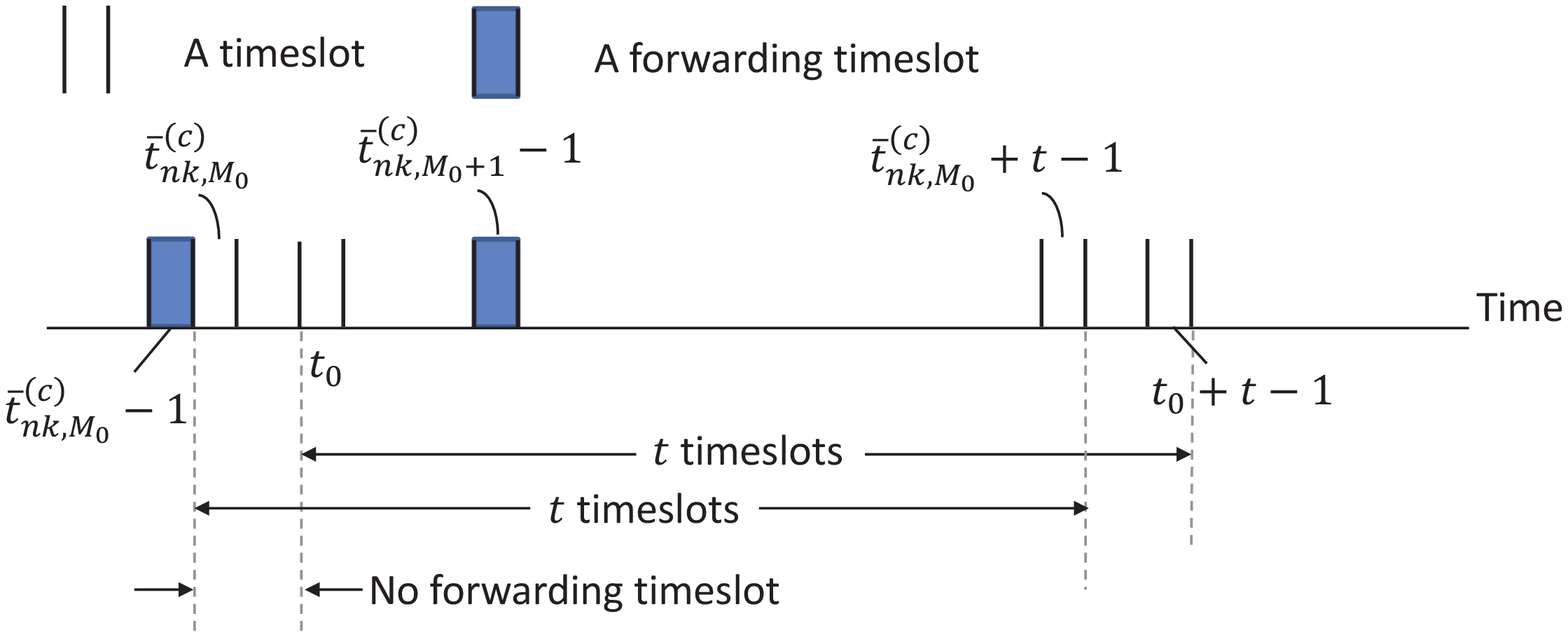}
        \label{fig_uniformly_convergence_a}
        }
        \subfigure[For lower bounding $\frac{1}{t}\sum\nolimits_{\tau  = 0}^{t - 1} {\mathbb{E}\left\{ {b_{nk}^{\left( c \right)}\left( \tau  \right)} \right\}}$: if ${t_0} = {\overline t}_{nk,M_0}^{\left(c\right)} - 1$]{
        \includegraphics[width = 12cm]{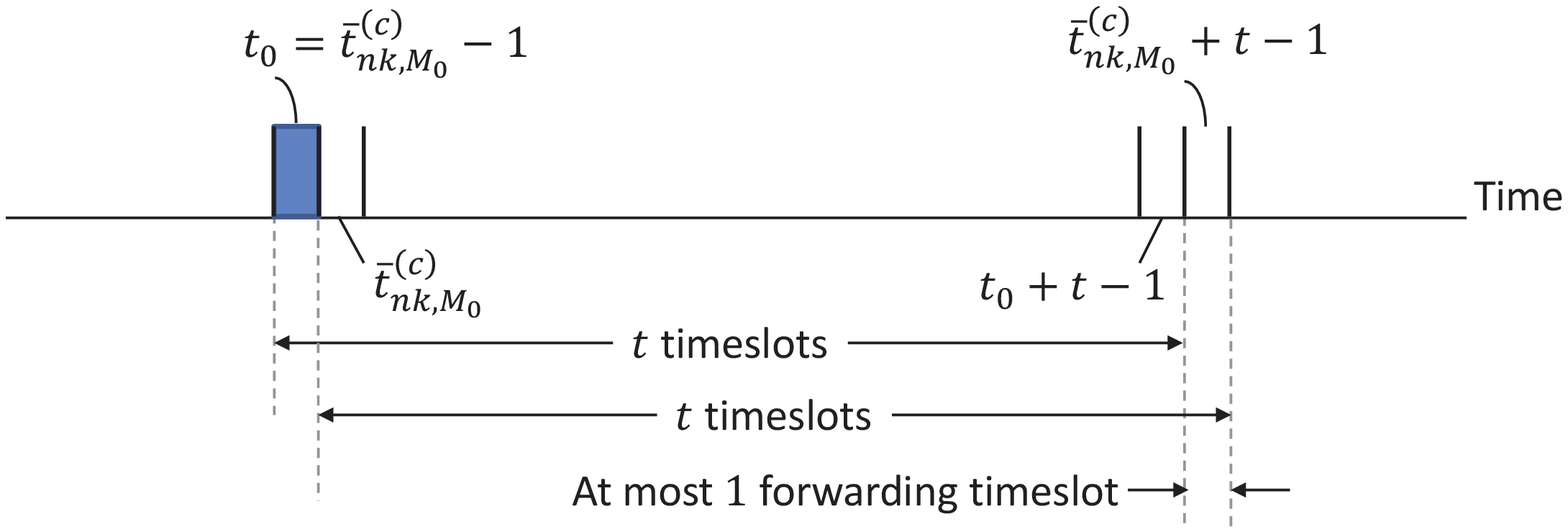}
        \label{fig_uniformly_convergence_b}
        }
        \subfigure[For upper bounding $\frac{1}{t}\sum\nolimits_{\tau  = 0}^{t - 1} {\mathbb{E}\left\{ {b_{nk}^{\left( c \right)}\left( \tau  \right)} \right\}}$]{
        \includegraphics[width = 12cm]{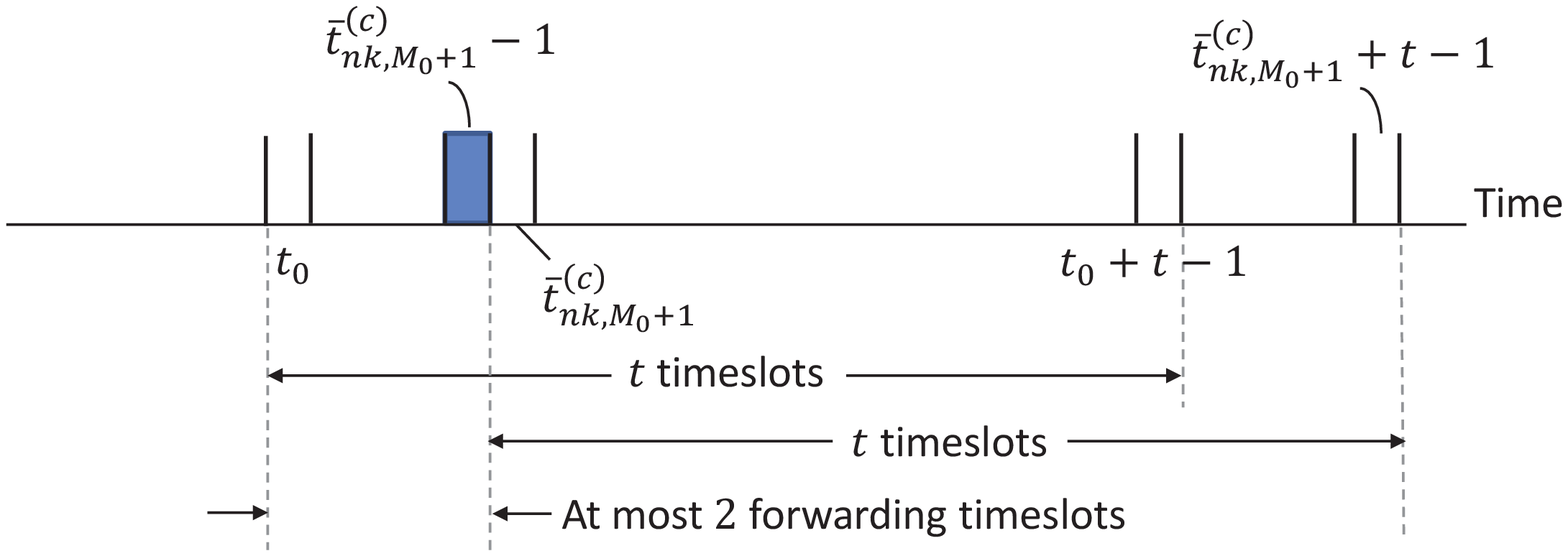}
        \label{fig_uniformly_convergence_c}
        }
        \caption{The timeslot axis illustrating the intervals relationships when lower bounding and upper bounding $\frac{1}{t}\sum\nolimits_{\tau  = 0}^{t - 1} {\mathbb{E}\left\{ {b_{nk}^{\left( c \right)}\left( \tau  \right)} \right\}}$}
\end{figure}

Furthermore, because of the renewal operation implemented at the end of each extended epoch, and each node stationarily uses the fixed probabilities to choose commodities to transmit and forward, any future transmitting and forwarding events respective to the beginning of timeslot ${\overline t}_{nk,i}^{\left(c\right)}$ has the identical distributions as those respective to the beginning of timeslot $0$. Because of this fact, $\frac{1}{t}\sum\nolimits_{\tau  = \overline t_{nk,i}^{\left( c \right)}}^{\overline t_{nk,i}^{\left( c \right)} + t - 1} {b_{nk}^{\left( c \right)}\left( \tau  \right)} $ must be identically distributed as $\frac{1}{t}\sum\nolimits_{\tau  = 0}^{t - 1} {b_{nk}^{\left( c \right)}\left( \tau  \right)}$, which can be expressed in the following form:
\begin{equation}
\frac{1}{t}\sum\limits_{\tau  = {0}}^{t - 1} {b_{nk}^{\left( c \right)}\left( \tau  \right)}  \sim \frac{1}{t}\sum\limits_{\tau  = \overline t_{nk,{i}}^{\left( c \right)}}^{\overline t_{nk,{i}}^{\left( c \right)} + t - 1} {b_{nk}^{\left( c \right)}\left( \tau  \right)},\ {\rm{for\ }}i\ge0,
\label{eq_uniform_convergence_identical_distribution}
\end{equation}
where $\sim$ means being identically distributed; when $i=0$, $t_{nk,0}^{\left(c\right)}=0$ and the two sides of (\ref{eq_uniform_convergence_identical_distribution}) are equal. The lower and upper bounding sequences shown in the later proof are closely related with the property shown in (\ref{eq_uniform_convergence_identical_distribution}), and are respectively described in Subsection \ref{subsec_uniform_convergence_lower_bound} and Subsection \ref{subsec_uniform_convergence_upper_bound}.

\subsection{Lower bounding $\frac{1}{t}\sum\nolimits_{\tau  = {t_0}}^{{t_0} + t - 1} {\mathbb{E}\left\{ {b_{nk}^{\left( c \right)}\left( \tau  \right)} \right\}}$}
\label{subsec_uniform_convergence_lower_bound}
We compare the values of the two summation terms: $\sum\nolimits_{\tau  = {t_0}}^{{t_0} + t - 1} {b_{nk}^{\left( c \right)}\left( \tau  \right)} $ and $\sum\nolimits_{\tau  = {\overline t}_{nk,M_0}^{\left( c \right)}}^{{\overline t}_{nk,M_0}^{\left( c \right)} + t - 1} {b_{nk}^{\left( c \right)}\left( \tau  \right)}$. The comparison should be discussed in two cases: $t_0 \ge {\overline t}_{nk,M_0}^{\left( c \right)}$ and $t_0={\overline t}_{nk,M_0}^{\left( c \right)}-1$, which are respectively shown in Fig. \ref{fig_uniformly_convergence_a} and Fig. \ref{fig_uniformly_convergence_b}.

\subsubsection{\textbf{Case} $t_0 \ge {\overline t}_{nk,M_0}^{\left( c \right)}$}
Based on the location of $t_0$ shown in Fig. \ref{fig_uniformly_convergence_a}, if $t_0 \ge {\overline t}_{nk,M_0}^{\left( c \right)}$, $\sum\nolimits_{\tau  = {t_0}}^{{t_0} + t - 1} {b_{nk}^{\left( c \right)}\left( \tau  \right)}$ can be lower bounded in the following form:
\begin{align}
\sum\limits_{\tau  = {t_0}}^{{t_0} + t - 1} {b_{nk}^{\left( c \right)}\left( \tau  \right)}  &= \sum\limits_{\tau  = \overline t_{nk,{M_0}}^{\left( c \right)}}^{\overline t_{nk,{M_0}}^{\left( c \right)} + t - 1} {b_{nk}^{\left( c \right)}\left( \tau  \right)}  - \sum\limits_{\tau  = \overline t_{nk,{M_0}}^{\left( c \right)}}^{{t_0} - 1} {b_{nk}^{\left( c \right)}\left( \tau  \right)}  + \sum\limits_{\tau  = \overline t_{nk,{M_0}}^{\left( c \right)} + t}^{{t_0} + t - 1} {b_{nk}^{\left( c \right)}\left( \tau  \right)}\nonumber\\
&\ge \sum\limits_{\tau  = \overline t_{nk,{M_0}}^{\left( c \right)}}^{\overline t_{nk,{M_0}}^{\left( c \right)} + t - 1} {b_{nk}^{\left( c \right)}\left( \tau  \right)} -0 + 0 = \sum\limits_{\tau  = \overline t_{nk,{M_0}}^{\left( c \right)}}^{\overline t_{nk,{M_0}}^{\left( c \right)} + t - 1} {b_{nk}^{\left( c \right)}\left( \tau  \right)},
\label{eq_uniform_convergence_lower_bound_1_4}
\end{align}
where neither of the summation terms $\sum\nolimits_{\tau  = \overline t_{nk,{M_0}}^{\left( c \right)}}^{{t_0} - 1} {b_{nk}^{\left( c \right)}\left( \tau  \right)}$ and $\sum\nolimits_{\tau  = \overline t_{nk,{M_0}}^{\left( c \right)} + t}^{{t_0} + t - 1} {b_{nk}^{\left( c \right)}\left( \tau  \right)} $ exists, if ${t_0} = \overline t_{nk,{M_0}}^{\left( c \right)}$. From now on, for any summation term $\sum\nolimits_{\tau  = x}^y {f\left(\tau\right)}$, if $y<x$, the summation value is zero.

\subsubsection{\textbf{Case} $t_0 = {\overline t}_{nk,M_0}^{\left( c \right)}-1$}
Based on the location of $t_0$ shown in Fig. \ref{fig_uniformly_convergence_b}, if $t_0 = {\overline t}_{nk,M_0}^{\left( c \right)}-1$, $\sum\nolimits_{\tau  = {t_0}}^{{t_0} + t - 1} {b_{nk}^{\left( c \right)}\left( \tau  \right)}$ can be written as follows:
\begin{align}
\sum\limits_{\tau  = {t_0}}^{{t_0} + t - 1} {b_{nk}^{\left( c \right)}\left( \tau  \right)}  &= \sum\limits_{\tau  = \overline t_{nk,{M_0}}^{\left( c \right)}}^{\overline t_{nk,{M_0}}^{\left( c \right)} + t - 1} {b_{nk}^{\left( c \right)}\left( \tau  \right)}  + b_{nk}^{\left( c \right)}\left( {\overline t_{nk,{M_0}}^{\left( c \right)} - 1} \right) - b_{nk}^{\left( c \right)}\left( {\overline t_{nk,{M_0}}^{\left( c \right)} + t - 1} \right)\nonumber\\
&\ge\sum\limits_{\tau  = \overline t_{nk,{M_0}}^{\left( c \right)}}^{\overline t_{nk,{M_0}}^{\left( c \right)} + t - 1} {b_{nk}^{\left( c \right)}\left( \tau  \right)}  + 1 - 1 = \sum\limits_{\tau  = \overline t_{nk,{M_0}}^{\left( c \right)}}^{\overline t_{nk,{M_0}}^{\left( c \right)} + t - 1} {b_{nk}^{\left( c \right)}\left( \tau  \right)}.
\label{eq_uniform_convergence_lower_bound_2_2}
\end{align}\\

Apart from the two cases 1) and 2), on the other hand, when $t_0=0$, the target time average sequence reduces to $\frac{1}{t}\sum\nolimits_{\tau  = 0}^{t - 1} {b_{nk}^{\left( c \right)}\left( \tau  \right)} $, which approaches to $b_{nk}^{\left(c\right)}$ with probability 1 as $t\rightarrow \infty$. According to the dominated convergence property (see Ref. \cite{domenate_convergence_Wong_Hajek_1985}, Chapter 1), for $\forall \varepsilon >0$, there exists an integer $D_{nk,0}^{\left(c\right)}$, such that $\forall t\ge D_{nk,0}^{\left(c\right)}$, we have
\begin{equation}
\left| {\frac{1}{t}\sum\limits_{\tau  = 0}^{t - 1} {\mathbb{E}\left\{ {b_{nk}^{\left( c \right)}\left( \tau  \right)} \right\}}  - b_{nk}^{\left( c \right)}} \right| \le \frac{\varepsilon}{2} ,
\label{eq_convergence_average_from_zero}
\end{equation}

Going back to case 1) and 2), both results shown as (\ref{eq_uniform_convergence_lower_bound_1_4}) and (\ref{eq_uniform_convergence_lower_bound_2_2}) demonstrate the same comparison relationship. Then combining (\ref{eq_uniform_convergence_lower_bound_1_4}) and (\ref{eq_uniform_convergence_lower_bound_2_2}) with (\ref{eq_uniform_convergence_identical_distribution}) and (\ref{eq_convergence_average_from_zero}), it follows that, for $\forall t\ge D_{nk,0}^{\left(c\right)}\buildrel \Delta \over = D_{nk,\rm{lower}}^{\left(c\right)}$,
\begin{equation}
\frac{1}{t}\sum\limits_{\tau  = {t_0}}^{{t_0} + t - 1} {\mathbb{E}\left\{ {b_{nk}^{\left( c \right)}\left( \tau  \right)} \right\}}\ge \frac{1}{t}\mathbb{E}\left\{ {\sum\limits_{\tau  = \overline t_{nk,{M_0}}^{\left( c \right)}}^{\overline t_{nk,{M_0}}^{\left( c \right)} + t - 1} {b_{nk}^{\left( c \right)}\left( \tau  \right)} } \right\} =\frac{1}{t}\sum\limits_{\tau  = 0}^{t - 1} {\mathbb{E}\left\{ {b_{nk}^{\left( c \right)}\left( \tau  \right)} \right\}} \ge b_{nk}^{\left( c \right)} - \frac{\varepsilon}{2}>b_{nk}^{\left( c \right)}-\varepsilon,
\label{eq_uniform_convergence_lower_bound_5}
\end{equation}
which lower bounds the target time average sequence.

\subsection{Upper bounding $\frac{1}{t}\sum\nolimits_{\tau  = {t_0}}^{{t_0} + t - 1} {\mathbb{E}\left\{ {b_{nk}^{\left( c \right)}\left( \tau  \right)} \right\}}$}
\label{subsec_uniform_convergence_upper_bound}
We compare the values of two summation terms: $\sum\nolimits_{\tau  = {t_0}}^{{t_0} + t - 1} {b_{nk}^{\left( c \right)}\left( \tau  \right)} $ and $\sum\nolimits_{\tau  = {\overline t}_{nk,M_0+1}^{\left( c \right)}}^{{\overline t}_{nk,M_0+1}^{\left( c \right)} + t - 1} {b_{nk}^{\left( c \right)}\left( \tau  \right)}$. The time interval relationship of the two comparison terms are shown in Fig. \ref{fig_uniformly_convergence_c}.

Based on the location of timeslot $t_0$ shown in Fig. \ref{fig_uniformly_convergence_c}, $\sum\nolimits_{\tau  = {t_0}}^{{t_0} + t - 1} {b_{nk}^{\left( c \right)}\left( \tau  \right)} $ can be upper bounded as follows:
\begin{align}
\sum\limits_{\tau  = {t_0}}^{{t_0} + t - 1} {b_{nk}^{\left( c \right)}\left( \tau  \right)}  &= \sum\limits_{\tau  = \overline t_{nk,{M_0} + 1}^{\left( c \right)}}^{\overline t_{nk,{M_0} + 1}^{\left( c \right)} + t - 1} {b_{nk}^{\left( c \right)}\left( \tau  \right)}  + \sum\limits_{\tau  = {t_0}}^{\overline t_{nk,{M_0} + 1}^{\left( c \right)} - 1} {b_{nk}^{\left( c \right)}\left( \tau  \right)}  - \sum\limits_{\tau  = {t_0} + t}^{\overline t_{nk,{M_0} + 1}^{\left( c \right)} + t - 1} {b_{nk}^{\left( c \right)}\left( \tau  \right)}\nonumber\\
&\le \sum\limits_{\tau  = \overline t_{nk,{M_0} + 1}^{\left( c \right)}}^{\overline t_{nk,{M_0} + 1}^{\left( c \right)} + t - 1} {b_{nk}^{\left( c \right)}\left( \tau  \right)}  + 2 - 0 = \sum\limits_{\tau  = \overline t_{nk,{M_0} + 1}^{\left( c \right)}}^{\overline t_{nk,{M_0} + 1}^{\left( c \right)} + t - 1} {b_{nk}^{\left( c \right)}\left( \tau  \right)}  + 2.
\label{eq_uniform_convergence_upper_bound4}
\end{align}

On the other hand, $\forall t\ge \left\lceil {{4 \mathord{\left/{\vphantom {4 \varepsilon }} \right.\kern-\nulldelimiterspace} \varepsilon }} \right\rceil $, we have ${2 \mathord{\left/{\vphantom {2 t}} \right.\kern-\nulldelimiterspace} t} \le {\varepsilon  \mathord{\left/{\vphantom {\varepsilon  2}} \right.\kern-\nulldelimiterspace} 2}$. Then combining this result and (\ref{eq_convergence_average_from_zero}) with the result shown in (\ref{eq_uniform_convergence_upper_bound4}) yields: $\forall t \ge \max\left\{D_{nk,0}^{\left(c\right)}, \left\lceil {{4 \mathord{\left/{\vphantom {4 \varepsilon }} \right.\kern-\nulldelimiterspace} \varepsilon }} \right\rceil \right\} \buildrel \Delta \over = D_{nk,\rm{upper}}^{\left(c\right)}$,
\begin{align}
\frac{1}{t}\sum\limits_{\tau  = {t_0}}^{{t_0} + t - 1} {\mathbb{E}\left\{ {b_{nk}^{\left( c \right)}\left( \tau  \right)} \right\}}  &\le \frac{1}{t}\mathbb{E}\left\{ {\sum\limits_{\tau  = \overline t_{nk,{M_0} + 1}^{\left( c \right)}}^{\overline t_{nk,{M_0} + 1}^{\left( c \right)} + t - 1} {b_{nk}^{\left( c \right)}\left( \tau  \right)} } \right\} + \frac{2}{t} \nonumber\\
&\le \left[b_{nk}^{\left( c \right)} + \frac{\varepsilon }{2}\right] + \frac{\varepsilon }{2} = b_{nk}^{\left( c \right)} + \varepsilon,
\label{eq_uniform_convergence_upper_bound8}
\end{align}
which upper bounds the target time average sequence.\\

Summarizing, combining the lower bound and upper bound of the target average sequence, shown as (\ref{eq_uniform_convergence_lower_bound_5}) and (\ref{eq_uniform_convergence_upper_bound8}) respectively in Subsection \ref{subsec_uniform_convergence_lower_bound} and \ref{subsec_uniform_convergence_upper_bound}, for $\forall t \ge \max \left\{ {D_{nk,{\rm{lower}}}^{\left( c \right)},D_{nk,{\rm{upper}}}^{\left( c \right)}} \right\} \buildrel \Delta \over = D_{nk}^{\left( c \right)}$, we have
\begin{equation}
\left| {\frac{1}{t}\sum\limits_{\tau  = {t_0}}^{{t_0} + t - 1} {\mathbb{E}\left\{ {b_{nk}^{\left( c \right)}\left( \tau  \right)} \right\}}  - b_{nk}^{\left( c \right)}} \right| \le \varepsilon.
\label{eq_uniform_convergence_upper_bound9}
\end{equation}
Here note that if $\varepsilon$ is fixed, $D_{nk}^{\left(c\right)}$ is fixed for arbitrary $t_0$ because both $D_{nk,{\rm{lower}}}^{\left( c \right)}$ and $D_{nk,{\rm{upper}}}^{\left( c \right)}$ depend only on $D_{nk,0}^{\left(c\right)}$ and $\varepsilon$, where $D_{nk,0}^{\left(c\right)}$ depends on the convergence speed of the sequence $\frac{1}{t}\sum\nolimits_{\tau  = 0}^{t - 1} {b_{nk}^{\left( c \right)}\left( \tau  \right)}$ and $\varepsilon$. Therefore, the value of the integer $D_{nk}^{\left(c\right)}$ is independent of $t_0$.

\section{Proof of Lemma \ref{lemma: strong_stability}}
\label{appendix: stong_stability}
Taking expectation over ${\bf{Q}}\left(t_0\right)$ on both sides of (\ref{eq_D_step_lyapunov_drift}), we get
\begin{equation}
\frac{1}{d}\sum\limits_{n,c} {\mathbb{E}\left\{ {{{\left( {Q_n^{\left( c \right)}\left( {{t_0} + d} \right)} \right)}^2}} \right\}}  - \mathbb{E}\left\{ {{{\left( {Q_n^{\left( c \right)}\left( {{t_0}} \right)} \right)}^2}} \right\} \le B\left( d \right) - \varepsilon \sum\limits_{n,c} {\mathbb{E}\left\{ {{{ {Q_n^{\left( c \right)}\left( {{t_0}} \right)} }}} \right\}}.
\label{eq_D_step_lyapunov_drift1}
\end{equation}
Writing (\ref{eq_D_step_lyapunov_drift1}) for all timeslots $0,1,2,\cdots,t-1$ and doing concatenated summations, it follows that
\begin{equation}
\frac{1}{{dt}}\sum\limits_{\tau  = t}^{d + t - 1} {\mathbb{E}\left\{ {{{\left( {Q_n^{\left( c \right)}\left( \tau  \right)} \right)}^2}} \right\}}  - \frac{1}{{dt}}\sum\limits_{\tau  = 0}^{d - 1} {\mathbb{E}\left\{ {{{\left( {Q_n^{\left( c \right)}\left( \tau  \right)} \right)}^2}} \right\}}  \le B\left( d \right) - \varepsilon \frac{1}{t}\sum\limits_{\tau  = 0}^{t - 1} {\sum\limits_{n,c} {\mathbb{E}\left\{ {Q_n^{\left( c \right)}\left( \tau  \right)} \right\}} }.
\end{equation}
Dropping the non-negative term $\frac{1}{{dt}}\sum\limits_{\tau  = t}^{d + t - 1} {\mathbb{E}\left\{ {{{\left( {Q_n^{\left( c \right)}\left( \tau  \right)} \right)}^2}} \right\}}$ on the left hand side of the above inequality and letting $t\rightarrow \infty$, it follows that
\begin{equation}
0 =  - \mathop {\lim \sup }\limits_{t \to \infty } \frac{1}{{dt}}\sum\limits_{\tau  = 0}^{d - 1} {\mathbb{E}\left\{ {{{\left( {Q_n^{\left( c \right)}\left( \tau  \right)} \right)}^2}} \right\}}  \le B\left( d \right) - \varepsilon \mathop {\lim \sup }\limits_{t \to \infty } \frac{1}{t}\sum\limits_{\tau  = 0}^{t - 1} {\sum\limits_{n,c} {\mathbb{E}\left\{ {Q_n^{\left( c \right)}\left( \tau  \right)} \right\}} },
\end{equation}
and then strong stability is achieved:
\begin{equation}
\label{eq_mean_time_average_backlog_bounded_proof}
\mathop {\lim \sup }\limits_{t \to \infty } \frac{1}{t}\sum\limits_{\tau  = 0}^{t - 1} {\sum\limits_{n,c} {\mathbb{E}\left\{ {Q_n^{\left( c \right)}\left( \tau  \right)} \right\}} }  \le \frac{B\left( d \right)}{\varepsilon }.
\end{equation}

%*************************************************************************************************
\section{Proof of Lemma \ref{lemma: hat_vs_tildetilde}}
\label{appendix: hat_vs_tildetilde}
In this proof, there are three steps to finish the comparison between $\hat{Policy}$ and $\tilde{\tilde{Policy}}$: firstly, transform the analysis focus from the key metric $\sum\limits_n {\mathbb{E}\left\{ {\left. {\left. {{{\hat Z}_n}\left( {{\bf{\hat Q}}\left( {{t_0}} \right)} \right)} \right|_{{t_0}}^{{t_0} + t - 1}} \right|{\bf{\hat Q}}\left( {{t_0}} \right)} \right\}}$ to a new metric that is easy to manipulate in the later policy comparison; secondly, based on the new metric, compare $\hat{Policy}$ and $\tilde{\tilde{Policy}}$ over single epoch; thirdly, extend the comparison to multiple epochs.

\subsection{Lower bounding the original key backpressure metric by an expression consisting of a new metric under $\hat{Policy}$}
\label{subsec_hat_vs_tildetilde_1}
Based on Fig. \ref{fig_epoch relations among several policies} in Appendix \ref{appendix: policy_list}, we start by rewriting the expression of the key backpressure metric under $\hat {Policy}$ over the interval $\left[t_0,t_0+t-1\right]$ into the following form:
\begin{align}
&\ \ \ \ \sum\limits_n {\mathbb{E}\left\{ {\left. {\left. {{{\hat Z}_n}\left( {{\bf{\hat Q}}\left( {{t_0}} \right)} \right)} \right|_{{t_0}}^{{t_0} + t - 1}} \right|{\bf{\hat Q}}\left( {{t_0}} \right)} \right\}} \nonumber\\
&= \sum\limits_n {\mathbb{E}\left\{ {\frac{1}{t}\sum\limits_{\tau  = {t_0}}^{{u_{n,2}} - 1} {\sum\limits_c {\sum\limits_{k \in {{\cal K}_n}} {\hat b_{nk}^{\left( c \right)}\left( \tau  \right)\left[ {\hat Q_n^{\left( c \right)}\left( {{t_0}} \right) - \hat Q_k^{\left( c \right)}\left( {{t_0}} \right)} \right]} } } } \right.}  + \frac{1}{t}\sum\limits_{i = 2}^{{M_n}\left( {{t_0},t} \right) - 1} {{{\hat Z}_n}\left( {i,{\bf{\hat Q}}\left( {{t_0}} \right)} \right)}  \nonumber
\end{align}
\begin{align}
&+ \left. {\left. { \frac{1}{t}\sum\limits_{\tau  = {u_{n,{M_n}\left( {{t_0},t} \right)}}}^{{t_0} + t - 1} {\sum\limits_c {\sum\limits_{k \in {{\cal K}_n}} {\hat b_{nk}^{\left( c \right)}\left( \tau  \right)\left[ {\hat Q_n^{\left( c \right)}\left( {{t_0}} \right) - \hat Q_k^{\left( c \right)}\left( {{t_0}} \right)} \right]} } } } \right|{\bf{\hat Q}}\left( {{t_0}} \right)} \right\},
\label{eq_key_metric_hat1}
\end{align}
where
\begin{equation}
{{ \hat Z}_n}\left( {i,{\bf{\hat Q}}\left( x \right)} \right) = \sum\limits_{\tau  = {u_{n,i}}}^{{u_{n,i + 1}} - 1} {\sum\limits_c {\sum\limits_{k \in {{\cal K}_n}} { \hat b_{nk}^{\left( c \right)}\left( \tau  \right)\left[ {\hat Q_n^{\left( c \right)}\left( x \right) - \hat Q_k^{\left( c \right)}\left( x \right)} \right]} } },{\rm{\ for\ }}1\le i \le M_n\left(t_0,t\right).
\end{equation}
For the first summation term over $\left[t_0, u_{n,2}-1\right]$ in (\ref{eq_key_metric_hat1}), since $t_0$ must be located within epoch $1$, we can get
\begin{equation}
\frac{1}{t}\sum\limits_{\tau  = {t_0}}^{{u_2} - 1} {\sum\limits_c {\sum\limits_{k \in {{\cal K}_n}} {\hat b_{nk}^{\left( c \right)}\left( \tau  \right)\left[ {\hat Q_n^{\left( c \right)}\left( {{t_0}} \right) - \hat Q_k^{\left( c \right)}\left( {{t_0}} \right)} \right]} } }  = \frac{1}{t}\sum\limits_{\tau  = {u_1}}^{{u_2} - 1} {\sum\limits_c {\sum\limits_{k \in {{\cal K}_n}} {\hat b_{nk}^{\left( c \right)}\left( \tau  \right)\left[ {\hat Q_n^{\left( c \right)}\left( {{t_0}} \right) - \hat Q_k^{\left( c \right)}\left( {{t_0}} \right)} \right]} } }.
\label{eq_key_metric_hat_term1}
\end{equation}
For the third summation term over $\left[u_{n,M_n\left(t_0,t\right)}, t_0+t-1\right]$ in (\ref{eq_key_metric_hat1}), our goal is to lower bound it. Firstly, note that as $t$ grows, $\forall t \geq \left\lceil {{8 \mathord{\left/{\vphantom {8 \varepsilon }} \right.\kern-\nulldelimiterspace} \varepsilon }} \right\rceil \buildrel \Delta \over =\hat D_1$, we have ${1 \mathord{\left/{\vphantom {1 t}} \right.\kern-\nulldelimiterspace} t} \le {\varepsilon  \mathord{\left/{\vphantom {\varepsilon  8}} \right.\kern-\nulldelimiterspace} 8}$. Then it follows that, $\forall t\ge \hat{D}_1$,
\begin{align}
\frac{1}{t}\sum\limits_{\tau  = {t_0} + t}^{{u_{{M_n}\left( {{t_0},t} \right) + 1}} - 1} {\sum\limits_c {\sum\limits_{k \in {{\cal K}_n}} {\hat b_{nk}^{\left( c \right)}\left( \tau  \right)\left[ {\hat Q_n^{\left( c \right)}\left( {{t_0}} \right) - \hat Q_k^{\left( c \right)}\left( {{t_0}} \right)} \right]} } }   & \le \frac{1}{t}\sum\limits_{\tau  = {u_{n,{M_n}\left( {{t_0},t} \right)}}}^{{u_{n,{M_n}\left( {{t_0},t} \right) + 1}} - 1} {\sum\limits_c {\sum\limits_{k \in {{\cal K}_n}} {\hat b_{nk}^{\left( c \right)}\left( \tau  \right)\left[ {\hat Q_n^{\left( c \right)}\left( {{t_0}} \right) - \hat Q_k^{\left( c \right)}\left( {{t_0}} \right)} \right]} } }  \nonumber\\
&\le \frac{1}{t}{\sum\limits_{c} {\hat Q_n^{\left( c \right)}\left( {{t_0}} \right)} } \le \frac{\varepsilon }{8}\sum\limits_{c} {\hat Q_n^{\left( c \right)}\left( {{t_0}} \right)}.
\label{eq_key_metric_hat_term1.5}
\end{align}
Based on (\ref{eq_key_metric_hat_term1.5}), we can lower bound the third summation term in (\ref{eq_key_metric_hat1}) for each node $n$ as follows:
\begin{align}
&\ \ \ \ {\frac{1}{t}\sum\limits_{\tau  = {u_{n,{M_n}\left( {{t_0},t} \right) }}}^{{t_0} + t - 1} {\sum\limits_c {\sum\limits_{k \in {{\cal K}_n}} {\hat b_{nk}^{\left( c \right)}\left( \tau  \right)\left[ {\hat Q_n^{\left( c \right)}\left( {{t_0}} \right) - \hat Q_k^{\left( c \right)}\left( {{t_0}} \right)} \right]} } } }\nonumber\\
&=\frac{1}{t}\sum\limits_{\tau  = {u_{n,{M_n}\left( {{t_0},t} \right)}}}^{{u_{n,{M_n}\left( {{t_0},t} \right) + 1}} - 1} {\sum\limits_c {\sum\limits_{k \in {{\cal K}_n}} {\hat b_{nk}^{\left( c \right)}\left( \tau  \right)\left[ {\hat Q_n^{\left( c \right)}\left( {{t_0}} \right) - \hat Q_k^{\left( c \right)}\left( {{t_0}} \right)} \right]} } }  - \frac{1}{t}\sum\limits_{\tau  = {t_0} + t}^{{u_{n,{M_n}\left( {{t_0},t} \right) + 1}} - 1} {\sum\limits_c {\sum\limits_{k \in {{\cal K}_n}} {\hat b_{nk}^{\left( c \right)}\left( \tau  \right)\left[ {\hat Q_n^{\left( c \right)}\left( {{t_0}} \right) - \hat Q_k^{\left( c \right)}\left( {{t_0}} \right)} \right]} } }\nonumber\\
&\geq\frac{1}{t}{{\hat Z}_n}\left( {M_n\left(t_0,t\right),{\bf{\hat Q}}\left( {{t_0}} \right)} \right) -\frac{\varepsilon }{8}\sum\limits_{c} {\hat Q_n^{\left( c \right)}\left( {{t_0}} \right)}.
\label{eq_key_metric_hat_term3}
\end{align}
Then plug (\ref{eq_key_metric_hat_term1}) and (\ref{eq_key_metric_hat_term3}) into (\ref{eq_key_metric_hat1}) to get
\begin{align}
\sum\limits_n {\mathbb{E}\left\{ {\left. {\left. {{{\hat Z}_n}\left( {{\bf{\hat Q}}\left( {{t_0}} \right)} \right)} \right|_{{t_0}}^{{t_0} + t - 1}} \right|{\bf{\hat Q}}\left( {{t_0}} \right)} \right\}}\ge \sum\limits_n {\mathbb{E}\left\{ {\left. {\frac{1}{t}\sum\limits_{i = 1}^{{M_n}\left( {{t_0},t} \right)} {{{\hat Z}_n}\left( {i,{\bf{\hat Q}}\left( {{t_0}} \right)} \right)} } \right|{\bf{\hat Q}}\left( {{t_0}} \right)} \right\} - } \frac{\varepsilon }{8}\sum\limits_{n,c} {\hat Q_n^{\left( c \right)}\left( {{t_0}} \right)},
\label{eq_key_metric_hat2}
\end{align}

To facilitate the later proof of comparing $\hat {Policy}$ with $\tilde{\tilde{Policy}}$, we do the following transformation on the expectation term in (\ref{eq_key_metric_hat2}) by switching the backlog state coefficients:
\begin{align}
&\ \ \ \ \sum\limits_n {\mathbb{E}\left\{ {\left. {\frac{1}{t}\sum\limits_{i = 1}^{{M_n}\left( {{t_0},t} \right)} {{{\hat Z}_n}\left( {i,{\bf{\hat Q}}\left( {{t_0}} \right)} \right)} } \right|{\bf{\hat Q}}\left( {{t_0}} \right)} \right\}}\nonumber\\
&=\sum\limits_n {\mathbb{E}\left\{ {\left. {\frac{1}{t}\sum\limits_{i = 1}^{{M_n}\left( {{t_0},t} \right)} {{{\hat Z}_n}\left( {i,{\bf{\hat Q}}\left( {{u_i}} \right)} \right)} } \right|{\bf{\hat Q}}\left( {{t_0}} \right)} \right\}}  - \sum\limits_n {\mathbb{E}\left\{ {\left. {\frac{1}{t}\sum\limits_{i = 1}^{{M_n}\left( {{t_0},t} \right)} {\left[ {{{\hat Z}_n}\left( {i,{\bf{\hat Q}}\left( {{u_i}} \right)} \right) - {{\hat Z}_n}\left( {i,{\bf{\hat Q}}\left( {{t_0}} \right)} \right)} \right]} } \right|{\bf{\hat Q}}\left( {{t_0}} \right)} \right\}}.
\label{eq_key_metric_hat3}
\end{align}
The difference term caused by the switch on the right hand side of (\ref{eq_key_metric_hat3}) can be written as
\begin{align}
&\sum\limits_n {\mathbb{E}\left\{ {\left. {\frac{1}{t}\sum\limits_{i = 1}^{{M_n}\left( {{t_0},t} \right)} {\left[ {{{\hat Z}_n}\left( {i,{\bf{\hat Q}}\left( {{u_i}} \right)} \right) - {{\hat Z}_n}\left( {i,{\bf{\hat Q}}\left( {{t_0}} \right)} \right)} \right]} } \right|{\bf{\hat Q}}\left( {{t_0}} \right)} \right\}}\ \ \ \ \ \ \ \ \ \ \ \ \ \ \ \ \ \ \ \ \ \ \ \ \ \ \ \ \ \ \ \ \ \ \ \ \ \ \ \ \ \ \ \ \ \ \ \ \ \nonumber
\end{align}
\begin{align}
&= \sum\limits_n {\mathbb{E}\left\{ {\left. {\frac{1}{t}\sum\limits_{i = 1}^{{M_n}\left( {{t_0},t} \right)} {\sum\limits_{\tau  = {u_{n,i}}}^{{u_{n,i + 1}} - 1} {\sum\limits_c {\sum\limits_{k \in {{\cal K}_n}} {\hat b_{nk}^{\left( c \right)}\left( \tau  \right)\left[ {\hat Q_n^{\left( c \right)}\left( {{u_{n,i}}} \right) - \hat Q_n^{\left( c \right)}\left( {{t_0}} \right) + \hat Q_k^{\left( c \right)}\left( {{t_0}} \right) - \hat Q_k^{\left( c \right)}\left( {{u_{n,i}}} \right)} \right]} } } } } \right|{\bf{\hat Q}}\left( {{t_0}} \right)} \right\}}.
\label{eq_key_metric_difference}
\end{align}
In (\ref{eq_key_metric_difference}), $u_{n,2},\cdots,u_{n,M_n\left(t_0,t\right)}$ are within the interval $\left[t_0,t_0+t-1\right]$, while $u_{n,1}$ may be located before $t_0$, but $t_0-u_{n,1}+1\le T_n\left(i\right)$. Then we have the following relationship:
\begin{equation}
\hat Q_n^{\left( c \right)}\left( {{u_{n,i}}} \right) - \hat Q_n^{\left( c \right)}\left( {{t_0}} \right) \le \sum\limits_{\tau  = {t_0}}^{{u_{n,i}} - 1} {\left[ {\sum\limits_{k \in {{\cal K}_n}} {\hat b_{kn}^{\left( c \right)}\left( \tau  \right)}  + a_n^{\left( c \right)}\left( \tau  \right)} \right]}  \le t\left( {N + {A_{\max }}} \right),{\rm{\ for\ }}2 \le i \le {M_n}\left( {{t_0},t} \right);
\label{eq_backlog_difference1}
\end{equation}
\begin{equation}
\hat Q_k^{\left( c \right)}\left( {{t_0}} \right) - \hat Q_k^{\left( c \right)}\left( {{u_{n,i}}} \right) \le \sum\limits_{\tau  = {t_0}}^{{u_{n,i}} - 1} {\sum\limits_{m \in {{\cal K}_k}} {\hat b_{km}^{\left( c \right)}\left( \tau  \right)} }  \le t,{\rm{\ for\ }}2 \le i \le {M_n}\left( {{t_0},t} \right),{\rm{\ }}k \in {{\cal K}_n};
\label{eq_backlog_difference2}
\end{equation}
\begin{equation}
\hat Q_n^{\left( c \right)}\left( {{u_{n,1}}} \right) - \hat Q_n^{\left( c \right)}\left( {{t_0}} \right) \le \sum\limits_{\tau  = {u_{n,1}}}^{{t_0} - 1} {\sum\limits_{k \in {{\cal K}_n}} {\hat b_{nk}^{\left( c \right)}\left( \tau  \right)} }  \le t_0-u_{n,1};
\label{eq_backlog_difference3}
\end{equation}
\begin{equation}
\hat Q_k^{\left( c \right)}\left( {{t_0}} \right) - \hat Q_k^{\left( c \right)}\left( {{u_{n,1}}} \right) \le \sum\limits_{\tau  = {u_{n,1}}}^{{t_0} - 1} {\left[ {\sum\limits_{m \in {{\cal K}_k}} {\hat b_{mk}^{\left( c \right)}\left( \tau  \right)}  + a_k^{\left( c \right)}\left( \tau  \right)} \right]}  \le \left( {{t_0} - {u_{n,1}}} \right)\left( {N + {A_{\max }}} \right),{\rm{\ for\ }}k\in {\cal K}_n.
\label{eq_backlog_difference4}
\end{equation}
Plug (\ref{eq_backlog_difference1})-(\ref{eq_backlog_difference4}) into (\ref{eq_key_metric_difference}), and note that $\sum\limits_{\tau  = {u_{n,i}}}^{{u_{n,i + 1}} - 1} {\sum\limits_c {\sum\limits_{k \in {{\cal K}_n}} {\hat b_{nk}^{\left( c \right)}\left( \tau  \right)} } }  \le 1$. Then (\ref{eq_key_metric_difference}) can be upper bounded as follows:
\begin{align}
&\ \ \ \ \sum\limits_n {\mathbb{E}\left\{ {\left. {\frac{1}{t}\sum\limits_{i = 1}^{{M_n}\left( {{t_0},t} \right)} {\left[ {{{\hat Z}_n}\left( {i,{\bf{\hat Q}}\left( {{u_{n,i}}} \right)} \right) - {{\hat Z}_n}\left( {i,{\bf{\hat Q}}\left( {{t_0}} \right)} \right)} \right]} } \right|{\bf{\hat Q}}\left( {{t_0}} \right)} \right\}}\nonumber\\
&\le \sum\limits_n {\left[ {\frac{1}{t}\left( {N + {A_{\max }} + 1} \right)\mathbb{E}\left\{ {{t_0} - {u_{n,1}}} \right\} + \left( {N + {A_{\max }} + 1} \right)\mathbb{E}\left\{ {{M_n}\left( {{t_0},t} \right)-1} \right\}} \right]},
\label{eq_key_metric_difference2}
\end{align}
where the given backlog state observation ${{\bf{\hat Q}}\left( {{t_0}} \right)}$ can be dropped from the expectation expressions because, under $\hat{Policy}$, the values of $t_0-u_1$ and $M_n\left(t_0,t\right)$ are independent of ${{\bf{\hat Q}}\left( {{t_0}} \right)}$ . Define ${T_{\max }} = \mathop {\max }\limits_{n \in \cal N} \left\{ {\mathbb{E}\left\{ {{T_n}} \right\}} \right\}$. Since ${\mathbb{E}\left\{ {{T_n}} \right\}}$ only depends on the topology of the network and is finite (see (\ref{eq_finite_epoch_length}) in Appendix \ref{appendix: necessity_corollary_1}), we can increase the value of $t$ such that $t\geq {T_{\max }}$, which results in that $\mathbb{E}\left\{ {{t_0} - {u_{n,1}}} \right\} \le \mathbb{E}\left\{ {{T_n}} \right\} \le {T_{\max }}\le t$. Combining this result with the fact that $M_n\left(t_0,t\right)\le t$, (\ref{eq_key_metric_difference2}) can be further upper bounded as follows:
\begin{align}
&\ \ \ \ \sum\limits_n {\mathbb{E}\left\{ {\left. {\frac{1}{t}\sum\limits_{i = 1}^{{M_n}\left( {{t_0},t} \right)} {\left[ {{{\hat Z}_n}\left( {i,{\bf{\hat Q}}\left( {{u_{n,i}}} \right)} \right) - {{\hat Z}_n}\left( {i,{\bf{\hat Q}}\left( {{t_0}} \right)} \right)} \right]} } \right|{\bf{\hat Q}}\left( {{t_0}} \right)} \right\}}\le Nt\left( {N + {A_{\max }} + 1} \right)\buildrel \Delta \over = C_1\left(t\right).
\label{eq_key_metric_difference3}
\end{align}

Plug (\ref{eq_key_metric_difference3}) back into (\ref{eq_key_metric_hat3}), and we get
\begin{equation}
\sum\limits_n {\mathbb{E}\left\{ {\left. {\frac{1}{t}\sum\limits_{i = 1}^{{M_n}\left( {{t_0},t} \right)} {{{\hat Z}_n}\left( {i,{\bf{\hat Q}}\left( {{t_0}} \right)} \right)} } \right|{\bf{\hat Q}}\left( {{t_0}} \right)} \right\}}\geq \sum\limits_n {\mathbb{E}\left\{ {\left. {\frac{1}{t}\sum\limits_{i = 1}^{{M_n}\left( {{t_0},t} \right)} {{{\hat Z}_n}\left( {i,{\bf{\hat Q}}\left( {{u_{n,i}}} \right)} \right)} } \right|{\bf{\hat Q}}\left( {{t_0}} \right)} \right\}} -C_1\left(t\right).
\label{eq_key_metric_change_coefficients}
\end{equation}
If denoting $\hat D_2 = {\rm{max}}\left\{\hat D_1, T_{\rm{max}}\right\}$, and plugging the result of (\ref{eq_key_metric_change_coefficients}) into (\ref{eq_key_metric_hat2}) with $t \ge \hat D_2$, we finally get
\begin{equation}
\sum\limits_n {\mathbb{E}\left\{ {\left. {\left. {{{\hat Z}_n}\left( {{\bf{\hat Q}}\left( {{t_0}} \right)} \right)} \right|_{{t_0}}^{{t_0} + t - 1}} \right|{\bf{\hat Q}}\left( {{t_0}} \right)} \right\}} \geq \sum\limits_n {\mathbb{E}\left\{ {\left. {\frac{1}{t}\sum\limits_{i = 1}^{{M_n}\left( {{t_0},t} \right)} {{{\hat Z}_n}\left( {i,{\bf{\hat Q}}\left( {{u_{n,i}}} \right)} \right)} } \right|{\bf{\hat Q}}\left( {{t_0}} \right)} \right\}} -C_1\left(t\right)-\frac{\varepsilon }{8}\sum\limits_{n,c} {Q_n^{\left( c \right)}\left( {{t_0}} \right)}.
\label{eq_key_metric_hat4}
\end{equation}
Up to (\ref{eq_key_metric_hat4}), we lower bound the original key metric $\sum\limits_n {\mathbb{E}\left\{ {\left. {\left. {{{\hat Z}_n}\left( {{\bf{\hat Q}}\left( {{t_0}} \right)} \right)} \right|_{{t_0}}^{{t_0} + t - 1}} \right|{\bf{\hat Q}}\left( {{t_0}} \right)} \right\}}$ under $\hat {Policy}$ by the expression on the right hand side of (\ref{eq_key_metric_hat4}). The purpose of doing this is to transform the analysis focus from the original key metric into the new metric $\sum\limits_n {\mathbb{E}\left\{ {\left. {\frac{1}{t}\sum_{i = 1}^{{M_n}\left( {{t_0},t} \right)} {{{Z}_n}\left( {i,{\bf{\hat Q}}\left( {{u_{n,i}}} \right)} \right)} } \right|{\bf{\hat Q}}\left( {{t_0}} \right)} \right\}} $, which is easier to analyze by the knowledge of renewal processes (see Ref. \cite{Ross_book_stochastic_process_2001}, Chapter 3).

\subsection{Comparison between $\hat{Policy}$ and $\tilde{\tilde{Policy}}$ over a single epoch}
\label{subsec_hat_vs_tildetilde_2}
Before comparing $\hat{Policy}$ with $\tilde{\tilde{Policy}}$ over multiple epochs, the proof in this subsection first compares the two polices over a single epoch.

With the fact that $M_n\left(t_0,t\right)\le t$, define the following indicator function of integer $i = 1,2,3,\cdots,t$:
\begin{equation}
{1_n}\left( i \right) = \left\{ {\begin{array}{*{20}{c}}
{1,\;1 \le i \le {{ M}_n}\left( {{t_0},t} \right) \le t}\\
{0,{\rm{\ \ \ \ }}\;{{ M}_n}\left( {{t_0},t} \right) < i \le t,}
\end{array}} \right.
\label{eq_indicator_function_hat}
\end{equation}
Based on the description of $\hat {Policy}$, firstly, it makes decisions only based on the backlog state observation ${{\bf{\hat Q}}\left( {{u_{n,i}}} \right)}$ for each epoch $i$, and therefore, given the backlog state ${{\bf{\hat Q}}\left( {{u_{n,i}}} \right)}$, the value of ${{Z_n}\left( {i,{\bf{\hat Q}}\left( {{u_{n,i}}} \right)} \right)}$ under $\hat{Policy}$ is independent of ${{\bf{\hat Q}}\left( {{t_0}} \right)}$. Consequently, we have $\forall n \in \cal N$
\begin{equation}
\mathbb{E}\left\{ {\left. {{{\hat Z}_n}\left( {i,{\bf{\hat Q}}\left( {{u_{n,i}}} \right)} \right)} \right|{\bf{\hat Q}}\left( {{u_{n,i}}} \right)},1_n\left(i\right)=1 \right\} = \mathbb{E}\left\{ {\left. {{{\hat Z}_n}\left( {i,{\bf{\hat Q}}\left( {{u_{n,i}}} \right)} \right)} \right|{\bf{\hat Q}}\left( {{u_{n,i}}} \right),{\bf{\hat Q}}\left( {{t_0}} \right)},1_n\left(i\right)=1 \right\}.
\end{equation}
Secondly, according to Lemma \ref{lemma: characterize_metric_DIVBAR-RMIA} in Section \ref{sec: performance_anlaysis_main}, the metric $\mathbb{E}\left\{ {\left. {{Z_n}\left( {i,{\bf{\hat Q}}\left( {{u_{n,i}}} \right)} \right)} \right|{\bf{\hat Q}}\left( {{u_{n,i}}} \right),{1_n}\left( i \right) = 1} \right\}$ is maximized under $\hat{Policy}$ among all policies within the restricted policy set $\cal P$, to which $\tilde{\tilde{Policy}}$ belongs. Thus, with any possible value of ${{\bf{\hat Q}}\left( {{t_0}} \right)}$, we have
\begin{align}
&\ \ \ \ \mathbb{E}\left\{ {\left. {{{\hat Z}_n}\left( {i,{\bf{\hat Q}}\left( {{u_{n,i}}} \right)} \right)} \right|{\bf{\hat Q}}\left( {{u_{n,i}}} \right),{\bf{\hat Q}}\left( {{t_0}} \right)} ,1_n\left(i\right)=1\right\}\nonumber\\
&\ge \mathbb{E}\left\{ {\left. {{{\tilde {\tilde Z}}_n}\left( {i,{\bf{\hat Q}}\left( {{u_{n,i}}} \right)} \right)} \right|{\bf{\hat Q}}\left( {{u_{n,i}}} \right),{\bf{\hat Q}}\left( {{t_0}} \right)} ,1_n\left(i\right)=1\right\}\nonumber\\
&= \mathbb{E}\left\{ {\left. {{{\tilde {\tilde Z}}_n}\left( {i,{\bf{\hat Q}}\left( {{t_0}} \right)} \right)} \right|{\bf{\hat Q}}\left( {{u_{n,i}}} \right),{\bf{\hat Q}}\left( {{t_0}} \right)},1_n\left(i\right)=1 \right\} \nonumber\\
&\ \ \ - \mathbb{E}\left\{ {\left. {{{\tilde {\tilde Z}}_n}\left( {i,{\bf{\hat Q}}\left( {{t_0}} \right)} \right) - {{\tilde {\tilde Z}}_n}\left( {i,{\bf{\hat Q}}\left( {{u_{n,i}}} \right)} \right)} \right|{\bf{\hat Q}}\left( {{u_{n,i}}} \right),{\bf{\hat Q}}\left( {{t_0}} \right)} ,1_n\left(i\right)=1\right\}.
\label{eq_single_epoch_metric}
\end{align}

The difference term in (\ref{eq_single_epoch_metric}) can be written as
\begin{align}
&\ \ \ \ \mathbb{E}\left\{ {\left. {{{\tilde {\tilde Z}}_n}\left( {i,{\bf{\hat Q}}\left( {{t_0}} \right)} \right) - {{\tilde {\tilde Z}}_n}\left( {i,{\bf{\hat Q}}\left( {{u_{n,i}}} \right)} \right)} \right|{\bf{\hat Q}}\left( {{u_{n,i}}} \right),{\bf{\hat Q}}\left( {{t_0}} \right)} ,1_n\left(i\right)=1 \right\}\nonumber\\
&= \mathbb{E}\left\{ {\left. {\sum\limits_{\tau  = {u_{n,i}}}^{{u_{n,i + 1}} - 1} {\sum\limits_c {\sum\limits_{k \in {{\cal K}_n}} {\tilde {\tilde b}_{nk}^{\left( c \right)}\left( \tau  \right)} } } \left[ {\hat Q_n^{\left( c \right)}\left( {{t_0}} \right) - \hat Q_n^{\left( c \right)}\left( {{u_{n,i}}} \right) + \hat Q_k^{\left( c \right)}\left( {{u_{n,i}}} \right) - \hat Q_k^{\left( c \right)}\left( {{t_0}} \right)} \right]} \right|{\bf{\hat Q}}\left( {{u_{n,i}}} \right),{\bf{\hat Q}}\left( {{t_0}} \right)} ,1_n\left(i\right)=1\right\}.
\label{eq_metric_difference3}
\end{align}
Similar to (\ref{eq_key_metric_difference})-(\ref{eq_backlog_difference4}) but with the roles of $n$ and $k$ switched, (\ref{eq_metric_difference3}) is upper bounded as follows: $\forall t\ge \hat D_2$,
\begin{equation}
\mathbb{E}\left\{ {\left. {{{\tilde {\tilde Z}}_n}\left( {i,{\bf{\hat Q}}\left( {{t_0}} \right)} \right) - {{\tilde {\tilde Z}}_n}\left( {i,{\bf{\hat Q}}\left( {{u_{n,i}}} \right)} \right)} \right|{\bf{\hat Q}}\left( {{u_{n,i}}} \right),{\bf{\hat Q}}\left( {{t_0}} \right)} , 1\le i \le M_n\left(t_0,t\right) \right\}\le t\left( N + {A_{\max }}+1 \right);
\label{eq_key_metric_difference4}
\end{equation}
Denoting $C_2\left(t\right)=t\left( {N + {A_{\max }} + 1} \right)$, and plugging (\ref{eq_key_metric_difference4}) into (\ref{eq_single_epoch_metric}), it follows that, $\forall t \ge \hat D_2$,
\begin{equation}
\mathbb{E}\left\{ {\left. {{{\hat Z}_n}\left( {i,{\bf{\hat Q}}\left( {{u_{n,i}}} \right)} \right)} \right|{\bf{\hat Q}}\left( {{u_{n,i}}} \right),{\bf{\hat Q}}\left( {{t_0}} \right)}, 1_n\left(i\right)=1 \right\}\geq \mathbb{E}\left\{ {\left. {{{\tilde {\tilde Z}}_n}\left( {i,{\bf{\hat Q}}\left( {{t_0}} \right)} \right)} \right|{\bf{\hat Q}}\left( {{u_{n,i}}} \right),{\bf{\hat Q}}\left( {{t_0}} \right)}, 1_n\left(i\right)=1 \right\} - C_2\left(t\right).
\label{eq_metric_comparison_single_epoch}
\end{equation}
Then take expectations on both side of (\ref{eq_metric_comparison_single_epoch}) over all possible values of ${{\bf{\hat Q}}\left( {{u_{n,i}}} \right)}$, we get, $\forall t \ge \hat D_2$,
\begin{equation}
\mathbb{E}\left\{ {\left. {{{\hat Z}_n}\left( {i,{\bf{\hat Q}}\left( {{u_{n,i}}} \right)} \right)} \right|{\bf{\hat Q}}\left( {{t_0}} \right)} , 1_n\left(i\right)=1\right\} \ge \mathbb{E}\left\{ {\left. {{{\tilde {\tilde Z}}_n}\left( {i,{\bf{\hat Q}}\left( {{t_0}} \right)} \right)} \right|{\bf{\hat Q}}\left( {{t_0}} \right)} , 1_n\left(i\right)=1\right\} - {C_2}\left( t \right),
\label{eq_metric_comparison_single_epoch2}
\end{equation}
which completes the comparison on the key metrics over a single epoch under $\hat{Policy}$ and $\tilde{\tilde{Policy}}$.

\subsection{Comparison between $\hat{Policy}$ and $\tilde{\tilde{Policy}}$ over $M_n\left(t_0,t\right)$ epochs}
\label{subsec_hat_vs_tildetilde_3}
In the proof of this subsection, the comparison between $\hat{Policy}$ and $\tilde{\tilde{Policy}}$ is extended from a single epoch to $M_n\left(t_0,t\right)$ epochs. For the metric $\sum\limits_n {\mathbb{E}\left\{ {\left. {\frac{1}{t}\sum\nolimits_{i = 1}^{{M_n}\left( {{t_0},t} \right)} {{{\hat Z}_n}\left( {i,{\bf{\hat Q}}\left( {{u_{n,i}}} \right)} \right)} } \right|{\bf{\hat Q}}\left( {{t_0}} \right),{1_n}\left( i \right) = 1} \right\}}$, since $M_n\left(t_0,t\right)$ is random, we cannot directly pull $M_n\left(t_0,t\right)$ out of the expectation. Instead, we need to firstly rewrite the key metric under $\hat{Policy}$ as follows:
\begin{align}
\sum\limits_n {\mathbb{E}\left\{ {\left. {\frac{1}{t}\sum\limits_{i = 1}^{{M_n}\left( {{t_0},t} \right)} {{{\hat Z}_n}\left( {i,{\bf{\hat Q}}\left( {{u_{n,i}}} \right)} \right)} } \right|{\bf{\hat Q}}\left( {{t_0}} \right)} \right\}}  &= \frac{1}{t}\sum\limits_n {\sum\limits_{i = 1}^t  {\mathbb{E}\left\{ {\left. {{{\hat Z}_n}\left( {i,{\bf{\hat Q}}\left( {{u_{n,i}}} \right)} \right){1_n}\left( i \right)} \right|{\bf{\hat Q}}\left( {{t_0}} \right)} \right\}} }.
\label{eq_key_metric_renewal1}
\end{align}
Considering the term ${\mathbb{E}\left\{ {\left. {{{\hat Z}_n}\left( {i,{\bf{\hat Q}}\left( {{u_{n,i}}} \right)} \right){1_n}\left( i \right)} \right|{\bf{\hat Q}}\left( {{t_0}} \right)} \right\}}$, if $M_n\left(t_0,t\right)<i\le t$, then $1_n\left(i\right)=0$, and we have
\begin{equation}
\mathbb{E}\left\{ {\left. {{{\hat Z}_n}\left( {i,{\mathbf{\hat Q}}\left( {{u_{n,i}}} \right)} \right){1_n}\left( i \right)} \right|{\mathbf{\hat Q}}\left( {{t_0}} \right),{1_n}\left( i \right) = 0} \right\} = \mathbb{E}\left\{ {\left. {{{\tilde {\tilde Z}}_n}\left( {i,{\mathbf{\hat Q}}\left( {{t_0}} \right)} \right){1_n}\left( i \right)} \right|{\mathbf{\hat Q}}\left( {{t_0}} \right),{1_n}\left( i \right) = 0} \right\} = 0;
\label{eq_key_metric_renewal2}
\end{equation}
if $1\le i \le M_n\left(t_0,t\right)$, then $1_n\left(i\right)=1$, according to (\ref{eq_metric_comparison_single_epoch2}), we have, $\forall t \ge \hat D_2$,
\begin{equation}
\mathbb{E}\left\{ {\left. {{{\hat Z}_n}\left( {i,{\mathbf{\hat Q}}\left( {{u_{n,i}}} \right)} \right){1_n}\left( i \right)} \right|{\mathbf{\hat Q}}\left( {{t_0}} \right),{1_n}\left( i \right) = 1} \right\} \ge \mathbb{E}\left\{ {\left. {{{\tilde {\tilde Z}}_n}\left( {i,{\mathbf{\hat Q}}\left( {{t_0}} \right)} \right)} {1_n}\left( i \right)\right|{\mathbf{\hat Q}}\left( {{t_0}} \right),{1_n}\left( i \right) = 1} \right\} - {C_2}\left( t \right).
\label{eq_key_metric_renewal3}
\end{equation}
Based on (\ref{eq_key_metric_renewal2}) and (\ref{eq_key_metric_renewal3}), we can conclude that, $\forall t \ge \hat D_2$,
\begin{align}
 \mathbb{E}\left\{ {\left. {{{\hat Z}_n}\left( {i,{\mathbf{\hat Q}}\left( {{u_{n,i}}} \right)} \right){1_n}\left( i \right)} \right|{\mathbf{\hat Q}}\left( {{t_0}} \right)} \right\} &\geq \left[ {\mathbb{E}\left\{ {\left. {{{\tilde {\tilde Z}}_n}\left( {i,{\mathbf{\hat Q}}\left( {{t_0}} \right)} \right){1_n}\left( i \right)} \right|{\mathbf{\hat Q}}\left( {{t_0}} \right),{1_n}\left( i \right) = 0} \right\} - {C_2}\left( t \right)} \right]\Pr \left\{ {{1_n}\left( i \right) = 0} \right\}\nonumber\\
& \ \ \ + \left[ {\mathbb{E}\left\{ {\left. {{{\tilde {\tilde Z}}_n}\left( {i,{\mathbf{\hat Q}}\left( {{t_0}} \right)} \right){1_n}\left( i \right)} \right|{\mathbf{\hat Q}}\left( {{t_0}} \right),{1_n}\left( i \right) = 1} \right\} - {C_2}\left( t \right)} \right]\Pr \left\{ {{1_n}\left( i \right) = 1} \right\}\nonumber\\
&=\mathbb{E}\left\{ {\left. {{{\tilde {\tilde Z}}_n}\left( {i,{\bf{\hat Q}}\left( {{t_0}} \right)} \right){1_n}\left( i \right)} \right|{\bf{\hat Q}}\left( {{t_0}} \right)} \right\} - {C_2}\left( t \right).
\label{eq_key_metric_renewal4}
\end{align}
Plug (\ref{eq_key_metric_renewal4}) into right hand side of (\ref{eq_key_metric_renewal1}), and it follows that, $\forall t \ge \hat D_2$,
\begin{align}
\sum\limits_n {\mathbb{E}\left\{ {\left. {\frac{1}{t}\sum\limits_{i = 1}^{{M_n}\left( {{t_0},t} \right)} {{{\hat Z}_n}\left( {i,{\bf{\hat Q}}\left( {{u_{n,i}}} \right)} \right)} } \right|{\bf{\hat Q}}\left( {{t_0}} \right)} \right\}} {\rm{  }} &\ge \frac{1}{t}\sum\limits_n {\sum\limits_{i = 1}^t {\mathbb{E}\left\{ {\left. {{{\tilde {\tilde Z}}_n}\left( {i,{\bf{\hat Q}}\left( {{t_0}} \right)} \right){1_n}\left( i \right)} \right|{\bf{\hat Q}}\left( {{t_0}} \right)} \right\}} }  - N{C_2}\left( t \right).
\label{eq_key_metric_renewal5}
\end{align}
With the renewal operation, $\left\{{{{\tilde {\tilde Z}}_n}\left( {i,{\bf{\hat Q}}\left( {{t_0}} \right)} \right)}:1\le i\le M_n\left(t_0,t\right)\right\}$ are i.i.d., and the value of $1_n\left(i\right)$ only depends on $T'_{n,t_0}\left(1\right), T_n\left(2\right),\cdots, T_n\left(i-1\right)$, where $T'_{n,t_0}\left(1\right)=u_{n,2}-t_0$. Combining these two aspects, ${{{\tilde {\tilde Z}}_n}\left( {i,{\bf{\hat Q}}\left( {{t_0}} \right)} \right)}$ and ${{1_n}\left( i \right)}$ are independent. Then for the right hand side of (\ref{eq_key_metric_renewal5}), we have
\begin{align}
\mathbb{E}\left\{ {\left. {{{\tilde {\tilde Z}}_n}\left( {i,{\bf{\hat Q}}\left( {{t_0}} \right)} \right){1_n}\left( i \right)} \right|{\bf{\hat Q}}\left( {{t_0}} \right)} \right\} &= \mathbb{E}\left\{ {\left. {{{\tilde {\tilde Z}}_n}\left( {i,{\bf{\hat Q}}\left( {{t_0}} \right)} \right)} \right|{\bf{\hat Q}}\left( {{t_0}} \right)} \right\}\mathbb{E}\left\{ {{1_n}\left( i \right)} \right\}\nonumber\\
&=\tilde {\tilde z}_n\left( {{\bf{\hat Q}}\left( {{t_0}} \right)} \right)\Pr \left( {{M_n}\left( {{t_0},t} \right) \ge i} \right).
\label{eq_key_metric_renewal5.5}
\end{align}
Then plug (\ref{eq_key_metric_renewal5.5}) into (\ref{eq_key_metric_renewal5}), and it follows that
\begin{align}
\sum\limits_n {\mathbb{E}\left\{ {\left. {\frac{1}{t}\sum\limits_{i = 1}^{{M_n}\left( {{u_{n,i}},t} \right)} {{{\hat Z}_n}\left( {i,{\bf{\hat Q}}\left( {{u_{n,i}}} \right)} \right)} } \right|{\bf{\hat Q}}\left( {{t_0}} \right)} \right\}} &\ge \frac{1}{t}\sum\limits_n {\tilde {\tilde z}_n\left( {{\bf{\hat Q}}\left( {{t_0}} \right)} \right)\sum\limits_{i = 1}^t {\Pr \left\{ {{M_n}\left( {{t_0},t} \right) \ge i} \right\}} }- N{C_2}\left( t \right)\nonumber\\
&= \frac{1}{t}\sum\limits_n {\tilde {\tilde z}_n\left( {{\bf{\hat Q}}\left( {{t_0}} \right)} \right)\mathbb{E}\left\{ {{M_n}\left( {{t_0},t} \right)} \right\}}- N{C_2}\left( t \right).
\label{eq_key_metric_renewal7}
\end{align}

Finally, going back to (\ref{eq_key_metric_hat4}) in Subsection \ref{subsec_hat_vs_tildetilde_1} and plugging (\ref{eq_key_metric_renewal7}) in, we have, $\forall t\ge \hat D_2$,
\begin{equation}
\sum\limits_n {\mathbb{E}\left\{ {\left. {\left. {{{\hat Z}_n}\left( {{\bf{\hat Q}}\left( {{t_0}} \right)} \right)} \right|_{{t_0}}^{{t_0} + t - 1}} \right|{\bf{\hat Q}}\left( {{t_0}} \right)} \right\}} \geq \frac{1}{t}\sum\limits_n {\tilde {\tilde z}_n\left( {{\bf{\hat Q}}\left( {{t_0}} \right)} \right)\mathbb{E}\left\{ {{M_n}\left( {{t_0},t} \right)} \right\}}  - \left[ {N{C_2}\left( t \right) + {C_1}\left( t \right) + \frac{\varepsilon }{8}\sum\limits_{n,c} {\hat Q_n^{\left( c \right)}\left( {{t_0}} \right)} } \right],
\label{eq_key_metric_comparison1_1}
\end{equation}
which completes the comparison between the metric values under $\hat{Policy}$ and $\tilde{\tilde{Policy}}$.

%******************************************************************************
\section{Proof of Lemma \ref{lemma: hathat_vs_hatprime}}
\label{appendix: hathat_vs_hatprime}
Under $\hat{\hat{Policy}}$, define $\hat{\hat X}_{nk}^{{\rm{MIA}},\left(c\right)}\left(i\right)$ as the random variable that takes value $1$ if node $k \in {\cal K}_n$ decodes a packet of commodity $c$ being transmitted from node $n$ by the end of epoch $i$, and takes value $0$ otherwise. The superscript MIA on $\hat{\hat X}_{nk}^{{\rm{MIA}},\left(c\right)}\left(i\right)$ is to emphasize that the value of the variable is related to the pre-accumulated partial information respective to the starting timeslot of epoch $i$ for node $n$, and the superscript $\left(c\right)$ indicates that the distribution of $\hat{\hat X}_{nk}^{{\rm{MIA}},\left(c\right)}\left(i\right)$ may vary with commodities. Define ${\hat {\hat \mu}}_n^{\left(c\right)}\left(i\right)$ as the decision variable under $\hat{\hat{Policy}}$ that takes value 1 if node $n$ decides to transmit a commodity $c$ packet in epoch $i$, and takes value $0$ otherwise. With the two definitions, we have the following relation:
\begin{equation}
\hat {\hat b}_{nk}^{\left( c \right)}\left( {{u_{n,i + 1}} - 1} \right) = \hat {\hat b}_{nk}^{\left( c \right)}\left( {{u_{n,i + 1}} - 1} \right)\hat {\hat X}_{nk}^{{\rm{MIA}},\left( c \right)}\left( i \right){\hat {\hat \mu} _n^{\left(c\right)}}\left( i \right).
\end{equation}

Under $\hat{\hat{Policy}}$, each node $n$ forwards each decoded packet to the successful receiving node with the largest positive differential backlog at the end of each epoch. Therefore, with similar arguments as in the proof of Lemma \ref{lemma: characterize_metric_DIVBAR-RMIA} in Section \ref{sec: performance_anlaysis_main}, we get a similar result for $\hat{\hat{Policy}}$ shown as follows:
\begin{align}
&\ \ \ \ \mathbb{E}\left\{ {\left. {{{\hat {\hat Z}}_n}\left( {i,{\bf{\hat {\hat Q}}}\left( {{u_{n,i}}} \right)} \right)} \right|{\bf{\hat {\hat Q}}}\left( {{u_{n,i}}} \right),{\bf{\hat {\hat Q}}}\left( {{t_0}} \right),{1_n}\left( i \right) = 1} \right\}\nonumber\\
&= \sum\limits_c{\mathbb{E}\left\{ {\left. { {\mathop {\max }\limits_{k \in {{\cal K}_n}} \left\{ {\hat {\hat X}_{nk}^{{\rm{MIA}},\left( c \right)}\left( i \right)\hat {\hat W}_{nk}^{\left( c \right)}\left( {{u_{n,i}}} \right)} \right\}}  } \right|{\bf{\hat {\hat Q}}}\left( {{u_{n,i}}} \right),{\bf{\hat {\hat Q}}}\left( {{t_0}} \right),{1_n}\left( i \right) = 1,{{\hat {\hat \mu} }_n^{\left(c\right)}}\left( i \right) = 1} \right\}}\mathbb{E}\left\{ {\left. {\hat {\hat \mu} _n^{\left( c \right)}\left( i \right)} \right|{\bf{\hat {\hat Q}}}\left( {{u_{n,i}}} \right)} \right\},
\label{eq_key_metric_hathat5}
\end{align}
where $\hat {\hat W}_{nk}^{\left( c \right)}\left( \tau  \right) = \max \left\{ {\hat {\hat Q}_n^{\left( c \right)}\left( \tau  \right) - \hat {\hat Q}_k^{\left( c \right)}\left( \tau  \right),0} \right\}$. In (\ref{eq_key_metric_hathat5}), we drop ${{\bf{\hat {\hat Q}}}\left( {{t_0}} \right)}$ and the condition ${{1_n}\left( i \right) = 1}$ from the given condition of the expectation of ${\hat {\hat \mu} _n^{\left( c \right)}\left( i \right)}$, because ${\hat {\hat \mu} _n^{\left( c \right)}\left( i \right)}$ are purely determined by the observation of ${{\bf{\hat {\hat Q}}}\left( {{u_{n,i}}} \right)}$ under $\hat{\hat{Policy}}$.

As defined in the summary of DIVBAR-MIA in Section \ref{sec: algorithm_discription}, $\hat{\hat c}_n\left(i\right)$ represents the commodity chosen by node $n$ to transmit in epoch $i$ under DIVBAR-MIA, given that node $n$ decides to transmit a commodity packet. Moreover, define $\hat{\hat{\mu}}_n\left(i\right)$ as the random variable that takes value 1 if node $n$ decides to transmit a commodity packet in epoch $i$, and takes value $0$ otherwise. Under $\hat{\hat{Policy}}$, since we have $\hat {\hat \mu} _n^{\left( \hat{\hat c}_n\left(i\right) \right)}\left( i \right) = {{\hat {\hat \mu} }_n}\left( i \right)$, it further follows from (\ref{eq_key_metric_hathat5}) that
\begin{align}
&\ \ \ \ \mathbb{E}\left\{ {\left. {{{\hat {\hat Z}}_n}\left( {i,{\bf{\hat {\hat Q}}}\left( {{u_{n,i}}} \right)} \right)} \right|{\bf{\hat {\hat Q}}}\left( {{u_{n,i}}} \right),{\bf{\hat {\hat Q}}}\left( {{t_0}} \right),{1_n}\left( i \right) = 1} \right\}\nonumber\\
&= \mathbb{E}\left\{ {\left. {\mathop {\max }\limits_{k \in {{\cal K}_n}} \left\{ {\hat {\hat X}_{nk}^{{\rm{MIA}},\left( {{{\hat {\hat c}}_n}\left( i \right)} \right)}\left( i \right)\hat {\hat W}_{nk}^{\left( {\hat {\hat c}}_n\left(i\right) \right)}\left( {{u_{n,i}}} \right)} \right\}} \right|{\bf{\hat {\hat Q}}}\left( {{u_{n,i}}} \right),{\bf{\hat {\hat Q}}}\left( {{t_0}} \right),{1_n}\left( i \right) = 1,{{\hat {\hat \mu} }_n}\left( i \right) = 1} \right\}\mathbb{E} \left\{ {\left. {{{\hat {\hat \mu} }_n}\left( i \right)} \right|{\bf{\hat {\hat Q}}}\left( {{u_{n,i}}} \right)} \right\}.
\label{eq_key_metric_hathat6}
\end{align}

Under either $\hat{\hat{Policy}}$ or $\hat{Policy'}$, each node $n$ uses backpressure strategy to choose the commodity to transmit in epoch $i$ based on the backlog state observation ${\bf\hat{\hat{Q}}}\left(u_{n,i}\right)$ without considering the pre-accumulated partial information at the receivers, and therefore, we can get two conclusions: first, the decisions made by node $n$ under the two policies of whether to transmit a commodity packet in epoch $i$ or not are the same, i.e.,
\begin{equation}
\hat \mu {'_n}\left( i \right) = {{\hat {\hat \mu} }_n}\left( i \right);
\end{equation}
second, the chosen commodities to transmit respectively under two policies are also the same, i.e., given that node $n$ decides to transmit a commodity packet, if letting $\hat{c'}_n\left(i\right)$ represent the commodity chosen by node $n$ under $\hat{Policy'}$ to transmit in epoch $i$, then
\begin{equation}
\hat{c'}_n\left(i\right)=\hat{\hat c}_n\left(i\right).
\end{equation}
Therefore, with the similar procedure as in the proof of Lemma \ref{lemma: characterize_metric_DIVBAR-RMIA}, the backpressure metric over a single epoch under $\hat{Policy'}$ can be computed as follows:
\begin{align}
&\ \ \ \ \mathbb{E}\left\{ {\left. {\hat Z{'_n}\left( {i,{\bf{\hat {\hat Q}}}\left( {{u_{n,i}}} \right)} \right)} \right|{\bf{\hat {\hat Q}}}\left( {{u_{n,i}}} \right),{\bf{\hat {\hat Q}}}\left( {{t_0}} \right),{1_n}\left( i \right) = 1} \right\}\nonumber\\
&= \mathbb{E}\left\{ {\left. {\mathop {\max }\limits_{k \in {{\cal K}_n}} \left\{ {X_{nk}^{{\cal P},{\rm{RMIA}}}\left( i \right)\hat {\hat W}_{nk}^{\left( {\hat c'\left( i \right)} \right)}\left( {{u_{n,i}}} \right)} \right\}} \right|{\bf{\hat {\hat Q}}}\left( {{u_{n,i}}} \right),{\bf{\hat {\hat Q}}}\left( {{t_0}} \right),{1_n}\left( i \right) = 1,\hat \mu {'_n}\left( i \right) = 1} \right\}\mathbb{E}\left\{ {\left. {\hat {\mu '}{_n}\left( i \right)} \right|{\bf{\hat {\hat Q}}}\left( {{u_{n,i}}} \right)} \right\}\nonumber\\
&= \mathbb{E}\left\{ {\left. {\mathop {\max }\limits_{k \in {{\cal K}_n}} \left\{ {X_{nk}^{{\cal P},{\rm{RMIA}}}\left( i \right)\hat {\hat W}_{nk}^{\left( {\hat {\hat c}\left( i \right)} \right)}\left( {{u_{n,i}}} \right)} \right\}} \right|{\bf{\hat {\hat Q}}}\left( {{u_{n,i}}} \right),{\bf{\hat {\hat Q}}}\left( {{t_0}} \right),{1_n}\left( i \right) = 1,{{\hat {\hat \mu} }_n}\left( i \right) = 1} \right\}\mathbb{E}\left\{ {\left. {{{\hat {\hat \mu} }_n}\left( i \right)} \right|{\bf{\hat {\hat Q}}}\left( {{u_{n,i}}} \right)} \right\},
\label{eq_key_metric_hatprime1}
\end{align}
where $X_{nk}^{{\cal P},{\rm{RMIA}}}\left(i\right)$ is the indicator variable of the first successful reception over link $\left(n,k\right)$ in epoch $i$ under a policy with RMIA in the restricted policy set $\cal P$ and has been defined in the proof of Lemma \ref{lemma: characterize_metric_DIVBAR-RMIA} in Section \ref{sec: performance_anlaysis_main}. Here we use $X_{nk}^{{\cal P},{\rm{RMIA}}}\left( i \right)$ because $\hat{Policy'}$ belongs to $\cal P$.

Comparing (\ref{eq_key_metric_hatprime1}) to (\ref{eq_key_metric_hathat6}), the key part is to compare $X_{nk}^{{\cal P},{\rm{RMIA}}}\left(i\right)$ and $\hat {\hat X}_{nk}^{{\rm{MIA}},\hat{\hat c}_n\left(i\right)}\left(i\right)$. Note that the channel realizations in epoch $i$ are the same for $\hat{\hat{Policy}}$ and $\hat{Policy'}$, but the decoding of the packet in the receiving node $k \in {\cal K}_n$ under $\hat{\hat{Policy}}$ may take advantage of the pre-accumulated partial information respective to timeslot $u_{n,i}$, while the receiving node can not take this advantage under $\hat{Policy'}$. Therefore, if $X_{nk}^{{\cal P},{\rm{RMIA}}}\left(i\right)$ takes value 1, $\hat {\hat X}_{nk}^{{\rm{MIA}},\hat{\hat c}_n\left(i\right)}\left(i\right)$ must take value 1, while the reverse statement may not be true, i.e.,
\begin{equation}
\hat {\hat X}_{nk}^{{\rm{MIA}},\hat{\hat c}_n\left(i\right)}\left(i\right) \ge X_{nk}^{{\cal P},{\rm{RMIA}}}\left( i \right).
\end{equation}
Going back to the comparison between (\ref{eq_key_metric_hatprime1}) with (\ref{eq_key_metric_hathat6}), it follows that
\begin{equation}
\mathbb{E}\left\{ {\left. {{{\hat {\hat Z}}_n}\left( {i,{\bf{\hat {\hat Q}}}\left( {{u_{n,i}}} \right)} \right)} \right|{\bf{\hat {\hat Q}}}\left( {{u_{n,i}}} \right),{\bf{\hat {\hat Q}}}\left( {{t_0}} \right),{1_n}\left( i \right) = 1} \right\}\ge \mathbb{E}\left\{ {\left. {\hat Z{'_n}\left( {i,{\bf{\hat {\hat Q}}}\left( {{u_{n,i}}} \right)} \right)} \right|{\bf{\hat {\hat Q}}}\left( {{u_{n,i}}} \right),{\bf{\hat {\hat Q}}}\left( {{t_0}} \right),{1_n}\left( i \right) = 1} \right\}.
\label{eq_key_metric_comparison_hathat_vs_hatprime_1}
\end{equation}

Finally, taking expectations over all possible values of ${\bf{\hat{\hat{Q}}}}\left(u_{n,i}\right)$ on both sides of (\ref{eq_key_metric_comparison_hathat_vs_hatprime_1}), it follows that
\begin{equation}
\mathbb{E}\left\{ {\left. {{{\hat {\hat Z}}_n}\left( {i,{\bf{\hat {\hat Q}}}\left( {{u_{n,i}}} \right)} \right)} \right|{\bf{\hat {\hat Q}}}\left( {{t_0}} \right),{1_n}\left( i \right) = 1} \right\}\ge \mathbb{E}\left\{ {\left. {\hat Z{'_n}\left( {i,{\bf{\hat {\hat Q}}}\left( {{u_{n,i}}} \right)} \right)} \right|{\bf{\hat {\hat Q}}}\left( {{t_0}} \right),{1_n}\left( i \right) = 1} \right\}.
\label{eq_key_metric_comparison_hathat_vs_hatprime_2}
\end{equation}

\bibliographystyle{IEEEtran}
\bibliography{Reference}

\end{document}